\let\coloneqq\relax
\newcolumntype{x}[1]{>{\centering\arraybackslash}p{#1}}
\newcommand{\RN}[1]{%
  \textup{\uppercase\expandafter{\romannumeral#1}}%
}
\newtheorem{theorem}{Theorem}
\newtheorem{prop}[theorem]{Proposition}
\newtheorem{lemma}[theorem]{Lemma}
\newtheorem{cor}[theorem]{Corollary}
\newtheorem{cj}[theorem]{Conjecture}
\newtheorem{Def}[theorem]{Definition}
\newtheorem{remark}[theorem]{Remark}
\def\thmhead@plain#1#2#3{%
  \thmname{#1}\thmnumber{\@ifnotempty{#1}{ }\@upn{#2}}%
  \thmnote{ {\the\thm@notefont#3}}}
\let\thmhead\thmhead@plain
\theoremstyle{definition}
\newcommand{\bb}{\begin{equation}\begin{aligned}\hspace{0pt}}
\newcommand{\bbb}{\begin{equation*}\begin{aligned}}
\newcommand{\ee}{\end{aligned}\end{equation}}
\newcommand{\eee}{\end{aligned}\end{equation*}}
\newcommand*{\coloneqq}{\mathrel{\vcenter{\baselineskip0.5ex \lineskiplimit0pt \hbox{\scriptsize.}\hbox{\scriptsize.}}} =}
\newcommand{\texteq}[1]{\stackrel{\mathclap{\scriptsize \mbox{#1}}}{=}}
\renewcommand{\textleq}[1]{\stackrel{\mathclap{\scriptsize \mbox{#1}}}{\leq}}
\renewcommand{\textgeq}[1]{\stackrel{\mathclap{\scriptsize \mbox{#1}}}{\geq}}
\newcommand{\sumno}{\sum\nolimits}
\newcommand{\N}{\mathbb{N}}
\DeclareMathAlphabet{\pazocal}{OMS}{zplm}{m}{n}
\DeclareMathOperator{\Id}{id}
\newcommand{\HH}{\pazocal{H}}
\newcommand{\D}{\pazocal{D}}
\newcommand{\NN}{\pazocal{N}}
\newcommand{\lsmatrix}{\left(\begin{smallmatrix}}
\newcommand{\rsmatrix}{\end{smallmatrix}\right)}
\newcommand*\rel@kern[1]{\kern#1\dimexpr\macc@kerna}
\newcommand*\widebar[1]{%
  \begingroup
  \def\mathaccent##1##2{%
    \rel@kern{0.8}%
    \overline{\rel@kern{-0.8}\macc@nucleus\rel@kern{0.2}}%
    \rel@kern{-0.2}%
  }%
  \macc@depth\@ne
  \let\math@bgroup\@empty \let\math@egroup\macc@set@skewchar
  \mathsurround\z@ \frozen@everymath{\mathgroup\macc@group\relax}%
  \macc@set@skewchar\relax
  \let\mathaccentV\macc@nested@a
  \macc@nested@a\relax111{#1}%
  \endgroup
}
\tikzset{meter/.append style={draw, inner sep=10, rectangle, font=\vphantom{A}, minimum width=30, line width=.8, path picture={\draw[black] ([shift={(.1,.3)}]path picture bounding box.south west) to[bend left=50] ([shift={(-.1,.3)}]path picture bounding box.south east);\draw[black,-latex] ([shift={(0,.1)}]path picture bounding box.south) -- ([shift={(.3,-.1)}]path picture bounding box.north);}}}
\tikzset{roundnode/.append style={circle, draw=black, fill=gray!20, thick, minimum size=10mm}}
\tikzset{squarenode/.style={rectangle, draw=black, fill=none, thick, minimum size=10mm}}
\definecolor{Blues5seq1}{RGB}{239,243,255}
\definecolor{Blues5seq2}{RGB}{189,215,231}
\definecolor{Blues5seq3}{RGB}{107,174,214}
\definecolor{Blues5seq4}{RGB}{49,130,189}
\definecolor{Blues5seq5}{RGB}{8,81,156}
\definecolor{Greens5seq1}{RGB}{237,248,233}
\definecolor{Greens5seq2}{RGB}{186,228,179}
\definecolor{Greens5seq3}{RGB}{116,196,118}
\definecolor{Greens5seq4}{RGB}{49,163,84}
\definecolor{Greens5seq5}{RGB}{0,109,44}
\definecolor{Reds5seq1}{RGB}{254,229,217}
\definecolor{Reds5seq2}{RGB}{252,174,145}
\definecolor{Reds5seq3}{RGB}{251,106,74}
\definecolor{Reds5seq4}{RGB}{222,45,38}
\definecolor{Reds5seq5}{RGB}{165,15,21}
\newcommand{\FS}[1]{\textcolor{red}{FS:\quad #1}}
\pgfplotsset{width=10cm,compat=1.9}
\newcommand*{\addFileDependency}[1]{% argument=file name and extension
  \typeout{(#1)}
  \@addtofilelist{#1}
  \IfFileExists{#1}{}{\typeout{No file #1.}}
}
\begin{document}

\author{Francesco Anna Mele}
\email{francesco.mele@sns.it}
\affiliation{NEST, Scuola Normale Superiore and Istituto Nanoscienze, Consiglio Nazionale delle Ricerche, Piazza dei Cavalieri 7, IT-56126 Pisa, Italy}

\author{Farzin Salek}
\email{farzin.salek@gmail.com}
\affiliation{Department of Mathematics, Technical University of Munich, Boltzmannstrasse 3, 85748 Garching, Germany}

\author{Vittorio Giovannetti}
\email{vittorio.giovannetti@sns.it}
\affiliation{NEST, Scuola Normale Superiore and Istituto Nanoscienze, Consiglio Nazionale delle Ricerche, Piazza dei Cavalieri 7, IT-56126 Pisa, Italy}

\author{Ludovico Lami}
\email{ludovico.lami@gmail.com}
\affiliation{QuSoft, Science Park 123, 1098 XG Amsterdam, the Netherlands}
\affiliation{Korteweg-de Vries Institute for Mathematics, University of Amsterdam, Science Park 105-107, 1098 XG Amsterdam, the Netherlands}
\affiliation{Institute for Theoretical Physics, University of Amsterdam, Science Park 904, 1098 XH Amsterdam, the Netherlands}

%\title{Anti-degradability of the bosonic loss-dephasing channel}

%\title{Bosonic Photon Loss and Dephasing Synergy: Anti-degradability and Coherence Preservation}
%\title{Bosonic loss-dephasing channel: anti-degradability and coherence preservation}
%Combined Bosonic Loss and Dephasing: Anti-degradability and Coherence Preservation \\
\title{Quantum communication on the bosonic loss-dephasing channel}

\begin{abstract}
Quantum optical systems are typically affected by two types of noise: photon loss and dephasing. Despite extensive research on each noise process individually, a comprehensive understanding of their combined effect is still lacking. A crucial problem lies in determining the values of loss and dephasing for which the resulting loss-dephasing channel is anti-degradable, implying the absence of codes capable of correcting its effect or, alternatively, capable of enabling quantum communication. A conjecture in [Quantum 6, 821 (2022)] suggested that the bosonic loss-dephasing channel is not anti-degradable if the loss is below $50\%$. In this paper we refute this conjecture, specifically proving that for any value of the loss, if the dephasing is above a critical value, then the bosonic loss-dephasing channel is anti-degradable.  While our result identifies a large parameter region where quantum communication is not possible, we also prove that if two-way classical communication is available, then quantum communication --- and thus quantum key distribution --- is always achievable, even for high values of loss and dephasing. 
\end{abstract}

\maketitle

%Quantum communication stands as one of the fundamental cornerstones of quantum information. At its core, the quantum operation facilitating the transmission of quantum information is known as a quantum channel. A quantum channel is represented by a completely-positive trace-preserving (cptp) map acting on a suitable algebra of operators. By harnessing the unique features of quantum theory such as superposition principle and entanglement, quantum communication can achieves rates that are not attainable using classical channels~\cite{MARK}. Nonetheless, presence of errors due to decoherence mechanisms and operational imperfections restrict the advantages of using any quantum devices.

%\epigraph{\emph{And the queen gave birth to a child} \\ \emph{who was called Asterion.}}{}

\section{Introduction}
Quantum optical platforms are key elements of quantum technologies, contributing significantly to both quantum communication and quantum computation~\cite{Hayashi2017,BUCCO,Caves,weedbrook12,hardware-1,hardware-2,hardware-3,Ofek2016-mu,WANG20071}. Since the potential benefits of quantum technologies are hindered by the presence of decoherence~\cite{RevModPhys.76.1267}, the investigation of decoherence sources affecting bosonic systems and the development of bosonic quantum error-correcting codes have been extensively analysed in recent years~\cite{PhysRevA.99.032344,Terhal_2020, PhysRevA.97.032346, Denys2023tqutrittwomode, PhysRevX.10.011058,PhysRevA.101.012316,PhysRevA.106.022431,PhysRevA.107.032423,lower-bound,Die-Hard-2-PRL, Die-Hard-2-PRA, mele2023optical}. The primary noise processes in bosonic systems that act as dominant sources of decoherence are \emph{photon loss} and \emph{bosonic dephasing}~\cite{PhysRevX.6.031006, Brito_2008,1992ElL....28...53W}, which have been both extensively analysed~\cite{PhysRevA.97.032346,8482307,Wilde2014-fm,exact-solution}. Loss affects the system by causing it to dissipate some of its energy, whereas dephasing works to transform coherent superpositions into probabilistic mixtures. %Although loss and dephasing sources can simultaneously affect bosonic systems~\cite{Campagne-Ibarcq2020-cb,PhysRevX.10.011058}, understanding their combined effect has been challenging, with only partial results provided in~\cite{Loss_dephasing0}.
Although loss and dephasing sources can simultaneously affect bosonic systems~\cite{Campagne-Ibarcq2020-cb,PhysRevX.10.011058}, such as in superconducting systems~\cite{PhysRevB.94.014506,Rosenblum_2018}, the existing literature provides only partial results about their combined effect~\cite{Loss_dephasing0}. On a technical level, understanding the combined effect of loss and dephasing is challenging due to the conflicting behaviours they exhibit: the action of loss takes a simple form when written in the coherent state basis but is complicated to analyse in the Fock basis~\cite{BUCCO}, whereas dephasing demonstrates the opposite pattern, making the analysis of their combined effect quite intricate.

Consider an optical link (e.g.~an optical fibre or a free-space link) or a quantum memory affected by both loss and dephasing, where the link is used for quantum communication and the memory for quantum computation. A crucial challenge is to determine the conditions under which there exist protocols capable of enabling reliable quantum communication across the optical link or capable of mitigating the combined noise affecting the quantum memory. This problem is closely related to the anti-degradability condition in quantum Shannon theory: if a noise channel is \emph{anti-degradable}~\cite{MARK,Sumeet_book}, there are no quantum communication protocols for reliable information transmission or quantum error-correcting codes capable of overcoming it. Consequently, it is crucial to understand whether the combined effect of loss and dephasing results in an anti-degradable channel. This has been a puzzling problem, to the point that in~\cite{Loss_dephasing0} it was conjectured that the combined loss-dephasing noise does not result in an anti-degradable channel if the loss is below $50\%$.

In this paper we refute the above conjecture; specifically, we prove that for any value of the photon loss there exists a \emph{critical value} of the dephasing above which the resulting \emph{bosonic loss-dephasing channel} is anti-degradable. Our discovery thus identifies a large region of the loss-dephasing parameter space where correcting the noise and achieving reliable quantum communication is impossible. On the more positive side, however, we also prove that if the sender and the receiver are assisted by two-way classical communication, then reliable quantum communication --- and thus quantum key distribution --- is always possible, even in scenarios characterised by arbitrarily high levels of loss and dephasing. Ours are the first analytical results to characterise the transmission of quantum information in the presence of both loss and dephasing noise.

The structure of the paper is as follows. In Section~\ref{Sec_preliminaries}, we 
briefly review some preliminary notions necessary for stating our results. In Section~\ref{Sec_antideg}, we present our results on the anti-degradability of the bosonic loss-dephasing channel, refuting a conjecture put forth in~\cite{Loss_dephasing0}. In Section~\ref{Sec_two_way}, we analyse quantum communication across the bosonic loss-dephasing channel assisted by two-way classical communication. In Section~\ref{Sec_deg}, we discuss the degradability of the loss-dephasing channel. In the Appendix, we provide detailed derivations and additional results concerning the bosonic loss-dephasing channel.

\section{Preliminaries}\label{Sec_preliminaries}
The \emph{quantum capacity} $Q(\pazocal{N})$ of a quantum channel $\pazocal{N}$ quantifies the efficiency in transmitting qubits reliably across $\pazocal{N}$~\cite{MARK,Sumeet_book}. The condition $Q(\pazocal{N})=0$ implies that there exist neither reliable quantum communication protocols across $\pazocal{N}$ nor codes capable of correcting the errors induced by $\pazocal{N}$. Accordingly, if $\pazocal{N}$ is \emph{anti-degradable} its quantum capacity vanishes~\cite{MARK}. This underscores the significance of determining whether a channel is anti-degradable, as the noise associated with such a channel cannot be corrected. By definition, a channel $\pazocal{N}$ is anti-degradable if there exists a channel $\pazocal{A}$ --- called the \emph{anti-degrading map} --- such that $\pazocal{A}\circ\pazocal{N}^{\text{c}} = \pazocal{N}$, where $\pazocal{N}^{\text{c}}$ denotes a complementary channel of $\pazocal{N}$~\cite{MARK}. Conversely, a channel $\pazocal{N}$ is degradable if there exists a channel $\pazocal{D}$ such that $\pazocal{D}\circ\pazocal{N}=\pazocal{N}^{\text{c}}$. Degradable channels are theoretically important because their quantum capacity can be calculated as the single-letter coherent information of the channel~\cite{Devetak-Shor,MARK,Sumeet_book}.

The phenomenon of photon loss is mathematically modelled by the \emph{pure-loss channel} $\pazocal{E}_\lambda$~\cite{BUCCO,weedbrook12}, a single-mode continuous-variable channel that acts on the input state $\rho$ by mixing it with an environmental vacuum state in a beam splitter of transmissivity $\lambda\in[0,1]$: 
\bb
\pazocal{E}_\lambda(\rho)\coloneqq \Tr_E\left[U_\lambda\left(\rho_S\otimes\ketbra{0}_E\right)U_\lambda^\dagger\right]\,,
\ee
where $U_\lambda \coloneqq \exp\left[\arccos(\sqrt{\lambda})(\hat{a}^\dagger \hat{e}-\hat{a}\, \hat{e}^\dagger)\right]$ is the beam splitter unitary, $\hat{a}$ and $\hat{e}$ are the annihilation operators of the input system $S$ and of the environment $E$, and $\Tr_{E}$ is the partial trace w.r.t.~$E$. When a single photon is fed into $\pazocal{E}_\lambda$, it is transmitted to the output with probability $\lambda$, while it is lost to the environment with probability $1-\lambda$. More generally, if $n$ photons are fed into the channel, the output is given by the binomial probability mixture $\pazocal{E}_\lambda(\ketbra{n}) = \sum_{\ell=0}^n\binom{n}{l}(1-\lambda)^{\ell}\lambda^{n-\ell}\ketbra{n-\ell}$, where $\ket{n}$ denotes the Fock state with $n$ photons~\cite{BUCCO}. 
When $\lambda=1$ the pure-loss channel is noiseless, while when $\lambda=0$ it is completely noisy --- it maps any state into the vacuum. It is known that the pure-loss channel is anti-degradable for $\lambda\in [0, \frac{1}{2}]$ and degradable for $\lambda\in [\frac{1}{2}$,1]~\cite{Caruso_weak,LossyECEAC1,Caruso2006,Wolf2007}.

The phenomenon of bosonic dephasing is mathematically described by the \emph{bosonic dephasing channel} $\pazocal{D}_\gamma$~\cite{Loss_dephasing0,exact-solution,PhysRevA.102.042413}, which maps the state $\rho=\sum_{m,n=0}^\infty\rho_{mn}\ketbra{m}{n}$, written in the Fock basis, 
to
\bb
\pazocal{D}_\gamma(\rho)\coloneqq \sum_{m,n=0}^\infty\rho_{mn}e^{-\frac{\gamma}{2}(m-n)^2}\ketbra{m}{n}\,,
\ee
resulting in a reduction in magnitude of the off-diagonal elements.  When $\gamma=0$, the bosonic dephasing channel is noiseless. In contrast, when $\gamma\to\infty$, it completely annihilates all off-diagonal components of the input density matrix, reducing it to an incoherent probabilistic mixture of Fock states. Moreover, the bosonic dephasing channel is never anti-degradable and it is always degradable~\cite{PhysRevA.102.042413}.

Consider an optical system undergoing simultaneous loss and dephasing over a finite time interval. At each instant, the system is susceptible to both an infinitesimal pure-loss channel and an infinitesimal bosonic dephasing channel. Hence, the overall channel, which describes the simultaneous effect of loss and dephasing, results in a suitable composition of numerous concatenations between infinitesimal pure-loss and bosonic dephasing channels. However, given that (i) the pure-loss channel and the bosonic dephasing channel commute, $\pazocal{E}_\lambda\circ\pazocal{D}_\gamma=\pazocal{D}_\gamma\circ\pazocal{E}_\lambda$; (ii) the composition of pure-loss channels is a pure-loss channel, $\pazocal{E}_{\lambda_1}\circ \pazocal{E}_{\lambda_2}=\pazocal{E}_{\lambda_1\lambda_2}$; and (iii) the composition of bosonic dephasing channels is a bosonic dephasing channel, $\pazocal{D}_{\gamma_1}\circ \pazocal{D}_{\gamma_2}=\pazocal{D}_{\gamma_1+\gamma_2}$; it follows that the combined effect of loss and dephasing can be modelled by the composition 
%between pure-loss channel and bosonic dephasing channel,
\bb
\pazocal{N}_{\lambda,\gamma}\coloneqq \pazocal{E}_\lambda\circ\pazocal{D}_\gamma\,,
\ee
which we will refer to as the \emph{bosonic loss-dephasing channel}.

\section{Anti-degradability}\label{Sec_antideg}
Prior to this work, the only result on the anti-degradability of the bosonic loss-dephasing channel was that it is anti-degradable if the transmissivity is below $\frac{1}{2}$~\cite{Loss_dephasing0}. This result trivially follows from the anti-degradability of the pure-loss channel for transmissivities below $\frac{1}{2}$, and the fact that the composition of an anti-degradable channel with another channel inherits the property of being anti-degradable (see Lemma~\ref{lemma_comp_antideg} in the Appendix). Notably, in the regime $\lambda>\frac{1}{2}$, it was an open question to understand whether or not the bosonic loss-dephasing channel $\pazocal{N}_{\lambda,\gamma}$ is anti-degradable for some values of the dephasing $\gamma$, and in~\cite{Loss_dephasing0} the answer was conjectured to be negative. However, in the forthcoming Theorem~\ref{Main_Theorem}, we show that the latter conjecture is incorrect, specifically, we prove that for all $\lambda\in[0,1)$, if $\gamma$ is sufficiently large, then $\pazocal{N}_{\lambda,\gamma}$ becomes anti-degradable. 
\begin{theorem}\label{Main_Theorem}
The bosonic loss-dephasing channel $\pazocal{N}_{\lambda,\gamma}$ is anti-degradable if the transmissivity $\lambda$ and the dephasing $\gamma$ fall within one of the following regions: (i) $\lambda\in[0,\frac{1}{2}]$ and $\gamma\ge0$; (ii) $\lambda\in(\frac{1}{2},1)$ and $\gamma$ such that $\theta\!\left(e^{-\gamma/2},\sqrt{\frac{\lambda}{1-\lambda}}\right)\le\frac{3}{2}$, where $\theta(x,y)\coloneqq \sum_{n=0}^\infty x^{n^2}y^n$. A weaker but simpler sufficient condition that implies anti-degradability is given by $\lambda\le\max\!\left(\frac{1}{2},\frac{1}{1+9e^{-\gamma}}\right)$.
\end{theorem}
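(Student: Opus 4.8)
The plan is to treat the two regions separately, disposing of (i) at once and concentrating the real work on (ii). For region (i) one writes $\pazocal{N}_{\lambda,\gamma}=\pazocal{E}_\lambda\circ\pazocal{D}_\gamma$ with the anti-degradable factor $\pazocal{E}_\lambda$ (valid since $\lambda\le\tfrac12$) applied last, so anti-degradability is inherited from the composition rule already recalled in the text. For region (ii) I would prove the equivalent statement that the complementary channel $\pazocal{N}_{\lambda,\gamma}^{\mathrm{c}}$ is degradable, or equivalently exhibit an anti-degrading map $\pazocal{A}$ with $\pazocal{A}\circ\pazocal{N}_{\lambda,\gamma}^{\mathrm{c}}=\pazocal{N}_{\lambda,\gamma}$. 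The starting point is an explicit Stinespring dilation of $\pazocal{N}_{\lambda,\gamma}$: dilate $\pazocal{D}_\gamma$ by a phase register $E_1$ and $\pazocal{E}_\lambda$ by the beam-splitter vacuum mode $E_2$. Carrying an input Fock state $\ket{n}$ through both dilations shows that $E_2$ receives the lost photons (as in the complementary loss channel $\pazocal{E}_{1-\lambda}$), while $E_1$ stores a phase-reference state $\ket{\mu_n}$ whose overlaps reproduce the dephasing factor, $\braket{\mu_m|\mu_n}=e^{-\frac{\gamma}{2}(m-n)^2}$.

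The core step is the construction of $\pazocal{A}$ on $E_1E_2$. For the pure-loss channel the anti-degrading map is the loss channel of transmissivity $\lambda/(1-\lambda)$ acting on the lost mode, which is a valid channel exactly when $\lambda\le\tfrac12$; for $\lambda>\tfrac12$ this recipe fails because $\lambda/(1-\lambda)>1$. The idea is that dephasing compensates: since the phase information has been blurred, the recovery map need not restore full coherence, and the residual noise required to rebuild $B$ from $E_2$ can be absorbed by the phase register $E_1$. Using the phase covariance of all channels involved, I would reduce the exact-recovery requirement $\pazocal{A}\circ\pazocal{N}_{\lambda,\gamma}^{\mathrm{c}}=\pazocal{N}_{\lambda,\gamma}$ to complete positivity of a single phase-covariant map, and in turn to the statement that an explicit Toeplitz operator $T$ --- built from the overlaps $e^{-\frac{\gamma}{2}(m-n)^2}$ weighted by powers of $y\coloneqq\sqrt{\lambda/(1-\lambda)}$, so that $T_{mn}=e^{-\frac{\gamma}{2}(m-n)^2}y^{|m-n|}$ --- defines a legitimate (sub-normalized) object, which I would enforce through the operator bound $\lVert T\rVert\le 2$.

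The main obstacle is controlling this infinite-dimensional, continuous phase environment and proving the operator bound. Because $T$ is Toeplitz and symmetric, the Schur (Gershgorin) test yields
\[
\lVert T\rVert\le\sum_{k=-\infty}^{\infty}e^{-\frac{\gamma}{2}k^2}y^{|k|}=2\,\theta\!\left(e^{-\gamma/2},y\right)-1,
\]
with $\theta(x,y)=\sum_{n=0}^\infty x^{n^2}y^n$. Hence $\theta\!\left(e^{-\gamma/2},y\right)\le\tfrac32$ forces $\lVert T\rVert\le 2$, which is precisely condition (ii); the delicate points are verifying that this Schur estimate is the correct sufficient condition for complete positivity and that trace preservation of $\pazocal{A}$ follows automatically from phase covariance.

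Finally, the weaker explicit bound is obtained by estimating $\theta$. For $n\ge1$ one has $n^2\ge n$, so with $q\coloneqq e^{-\gamma/2}\sqrt{\lambda/(1-\lambda)}<1$,
\[
\theta\!\left(e^{-\gamma/2},y\right)-1=\sum_{n\ge1}e^{-\frac{\gamma}{2}n^2}y^{n}\le\sum_{n\ge1}e^{-\frac{\gamma}{2}n}y^{n}=\frac{q}{1-q}.
\]
Thus $\tfrac{q}{1-q}\le\tfrac12$, i.e.\ $q\le\tfrac13$, i.e.\ $e^{-\gamma}\tfrac{\lambda}{1-\lambda}\le\tfrac19$, i.e.\ $\lambda\le\tfrac{1}{1+9e^{-\gamma}}$, already implies $\theta\le\tfrac32$ and hence anti-degradability; combining this with region (i), where $\lambda\le\tfrac12$ always suffices, gives the stated condition $\lambda\le\max\!\left(\tfrac12,\tfrac{1}{1+9e^{-\gamma}}\right)$.
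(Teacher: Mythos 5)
Your handling of region (i), the dilation bookkeeping (lost photons in $E_2$, coherent ``flag'' states in $E_1$ with overlaps $e^{-\frac{\gamma}{2}(m-n)^2}$), and the final estimates (the Schur/row-sum bound $2\theta-1$ and the comparison $q/(1-q)\le\frac12\Leftrightarrow\lambda\le\frac{1}{1+9e^{-\gamma}}$) are all correct and agree with the paper. The genuine gap is the core of region (ii): you never construct the anti-degrading map, and the reduction ``by phase covariance \dots to complete positivity of a single phase-covariant map, enforced through $\|T\|\le 2$'' is precisely the step that carries all the difficulty --- you yourself flag it as unverified. Moreover, the matrix that actually governs complete positivity is not your Toeplitz matrix $T_{mn}=e^{-\frac{\gamma}{2}(m-n)^2}y^{|m-n|}$. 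In the paper's proof the relevant object is the unit-diagonal matrix with entries $a_{mn}=e^{-\frac{\gamma}{2}(m-n)^2}\big/\sum_{j}\sqrt{\pazocal{B}_j\!\left(m,\frac{2\lambda-1}{\lambda}\right)\pazocal{B}_j\!\left(n,\frac{2\lambda-1}{\lambda}\right)}$, whose non-Toeplitz denominators arise from an explicit construction: a candidate two-extension of the generalised Choi state $\Id\otimes\pazocal{N}_{\lambda,\gamma}\big(\ketbra{\psi(r)}\big)$ is built by sending a two-mode squeezed vacuum through beam splitters of transmissivities $\lambda$ and $\frac{1-\lambda}{\lambda}$ and a controlled-add-add isometry, tracing out the control, and then fixing the off-diagonal entries with the Hadamard channel $H^{(A)}$ on both output copies; anti-degradability then follows from the infinite-dimensional equivalence between anti-degradability and two-extendibility of the Choi state.

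Concretely, what is needed is positive semi-definiteness of that specific matrix $A$, obtained via diagonal dominance: $|a_{mn}|\le e^{-\frac{\gamma}{2}(m-n)^2}y^{|m-n|}$, hence $\sum_{m\neq n}|a_{mn}|\le 2\,\theta\!\left(e^{-\gamma/2},y\right)-2\le 1$. Your Toeplitz matrix enters only as an entrywise upper bound inside this last estimate, not as the object whose norm certifies complete positivity; and even granting your framing, the correct requirement would be a bound of the form $\|T-\mathbb{1}\|_\infty\le 1$ (equivalently the off-diagonal row-sum bound), not $\|T\|_\infty\le 2$, although the two thresholds happen to coincide numerically. To complete the proof along your direct route you would have to specify the action of $\pazocal{A}$ on the operators $\pazocal{E}_{1-\lambda}(\ketbra{m}{n})\otimes\ketbra{\sqrt{\gamma}m}{\sqrt{\gamma}n}$ and prove its complete positivity --- which is exactly what the paper does afterwards in its explicit-anti-degrading-map theorem, and even there complete positivity is established by exhibiting the two-extension state, i.e.\ by the beam-splitter/controlled-add-add/Hadamard construction your sketch omits. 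As it stands, your argument reproduces the final inequality of the theorem without supplying the mechanism that makes it a sufficient condition.
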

Here, we present a sketch of the proof, with a detailed version of the proof provided in Theorem~\ref{main-result} in the Appendix.
\begin{proof}[Proof sketch]
Any finite dimensional channel $\pazocal{N}$ is anti-degradable if and only if its Choi state is \emph{two-extendible}~\cite{Myhr2009}, meaning that there exists a tripartite state $\rho_{AB_1B_2}$ 
%--- referred to as the two-extension of the Choi state --- 
such that the reduced states on $AB_1$ and $AB_2$ both coincide with the Choi state: 
\bb\label{main_extension_def}
\Tr_{B_2}\left[\rho_{AB_1B_2} \right]&= C_{AB_1}\!(\pazocal{N})\,,\\
\Tr_{B_1}\left[\rho_{AB_1B_2} \right]&= C_{AB_2}\!(\pazocal{N}) \,,
\ee
where the Choi state is defined as $C_{AB}(\pazocal{N})\coloneqq \Id_A\otimes\pazocal{N}_{A'\to B}(\Phi_{AA'}) $, with $\Phi_{AA'}$ being the maximally entangled state. Such a characterisation extends to infinite dimension by considering the generalised Choi state $C^{(r)}_{AB}(\pazocal{N})\coloneqq \Id_A\otimes\pazocal{N}_{A'\to B}\big(\Psi^{(r)}_{AA'}\big)$~\cite{extendibility}, obtained by replacing $\Phi_{AA'}$ by the two-mode squeezed vacuum state $\Psi^{(r)}_{AA'}$ with squeezing parameter $r>0$~\cite{BUCCO}. The crux of our proof is to find a two-extension of $C^{(r)}_{AB}(\pazocal{N}_{\lambda,\gamma})$ in the region identified by condition~(ii). We do this in two steps. 

First, after scrutinising the matrix $C^{(r)}_{AB}(\pazocal{N}_{\lambda,\gamma})$ written in the Fock basis, we construct a tripartite state $\tau_{AB_1B_2}$ such that the reduced states on $AB_1$ and $AB_2$ have the same diagonal as $C^{(r)}_{AB}(\pazocal{N}_{\lambda,\gamma})$, and the same pattern of vanishing off-diagonal entries. The construction of this tripartite state involves applying several channels --- namely, beam splitter unitaries, squeezing unitary, partial trace, and a three mode controlled-add-add isometry --- to a 4-mode vacuum state.

The second step consists in transforming $\tau_{AB_1B_2}$ into a two-extension of $C^{(r)}_{AB}(\pazocal{N}_{\lambda,\gamma})$ by tweaking its off-diagonal entries. This is done by using the toolbox of \emph{Hadamard maps}~\cite{Sumeet_book}. For any matrix $A\coloneqq (a_{mn})_{m,n\in\N}$, the associated Hadamard map $H^{(A)}$ is defined by 
\bb\label{def_hadamard_channels_main}
    H^{(A)}(\ketbra{m}{n})=a_{mn}\ketbra{m}{n}
\ee
for all $m, n$~\cite{Sumeet_book}. In practice, $H^{(A)}$ acts on the input density matrix by multiplying each $(m,n)$ entry by the corresponding coefficient $a_{mn}$. 
Importantly, $H^{(A)}$ is a quantum channel if and only if $A$ is Hermitian, positive semi-definite, and has all $1$'s on the main diagonal~\cite{Sumeet_book}. The crucial observation is that it is always possible to find an infinite matrix $A_{\lambda,\gamma}$ (possibly not positive semi-definite), which is real, symmetric, and has all $1$'s on the main diagonal, such that the operator $\Id_A\otimes H^{(A_{\lambda,\gamma})}_{B_1}\otimes H^{(A_{\lambda,\gamma})}_{B_2}\left(\tau_{AB_1B_2}\right)$ coincides with $C^{(r)}_{AB}(\pazocal{N}_{\lambda,\gamma})$ when tracing out either $B_1$ or $B_2$. 

This however does not mean that we have found a two-extension of $C^{(r)}_{AB}$, because the above operator is not necessarily a state --- it may fail to be positive semi-definite. It \emph{is} a state, however, whenever $H^{(A_{\lambda,\gamma})}$ is a quantum channel, i.e.~when the infinite matrix $A_{\lambda,\gamma}$ is positive semi-definite, in formula $A_{\lambda,\gamma}\geq 0$. 
%However, in order for $H^{(A_{\lambda,\gamma})}$ to be a quantum channel, the infinite matrix $A_{\lambda,\gamma}$ must additionally be positive semi-definite. Under this additional condition the operator $\Id_A\otimes H^{(A_{\lambda,\gamma})}_{B_1}\otimes H^{(A_{\lambda,\gamma})}_{B_2}\left(\tau_{AB_1B_2}\right)$ would be a state, thereby serving as a two-extension of $C^{(r)}_{AB}(\pazocal{N}_{\lambda,\gamma})$.
Therefore, a sufficient condition on the anti-degradability of $\pazocal{N}_{\lambda,\gamma}$ is that $A_{\lambda,\gamma} \geq 0$. 
%is positive semi-definite. 

The rest of the proof consists in showing that under condition~(ii) one indeed finds $A_{\lambda,\gamma}\geq 0$. This is not straightforward to check, because $A_{\lambda,\gamma}$ is an \emph{infinite} matrix, and it cannot be diagonalised analytically nor numerically. To by-pass this last hurdle we employ the theory of diagonally dominant matrices, and in particular the statement that if a matrix $A$ is such that $a_{nn} - \sum_{m:\,m\neq n} |a_{mn}|\geq 0$ for all $n$, then necessarily $A\geq 0$~\cite[Chapter 6]{HJ1}. 
%Now, we leverage the fact that a Hermitian \emph{diagonally dominant} matrix with real positive diagonal entries is positive semi-definite~\cite[Chapter 6]{HJ1}. By definition, a matrix $A\coloneqq (a_{mn})_{m,n\in\N}$ is diagonally dominant if $\sum_{m\ne n}|a_{mn}|\le |a_{nn}|$ for all $n$. 
We demonstrate that if $\lambda$ and $\gamma$ satisfy condition~(ii), then $A_{\lambda,\gamma}$ satisfies this condition, which establishes that $A_{\lambda,\gamma}\geq 0$ and hence concludes the proof. For a more detailed proof, please refer to Theorem~\ref{main-result} in the Appendix.
\end{proof}

Theorem~\ref{Main_Theorem} identifies a region of the parameter space $(\lambda,\gamma)$, with $\lambda$ identifying the transmissivity and $\gamma$ the dephasing, where the channel is anti-degradable and thus its quantum capacity vanishes, thereby implying the absence of viable error correcting codes for quantum data transfer and storage. 
This region is illustrated in Fig.~\ref{Fig_main}. Interestingly, Theorem~\ref{Main_Theorem} implies that even if $\lambda>\frac{1}{2}$ one can pick $\gamma$ large enough so that there exists an anti-degrading map achieving the transformation $\pazocal{N}_{\lambda,\gamma}^{\text{c}}(\ketbra{n}_F)\longrightarrow\pazocal{N}_{\lambda,\gamma}(\ketbra{n}_F)$, which can be expressed as (see Lemma~\ref{lemma_appendix_condition_antideg} in the Appendix)
\bb\label{remarkable_consequence}
\pazocal{E}_{1-\lambda}(\ketbra{n}_F)\otimes\ketbra{\sqrt{\gamma}n}_C\longrightarrow \pazocal{E}_\lambda(\ketbra{n}_F)\,,
\ee
where $\ket{n}_F$ denotes the $n$th Fock state and $\ket{\sqrt{\gamma}n}_C$ denotes a coherent state~\cite{BUCCO}. In Theorem~\ref{explicit-maps} in the Appendix, we provide an explicit construction of such anti-degrading map. This entails the following remarkable fact: for $\lambda>1/2$ and large enough $\gamma$ there exists an \emph{$n$-independent} strategy to convert the lossy Fock state $\pazocal{E}_{1-\lambda}(\ketbra{n}_F)$ into the less lossy Fock state $\pazocal{E}_{\lambda}(\ketbra{n}_F)$ using the coherent state $\ket{\sqrt{\gamma}n}_C$ as a resource. In other words, one can undo part of the loss on $\ket{n}_F$ if one has a coherent state that contains some information on $n$, sufficiently amplified so that that information is accessible enough. The nontrivial and somewhat surprising nature of this exact conversion strategy arises from the fact that the coherent states $\{\ket{\sqrt{\gamma}n}_C\}_{n\in\N}$ are not orthogonal, meaning that the strategy that consists in measuring the coherent state, guessing $n$, and re-preparing $\pazocal{E}_\lambda(\ketbra{n}_F)$ cannot succeed with probability $1$. 

Theorem~\ref{Main_Theorem} does not identify the entire anti-degradability region of $\pazocal{N}_{\lambda,\gamma}$, but only a subset of it. One way to improve this approximation is to determine numerically the region where the infinite matrix $A_{\lambda,\gamma}$ introduced in the proof sketch of Theorem~\ref{Main_Theorem} is positive semi-definite. 
%, as discussed above in the proof sketch of Theorem~\ref{Main_Theorem}. 
In Fig.~\ref{Fig_main} we depict a numerical estimate of this region (see the crossed red part of the plot).

%To obtain a numerical estimate of this region, one can assess the positivity of the $d\times d$ top-left corner of the infinite matrix $A_{\lambda,\gamma}$ for large values of $d$, as depicted in Fig.~\ref{Fig_main}. 

So far we have been concerned with inner approximations of the anti-degradability region. To obtain outer approximations, instead, one can start by observing that the action of $\pazocal{N}_{\lambda,\gamma}$ can only subtract and never add any photons. Mathematically, if the input state to $\pazocal{N}_{\lambda,\gamma}$ is supported on the span of the first $d$ Fock states, so is the output state. This restriction defines a qu$d$it-to-qu$d$it channel $\pazocal{N}_{\lambda,\gamma}^{(d)}$, analysing which can yield some insights into $\pazocal{N}_{\lambda,\gamma}$ itself. First, if $\pazocal{N}^{(d)}_{\lambda,\gamma}$ is \emph{not} anti-degradable then the same is true of $\pazocal{N}_{\lambda,\gamma}$ (see Corollary~\ref{cor-antideg} in the Appendix); secondly, as discussed above the anti-degradability of $\pazocal{N}^{(d)}_{\lambda,\gamma}$ is equivalent to the two-extendibility of the corresponding Choi state~\cite{Myhr2009}, and for moderate values of $d$ this latter condition can be efficiently checked numerically via \emph{semi-definite programming}~\cite{WATROUS, Sumeet_book}.  In this way, as discussed in Section~\ref{subsec_finite} of the Appendix, we can numerically determine a parameter region (see green region of Fig.~\ref{Fig_main}) where $\pazocal{N}_{\lambda,\gamma}$ is not anti-degradable.

Interestingly, already the qubit restriction $\pazocal{N}^{(2)}_{\lambda,\gamma}$, which coincides with the composition between the amplitude damping channel and the qubit dephasing channel~\cite{Sumeet_book}, yields the necessary condition $\lambda\le\frac{1}{1+e^{-\gamma}}$ on the anti-degradability of $\pazocal{N}_{\lambda,\gamma}$, as shown in the forthcoming Theorem~\ref{thm_main_not_antideg}. Based on the analysis of the qudit restrictions, we conjecture that, if $\gamma$ is sufficiently large, the latter condition $\lambda\le \frac{1}{1+e^{-\gamma}}$  is not only necessary but also sufficient. 
\begin{theorem}\label{thm_main_not_antideg}
If $\lambda> \frac{1}{1+e^{-\gamma}}$ then $\pazocal{N}_{\lambda,\gamma}$ is not anti-degradable.
\end{theorem}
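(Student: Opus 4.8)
The plan is to reduce to the qubit restriction $\pazocal{N}^{(2)}_{\lambda,\gamma}$ and to decide its anti-degradability by an exact computation on its two-qubit Choi state. By the observation recorded above---that non-anti-degradability of a finite restriction $\pazocal{N}^{(d)}_{\lambda,\gamma}$ implies non-anti-degradability of $\pazocal{N}_{\lambda,\gamma}$---it suffices to show that $\pazocal{N}^{(2)}_{\lambda,\gamma}$ is not anti-degradable whenever $\lambda>\frac{1}{1+e^{-\gamma}}$.

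First I would write $\pazocal{N}^{(2)}_{\lambda,\gamma}$ explicitly on $\Span\{\ket{0},\ket{1}\}$: combining the Fock-basis action of $\pazocal{E}_\lambda$ with the off-diagonal damping of $\pazocal{D}_\gamma$ gives $\ketbra{0}\mapsto\ketbra{0}$, $\ketbra{1}\mapsto\lambda\ketbra{1}+(1-\lambda)\ketbra{0}$, and $\ketbra{0}{1}\mapsto\sqrt{\lambda}\,e^{-\gamma/2}\ketbra{0}{1}$, i.e.~amplitude damping of transmissivity $\lambda$ with its coherence further suppressed by the dephasing factor $e^{-\gamma/2}$. Using the equivalence between anti-degradability and two-extendibility of the Choi state~\cite{Myhr2009}, I would then form $C\coloneqq C_{AB}(\pazocal{N}^{(2)}_{\lambda,\gamma})$; in the ordered basis $\{\ket{00},\ket{01},\ket{10},\ket{11}\}$ its only nonzero off-diagonal entries couple $\ket{00}$ and $\ket{11}$ through $c\coloneqq\sqrt{\lambda}\,e^{-\gamma/2}$, the $\ket{01}$ diagonal entry vanishes, and in particular $\det C=0$. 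Anti-degradability of $\pazocal{N}^{(2)}_{\lambda,\gamma}$ amounts to $C$ being symmetrically extendible on the output system $B$.

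The crux is to apply the known analytic criterion for symmetric extendibility of two-qubit states: $\rho_{AB}$ is extendible on $B$ if and only if $\Tr(\rho_B^2)\ge\Tr(\rho_{AB}^2)-4\sqrt{\det\rho_{AB}}$. Since $\det C=0$, this collapses to the clean inequality $\Tr(\rho_B^2)\ge\Tr(C^2)$. Evaluating $\rho_B=\diag\!\big(1-\tfrac{\lambda}{2},\tfrac{\lambda}{2}\big)$ and $\Tr(C^2)=\tfrac14\big(1+(1-\lambda)^2+\lambda^2+2c^2\big)$, a short calculation reduces the criterion to $1-\lambda\ge c^2=\lambda e^{-\gamma}$, i.e.~$\lambda\le\frac{1}{1+e^{-\gamma}}$. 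Hence $C$ is two-extendible precisely when $\lambda\le\frac{1}{1+e^{-\gamma}}$, so for $\lambda>\frac{1}{1+e^{-\gamma}}$ the qubit channel---and therefore $\pazocal{N}_{\lambda,\gamma}$ itself---fails to be anti-degradable.

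I expect the main obstacle to be the careful invocation of the two-qubit criterion rather than any single computation: one must confirm that the relevant notion is extendibility on the output $B$ (the subsystem duplicated in the definition of anti-degradability) and that the criterion still applies to the rank-deficient Choi state, where the $\sqrt{\det}$ term simply drops out. A reassuring consistency check is the dephasing-free limit $\gamma=0$, in which the bound reads $\lambda\le\tfrac12$ and recovers the known anti-degradability threshold of the pure-loss channel.
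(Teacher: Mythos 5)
Your proposal is correct and follows essentially the same route as the paper: restrict to the qubit channel $\pazocal{N}^{(2)}_{\lambda,\gamma}$ (using the fact that non-anti-degradability of a qudit restriction lifts to the full channel), write down its Choi state, and apply the two-qubit anti-degradability/symmetric-extendibility criterion of Myhr et al., which with $\det C=0$ reduces to exactly your inequality $1-\lambda\ge\lambda e^{-\gamma}$. Your computations of the Choi matrix, $\rho_B$, and $\Tr(C^2)$ all match the paper's.
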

\begin{proof}
A qubit channel $\pazocal{N}$ is anti-degradable if and only if 
\bb\label{condition_antideg_qubit}
\frac{1}{4}\Tr\!\left[\pazocal{N}\!\left(\mathbb{1}_2\right)^2\right]\ge \Tr\!\left[C(\pazocal{N})^2\right] -4\sqrt{\det[C(\pazocal{N})]},
\ee
where $C(\pazocal{N})$ is the Choi state~\cite{paddock2017characterization,Chen_2014,Myhr2009}. By employing the condition in \eqref{condition_antideg_qubit} together with the expression of the Choi state of the qubit restriction $\pazocal{N}^{(2)}_{\lambda,\gamma}$ provided in \eqref{choistate_bidimensional} in the Appendix, one can show that $\pazocal{N}^{(2)}_{\lambda,\gamma}$ is anti-degradable if and only if $\lambda\le \frac{1}{1+e^{-\gamma}}$. Hence, we conclude that the bosonic loss-dephasing channel $\pazocal{N}_{\lambda,\gamma}$ is not anti-degradable if $\lambda> \frac{1}{1+e^{-\gamma}}$.
\end{proof}
 We are interested in the anti-degradability of the loss-dephasing channel because it implies that the quantum capacity vanishes, entailing the impossibility of quantum communication. We now look at the complementary question: when is the quantum capacity $Q(\pazocal{N}_{\lambda,\gamma})$ strictly positive? A simple sufficient condition can be obtained by optimising the coherent information~\cite{Sumeet_book,MARK} of $\pazocal{N}_{\lambda,\gamma}$ over input states of the form $\rho_p\coloneqq p\ketbra{0}+(1-p)\ketbra{1}$. By doing so we identify a region of the $(\lambda,\gamma)$ parameter space
where $Q(\pazocal{N}_{\lambda,\gamma}) > 0$ (see the crossed green region in Fig.~\ref{Fig_main}). In this region quantum communication and quantum error correction become feasible.
\begin{figure}[!h]
\centering
\includegraphics[width=1\linewidth]{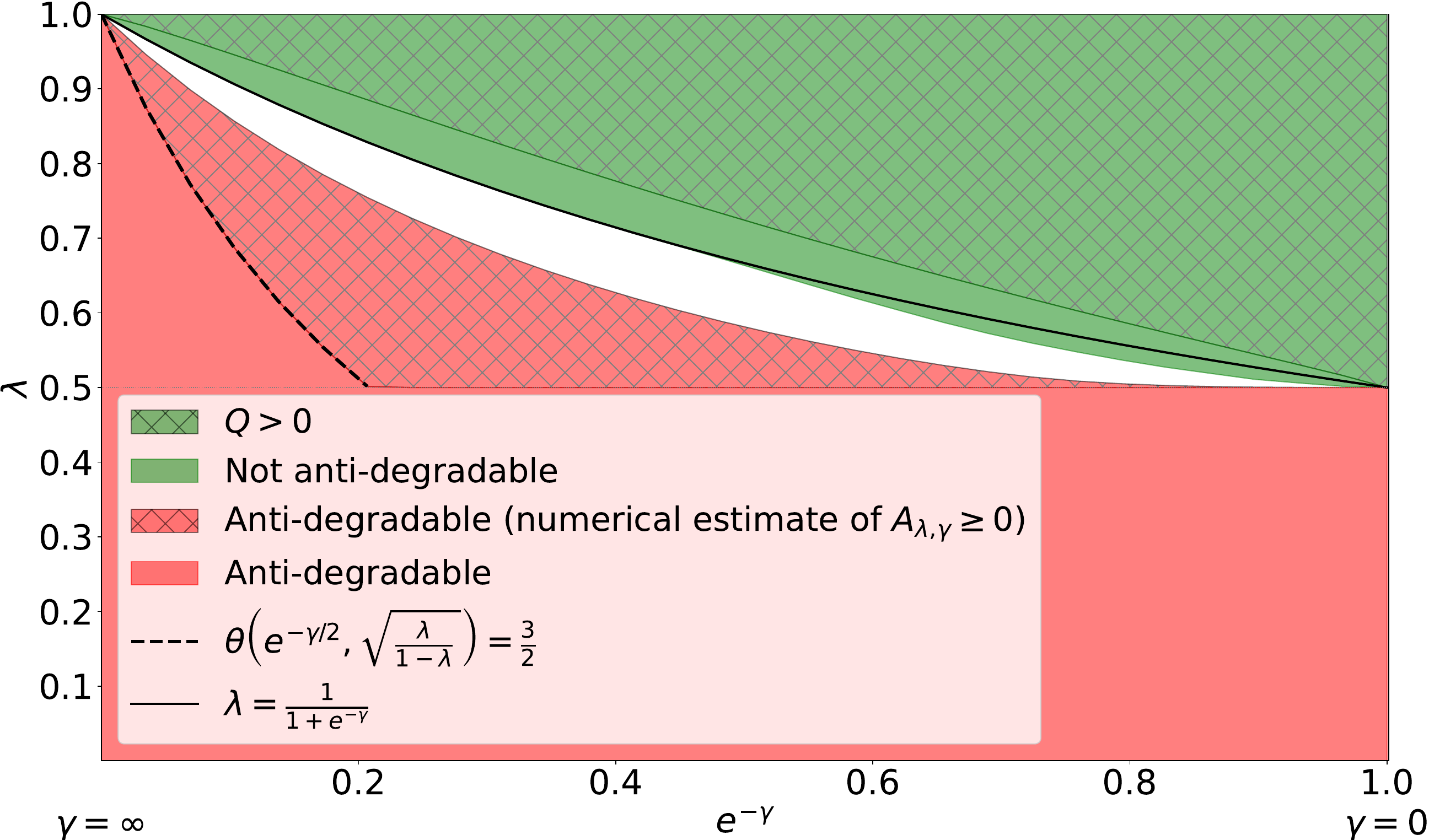}
\caption{Summary of results on the anti-degradability of the bosonic loss-dephasing channel $\pazocal{N}_{\lambda,\gamma}$.  The vertical axis represents the transmissivity $\lambda$, while the horizontal axis corresponds to $e^{-\gamma}$, where $\gamma$ is the dephasing parameter. In the green region $\pazocal{N}_{\lambda,\gamma}$ is not anti-degradable, while in the red region it is anti-degradable. In the crossed green region, the quantum capacity of $\pazocal{N}_{\lambda,\gamma}$ is strictly positive. The crossed red region is a numerical estimate of the region where the infinite matrix $A_{\lambda,\gamma}$ is positive semi-definite, a condition implying that $\pazocal{N}_{\lambda,\gamma}$ is anti-degradable, as explained in the proof sketch of Theorem~\ref{Main_Theorem}. Such an estimate can be obtained by examining the positive semi-definiteness of the $d\times d$ top-left corner of $A_{\lambda,\gamma}$ for large values of $d$ (here we employ $d=30$, but increasing $d$ already beyond $d\ge20$ yields no discernible change in the plot). The restriction $\pazocal{N}_{\lambda,\gamma}^{(6)}$ is anti-degradable if and only if $\lambda$ and $\gamma$ fall within the green region, and this is the reason why $\pazocal{N}_{\lambda,\gamma}$ is not anti-degradable in the green region. Below the curve $\theta\!\left(e^{-\gamma/2},\sqrt{\frac{\lambda}{1-\lambda}}\right)=\frac{3}{2}$, the channel $\pazocal{N}_{\lambda,\gamma}$ is anti-degradable, as stated in Theorem~\ref{Main_Theorem}. Above the curve $\lambda=\frac{1}{1+e^{-\gamma}}$, the channel is not anti-degradable, as guaranteed by Theorem~\ref{thm_main_not_antideg}. }

\label{Fig_main}
\end{figure}

\begin{comment}

    Furthermore, the region below the dotted line is the sole anti-degradability region of the restriction $\pazocal{N}_{\lambda,\gamma}^{(6)}$. Consequently, $\pazocal{N}_{\lambda,\gamma}$ is not anti-degradable in the region above the dotted line.
\end{comment}

\section{Two-way quantum communication}\label{Sec_two_way}
As we have just seen, (unassisted) quantum communication is not possible when the combined effects of loss and dephasing are too strong. However, in the forthcoming Theorem~\ref{Thm_main_two_way} we show that if Alice (the sender) and Bob (the receiver) have access to a \emph{two-way} classical communication line, then quantum communication, entanglement distribution, and quantum-key distribution~\cite{Ekert91} become again achievable for any value of loss and dephasing, even when Alice's input signals are constrained to have limited energy.

In this two-way communication setting the relevant notion of capacity is the \emph{two-way quantum capacity} $Q_2(\pazocal{N})$~\cite{MARK, Sumeet_book}, defined as the maximum achievable rate of qubits that can be reliably transmitted across $\pazocal{N}$ with the aid of two-way classical communication. Since in practice Alice has only a limited amount of energy to produce her input signals, one usually defines the so-called \emph{energy-constrained} two-way quantum capacity~\cite{Davis2018,8541091}, denoted as $Q_2(\pazocal{N},N_s)$. Here, $N_s$ denotes the mean photon number constraint at the input of the channel. 
\begin{theorem}\label{Thm_main_two_way}
For all $N_s>0$, $\lambda\in (0,1]$, and $\gamma\ge 0$, the energy-constrained two-way quantum capacity of the bosonic loss-dephasing channel is strictly positive, i.e.~$Q_2(\pazocal{N}_{\lambda,\gamma},N_s)>0$. In particular, $\pazocal{N}_{\lambda,\gamma}$ is not entanglement breaking. An explicit %strictly positive 
lower bound is 
\bb\label{lower_bound_q2_explicit}
    Q_2(\pazocal{N}_{\lambda,\gamma},N_s)\ge \frac{\lambda^N}{k}\left[\log_2\! \binom{N+k-1}{N} - S\!\left( \rho_{N,k,\gamma} \right) \right]
\ee
for any $N,k\in\N_+$ satisfying $\frac{N}{k}\le N_s$. Here, $S(\cdot)$ is the von Neumann entropy, $\rho_{N,k,\gamma}$ is a $\binom{N+k-1}{N}$-dimensional state defined by
    \bb
        \rho_{N,k,\gamma} &\coloneqq \binom{N+k-1}{N}^{-1}\sum_{p,q\in \Pi(N,k)}e^{-\frac{\gamma}{2}\|p-q\|_2^2}\ketbra{p}{q}\, ,
    \ee
    where $\Pi(N,k) \coloneqq\big\{p\in\N^k:\, \sumno_{i=1}^k p_i=N\big\}$ represents the set of partitions of a set of $N$ elements into $k$ parts, and the vectors $\{\ket{p}\}_{p\in\Pi(N,k)}$ are orthonormal.
\end{theorem}
Here, we provide a sketch of the proof, while in Theorem~\ref{thm_two_capacities_explicit} in the Appendix we provide a detailed proof.
%Here, we provide a sketch of the proof, while in Theorem~\ref{theorem_not_antideg} in the Appendix we provide a detailed proof.
\begin{proof}[Proof sketch] 
The proof exploits entanglement transmission protected by a particular error correction technique, \emph{rail encoding}. In a $k$-mode bosonic system, consider the subspace $V_{N,k}$ corresponding to a total photon number $N$. We can use this subspace, whose dimension is $d_{N,k} \coloneqq \dim V_{N,k} = \binom{N+k-1}{N}$, as an error correction code that protects against the detrimental action of $\pazocal{N}_{\lambda,\gamma}$. To this end, we prepare a maximally entangled state of dimension $d_{N,k}$ and send one share of it through $k$ copies of the channel $\NN_{\lambda,\gamma}$, one per mode. 
%\tcb{The proof relies on \emph{rail encodings}~\cite{lower-bound}. Given a $\binom{N+k-1}{N}$-dimensional state $\rho$, we define the \emph{rail encoded state} $\bar{\rho}$ as a state isometrically equivalent to $\rho$ that lives in the subspace spanned by the $k$-mode Fock states with $N$ photons.} 
Since under the action of $\NN_{\lambda,\gamma}$ photons can only be lost and never added, and each photon has a probability $\lambda$ of being transmitted, the probability that an $N$-photon state will retain all of its photons at the output of the channel is exactly $\lambda^N$.  If this happens to be the case, which --- crucially --- can be certified by a total photon number measurement at the output, then the input state has been subjected to no loss and only dephasing. The entanglement of the resulting, maximally correlated state can be distilled via an explicit protocol known as the hashing protocol~\cite{devetak2005,Sumeet_book}, resulting in $\log_2 d_{N,k} - S(\rho_{N,k,\gamma}) > 0$ singlet (a.k.a.\ ebit, i.e.\ unit of entanglement) yield. 
The strict positivity of this yield follows by observing that $\rho_{N,k,\gamma}$ is a $d_{N,k}$-dimensional mixed state that is not maximally mixed.  
\end{proof}

\section{Degradability}\label{Sec_deg}
The bosonic loss-dephasing channel $\pazocal{N}_{\lambda,\gamma}$ is never degradable, except in the simple cases when either $\lambda=1$ or $\gamma=0$ and $\lambda\geq 1/2$~\cite{Loss_dephasing0}. This in turn implies that no single-letter formula for its quantum capacity is known outside of the anti-degradability region studied here, where the capacity vanishes. The failure of degradability has been demonstrated in~\cite{Loss_dephasing0} through a lengthy proof; we are now in position to provide an alternative, much simpler argument. The key ideas %behind our proof 
are as follows: (i)~If the qubit restriction $\pazocal{N}_{\lambda,\gamma}^{(2)}$ is not degradable, then $\pazocal{N}_{\lambda,\gamma}$ is not degradable either (see Corollary~\ref{cor-antideg} in the Appendix); and (ii)~if the rank of the Choi state of a qubit channel is greater or equal to $3$ than such channel is not degradable~\cite[Theorem 4]{Cubitt2008}. The result then follows by observing that the Choi state of the qubit channel $\pazocal{N}_{\lambda,\gamma}^{(2)}$, as provided in \eqref{choistate_bidimensional} in the Appendix, has a rank exactly equal to $3$. For a detailed proof see Theorem~\ref{thm_degradability} in the Appendix.

\section{Conclusion}
In this paper we have provided the first analytical investigation of the quantum communication capabilities of the bosonic loss-dephasing channel, a much more realistic model of noise than dephasing and loss treated separately. Refuting a conjecture put forth in~\cite{Loss_dephasing0}, we showed that the bosonic loss-dephasing channel is anti-degradable in a large region of the loss-dephasing parameter space, entailing that neither quantum communication nor quantum error correcting codes are possible in this region. On the positive side, we also showed that if two-way classical communication is suitably exploited, then quantum communication is always achievable, even in scenarios characterised by high levels of loss and dephasing, and even in the presence of stringent energy constraints.

Two fundamental technical innovations are key to our approach. First, a new method to analyse anti-degradability of bosonic channels, based on a two-stage construction of a symmetric extension of the Choi state. The introduction of this technique is crucial here also on the conceptual level, as \emph{all} other known tools to analyse quantum capacities (e.g.~degradability~\cite{Sumeet_book}, PPT-ness~\cite{incapacity}, or teleportation simulability~\cite{PLOB,Bennett-error-correction}) fail completely for the loss-dephasing channel~\cite{Loss_dephasing0}. In Section~\ref{envision_gen} in the Appendix, we envision that our technique could also be applied to other cases, e.g.\ to analyse the anti-degradability of the composition between the pure-loss channel and a \emph{general} bosonic dephasing channel. The second innovation we introduce is based on the use of rail encoding to investigate two-way assisted entanglement generation on the loss-dephasing channel. This technique, which we anticipate may be used to study general processes where photon loss is involved, has the additional benefit of yielding an explicit entanglement generation protocol.

Although the capacities of the dephasing channel and the pure-loss channel (separately) have been determined~\cite{exact-solution,PLOB,holwer, Caruso2006, Wolf2007, Mark2012, Mark-energy-constrained, Noh2019}, the capacities of the channel resulting from their combined action remain unknown. An intriguing open problem is to calculate or approximate these capacities.

\medskip
\textbf{\em Acknowledgements.}--- FAM and VG acknowledge financial support by MUR (Ministero dell'Istruzione, dell'Universit\`a e della Ricerca) through the following projects: PNRR MUR project PE0000023-NQSTI, PRIN 2017 Taming complexity via Quantum Strategies: a Hybrid Integrated Photonic approach (QUSHIP) Id.\ 2017SRN-BRK, and project PRO3 Quantum Pathfinder. FS is supported by Walter Benjamin Fellowship, DFG project No.\ 524058134. FAM and LL thank the Freie Universit\"{a}t Berlin for hospitality. FAM thanks the University of Amsterdam for hospitality.

\bibliographystyle{unsrt}
\bibliography{biblio}
\onecolumngrid
\appendix

\section{Preliminaries and notation}

\subsection{Quantum states and Channels}
In this subsection, we present a summary of the notation and fundamental properties used in the paper, drawing from the conventions established in standard quantum information theory textbooks~\cite{NC,MARK,WATROUS,Sumeet_book}.
Every quantum system is associated with a separable complex Hilbert space $\HH$ whose dimension is denoted by $|\HH|$. We use subscripts to denote the system associated to a Hilbert space and also systems on which the operators act. The composite quantum systems $A$ and $B$ exist within the tensor product of their individual Hilbert spaces $\pazocal{H}_A\otimes\pazocal{H}_B$ which is also denoted by $\HH_{AB}$.

We shall use $\mathbb{1}$ to denote the identity operator on $\HH$. The operator norm  of a linear operator $\Theta:\HH\to\HH$ is defined as 
\bb
    \|\Theta\|_{\infty}\coloneqq \sup_{\ket{\psi}\in\HH:\,\langle{\psi}|\psi\rangle =1}\sqrt{\langle\psi| \Theta^\dagger\Theta |\psi\rangle}\,.
\ee
An alternative (but equivalent) definition of the operator norm is as follows: 
\bb\label{def2_operator_norm}
\| \Theta \|_\infty \coloneqq \sup_{\ket{v},\ket{w}\in\HH,\, \braket{v}{v}=\braket{w}{w}=1} \left|\bra{v}\Theta\ket{w}\right|\,.
\ee
An operator is called bounded if its operator norm is bounded, i.e.~$\| \Theta \|_\infty<\infty$. A bounded operator $\Theta$ is positive semi-definite if 
$\bra{\psi}\Theta\ket{\psi}\geq 0,\forall\, \ket{\psi}\in \HH\,$,
while it is positive definite if 
$\bra{\psi}\Theta\ket{\psi}> 0,\,\forall\, \ket{\psi}\in \HH\,$. The trace norm of a linear operator $\Theta:\HH\to\HH$ is defined as
$\|\Theta\|_1\coloneqq \Tr\sqrt{\Theta^\dagger\Theta}$.
The set of trace class operators, denoted as $\pazocal{T}(\HH)$, is the set of all the linear operators on $\HH$ such that their trace norm is finite, i.e.~$\|\Theta\|_1<\infty$.
The operator and trace norm satisfy $\|\Theta\|_\infty\le \|\Theta\|_1$.
The set of quantum states (density operators), denoted as $\pazocal{P}(\HH)$, is the set of positive semi-definite trace class operators on $\HH$ with unit trace. The \emph{fidelity} between two quantum states $\rho,\sigma\in\pazocal{P}(\HH)$ is defined as $F(\rho,\sigma)\coloneqq \Tr[\sqrt{\sqrt{\rho}\sigma\sqrt{\rho}}]$.

A superoperator is a linear map between spaces of linear operators. The identity superoperator will be denoted as $\Id$. Quantum channels are completely-positive trace-preserving (cptp) superoperators. In this paper, we will use two different representations of a quantum channel that
are known as Stinespring and Choi--Jamio\l{}kowski representation. A quantum channel $\pazocal{N}_{A'\to B}$ can be represented in Stinespring representation as
\begin{align*}
      \pazocal{N}_{A'\to B}(\cdot )=\Tr_{E}\!\left[ U_{A'E\to BE}    (\cdot\otimes\ketbra{0}_E) U_{A'E\to BE} ^\dagger \right].
\end{align*}
Here, $E$ is an environment system, $\ket{0}_E$ is a pure state of the environment, and $U_{A'E\to BE} $ is an isometry that takes as input the systems $A'$ and $E$ and outputs the systems $B$, $E$. The associated complementary channel  $\pazocal{N}_{A'\to B}^\text{c}$ is given by
\begin{align*}
       \pazocal{N}^{\text{c}}_{A'\to E}(\cdot )=\Tr_{B}\!\left[ U_{A'E\to BE}   (\cdot\otimes\ketbra{0}_E) U_{A'E\to BE} ^\dagger \right].
\end{align*}
A channel $\pazocal{N}$ is called degradable if there exist a quantum channel $\pazocal{J}$, such that when is used after $\pazocal{N}$, we get back to the complementary channel $\pazocal{N}^{\text{c}}$, i.e.~$\pazocal{J}\circ\pazocal{N}=\pazocal{N}^{\text{c}}$. On the other hand, a channel is called anti-degradable if there is another quantum channel $\pazocal{A}$, such that using it  after the complementary channel, gives back the original channel,
i.e.~$\pazocal{A}\circ\pazocal{N}^\text{c}=\pazocal{N}$. The channels $\pazocal{J}$ and $\pazocal{A}$ are usually called the degrading map and the anti-degrading map of $\pazocal{N}$, respectively.

The Choi--Jamio\l{}kowski representation of the channel $\pazocal{N}_{A'\to B}$ is the operator $C(\pazocal{N})\in\pazocal{T}(\HH_{A}\otimes \HH_{B})$ that is defined as 
\bb\label{def_Choi}
    C(\pazocal{N})\coloneqq\Id_A\otimes \pazocal{N}_{A'\to B}(\ketbra{\Phi}_{AA'})\,,
    %=\frac{1}{|\HH_A|}\sum_{1\leq i,j\leq|\HH_A|}\ketbra{i}{j}_A\otimes\pazocal{N}_{A'\to B}(\ketbra{i}{j}_{A'})\,
\ee
where $\ket{\Phi}=\frac{1}{\sqrt{|\HH_A|}}\sum_{i=0}^{|\HH_A|-1}\ket{i}_{A}\otimes\ket{i}_{A'}$ is a maximally entangled state of schmidt rank $|\HH_A|$, the set of states $\{\ket{i}\}_{i=0,\ldots,|\HH_A|-1}$ forms a basis for $\HH_A$, and $\HH_{A}=\HH_{A'}$. It is a well-established fact that the superoperator $\pazocal{N}$ is a quantum channel if and only if $C(\pazocal{N}) \geq 0$ and $\Tr_BC(\pazocal{N}) = \mathbb{1}_A/|\HH_A|$~\cite{MARK}.

\begin{Def}\label{extendibility}
    A bipartite state $\rho_{AB}$ is symmetric two-extendible on $B$ if there exists a tripartite state $\tau_{AB_1B_2}$ such that
    \begin{itemize}
        \item $F_{B_1B_2}\tau_{AB_1B_2}F_{B_1B_2}^\dagger=\tau_{AB_1B_2}$
        \item $\Tr_{B_1}\tau_{AB_1B_2}=\rho_{AB}$,
    \end{itemize}
    where $B_1$ and $B_2$ are two copies of the system $B$, the operator $F_{B_1B_2}\coloneqq\sum_{i,j}\ketbra{i}{j}_{B_1}\otimes\ketbra{j}{i}_{B_2}$ denotes the swap unitary, and $\{\ket{i}_{B_1}\}_i$ and $\{\ket{i}_{B_2}\}_i$ form an orthonormal basis. The state $\tau_{AB_1B_2} $ is called a symmetric two-extension of $\rho_{AB}$ on $B$.
\end{Def}
\begin{Def}\label{two_extendibility_sm}
    A bipartite state $\rho_{AB}$ is called two-extendible on $B$ if there exists a tripartite state $\tau_{AB_1B_2}$ such that
    \bb
    \Tr_{B_1}\tau_{AB_1B_2}=\Tr_{B_2}\tau_{AB_1B_2}=\rho_{AB}\,,
    \ee
   where $B_1$ and $B_2$ are two copies of the system $B$.
\end{Def}

\begin{lemma}[\cite{Nowakowski_2016}]
A bipartite state $\rho_{AB}$ is two-extendible on $B$ if and only if it is symmetric two-extendible on $B$ .
 \end{lemma}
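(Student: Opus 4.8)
The plan is to prove both implications directly, the one nontrivial ingredient being that the swap operator $F_{B_1B_2}$ is a self-inverse unitary, so $F_{B_1B_2}^\dagger=F_{B_1B_2}$ and $F_{B_1B_2}^2=\mathbb{1}$, and its conjugation action merely interchanges the two $B$-marginals of any tripartite operator. Concretely, since $B_1$ and $B_2$ are identical copies of $B$, I would first establish the marginal-swap identity $\Tr_{B_1}[F_{B_1B_2}\,\tau_{AB_1B_2}\,F_{B_1B_2}^\dagger]=\Tr_{B_2}[\tau_{AB_1B_2}]$, where both sides are read as operators on $\HH_{AB}$ after identifying the surviving copy of $B$ with $B$ itself. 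This is essentially the only step requiring care, and both directions reduce to it.

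For the forward direction (symmetric $\Rightarrow$ two-extendible), suppose $\tau_{AB_1B_2}$ is a symmetric two-extension, so that $F_{B_1B_2}\tau F_{B_1B_2}^\dagger=\tau$ and $\Tr_{B_1}\tau=\rho_{AB}$. Tracing out $B_2$ and invoking swap-invariance together with the marginal-swap identity gives $\Tr_{B_2}\tau=\Tr_{B_2}[F\tau F^\dagger]=\Tr_{B_1}\tau=\rho_{AB}$, so $\tau$ is automatically a two-extension in the sense of Definition~\ref{two_extendibility_sm}.

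For the backward direction (two-extendible $\Rightarrow$ symmetric), given a two-extension $\tau_{AB_1B_2}$ with $\Tr_{B_1}\tau=\Tr_{B_2}\tau=\rho_{AB}$, I would symmetrize by averaging over the swap, setting $\tilde\tau\coloneqq\tfrac12\big(\tau+F_{B_1B_2}\,\tau\,F_{B_1B_2}^\dagger\big)$. Because $F_{B_1B_2}$ is unitary, $F\tau F^\dagger$ is again a valid state, so $\tilde\tau$ is a density operator as a convex combination (positive, unit trace); it is manifestly swap-invariant since $F^2=\mathbb{1}$. Finally, the marginal-swap identity yields $\Tr_{B_1}\tilde\tau=\tfrac12\big(\Tr_{B_1}\tau+\Tr_{B_2}\tau\big)=\rho_{AB}$, so $\tilde\tau$ is a symmetric two-extension.

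The argument is therefore essentially routine, with the marginal-swap identity as the only conceptual step. The main point to watch in the infinite-dimensional setting relevant to this paper is that all partial traces and the symmetrized operator remain well-defined trace-class operators; this follows because $F_{B_1B_2}$ is bounded and the trace class is closed under unitary conjugation and convex combinations.
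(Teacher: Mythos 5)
Your proposal is correct and follows essentially the same route as the paper: swap-invariance forces the two $B$-marginals to coincide for the easy direction, and averaging $\tau$ with $F_{B_1B_2}\tau F_{B_1B_2}^\dagger$ produces the symmetric extension for the converse. The extra care you take with the marginal-swap identity and trace-class issues is fine but not a different argument.
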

 \begin{proof}
      First, assume $\rho_{AB}$ is symmetric two-extendible on $B$. Since $F_{B_1B_2}\tau_{AB_1B_2}F_{B_1B_2}^\dagger=\tau_{AB_1B_2}$, it holds that $\Tr_{B_2}\tau_{AB_1B_2}=\Tr_{B_1}\tau_{AB_1B_2}$. This implies that $\rho_{AB}$ is two-extendible on $B$. Second, let $\rho_{AB}$ be two-extendible on $B$. One can easily check that the state $1/2(\tau_{AB_1B_2}+F_{B_1B_2}\tau_{AB_1B_2}F_{B_1B_2}^\dagger)$ is a symmetric two-extension of $\rho_{AB}$.
 \end{proof}

It has been demonstrated that a quantum channel is anti-degradable if and only if its Choi state is two-extendible on the output system~\cite{Myhr2009}. This equivalence leads to a simple necessary and sufficient condition for the anti-degradability of qubit channels:
\begin{lemma}\cite[Corollary 4]{paddock2017characterization}(See also~\cite{Chen_2014,Myhr2009})\label{lemma_qubit_charact_antideg}
    Any qubit quantum channel $\pazocal{N}$ is anti-degradable if and only if it satisfies
    \begin{align*}   \Tr\!\left[\left(\pazocal{N}\left(\frac{\mathbb{1}_2}{2}\right)\right)^2\right]\ge \Tr\!\left[(C(\pazocal{N}))^2\right] -4\sqrt{\det(C(\pazocal{N}))},
    \end{align*}
    where $\mathbb{1}_2$ denotes the identity operator on the qubit Hilbert space.
\end{lemma}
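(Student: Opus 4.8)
The plan is to combine two facts that have already been recalled in the excerpt. By the criterion of~\cite{Myhr2009}, the channel $\pazocal{N}$ is anti-degradable if and only if its Choi state $C(\pazocal{N})$ is two-extendible on the output system $B$, and by the Lemma of~\cite{Nowakowski_2016} this is in turn equivalent to the existence of a \emph{symmetric} two-extension $\tau_{AB_1B_2}$ of $C(\pazocal{N})$ on $B$. The point is that for a qubit channel $\pazocal{N}$ both $A$ and $B$ are qubits, so $C(\pazocal{N})$ is a genuine two-qubit state, and the anti-degradability question collapses to the purely state-theoretic problem of characterising which two-qubit states are symmetrically extendible.

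The key ingredient is therefore the following two-qubit criterion: a two-qubit state $\rho_{AB}$ is symmetrically extendible on $B$ if and only if
\bb
\Tr[\rho_B^2]\ge \Tr[\rho_{AB}^2]-4\sqrt{\det \rho_{AB}}\,,
\ee
where $\rho_B\coloneqq \Tr_A \rho_{AB}$. To prove this I would use that symmetric extendibility on $B$ is invariant under local unitaries $U_A\otimes U_B$ and, up to renormalisation, under an invertible local filter $\mathbb{1}_A\otimes M_B$ applied simultaneously to both copies $B_1,B_2$ of the extension. These symmetries let one bring $\rho_{AB}$ to a canonical form in which $\rho_B=\mathbb{1}_2/2$ and the correlation matrix is diagonal; in this form one either writes down an explicit symmetric three-qubit extension (equivalently, certifies positivity of the candidate extension by semidefinite-programming duality) when the rescaled inequality holds, or exhibits a positivity obstruction on every candidate extension when it fails. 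The last step is to verify that the scalar $\Tr[\rho_B^2]-\Tr[\rho_{AB}^2]+4\sqrt{\det\rho_{AB}}$ transforms under the filter exactly so that the canonical-form inequality is equivalent to the stated one for the original $\rho_{AB}$.

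It then remains only to translate the criterion into channel language. Since $\Tr_A\ketbra{\Phi}_{AA'}=\mathbb{1}_{A'}/2$ for the qubit maximally entangled state, tracing out $A$ in the Choi state gives
\bb
\Tr_A C(\pazocal{N})=\pazocal{N}\!\left(\frac{\mathbb{1}_2}{2}\right)\,,
\ee
so the reduced state $\rho_B$ of $\rho_{AB}=C(\pazocal{N})$ is precisely $\pazocal{N}(\mathbb{1}_2/2)$. Substituting $\rho_{AB}=C(\pazocal{N})$ and $\rho_B=\pazocal{N}(\mathbb{1}_2/2)$ into the two-qubit criterion reproduces exactly the inequality in the statement, which by the chain anti-degradable $\Leftrightarrow$ two-extendible $\Leftrightarrow$ symmetrically extendible completes the proof.

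I expect the genuine obstacle to be the two-qubit symmetric-extendibility criterion itself: the reduction to canonical form requires careful bookkeeping of how the purity $\Tr[\rho_{AB}^2]$, the marginal purity $\Tr[\rho_B^2]$, and the determinant $\det\rho_{AB}$ rescale under the non-unitary local filter, and the construction of the explicit extension (or of the matching obstruction) on each side of the threshold is the nontrivial algebraic content. By contrast, the two bookends — the anti-degradability/two-extendibility equivalence and the computation of the reduced Choi state — are essentially immediate.
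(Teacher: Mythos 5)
The paper gives no proof of this lemma at all: it is imported verbatim from the literature (Corollary~4 of the cited work, building on Chen~et~al.\ and Myhr--L\"utkenhaus), so the only proof content expected here is the translation layer, which you do handle correctly --- anti-degradability is equivalent to two-extendibility of the Choi state on the output system, symmetrisation of the extension is free, and $\Tr_A C(\pazocal{N})=\pazocal{N}(\mathbb{1}_2/2)$, so the stated inequality is the two-qubit symmetric-extension criterion evaluated at $\rho_{AB}=C(\pazocal{N})$. Everything beyond that in your proposal is an attempt to reprove the two-qubit criterion $\Tr[\rho_B^2]\ge\Tr[\rho_{AB}^2]-4\sqrt{\det\rho_{AB}}$ itself, which is exactly the nontrivial theorem the citation is carrying; your text defers it to ``write down an explicit symmetric extension or exhibit a positivity obstruction,'' so as it stands the proposal is a correct reduction plus a promissory note, not a proof.

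Moreover, the one concrete mechanism you offer for that missing part is flawed. Symmetric extendibility on $B$ is \emph{not} invariant under an invertible filter $\mathbb{1}_A\otimes M_B$ applied simultaneously to both copies $B_1,B_2$ of the extension: if $\tau_{AB_1B_2}$ extends $\rho_{AB}$, then $\Tr_{B_2}\!\left[(\mathbb{1}\otimes M\otimes M)\,\tau\,(\mathbb{1}\otimes M\otimes M)^\dagger\right]=(\mathbb{1}\otimes M)\,\Tr_{B_2}\!\left[(\mathbb{1}\otimes\mathbb{1}\otimes M^\dagger M)\,\tau\right](\mathbb{1}\otimes M)^\dagger$, which is an extension of the filtered state only when $M^\dagger M\propto\mathbb{1}$, i.e.\ when $M$ is (proportional to) a unitary. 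The filtering freedom that genuinely preserves symmetric extendibility acts on the \emph{unextended} system $A$, and even with that freedom the reduction to a canonical form with $\rho_B=\mathbb{1}_2/2$ is not available in the way you describe, nor is the inequality manifestly filter-covariant --- tracking how $\Tr[\rho_B^2]$, $\Tr[\rho_{AB}^2]$ and $\det\rho_{AB}$ transform, and then constructing the extension or the obstruction, is precisely the substantial content of the cited proof. So either invoke the two-qubit criterion as a black box, as the paper does, or be prepared to reproduce that full argument; the present sketch would break down at the filtering step.
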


\medskip
\subsection{Bosonic systems}
In this subsection, we will provide an overview of relevant definitions and properties concerning quantum information with continuous variable systems; refer to~\cite{BUCCO} for detailed explanations. A single-mode of electromagnetic radiation with definite frequency and polarisation is represented by the Hilbert space $L^2(\mathbb{R})$, which comprises all square-integrable complex-valued functions over $\mathbb{R}$. Let $\mathbb{N}_+$ be the set of strictly positive integers and let $\mathbb{N}\coloneqq \{0\}\cup\mathbb{N}_+$. For any $n\in\mathbb{N}$, the construction of the \emph{Fock state} $\ket{n}$ (the quantum state with $n$ photons) involves the application of the bosonic creation operator $\hat{a}^\dagger$ to the \emph{vacuum state} $\ket{0}$: 
\bb
    \ket{n}\coloneqq \frac{1}{\sqrt{n!}}(\hat{a}^\dagger)^n\ket{0}\,.
\ee
The Fock states $\{\ket{n}\}_{n\in\N}$ form an orthonormal basis of $L^2(\mathbb{R})$. The bosonic annihilation operator $\hat{a}$ and creation operator $\hat{a}^\dagger$ satisfy the well-known canonical commutation relation $[\hat{a},\hat{a}^\dagger]=\mathbb{1}$.

Let $\mathbb{C}$ be the set of complex numbers.
For any $\alpha\in\mathbb{C}$, let $D(\alpha)\coloneqq e^{\alpha \hat{a}^\dagger-\alpha^\ast \hat{a}}$ be the displacement operator. A coherent state of parameter $\alpha$, denoted by $\ket{\alpha}$, is defined by applying the displacement operator $D(\alpha)$ to the vacuum state, i.e.~$\ket{\alpha}\coloneqq D(\alpha)\ket{0}$. The overlap between coherent states is given by 
\bb\label{overlap_coh}
\braket{\alpha}{\beta}=e^{-\frac{1}{2}(|\alpha|^2+|\beta|^2-2\alpha^{*}\beta)}\,.
\ee

Quantum channels acting on bosonic systems are sometimes called bosonic channels. Similar to finite-dimensional channels, bosonic channels admit a Choi--Jamio\l{}kowski representation, usually referred to as \emph{generalised} Choi--Jamio\l{}kowski representation~\cite{Holevo-CJ,Holevo-CJ-arXiv}. 
Consider isomorphic Hilbert spaces $\HH_A,\HH_{A'}$ which are possibly infinite dimensional. Let $\ket{\psi}_{AA'}$ be a pure state satisfying $\Tr_{A'}\!\left[\ketbra{\psi}_{AA'}\right]>0$ (See~\cite[Lemma 26]{9039682}). The generalised Choi state is constructed by applying the channel to the subsystem $A'$ of $\ket{\psi}_{A'A}$ (see Lemma~\ref{gen_choi_thm} in the Appendix): $\Id_{A}\otimes\pazocal{N}_{A'\to B}(\ketbra{\psi}_{AA'})$.
The construction of the generalised Choi state usually utilises the two-mode squeezed vacuum state with squeezing parameter $r>0$, defined as follows~\cite{BUCCO}:
\begin{align}\label{def_squeezed0}
    \ket{\psi(r)}_{AA'} \coloneqq \frac{1}{\cosh (r)}\sum_{n=0}^\infty \tanh^n (r) \ket{n}_{A}\ket{n}_{A'}\,.
\end{align}
The equivalence between anti-degradability of a channel and two-extendibility of its Choi state extends to the infinite-dimensional channels~\cite{extendibility}.
We provide a detailed proof of this equivalence in Lemma~\ref{lemma_infinite_extendibility} in the Appendix as it helps us in developing our intuition in inventing an explicit example of an anti-degrading map of the bosonic loss-dephasing channel.

\subsection{Hadamard maps}
\label{Hadamard-intro}
Let $A = (a_{mn})_{m,n\in\mathbb{N}}, a_{mn}\in\mathbb{C}$, be an infinite matrix of complex numbers. Consider the superoperator $H$ on $\pazocal{T}\left(L^2(\mathbb{R})\right)$, recognised as the \emph{Hadamard map}, whose action on rank one operator $\ketbra{m}{n}$ is defined as follows:
\begin{align*}
        H(\ketbra{m}{n})=a_{mn}\ketbra{m}{n},\quad\forall\,m,n\in\mathbb{N}\,.
\end{align*}
In Section~\ref{sec_hadamard_chann} in the Appendix, we provide an overview of relevant properties of Hadamard maps. In particular, by combining known results about Hadamard maps and matrix analysis, in Lemma~\ref{diag_dom_implies_hadamard} in the Appendix, we show that given an infinite matrix $A = (a_{mn})_{m,n\in\mathbb{N}}, a_{mn}\in\mathbb{C}$, the associated Hadamard map is a quantum channel if
\begin{itemize}
    \item $A$ is Hermitian
    \item $a_{nn}=1$, $\forall n\in\N$
    \item $A$ is diagonally dominant, i.e.~$\sum_{\substack{m=0\\  m\ne n}}^\infty|a_{mn}|\le 1,\,\forall\,n\in\N\,.$ 
\end{itemize}

\subsection{Beam splitter}
A beam splitter serves as a linear optical tool employed for creating quantum entanglement between two modes, referred to as the \emph{system} mode (denoted as $S$) and the \emph{environment} mode (denoted as $E$). A depiction of a beam splitter is reported in Fig~\ref{beam-splitter}.

\begin{Def}
Let $\HH_S,\HH_E\coloneqq L^2(\mathbb{R})$. Let $\hat{a}$ and $\hat{e}$ denote the annihilation operator of $\HH_S$ and $\HH_E$, respectively. 
For all $\lambda\in[0,1]$, the beam splitter unitary of transmissivity $\lambda$ is given by
\bb\label{def_beam_splitter}
    U_{\lambda}^{SE}&\coloneqq\exp\left[\arccos\sqrt{\lambda}\left(\hat{a}^\dagger \hat{e}-\hat{a}\, \hat{e}^\dagger\right)\right]\,.
\ee
\end{Def}
\begin{lemma}\label{trasf_annihilation}
    For all $\lambda\in[0,1]$, it holds that
    \bb
        (U_\lambda^{SE})^\dagger \hat{a}\, U_\lambda^{SE}&=\sqrt{\lambda}\,\hat{a}+\sqrt{1-\lambda}\,\hat{e}\,,\\
U_\lambda^{SE} \hat{a}\, (U_{\lambda}^{SE})^\dagger&=\sqrt{\lambda}\,\hat{a}-\sqrt{1-\lambda}\,\hat{e}\,,\\
(U_\lambda^{SE})^\dagger \hat{e}\, U_{\lambda}^{SE}&=-\sqrt{1-\lambda}\,\hat{a}+\sqrt{\lambda}\,\hat{e}\,,\\
U_\lambda^{SE} \hat{e}\, (U_{\lambda}^{SE})^\dagger&=\sqrt{1-\lambda}\,\hat{a}+\sqrt{\lambda}\,\hat{e}\,.
    \ee
\end{lemma}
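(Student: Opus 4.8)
The plan is to prove all four identities at once by computing the adjoint action of the beam-splitter generator on the two mode operators. Set $\theta\coloneqq\arccos\sqrt{\lambda}\in[0,\pi/2]$ and $G\coloneqq\theta(\hat a^\dagger\hat e-\hat a\,\hat e^\dagger)$, so that $U_\lambda^{SE}=e^{G}$; since $G^\dagger=-G$ the operator is unitary with $(U_\lambda^{SE})^\dagger=e^{-G}$. The first step is to record the two elementary commutators, obtained from the canonical commutation relations $[\hat a,\hat a^\dagger]=[\hat e,\hat e^\dagger]=\mathbb 1$ together with the fact that operators on the two distinct modes commute: $[\hat a,G]=\theta\,\hat e$ and $[\hat e,G]=-\theta\,\hat a$. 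The crucial observation is that the span of $\{\hat a,\hat e\}$ is therefore invariant under the map $X\mapsto[X,G]$, which acts on it as an infinitesimal rotation by angle $\theta$.

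Next I would exploit this closure. Writing $A(s)\coloneqq e^{-sG}\hat a\,e^{sG}$ and $E(s)\coloneqq e^{-sG}\hat e\,e^{sG}$ and differentiating in $s$ yields the linear system $A'(s)=\theta E(s)$, $E'(s)=-\theta A(s)$ with initial data $A(0)=\hat a$, $E(0)=\hat e$; its unique solution is the rotation $A(s)=\cos(\theta s)\,\hat a+\sin(\theta s)\,\hat e$ and $E(s)=-\sin(\theta s)\,\hat a+\cos(\theta s)\,\hat e$. (Equivalently, one can resum the Hadamard/Baker--Campbell--Hausdorff series $e^{-G}\hat a\,e^{G}=\sum_{k\ge0}\frac{1}{k!}(\mathrm{ad}_{-G})^{k}\hat a$, whose even and odd terms assemble into the $\cos\theta$ and $\sin\theta$ series.) Evaluating at $s=1$ and substituting $\cos\theta=\sqrt{\lambda}$, $\sin\theta=\sqrt{1-\lambda}$ gives the first and third identities. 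The remaining two, which involve $U_\lambda^{SE}(\cdot)(U_\lambda^{SE})^\dagger=e^{G}(\cdot)e^{-G}$, then follow immediately by replacing $G$ with $-G$ (equivalently $\theta$ with $-\theta$), an operation that flips the sign of each $\sqrt{1-\lambda}$ term while leaving the $\sqrt{\lambda}$ terms untouched.

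The algebra is elementary, so the only genuine point requiring care is rigour in the unbounded-operator setting: $\hat a$ and $\hat e$ are not bounded, so the series manipulation and the termwise differentiation of $A(s),E(s)$ must be justified on a common invariant dense domain. A convenient choice is the finite linear span of the Fock vectors $\ket{m}_S\otimes\ket{n}_E$, which is preserved by $\hat a,\hat e,\hat a^\dagger,\hat e^\dagger$ and on which all the exponential series converge. I expect this domain bookkeeping, rather than the commutator computation, to be the only (mild) obstacle; once the four identities hold on that dense set, they extend to the stated operator equalities in the usual way.
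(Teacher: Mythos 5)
Your proof is correct and follows essentially the same route as the paper, which simply invokes the Baker--Campbell--Hausdorff (Hadamard) formula; your commutators $[\hat a,G]=\theta\hat e$, $[\hat e,G]=-\theta\hat a$ and the resulting rotation at $s=1$ are exactly the computation that citation stands for, with the $\theta\to-\theta$ trick handling the remaining two identities. The domain remark is a sensible addition (and works because $G$ preserves total photon number, so each Fock vector lies in a finite-dimensional invariant subspace), but it is not part of the paper's argument.
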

\begin{proof}
These identities can be readily proved by applying the Baker-Campbell-Hausdorff formula. For an alternative proof see~\cite[Lemma A.2]{Die-Hard-2-PRA}.
\end{proof}
\begin{lemma}\label{lemma_beam_n0}
For all $\lambda\in[0,1]$ and all $n\in\mathbb{N}$, it holds that
\begin{align}
U^{SE}_\lambda\ket{n}_S\otimes\ket{0}_E&=
\sum_{l=0}^n(-1)^l\sqrt{\binom{n}{l}}\lambda^{\frac{n-l}{2}}(1-\lambda)^{\frac{l}{2}}\ket{n-l}_{S}\otimes\ket{l}_E\,,\label{formula_beam_spl1}\\
U^{SE}_\lambda\ket{0}_S\otimes\ket{n}_E&=
\sum_{l=0}^n \sqrt{\binom{n}{l}}(1-\lambda)^{\frac{l}{2}}\lambda^{\frac{n-l}{2}}\ket{l}_{S}\otimes\ket{n-l}_E\,.\label{formula_beam_spl2}
\end{align}
\end{lemma}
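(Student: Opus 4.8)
The plan is to reduce the action of $U_\lambda^{SE}$ on the Fock states to a purely algebraic manipulation of creation operators, exploiting the fact that the beam splitter is a passive, number-conserving transformation. First I would note that the generator $\hat{a}^\dagger \hat{e}-\hat{a}\,\hat{e}^\dagger$ annihilates the two-mode vacuum, so that $U_\lambda^{SE}\ket{0}_S\otimes\ket{0}_E=\ket{0}_S\otimes\ket{0}_E$. Next I would write the Fock state through creation operators, $\ket{n}_S\otimes\ket{0}_E=\frac{1}{\sqrt{n!}}(\hat{a}^\dagger)^n\ket{0}_S\otimes\ket{0}_E$, and insert $(U_\lambda^{SE})^\dagger U_\lambda^{SE}=\mathbb{1}$ between the $n$ factors of $\hat{a}^\dagger$. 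Combined with vacuum invariance this yields
\[
U_\lambda^{SE}\ket{n}_S\otimes\ket{0}_E=\frac{1}{\sqrt{n!}}\left(U_\lambda^{SE}\,\hat{a}^\dagger\,(U_\lambda^{SE})^\dagger\right)^{\!n}\ket{0}_S\otimes\ket{0}_E\,.
\]

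To evaluate the conjugated creation operator I would take the Hermitian adjoint of the second identity in Lemma~\ref{trasf_annihilation}, which is legitimate because $\lambda$ and $1-\lambda$ are real, obtaining $U_\lambda^{SE}\,\hat{a}^\dagger\,(U_\lambda^{SE})^\dagger=\sqrt{\lambda}\,\hat{a}^\dagger-\sqrt{1-\lambda}\,\hat{e}^\dagger$. Since $\hat{a}^\dagger$ and $\hat{e}^\dagger$ act on distinct modes they commute, so the binomial theorem applies; acting on the vacuum with $(\hat{a}^\dagger)^{n-l}(\hat{e}^\dagger)^l\ket{0}_S\otimes\ket{0}_E=\sqrt{(n-l)!\,l!}\,\ket{n-l}_S\otimes\ket{l}_E$ and using the combinatorial identity
\[
\frac{1}{\sqrt{n!}}\binom{n}{l}\sqrt{(n-l)!\,l!}=\sqrt{\binom{n}{l}}
\]
reproduces exactly Eq.~\eqref{formula_beam_spl1}.

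The second identity \eqref{formula_beam_spl2} follows identically, starting from $\ket{0}_S\otimes\ket{n}_E=\frac{1}{\sqrt{n!}}(\hat{e}^\dagger)^n\ket{0}_S\otimes\ket{0}_E$ and using the adjoint of the fourth identity in Lemma~\ref{trasf_annihilation}, namely $U_\lambda^{SE}\,\hat{e}^\dagger\,(U_\lambda^{SE})^\dagger=\sqrt{1-\lambda}\,\hat{a}^\dagger+\sqrt{\lambda}\,\hat{e}^\dagger$, and expanding in the same way.

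I expect no serious obstacle: the computation is short and essentially forced. The only point requiring care is the passage from the annihilation-operator conjugation relations of Lemma~\ref{trasf_annihilation} to their creation-operator counterparts, where one must verify that taking adjoints produces the stated real coefficients and that vacuum invariance holds so that no extra terms survive when the intermediate factors $(U_\lambda^{SE})^\dagger U_\lambda^{SE}$ are inserted.
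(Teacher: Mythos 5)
Your proposal is correct and follows essentially the same route as the paper's proof: expressing the Fock state via powers of the creation operator, conjugating $\hat{a}^\dagger$ (respectively $\hat{e}^\dagger$) through $U_\lambda^{SE}$ via the adjoints of the relations in Lemma~\ref{trasf_annihilation}, and expanding with the binomial theorem on the vacuum. You are in fact slightly more explicit than the paper on the vacuum-invariance step $U_\lambda^{SE}\ket{0}_S\otimes\ket{0}_E=\ket{0}_S\otimes\ket{0}_E$, which the paper's proof uses tacitly when inserting $(U_\lambda^{SE})^\dagger U_\lambda^{SE}$ between the creation operators.
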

\begin{proof}
Thanks to Lemma~\eqref{trasf_annihilation}, we have that $U_{\lambda}^{SE} \hat{a} \left(U_{\lambda}^{SE}\right)^\dagger=\sqrt{\lambda}\hat{a} -\sqrt{1-\lambda}\hat{e}$. Consequently, it holds that
\bb
U_{\lambda}^{SE}\ket{n}_S\otimes\ket{0}_E&=\frac{1}{\sqrt{n!}}U_{\lambda}^{SE} (a^\dagger)^n\ket{0}_S\otimes\ket{0}_E
\\&=\frac{1}{\sqrt{n!}}\left(U_{\lambda}^{SE} a^\dagger \left(U_{\lambda}^{SE}\right)^\dagger\right)^n\ket{0}_S\otimes\ket{0}_E
\\&=\frac{1}{\sqrt{n!}}\left(\sqrt{\lambda}a^\dagger -\sqrt{1-\lambda}b^\dagger\right)^n\ket{0}_S\otimes \ket{0}_E  
\\&=\frac{1}{\sqrt{n!}}\sum_{l=0}^{n}(-1)^l \binom{n}{l}\lambda^{\frac{n-l}{2}}(1-\lambda)^{\frac{l}{2}}(a^\dagger)^{n-l}\ket{0}_S\otimes (b^\dagger)^{l}\ket{0}_E\\&=\sum_{l=0}^n(-1)^l\sqrt{\binom{n}{l}}\lambda^{\frac{n-l}{2}}(1-\lambda)^{\frac{l}{2}}\ket{n-l}_{S}\otimes\ket{l}_E\,.
\ee    
Analogously, one can show the validity of~\eqref{formula_beam_spl2} by exploiting $U_\lambda^{SE} \hat{e}\, (U_{\lambda}^{SE})^\dagger=\sqrt{1-\lambda}\,\hat{a}+\sqrt{\lambda}\,\hat{e}$.
\end{proof}
\begin{remark}
    It is easily seen that Eq.~\eqref{formula_beam_spl1} is equivalent to
    \begin{align*}
        U^{SE}_\lambda\ket{n}_S\otimes\ket{0}_E&=
(-1)^n\sum_{l=0}^n(-1)^l\sqrt{\binom{n}{l}}\lambda^{\frac{l}{2}}(1-\lambda)^{\frac{n-l}{2}}\ket{l}_{S}\otimes\ket{n-l}_E\,.
    \end{align*}
\end{remark}

\subsection{Pure-loss channel}
In optical platforms, the most common source of noise is photon loss, which is modelled by the \emph{pure-loss channel}~\cite{BUCCO}. For any $\lambda\in[0,1]$, the pure-loss channel $\pazocal{E}_\lambda$ of transmissivity $\lambda$ is a single-mode bosonic channel which acts on the input system by mixing it in a beam splitter of transmissivity $\lambda$ with an environmental vacuum state; see Fig~\ref{beam-splitter}. In this model, the input signal is partially transmitted and partially reflected by the beam splitter, representing the loss of photons (or energy) in the channel. When $\lambda=1$, the pure-loss channel is noiseless (it equals the identity superoperator). Conversely, when $\lambda=0$, the pure-loss channel is completely noisy (specifically, it is a constant channel that maps any state in $\ketbra{0}$).

\begin{figure}[!h]
\centering
\includegraphics[width=0.3\linewidth]{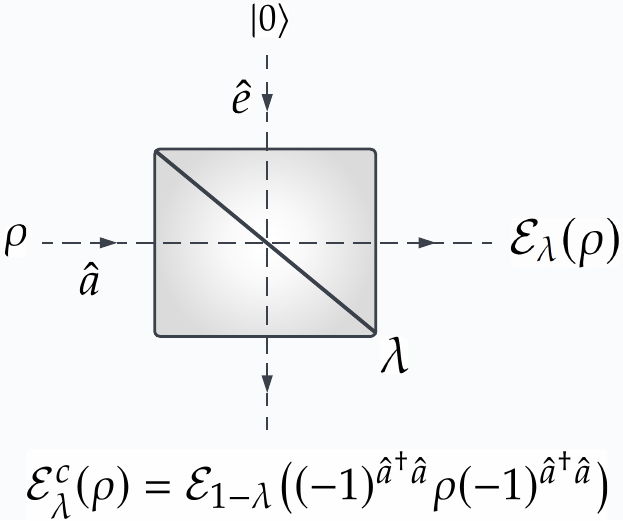}
\caption{ Stinespring representation of the pure-loss channel $\pazocal{E}_\lambda$. The pure-loss channel $\pazocal{E}_\lambda$ acts on the input state $\rho$ by mixing it in a beam splitter of transmissivity $\lambda$ (represented by the grey box) with an environmental vacuum state $\ket{0}$. Moreover, $\hat{a}$ and $\hat{e}$ are the annihilation operators of the input mode and the environmental mode, respectively. The complementary channel of the pure-loss channel is given by $\pazocal{E}^\text{c}_\lambda(\rho)=\pazocal{E}_{1-\lambda}\big( (-1)^{\hat{a}^\dagger \hat{a}}\rho  (-1)^{\hat{a}^\dagger \hat{a}}  \big)$.
 }
%The outputs along the straight lines represent channel outputs, while those along the reflected line are considered complementary outputs. The output $\pazocal{E}_\lambda(\rho)$ corresponds to the pure-loss channel, while corresponds to the output along the straight line from input $\rho$, denoted as $\pazocal{E}_{\lambda}(\rho)$.
\label{beam-splitter}
\end{figure}

\begin{Def}
    Let $\HH_S,\HH_E\coloneqq L^2(\mathbb{R})$. For all $\lambda\in[0,1]$, the pure-loss channel of transmissivity $\lambda$ is a quantum channel $\pazocal{E}_{\lambda}:\pazocal{T}(\HH_S)\to \pazocal{T}(\HH_S)$ defined as follows: 
\begin{align*}
        \pazocal{E}_{\lambda}(\rho)&\coloneqq\Tr_E\left[U_\lambda^{SE} \big(\rho_S \otimes\ketbra{0}_E\big) (U_\lambda^{SE})^\dagger\right], \quad\forall\,\rho\in\pazocal{T}(\HH_S)\,,
\end{align*}
where $\ketbra{0}_E$ denotes the vacuum state of $\HH_E$ and $U_\lambda^{SE}$ denotes the beam splitter unitary defined in~\eqref{def_beam_splitter}.
\end{Def}

\begin{lemma}\label{fock_pureloss}
For all $\lambda\in[0,1]$ and all $n,m\in\mathbb{N}$, it holds that
\begin{align*}
    \pazocal{E}_\lambda(\ketbra{m}{n}) &= \sum_{\ell=0}^{\min(n,m)} \sqrt{\binom{m}{\ell}\binom{n}{\ell}}\, (1-\lambda)^\ell \lambda^{\frac{n+m}{2} - \ell} \ketbra{m-\ell}{n-\ell}\,.
\end{align*}

\end{lemma}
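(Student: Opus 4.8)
The plan is to compute $\pazocal{E}_\lambda(\ketbra{m}{n})$ directly from the Stinespring definition of the pure-loss channel, relying on the explicit expansion of the beam-splitter action on Fock states provided by Lemma~\ref{lemma_beam_n0}. First I would factor the input as a product of vectors, writing $\ketbra{m}{n}_S\otimes\ketbra{0}_E = \big(\ket{m}_S\otimes\ket{0}_E\big)\big(\bra{n}_S\otimes\bra{0}_E\big)$, so that conjugating by $U_\lambda^{SE}$ reduces to computing $U_\lambda^{SE}\ket{m}_S\otimes\ket{0}_E$, the analogous object for $n$, and pairing them.

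Concretely, applying Eq.~\eqref{formula_beam_spl1} twice gives
\begin{align*}
U_\lambda^{SE}\big(\ket{m}_S\otimes\ket{0}_E\big) &= \sum_{k=0}^m(-1)^k\sqrt{\binom{m}{k}}\,\lambda^{\frac{m-k}{2}}(1-\lambda)^{\frac{k}{2}}\ket{m-k}_S\otimes\ket{k}_E\,, \\
\big(U_\lambda^{SE}\ket{n}_S\otimes\ket{0}_E\big)^\dagger &= \sum_{l=0}^n(-1)^l\sqrt{\binom{n}{l}}\,\lambda^{\frac{n-l}{2}}(1-\lambda)^{\frac{l}{2}}\bra{n-l}_S\otimes\bra{l}_E\,,
\end{align*}
so that $U_\lambda^{SE}\big(\ketbra{m}{n}_S\otimes\ketbra{0}_E\big)(U_\lambda^{SE})^\dagger$ is a double sum over $k$ and $l$ of terms proportional to $\ketbra{m-k}{n-l}_S\otimes\ketbra{k}{l}_E$.

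Then I take the partial trace over $E$. Using $\Tr_E\!\big[\ketbra{k}{l}_E\big]=\braket{l}{k}=\delta_{kl}$, only the diagonal terms $k=l=:\ell$ survive, collapsing the double sum to a single sum. The range becomes $0\le\ell\le\min(m,n)$, since $\ell$ can exceed neither $m$ nor $n$; the surviving sign is $(-1)^{2\ell}=1$, and the four exponential factors combine into $\lambda^{\frac{m+n}{2}-\ell}$ and $(1-\lambda)^{\ell}$. This produces exactly the claimed formula.

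There is no serious obstacle here: the only points requiring care are the correct truncation of the summation index to $\min(m,n)$ once the Kronecker delta is imposed, and the routine bookkeeping that merges the $\lambda$- and $(1-\lambda)$-powers into the two stated exponents. The substantive work has already been carried out in Lemma~\ref{lemma_beam_n0}; this statement is essentially its bilinear consequence, obtained by sandwiching and tracing.
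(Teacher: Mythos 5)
Your proposal is correct and follows exactly the route the paper takes: the paper's proof simply states that the lemma is a direct consequence of Eq.~\eqref{formula_beam_spl1} (Lemma~\ref{lemma_beam_n0}) and the definition of the pure-loss channel, which is precisely the sandwich-and-trace computation you spell out. Your expansion of the double sum, the Kronecker delta from $\Tr_E$, and the bookkeeping of the exponents are the intended (and correct) details behind that one-line argument.
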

\begin{proof}
This is a direct consequence of~\eqref{formula_beam_spl1} and of the definition of pure-loss channel.
\end{proof}

\begin{lemma}
    For all $\lambda\in[0,1]$, a complementary channel $\pazocal{E}^\text{c}_\lambda:\pazocal{T}(\HH_S)\to \pazocal{T}(\HH_E)$ of the pure-loss channel $\pazocal{E}_\lambda: \pazocal{T}(\HH_S)\to \pazocal{T}(\HH_S)$ is given by 
\bb\label{compl_pure_loss_channel}
    \pazocal{E}^\text{c}_\lambda(\rho)\coloneqq\Tr_S\left[U_\lambda^{SE} \big(\rho_S \otimes\ketbra{0}_E\big) \left(U_\lambda^{SE}\right)^\dagger\right] = \pazocal{E}_{1-\lambda}\circ \pazocal{R}(\rho),\qquad\forall\,\rho\in\pazocal{T}(\HH_S)\,,
\ee
where $\pazocal{R}(\cdot)\coloneqq V\cdot V^\dagger$ with $V\coloneqq(-1)^{\hat{a}^\dagger \hat{a}}$. 
\end{lemma}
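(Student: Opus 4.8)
The plan is to verify the claimed identity on the dense set of finite-rank operators $\ketbra{m}{n}$, with $m,n\in\N$, and then extend it to all of $\pazocal{T}(\HH_S)$ by linearity and trace-norm continuity, using that both $\pazocal{E}^\text{c}_\lambda$ and $\pazocal{E}_{1-\lambda}\circ\pazocal{R}$ are trace-norm contractions. The first observation is that $\pazocal{R}$ acts diagonally in the Fock basis: since $V\ket{n}=(-1)^{\hat{a}^\dagger\hat{a}}\ket{n}=(-1)^n\ket{n}$, one has $\pazocal{R}(\ketbra{m}{n})=(-1)^{m+n}\ketbra{m}{n}$. Applying Lemma~\ref{fock_pureloss} with transmissivity $1-\lambda$ then expresses the right-hand side as the explicit single sum over $\ell$ of $(-1)^{m+n}\sqrt{\binom{m}{\ell}\binom{n}{\ell}}\,\lambda^{\ell}(1-\lambda)^{\frac{m+n}{2}-\ell}\,\ketbra{m-\ell}{n-\ell}$, for $0\le\ell\le\min(m,n)$.

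Next I would compute the left-hand side directly in the Fock basis. Inserting the beam-splitter action~\eqref{formula_beam_spl1} into the definition of $\pazocal{E}^\text{c}_\lambda$ produces a double sum over indices $l,l'$ whose generic term is $(-1)^{l+l'}\sqrt{\binom{m}{l}\binom{n}{l'}}\,\lambda^{\frac{m+n-l-l'}{2}}(1-\lambda)^{\frac{l+l'}{2}}$ multiplying $\ket{m-l}_S\bra{n-l'}\otimes\ket{l}_E\bra{l'}$. Taking the partial trace over $S$ invokes $\Tr_S[\ket{m-l}\bra{n-l'}]=\braket{n-l'}{m-l}=\delta_{m-l,\,n-l'}$, which forces $l'=l+n-m$ and collapses the double sum to a single sum over $l$ with surviving output operators $\ket{l}_E\bra{l'}$.

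The final step is to match the two single sums term by term under the substitution $\ell=m-l$, so that $l=m-\ell$ and $l'=l+n-m=n-\ell$, whence the surviving operator $\ket{l}_E\bra{l'}$ is exactly $\ketbra{m-\ell}{n-\ell}$. One then checks that every factor maps correctly: the sign becomes $(-1)^{l+l'}=(-1)^{(m-\ell)+(n-\ell)}=(-1)^{m+n}$; the binomials become $\binom{m}{m-\ell}\binom{n}{n-\ell}=\binom{m}{\ell}\binom{n}{\ell}$; the exponents give $\lambda^{\ell}$ and $(1-\lambda)^{\frac{m+n}{2}-\ell}$; and the range $0\le\ell\le\min(m,n)$ coincides. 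This reproduces precisely the right-hand side computed in the first step, establishing the identity on each $\ketbra{m}{n}$ and hence, by the density argument, on all of $\pazocal{T}(\HH_S)$.

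I expect the only genuine subtlety to be the bookkeeping of the two binomial sums and their index ranges; the analytic extension from finite-rank operators is routine given the trace-norm continuity of both maps. A slicker but more delicate alternative would bypass the Fock-basis computation: using Lemma~\ref{trasf_annihilation} one verifies that the Gaussian unitaries $F_{SE}U^{SE}_\lambda$ and $U^{SE}_{1-\lambda}V_S$, with $F_{SE}$ the mode-swap unitary exchanging $S$ and $E$ and $V_S\coloneqq(-1)^{\hat{a}^\dagger\hat{a}}$ acting on $S$, induce the same symplectic transformation on $(\hat{a},\hat{e})$ and therefore coincide up to a global phase. Conjugating the definition of $\pazocal{E}^\text{c}_\lambda$ by $F_{SE}$ turns $\Tr_S$ into $\Tr_E$ and $U^{SE}_\lambda$ into $U^{SE}_{1-\lambda}$ acting on $V_S\rho V_S^\dagger=\pazocal{R}(\rho)$, yielding $\pazocal{E}_{1-\lambda}\circ\pazocal{R}$ directly. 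The obstacle in that route is rigorously justifying the global phase and the swap/partial-trace identification, which is why I would present the explicit Fock-basis verification as the primary argument.
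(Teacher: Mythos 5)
Your proposal is correct and follows essentially the same route as the paper: reduce by linearity to the rank-one Fock operators $\ketbra{m}{n}$ and verify the identity directly from the beam-splitter action~\eqref{formula_beam_spl1} together with Lemma~\ref{fock_pureloss} applied at transmissivity $1-\lambda$. Your index bookkeeping (the constraint $l'=l+n-m$ from the partial trace and the substitution $\ell=m-l$) checks out, so the extra density remark and the alternative symplectic argument are optional embellishments rather than necessary repairs.
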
 
\begin{proof}
    By linearity, it suffices to show the identity in~\eqref{compl_pure_loss_channel} for rank-one Fock operators of the form $\ketbra{m}{n}$ for any $n,m\in\N$, i.e.
    \bb
        \Tr_S\left[U_\lambda^{SE} \big(\ketbra{m}{n}_S \otimes\ketbra{0}_E\big) \left(U_\lambda^{SE}\right)^\dagger\right] = \pazocal{E}_{1-\lambda}\circ \pazocal{R}(\ketbra{m}{n})\,.
    \ee
    This follows directly from~\eqref{formula_beam_spl1}.
\end{proof}

\begin{prop}\cite{Wolf2007,Caruso2006}
    The pure-loss channel $\pazocal{E}_\lambda$ is degradable if and only if $\lambda\in[\frac{1}{2},1]$, and it is anti-degradable if and only if $\lambda\in[0,\frac{1}{2}]$.
\end{prop}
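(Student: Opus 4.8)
The plan is to treat the two \enquote{if} directions by direct construction and then reduce both \enquote{only if} directions to a single positivity statement via a duality. The one technical ingredient I would isolate at the outset is the commutation relation $\pazocal{E}_\lambda\circ\pazocal{R}=\pazocal{R}\circ\pazocal{E}_\lambda$, where $\pazocal{R}(\cdot)=V\cdot V^\dagger$ with $V=(-1)^{\hat a^\dagger\hat a}$. This follows immediately from Lemma~\ref{fock_pureloss}: since $\pazocal{E}_\lambda(\ketbra{m}{n})$ is a combination of terms $\ketbra{m-\ell}{n-\ell}$ and $\pazocal{R}(\ketbra{m}{n})=(-1)^{m+n}\ketbra{m}{n}$, both orders multiply $\pazocal{E}_\lambda(\ketbra{m}{n})$ by the \emph{same} global sign $(-1)^{m+n}$ (the shift by $\ell$ in each index leaves the parity unchanged).

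For the \enquote{if} directions I would combine this commutation with $\pazocal{E}^\text{c}_\lambda=\pazocal{E}_{1-\lambda}\circ\pazocal{R}$ from~\eqref{compl_pure_loss_channel} and the composition rule $\pazocal{E}_{\lambda_1}\circ\pazocal{E}_{\lambda_2}=\pazocal{E}_{\lambda_1\lambda_2}$. For $\lambda\ge\tfrac12$ the number $\mu\coloneqq\frac{1-\lambda}{\lambda}$ lies in $[0,1]$, and the channel $\pazocal{J}\coloneqq\pazocal{E}_\mu\circ\pazocal{R}$ satisfies $\pazocal{J}\circ\pazocal{E}_\lambda=\pazocal{E}_\mu\circ\pazocal{E}_\lambda\circ\pazocal{R}=\pazocal{E}_{1-\lambda}\circ\pazocal{R}=\pazocal{E}^\text{c}_\lambda$, proving degradability. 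Symmetrically, for $\lambda\le\tfrac12$ the number $\nu\coloneqq\frac{\lambda}{1-\lambda}$ lies in $[0,1]$, and $\pazocal{A}\coloneqq\pazocal{E}_\nu\circ\pazocal{R}$ satisfies $\pazocal{A}\circ\pazocal{E}^\text{c}_\lambda=\pazocal{E}_\nu\circ\pazocal{R}\circ\pazocal{E}_{1-\lambda}\circ\pazocal{R}=\pazocal{E}_\nu\circ\pazocal{E}_{1-\lambda}=\pazocal{E}_\lambda$, proving anti-degradability.

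Next I would record a duality. Since $\pazocal{E}^\text{c}_\lambda=\pazocal{R}\circ\pazocal{E}_{1-\lambda}$ equals $\pazocal{E}_{1-\lambda}$ up to the reversible involution $\pazocal{R}$, a degrading map for $\pazocal{E}_{1-\lambda}$ and an anti-degrading map for $\pazocal{E}_\lambda$ are interconvertible by conjugation with $\pazocal{R}$ (explicitly, if $\pazocal{A}\circ\pazocal{E}^\text{c}_\lambda=\pazocal{E}_\lambda$ then $\pazocal{R}\circ\pazocal{A}\circ\pazocal{R}$ degrades $\pazocal{E}_{1-\lambda}$, and conversely). Hence $\pazocal{E}_\lambda$ is anti-degradable if and only if $\pazocal{E}_{1-\lambda}$ is degradable, so the two claimed dichotomies are equivalent and it suffices to prove a single \enquote{only if} statement.

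The remaining and genuinely nontrivial step is to show that $\pazocal{E}_\lambda$ is \emph{not} anti-degradable when $\lambda>\tfrac12$; this is the main obstacle, as everything preceding is elementary algebra. Here I would invoke that an anti-degradable channel has vanishing quantum capacity, whereas $Q(\pazocal{E}_\lambda)>0$ for $\lambda>\tfrac12$. The quickest route is to cite the known value $Q(\pazocal{E}_\lambda)=\max\{0,\log_2\frac{\lambda}{1-\lambda}\}$~\cite{Wolf2007,PLOB}; a self-contained alternative uses that $\pazocal{E}_\lambda$ is degradable for $\lambda\ge\tfrac12$, so by the Devetak--Shor theorem~\cite{Devetak-Shor} its quantum capacity equals the single-letter coherent information, which one evaluates on an input supported on $\{\ket0,\ket1\}$ (where $\pazocal{E}_\lambda$ restricts to an amplitude-damping channel) and finds strictly positive for $\lambda>\tfrac12$. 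Either way $Q(\pazocal{E}_\lambda)>0$ contradicts anti-degradability, establishing the \enquote{only if} for anti-degradability; the duality of the previous paragraph then upgrades this to the \enquote{only if} for degradability, completing the proof.
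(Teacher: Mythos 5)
Your proof is correct, but note that the paper itself never proves this proposition --- it is imported as a known result with a citation to the literature, so there is no in-paper argument to match. What you do in the ``if'' directions is, however, exactly the machinery the paper assembles elsewhere: the complementary-channel formula $\pazocal{E}^{\text{c}}_\lambda=\pazocal{E}_{1-\lambda}\circ\pazocal{R}$ of Eq.~\eqref{compl_pure_loss_channel}, the composition rule $\pazocal{E}_{\lambda_1}\circ\pazocal{E}_{\lambda_2}=\pazocal{E}_{\lambda_1\lambda_2}$ (Lemma~\ref{lemma_comp_pure}), and the commutation of $\pazocal{E}_\lambda$ with $\pazocal{R}$; your anti-degrading map $\pazocal{E}_{\lambda/(1-\lambda)}\circ\pazocal{R}$ for $\lambda\le\tfrac12$ is precisely the map the paper exhibits in region~(i) of Theorem~\ref{explicit-maps}, and your degrading map $\pazocal{E}_{(1-\lambda)/\lambda}\circ\pazocal{R}$ is its mirror image. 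The duality $\pazocal{E}_\lambda$ anti-degradable $\Leftrightarrow$ $\pazocal{E}_{1-\lambda}$ degradable is a clean way to halve the work and is sound (degradability does not depend on which complementary channel one fixes, since all are related by isometries on the environment). For the remaining ``only if'' you rely on external capacity results; this is legitimate, but two remarks are worth making. First, you do not need the full Devetak--Shor equality (whose infinite-dimensional formulation requires some care): the achievability of the coherent information alone gives $Q(\pazocal{E}_\lambda)\ge I_{\mathrm{c}}$ evaluated on an input supported on $\operatorname{Span}\{\ket{0},\ket{1}\}$, where $\pazocal{E}_\lambda$ acts as an amplitude-damping channel with coherent information $h(\lambda p)-h((1-\lambda)p)>0$ for $\lambda>\tfrac12$ and small $p$, and positivity of $Q$ already contradicts anti-degradability. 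Second, an alternative that stays entirely within the paper's toolbox is the qubit-restriction route used for Theorem~\ref{theorem_not_antideg}: restricting $\pazocal{E}_\lambda$ to the first two Fock states and applying Lemma~\ref{lemma_qubit_charact_antideg} (with $\gamma=0$) shows the restriction, and hence $\pazocal{E}_\lambda$ itself by Corollary~\ref{cor-antideg}, is not anti-degradable for $\lambda>\tfrac12$, avoiding any appeal to capacity theorems. Either way your argument closes; the capacity-based step is simply heavier than strictly necessary.
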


\begin{lemma}\cite[Lemma A.7]{Die-Hard-2-PRA}\label{lemma_comp_pure}
    For all $\lambda_1,\lambda_2\in[0,1]$ it holds that $\pazocal{E}_{\lambda_1}\circ\pazocal{E}_{\lambda_2}=\pazocal{E}_{\lambda_1\lambda_2}$.
\end{lemma}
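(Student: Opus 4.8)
The plan is to realise the composition inside a single beam-splitter dilation and then collapse the two environments into one. The pure-loss channel is a passive Gaussian channel, so it is completely determined by how it transforms the annihilation operator together with the (vacuum) noise it adds; I would exploit this rather than manipulate density matrices directly. First I would write the composed map as a dilation with two environmental modes $E_1,E_2$, each initialised in the vacuum,
\begin{align*}
\pazocal{E}_{\lambda_1}\!\circ\pazocal{E}_{\lambda_2}(\rho)=\Tr_{E_1E_2}\!\Big[U_{\lambda_1}^{SE_1}U_{\lambda_2}^{SE_2}\big(\rho_S\otimes\ketbra{0}_{E_2}\otimes\ketbra{0}_{E_1}\big)\big(U_{\lambda_2}^{SE_2}\big)^{\!\dagger}\big(U_{\lambda_1}^{SE_1}\big)^{\!\dagger}\Big]\,,
\end{align*}
which is legitimate because $U_{\lambda_1}^{SE_1}$ acts trivially on $E_2$ and $U_{\lambda_2}^{SE_2}$ acts trivially on $E_1$.

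Second, I would compute the Heisenberg action of the combined unitary $W\coloneqq U_{\lambda_1}^{SE_1}U_{\lambda_2}^{SE_2}$ on $\hat a$ by applying Lemma~\ref{trasf_annihilation} twice, using that the conjugation by $U_{\lambda_2}^{SE_2}$ leaves $\hat e_1$ invariant:
\begin{align*}
W^{\dagger}\hat a\,W=\sqrt{\lambda_1\lambda_2}\,\hat a+\sqrt{1-\lambda_1}\,\hat e_1+\sqrt{\lambda_1(1-\lambda_2)}\,\hat e_2\,,
\end{align*}
where $\hat e_1,\hat e_2$ are the annihilation operators of $E_1,E_2$. The coefficient of $\hat a$ is already the correct $\sqrt{\lambda_1\lambda_2}$, and the residual environment operator $\hat f\coloneqq\sqrt{1-\lambda_1}\,\hat e_1+\sqrt{\lambda_1(1-\lambda_2)}\,\hat e_2$ satisfies $[\hat f,\hat f^{\dagger}]=(1-\lambda_1)+\lambda_1(1-\lambda_2)=1-\lambda_1\lambda_2$, i.e.\ it is, after normalisation, a single bosonic mode carrying exactly the vacuum noise of $\pazocal{E}_{\lambda_1\lambda_2}$.

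Finally, I would collapse the two environments into one. Choosing a passive beam-splitter rotation $R$ acting only on $E_1E_2$ that sends $\hat f\mapsto\sqrt{1-\lambda_1\lambda_2}\,\hat e_1'$ and the orthogonal combination to $\hat e_2'$, and using that any passive transformation fixes the two-mode vacuum, $R\big(\ketbra{0}_{E_1}\otimes\ketbra{0}_{E_2}\big)R^{\dagger}=\ketbra{0}_{E_1}\otimes\ketbra{0}_{E_2}$, one may insert $R^{\dagger}R$ without altering the output after the partial trace. In the rotated frame $S$ couples to $E_1'$ exactly through $U_{\lambda_1\lambda_2}^{SE_1'}$ while $E_2'$ decouples and stays in vacuum, so tracing out $E_1'E_2'$ returns $\pazocal{E}_{\lambda_1\lambda_2}(\rho)$. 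The main obstacle is precisely this collapse step: one must check carefully that the part of $W$ not coupling to $S$ reduces, on the environmental vacuum, to the identity, and that the surviving system--environment coupling is genuinely the canonical beam splitter of transmissivity $\lambda_1\lambda_2$ with the correct phase convention. As an alternative that side-steps the operator-algebra bookkeeping, I could instead verify the identity directly on the Fock basis: by Lemma~\ref{fock_pureloss} one expands $\pazocal{E}_{\lambda_1}\big(\pazocal{E}_{\lambda_2}(\ketbra{m}{n})\big)$ into a double sum which collapses, via a Vandermonde-type binomial summation, to the single-sum expression for $\pazocal{E}_{\lambda_1\lambda_2}(\ketbra{m}{n})$, the obstacle there being the summation identity; linearity and continuity then extend the equality to all trace-class inputs.
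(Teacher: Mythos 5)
The paper never proves this lemma itself --- it imports it verbatim from \cite[Lemma A.7]{Die-Hard-2-PRA} --- so your proposal cannot coincide with the paper's argument; judged on its own merits, it is correct, and in fact both of your routes can be completed. In the Heisenberg-picture route, your computation $W^\dagger \hat a W=\sqrt{\lambda_1\lambda_2}\,\hat a+\sqrt{1-\lambda_1}\,\hat e_1+\sqrt{\lambda_1(1-\lambda_2)}\,\hat e_2$ and the check $[\hat f,\hat f^\dagger]=1-\lambda_1\lambda_2$ are right, and the step you flag as the main obstacle is indeed the only nontrivial one: knowing the action of $RWR^\dagger$ on $\hat a$ fixes just one row of the $3\times 3$ unitary describing this passive transformation, so you must add the (standard, but not automatic) argument that two passive unitaries with the same first row differ by a passive unitary supported only on the environment, which then disappears either under the partial trace or by vacuum invariance. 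A way to bypass this bookkeeping entirely is to compare characteristic functions: $W^\dagger D_S(\alpha)W$ factorises into a product of displacements on $S$, $E_1$, $E_2$ with amplitudes $\sqrt{\lambda_1\lambda_2}\,\alpha$, etc., and taking vacuum expectations via $\bra{0}D(\beta)\ket{0}=e^{-\frac{1}{2}|\beta|^2}$ gives $\chi_\rho\!\left(\sqrt{\lambda_1\lambda_2}\,\alpha\right)e^{-\frac{1}{2}(1-\lambda_1\lambda_2)|\alpha|^2}$ for both $\pazocal{E}_{\lambda_1}\circ\pazocal{E}_{\lambda_2}$ and $\pazocal{E}_{\lambda_1\lambda_2}$, and the characteristic function determines the output uniquely. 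Your alternative Fock-basis route is the most elementary and matches the paper's general style (verify on $\ketbra{m}{n}$, then extend by linearity and trace-norm density/contractivity); the double sum collapses not via a Vandermonde identity but via $\binom{m}{\ell}\binom{m-\ell}{k-\ell}=\binom{m}{k}\binom{k}{\ell}$ --- incidentally the same binomial identity the paper uses in the proof of Theorem~\ref{main-result} --- followed by the plain binomial theorem, $\sum_{\ell=0}^{k}\binom{k}{\ell}(1-\lambda_2)^{\ell}\left[\lambda_2(1-\lambda_1)\right]^{k-\ell}=(1-\lambda_1\lambda_2)^{k}$, which reproduces exactly the coefficient $\sqrt{\binom{m}{k}\binom{n}{k}}\,(1-\lambda_1\lambda_2)^{k}(\lambda_1\lambda_2)^{\frac{m+n}{2}-k}$ of $\pazocal{E}_{\lambda_1\lambda_2}(\ketbra{m}{n})$ from Lemma~\ref{fock_pureloss}.
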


\subsection{Bosonic dephasing channel}
Another main source of noise in optical platforms is bosonic dephasing, which serves as a prominent example of a non-Gaussian channel~\cite{PhysRevA.102.042413,exact-solution}.  

\begin{Def}
Let $\HH_S\coloneqq L^2(\mathbb{R})$ and let $\hat{a}$ be the corresponding annihilation operator. For all $\gamma\ge0$, the bosonic dephasing channel $\pazocal{D}_\gamma:\pazocal{T}(\HH_S)\to \pazocal{T}(\HH_S)$ is a quantum channel defined as follows: 
\begin{align*}
      \pazocal{D}_{\gamma}(\rho)\coloneqq \frac{1}{\sqrt{2\pi\gamma}}\int_{-\infty}^{\infty}\mathrm{d}\phi\, e^{-\frac{\phi^2}{2\gamma}}\,e^{i\phi \hat{a}^\dagger \hat{a}}\,\rho\, e^{-i\phi \hat{a}^\dagger \hat{a}},\quad\forall\,\rho\in\pazocal{T}(\HH_S)\,.
\end{align*}
In other words, $\pazocal{D}_\gamma$ is a convex combination of phase space rotations $\rho\longmapsto e^{i\phi \hat{a}^\dagger \hat{a}}\,\rho\, e^{-i\phi \hat{a}^\dagger \hat{a}}$, where the random variable $\phi$ follows a centered Gaussian distribution with a variance of $\gamma$.
\end{Def}

\begin{lemma}\label{lemma_app_deph}
For all $\gamma\ge0$ and all $n,m\in\mathbb{N}$, it holds that
\begin{align*}
        \pazocal{D}_{\gamma}(\ketbra{m}{n})= e^{-\frac{1}{2}\gamma(n-m)^2 }\ketbra{m}{n}\,.
\end{align*}
\end{lemma}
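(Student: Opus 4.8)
The plan is to reduce the whole statement to a single scalar Gaussian integral. By linearity of $\pazocal{D}_\gamma$ it suffices to evaluate its action on a rank-one Fock operator $\ketbra{m}{n}$. First I would invoke the fact that the Fock states are eigenvectors of the number operator, $\hat{a}^\dagger \hat{a}\ket{k}=k\ket{k}$, so that $e^{i\phi \hat{a}^\dagger \hat{a}}\ket{m}=e^{i\phi m}\ket{m}$ and, taking adjoints, $\bra{n}e^{-i\phi \hat{a}^\dagger \hat{a}}=e^{-i\phi n}\bra{n}$. Hence the conjugation appearing in the integrand collapses to a scalar phase,
\bb
e^{i\phi \hat{a}^\dagger \hat{a}}\,\ketbra{m}{n}\,e^{-i\phi \hat{a}^\dagger \hat{a}}=e^{i\phi(m-n)}\ketbra{m}{n}\,,
\ee
and the operator $\ketbra{m}{n}$ can be pulled out of the integral entirely.

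The only remaining computation is then the scalar integral
\bb
\frac{1}{\sqrt{2\pi\gamma}}\int_{-\infty}^{\infty}\mathrm{d}\phi\, e^{-\frac{\phi^2}{2\gamma}}\,e^{i\phi(m-n)}\,,
\ee
which is precisely the characteristic function of a centered Gaussian of variance $\gamma$ evaluated at $m-n$. I would evaluate it by completing the square in the exponent, writing $-\frac{\phi^2}{2\gamma}+i\phi(m-n)=-\frac{1}{2\gamma}\big(\phi-i\gamma(m-n)\big)^2-\frac{\gamma}{2}(m-n)^2$, shifting the integration contour back to the real axis, and using the normalisation $\frac{1}{\sqrt{2\pi\gamma}}\int_{-\infty}^{\infty}\mathrm{d}\phi\, e^{-\frac{\phi^2}{2\gamma}}=1$. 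This produces the factor $e^{-\frac{\gamma}{2}(m-n)^2}$, and since $(m-n)^2=(n-m)^2$ this is exactly the claimed coefficient.

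There is essentially no hard step here: the argument is routine once the number-operator eigenrelation is used, and the Gaussian integral is elementary. The only point deserving a word of care is the degenerate case $\gamma=0$, where the integral representation formally breaks down; there one simply interprets the Gaussian weight as the Dirac delta $\delta(\phi)$ (equivalently, takes the limit $\gamma\to 0^+$), recovering $\pazocal{D}_0=\Id$, in agreement with the formula since $e^{0}=1$.
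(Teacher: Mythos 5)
Your proposal is correct and follows essentially the same route as the paper: the conjugation by $e^{i\phi \hat{a}^\dagger \hat{a}}$ reduces to the phase $e^{i\phi(m-n)}$ via the number-operator eigenrelation, and the coefficient is then the Fourier transform of the centered Gaussian, $\frac{1}{\sqrt{2\pi\gamma}}\int_{-\infty}^{\infty}\mathrm{d}\phi\, e^{-\frac{\phi^2}{2\gamma}} e^{i\phi k} = e^{-\frac{1}{2}\gamma k^2}$, which is exactly the identity the paper invokes. Your remark about the degenerate case $\gamma=0$ (interpreting the weight as $\delta(\phi)$, so $\pazocal{D}_0=\Id$) is a reasonable extra bit of care consistent with the paper's convention.
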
 
\begin{proof}
This result follows from the Fourier transform of the Gaussian function:
\bb
    \frac{1}{\sqrt{2\pi\gamma}}\int_{-\infty}^{\infty}\mathrm{d}\phi\, e^{-\frac{\phi^2}{2\gamma}} e^{i\phi k} = e^{-\frac{1}{2}\gamma k^2},\quad \forall\,k\in\mathbb{R}\,.
\ee
\end{proof}
When $\gamma=0$, the bosonic dephasing channel is noiseless. In contrast, when $\gamma\to\infty$ it annihilates all off-diagonal components of the input density matrix, reducing it to an incoherent probabilistic mixture of Fock states. We now present a Stinespring dilation of the bosonic dephasing channel. 

\begin{lemma}\label{dephasing-stinespring}
Let $\HH_S=\HH_E\coloneqq L^2(\mathbb{R})$ and let $\hat{a}$ and $\hat{e}$ be annihilation operators on $\HH_S$ and $\HH_E$, respectively. For all $\gamma>0$, the bosonic dephasing channel $\pazocal{D}_\gamma:\pazocal{T}(\HH_S)\to \pazocal{T}(\HH_S)$ can be expressed in Stinespring representation as 
\bb\label{stinespring_dephasing}
    \pazocal{D}_{\gamma}(\rho)=\Tr_E\left[V_\gamma^{SE}\big(\rho_S\otimes \ketbra{0}_E\big)(V_\gamma^{SE})^\dagger \right],\qquad\forall\,\rho\in\pazocal{T}(\HH_S)\, ,
\ee
where $V_\gamma^{SE}$ denotes the conditional displacement unitary defined by
\bb\label{dephasing-unitary}
    V_\gamma^{SE}\coloneqq \exp\left[ \sqrt{\gamma}\, \hat{a}^\dagger \hat{a} \otimes\,(\hat{e}^\dagger - \hat{e})\right] =\sum_{n=0}^{\infty}\ketbra{n}_{S}\otimes D(\sqrt{\gamma}n).
\ee
 \end{lemma}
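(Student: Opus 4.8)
The plan is to verify the two asserted identities in turn: first the closed form $V_\gamma^{SE}=\sum_{n}\ketbra{n}_S\otimes D(\sqrt\gamma n)$ for the dilation unitary, and then the claim that partial-tracing the environment out of $V_\gamma^{SE}(\rho_S\otimes\ketbra{0}_E)(V_\gamma^{SE})^\dagger$ reproduces $\pazocal{D}_\gamma(\rho)$. For the first identity I would use that the number operator is diagonal in the Fock basis, $\hat a^\dagger\hat a=\sum_n n\ketbra{n}_S$, so the generator splits into orthogonal blocks,
\[
\sqrt\gamma\,\hat a^\dagger\hat a\otimes(\hat e^\dagger-\hat e)=\sum_{n}\ketbra{n}_S\otimes\big[\sqrt\gamma\,n\,(\hat e^\dagger-\hat e)\big].
\]
Writing $P_n=\ketbra{n}_S$ and $X_n=\sqrt\gamma\,n\,(\hat e^\dagger-\hat e)$, the $P_n$ form a complete family of mutually orthogonal projectors, so $\big(\sum_n P_n\otimes X_n\big)^k=\sum_n P_n\otimes X_n^k$ for every $k$; summing the exponential series term by term then yields $\exp\big(\sum_n P_n\otimes X_n\big)=\sum_n P_n\otimes e^{X_n}$. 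Since $\sqrt\gamma\,n$ is real, each block is $e^{\sqrt\gamma n\,\hat e^\dagger-\sqrt\gamma n\,\hat e}=D(\sqrt\gamma n)$, giving the stated closed form. I would also remark that the generator $G$ obeys $G^\dagger=-G$, so $V_\gamma^{SE}$ is genuinely unitary.

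For the second identity, by linearity it suffices to evaluate the right-hand side on the matrix units $\ketbra{m}{n}$ and compare with Lemma~\ref{lemma_app_deph}. Inserting the closed form, the left factor $V_\gamma^{SE}$ selects the block $k=m$ and the right factor $(V_\gamma^{SE})^\dagger$ selects the block $l=n$, so that
\[
V_\gamma^{SE}\big(\ketbra{m}{n}_S\otimes\ketbra{0}_E\big)(V_\gamma^{SE})^\dagger=\ketbra{m}{n}_S\otimes D(\sqrt\gamma m)\ketbra{0}_E D(\sqrt\gamma n)^\dagger.
\]
Tracing out $E$ produces the scalar $\Tr\big[D(\sqrt\gamma m)\ketbra{0}D(\sqrt\gamma n)^\dagger\big]=\braket{\sqrt\gamma n}{\sqrt\gamma m}$, the overlap of two coherent states. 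By the overlap formula~\eqref{overlap_coh} with real arguments this equals $e^{-\frac12\gamma(n-m)^2}$, so the right-hand side collapses to $e^{-\frac12\gamma(n-m)^2}\ketbra{m}{n}$, which is precisely $\pazocal{D}_\gamma(\ketbra{m}{n})$ by Lemma~\ref{lemma_app_deph}.

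The computation itself is short, and I expect the substance of the work to lie not in any conceptual difficulty but in the bookkeeping forced by the infinite-dimensional setting. The generator is unbounded, so the block-exponential argument is cleanest when read on the dense domain spanned by vectors $\ket{n}_S\otimes\ket{\phi}_E$, where orthogonality of the $P_n$ makes the term-by-term summation of the exponential series legitimate. Likewise, the passage from matrix units to an arbitrary $\rho\in\pazocal{T}(\HH_S)$ requires not only linearity but continuity of both maps in trace norm, since a general trace-class operator is only an infinite (not finite) combination of the $\ketbra{m}{n}$. Handling these convergence and extension points carefully — rather than the underlying algebra — will be the main thing to manage.
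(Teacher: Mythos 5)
Your proposal is correct and follows essentially the same route as the paper: apply the block form $V_\gamma^{SE}=\sum_n\ketbra{n}_S\otimes D(\sqrt\gamma n)$ to matrix units $\ketbra{m}{n}_S\otimes\ketbra{0}_E$, trace out $E$ to get the coherent-state overlap $e^{-\frac{\gamma}{2}(n-m)^2}$ via Eq.~\eqref{overlap_coh}, compare with Lemma~\ref{lemma_app_deph}, and conclude by linearity. Your additional verification of the block-exponential identity and the remark on trace-norm continuity are fine elaborations of steps the paper takes for granted, not a different argument.
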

\begin{proof}
For any $n,m\in\N$ it holds that 
\bb\label{dephasing_stin_proof}
    &\Tr_E\left[V_\gamma^{SE}\big(\ketbra{m}{n}_S\otimes \ketbra{0}_E\big)(V_\gamma^{SE})^\dagger \right]\\&\texteq{(i)} \Tr_E\left[\ketbra{m}{n}_S\otimes \ketbra{\sqrt{\gamma}n}{\sqrt{\gamma}m}_E\right]\\&\texteq{(ii)}e^{-\frac{\gamma}{2}(n-m)^2}\ketbra{m}{n}\\&\texteq{(iii)}\pazocal{D}_\gamma(\ketbra{m}{n})\,.
\ee
Here, in (i) we used that $ V^{SE}_{\gamma}\ket{n}_S\otimes\ket{0}_E=\ket{n}_S\otimes \ket{\sqrt{\gamma}n}_E$,
 where $\ket{\sqrt{\gamma}n}_{E}$ denotes the coherent state with parameter $\sqrt{\gamma}n$.
In (ii), we exploited the formula for the overlap between coherent states provided in~\eqref{overlap_coh}, and
in (iii) we used Lemma~\ref{lemma_app_deph}. The proof is completed by linearity.
\end{proof}
\begin{remark}
   A different approach to dilating the bosonic dephasing channel, as outlined in the existing literature~\cite{PhysRevA.102.042413,Rexiti_2022,dehdashti2022quantum}, is as follows:
    \begin{align*}
        \tilde{V}_\gamma^{\text{SE}}=\exp\left[ -i\sqrt{\gamma}\, \hat{a}^\dagger \hat{a} \,(\hat{e}^\dagger + \hat{e})\right].
    \end{align*}
    This unitary transformation is achieved by rotating the environmental mode of the unitary operator $V_\gamma^{\text{SE}}$ in~\eqref{dephasing-unitary} by $\frac{\pi}{2}$, that is $ \tilde{V}_\gamma^{\text{SE}}= 
        e^{i\frac{\pi}{2}\hat{e}^\dagger\hat{e}}
        V_\gamma^{\text{SE}}
        e^{-i\frac{\pi}{2}\hat{e}^\dagger\hat{e}}.$ 
    These dilations yield the same dephasing channel, as all dilations are equivalent up to unitary transformations.%, even though the corresponding Kraus operators and environmental state may vary.
\end{remark}

\begin{prop}[\cite{Loss_dephasing0}]
    The bosonic dephasing channel $\pazocal{D}_\gamma$ is degradable for all $\gamma\ge0$.%, which means it is never anti-degradable.
\end{prop}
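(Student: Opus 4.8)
The plan is to exhibit an explicit degrading map. The natural starting point is the Stinespring dilation of Lemma~\ref{dephasing-stinespring}, which I would use to compute a complementary channel $\pazocal{D}_\gamma^{\text{c}}$ by tracing out the system $S$ rather than the environment $E$. Since $V_\gamma^{SE}\ket{m}_S\otimes\ket{0}_E=\ket{m}_S\otimes\ket{\sqrt{\gamma}m}_E$, one has $V_\gamma^{SE}(\ketbra{m}{n}_S\otimes\ketbra{0}_E)(V_\gamma^{SE})^\dagger=\ketbra{m}{n}_S\otimes\ketbra{\sqrt{\gamma}m}{\sqrt{\gamma}n}_E$, and taking $\Tr_S$ (which kills $\ketbra{m}{n}_S$ unless $m=n$) gives
\begin{align*}
\pazocal{D}_\gamma^{\text{c}}(\rho)=\sum_{m=0}^\infty\rho_{mm}\,\ketbra{\sqrt{\gamma}m}\,,\qquad \rho=\sum_{m,n}\rho_{mn}\ketbra{m}{n}\,.
\end{align*}
Thus the complementary channel reads off the Fock-diagonal of the input and prepares the corresponding coherent state $\ket{\sqrt{\gamma}m}$ (a \emph{measure-and-prepare} action).

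The second ingredient is the observation, immediate from Lemma~\ref{lemma_app_deph}, that the diagonal entries of $\pazocal{D}_\gamma(\rho)$ coincide with those of $\rho$, because $e^{-\frac{\gamma}{2}(m-m)^2}=1$. This dictates the candidate degrading map
\begin{align*}
\pazocal{J}(\sigma)\coloneqq\sum_{m=0}^\infty\bra{m}\sigma\ket{m}\,\ketbra{\sqrt{\gamma}m}\,,
\end{align*}
an entanglement-breaking channel that measures in the Fock basis with the POVM $\{\ketbra{m}\}_{m\in\N}$ and, conditioned on outcome $m$, prepares $\ket{\sqrt{\gamma}m}$. Since $\sum_m\ketbra{m}=\mathbb{1}$ and each $\ketbra{\sqrt{\gamma}m}$ is a normalised state, $\pazocal{J}$ is manifestly completely positive and trace-preserving. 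I would then verify the degrading identity by a one-line Fock-basis computation: using $\bra{m}\pazocal{D}_\gamma(\rho)\ket{m}=\rho_{mm}$, one gets $\pazocal{J}\big(\pazocal{D}_\gamma(\rho)\big)=\sum_m\rho_{mm}\ketbra{\sqrt{\gamma}m}=\pazocal{D}_\gamma^{\text{c}}(\rho)$, which is exactly $\pazocal{J}\circ\pazocal{D}_\gamma=\pazocal{D}_\gamma^{\text{c}}$.

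Conceptually, the same map arises from the semigroup structure $\pazocal{D}_{\gamma_1}\circ\pazocal{D}_{\gamma_2}=\pazocal{D}_{\gamma_1+\gamma_2}$: the fully dephasing channel $\pazocal{D}_\infty$ satisfies $\pazocal{D}_\infty\circ\pazocal{D}_\gamma=\pazocal{D}_\infty$, so $\pazocal{J}=\Phi\circ\pazocal{D}_\infty$, with $\Phi$ the relabelling $\ketbra{m}\mapsto\ketbra{\sqrt{\gamma}m}$, reproduces $\pazocal{D}_\gamma^{\text{c}}$. I do not anticipate a serious obstacle: the construction is explicit and the verification is elementary. The only point needing mild care is the infinite-dimensional well-definedness of $\pazocal{J}$ (trace-norm convergence of the defining series and trace preservation on all of $\pazocal{T}(\HH_S)$), which follows from the standard theory of entanglement-breaking channels. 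The boundary case $\gamma=0$ is consistent, with $\pazocal{D}_0=\Id$ and $\pazocal{J}$ reducing to the vacuum-preparation channel $\sigma\mapsto\Tr[\sigma]\ketbra{0}$.
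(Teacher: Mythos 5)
Your proposal is correct, and it is worth noting that the paper itself gives no proof of this proposition at all: it is stated with a citation to the literature and imported as a known fact, so your argument is a genuinely self-contained alternative. Your key computation checks out against the paper's own dilation (Lemma~\ref{dephasing-stinespring}): since $V_\gamma^{SE}(\ketbra{m}{n}_S\otimes\ketbra{0}_E)(V_\gamma^{SE})^\dagger=\ketbra{m}{n}_S\otimes\ketbra{\sqrt{\gamma}m}{\sqrt{\gamma}n}_E$ and $\Tr_S$ kills the off-diagonal terms, the complementary channel is indeed the measure-and-prepare map $\rho\mapsto\sum_m\rho_{mm}\ketbra{\sqrt{\gamma}m}$; combined with the fact that $\pazocal{D}_\gamma$ leaves the Fock diagonal untouched (Lemma~\ref{lemma_app_deph}), your $\pazocal{J}$ is manifestly CPTP and satisfies $\pazocal{J}\circ\pazocal{D}_\gamma=\pazocal{D}_\gamma^{\text{c}}$, with degradability independent of the choice of dilation since complements agree up to an isometry on the environment. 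What your route buys is an explicit degrading map and the stronger structural observation that $\pazocal{D}_\gamma^{\text{c}}$ is entanglement-breaking (so $\pazocal{D}_\gamma$ is a Hadamard channel), which dovetails nicely with the Hadamard-map toolbox the paper uses elsewhere. Two cosmetic caveats: the closing remark about $\pazocal{J}=\Phi\circ\pazocal{D}_\infty$ is only heuristic, since $\pazocal{D}_\infty$ is a limit rather than a member of the family and the ``relabelling'' $\ketbra{m}\mapsto\ketbra{\sqrt{\gamma}m}$ cannot be a unitary relabelling (the coherent states are non-orthogonal) but is itself a measure-and-prepare channel; neither point is load-bearing, as your direct one-line verification already closes the argument.
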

\begin{prop}[\cite{Loss_dephasing0}]
    The bosonic dephasing channel $\pazocal{D}_\gamma$ is never anti-degradable.
\end{prop}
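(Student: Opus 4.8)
The plan is to reduce this infinite-dimensional statement to a two-dimensional one and then apply the exact qubit criterion of Lemma~\ref{lemma_qubit_charact_antideg}. The starting observation is that, by Lemma~\ref{lemma_app_deph}, $\pazocal{D}_\gamma$ sends each rank-one Fock operator $\ketbra{m}{n}$ to a scalar multiple of itself, so the span of $\{\ket{0},\ket{1}\}$ is an invariant subspace: if the input is supported there, so is the output. Hence the restriction $\pazocal{D}_\gamma^{(2)}$ of $\pazocal{D}_\gamma$ to this qubit subspace is a genuine qubit channel --- the qubit dephasing channel that fixes $\ketbra{0}$ and $\ketbra{1}$ and multiplies $\ketbra{0}{1}$ and $\ketbra{1}{0}$ by $e^{-\gamma/2}$. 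First I would record the restriction principle: anti-degradability is inherited by restrictions to invariant subspaces, so it suffices to prove that $\pazocal{D}_\gamma^{(2)}$ is never anti-degradable, from which the non-anti-degradability of $\pazocal{D}_\gamma$ follows.

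Second, I would apply the qubit characterisation by evaluating both sides of the inequality in Lemma~\ref{lemma_qubit_charact_antideg}. Since $\pazocal{D}_\gamma^{(2)}$ is unital, $\pazocal{D}_\gamma^{(2)}(\mathbb{1}_2/2)=\mathbb{1}_2/2$, giving $\Tr[(\pazocal{D}_\gamma^{(2)}(\mathbb{1}_2/2))^2]=\tfrac12$. The Choi operator $C(\pazocal{D}_\gamma^{(2)})$ is supported on $\mathrm{span}\{\ket{00},\ket{11}\}$, where it equals $\tfrac12\left(\begin{smallmatrix}1 & e^{-\gamma/2}\\ e^{-\gamma/2} & 1\end{smallmatrix}\right)$, and vanishes elsewhere; as a $4\times4$ operator it is rank-deficient, so $\det C(\pazocal{D}_\gamma^{(2)})=0$, while $\Tr[C(\pazocal{D}_\gamma^{(2)})^2]=\tfrac12(1+e^{-\gamma})$. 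The anti-degradability inequality therefore becomes $\tfrac12\ge \tfrac12(1+e^{-\gamma})$, i.e. $e^{-\gamma}\le 0$, which fails for every finite $\gamma\ge 0$. Thus $\pazocal{D}_\gamma^{(2)}$ violates the criterion for all $\gamma$, and by the restriction principle $\pazocal{D}_\gamma$ is never anti-degradable.

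The main obstacle is the only genuinely non-routine point: justifying the restriction principle in infinite dimensions. Concretely, one must check that the complementary channel of the restriction agrees with the restriction of $\pazocal{D}_\gamma^{\mathrm c}$ on the qubit subspace, and handle the output-space mismatch of a would-be anti-degrading map $\pazocal{A}$ --- if $\pazocal{A}\circ\pazocal{D}_\gamma^{\mathrm c}=\pazocal{D}_\gamma$, then, since $\pazocal{D}_\gamma^{(2)}(\rho)$ already lives in the qubit output block, post-composing $\pazocal{A}$ with a projection onto that block (completed to a channel by dumping the orthogonal weight into a fixed qubit state) produces a legitimate qubit anti-degrading map for $\pazocal{D}_\gamma^{(2)}$. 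Everything else reduces to the trace and determinant evaluations above. As a sanity check I would note that this statement is precisely the $\lambda=1$ endpoint of Theorem~\ref{thm_main_not_antideg}, whose hypothesis $\lambda>\frac{1}{1+e^{-\gamma}}$ holds at $\lambda=1$ for every finite $\gamma$, so the two results are mutually consistent.
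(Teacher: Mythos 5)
Your proposal is correct: the computations check out ($\pazocal{D}_\gamma^{(2)}$ is unital, its Choi operator is rank~$2$ with $\Tr[C^2]=\tfrac12(1+e^{-\gamma})$ and $\det C=0$, so the criterion of Lemma~\ref{lemma_qubit_charact_antideg} reads $e^{-\gamma}\le 0$, which fails for every $\gamma\ge 0$), and you correctly flag and resolve the only delicate point, namely that anti-degradability passes to the restriction on the invariant qubit subspace (same Stinespring isometry for channel and restriction, plus a projection-completion to fix the output space of the would-be anti-degrading map).

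It is worth noting how this sits relative to the paper. The paper does not prove this proposition at all; it is quoted from \cite{Loss_dephasing0}, where the argument is of a different nature. What you have written is, in effect, the $\lambda=1$ endpoint of the paper's own Theorem~\ref{theorem_not_antideg} (equivalently Theorem~\ref{thm_main_not_antideg}), and your method coincides with the paper's ``Proof 3'' of that theorem: restrict to the first two Fock levels, compute the $4\times 4$ Choi matrix (the paper's Eq.~\eqref{choistate_bidimensional} with $\lambda=1$ is exactly your two-by-two block padded with zeros), and apply the qubit anti-degradability criterion, which for general $\lambda$ gives the threshold $\lambda\le\frac{1}{1+e^{-\gamma}}$ and for $\lambda=1$ gives the empty condition $e^{-\gamma}\le 0$. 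So your route buys a short, self-contained proof of the cited fact using only tools already developed in the paper (Corollary~\ref{cor-antideg}-type inheritance plus Lemma~\ref{lemma_qubit_charact_antideg}), arguably simpler than the original reference, but it is not independent of the paper's later machinery --- it is precisely that machinery specialised to the dephasing-only case, as your own consistency check at the end observes.
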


\begin{lemma}\label{lemma_comp_deph}
For all $\gamma_1,\gamma_2\ge0$, the composition of bosonic dephasing channels is given by $\pazocal{D}_{\gamma_1} \circ \pazocal{D}_{\gamma_2} = \pazocal{D}_{\gamma_1 + \gamma_2}$.
\end{lemma}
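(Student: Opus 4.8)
The plan is to verify the identity on a spanning set of operators and then extend by linearity and continuity. The natural spanning set is the family of rank-one Fock operators $\ketbra{m}{n}$, $m,n\in\N$, on which Lemma~\ref{lemma_app_deph} supplies the explicit action $\pazocal{D}_\gamma(\ketbra{m}{n}) = e^{-\frac{\gamma}{2}(n-m)^2}\ketbra{m}{n}$. Since this formula holds for every $\gamma\ge 0$ (including the boundary case $\gamma=0$, where $\pazocal{D}_0=\Id$), I would work directly with it rather than with the Stinespring dilation of Lemma~\ref{dephasing-stinespring}, which is only stated for $\gamma>0$; this avoids any separate treatment of degenerate cases.

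First I would apply $\pazocal{D}_{\gamma_2}$ to $\ketbra{m}{n}$, producing the scalar factor $e^{-\frac{\gamma_2}{2}(n-m)^2}$, and then apply $\pazocal{D}_{\gamma_1}$ to the result. The key structural observation is that $\pazocal{D}_{\gamma_1}$ again multiplies $\ketbra{m}{n}$ by $e^{-\frac{\gamma_1}{2}(n-m)^2}$ without altering the index pair $(m,n)$, so the two exponentials simply multiply, yielding $e^{-\frac{\gamma_1+\gamma_2}{2}(n-m)^2}\ketbra{m}{n}$. Comparing this with Lemma~\ref{lemma_app_deph} applied with parameter $\gamma_1+\gamma_2$ gives $\pazocal{D}_{\gamma_1}\circ\pazocal{D}_{\gamma_2}(\ketbra{m}{n}) = \pazocal{D}_{\gamma_1+\gamma_2}(\ketbra{m}{n})$ for all $m,n$.

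To conclude, I would extend the equality from rank-one Fock operators to arbitrary $\rho\in\pazocal{T}(\HH_S)$ by linearity, using that both $\pazocal{D}_{\gamma_1}\circ\pazocal{D}_{\gamma_2}$ and $\pazocal{D}_{\gamma_1+\gamma_2}$ are bounded, trace-norm continuous superoperators that agree on the dense span of $\{\ketbra{m}{n}\}_{m,n}$. I do not anticipate a genuine obstacle: the whole mechanism is that the dephasing multiplier depends only on the index difference $n-m$, so composition adds the variances $\gamma_1$ and $\gamma_2$ in the exponent — the same additivity one sees from the convolution-of-Gaussians viewpoint implicit in the integral definition of $\pazocal{D}_\gamma$. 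The only point requiring mild care is the linearity/continuity step needed to pass from the Fock operators to all of $\pazocal{T}(\HH_S)$ in the infinite-dimensional setting.
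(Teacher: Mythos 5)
Your argument is correct and follows essentially the same route as the paper, whose one-line proof simply invokes Lemma~\ref{lemma_app_deph}: on $\ketbra{m}{n}$ each dephasing channel contributes a multiplier $e^{-\frac{\gamma_i}{2}(n-m)^2}$, so the exponents add, and the identity extends to all of $\pazocal{T}(\HH_S)$ by linearity and continuity. Your write-up just makes explicit the density/continuity step that the paper leaves implicit.
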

    \begin{proof}
        This can be shown by leveraging Lemma~\ref{lemma_app_deph}. 
    \end{proof}

 \subsection{Bosonic loss-dephasing channel}\label{loss-dephasing-def}
 Consider an optical system undergoing simultaneous loss and dephasing over a finite time interval. At each instant, the system is susceptible to both an infinitesimal pure-loss channel and an infinitesimal bosonic dephasing channel. Hence, the overall channel describing the simultaneous effect of loss and dephasing results in a suitable composition of numerous concatenations between infinitesimal pure-loss and bosonic dephasing channels. However, given that:
 \begin{itemize}
     \item The pure-loss channel and the bosonic dephasing channel commute, i.e.~$\pazocal{E}_\lambda\circ\pazocal{D}_\gamma=\pazocal{D}_\gamma\circ\pazocal{E}_\lambda$,
as implied by Lemma~\ref{fock_pureloss} and Lemma~\ref{lemma_app_deph};
    \item The composition of pure-loss channels is a pure-loss channel, $\pazocal{E}_{\lambda_1}\circ \pazocal{E}_{\lambda_2}=\pazocal{E}_{\lambda_1\lambda_2}$ (Lemma~\ref{lemma_comp_pure});
    \item The composition of bosonic dephasing channels is a bosonic dephasing channel, $\pazocal{D}_{\gamma_1}\circ \pazocal{D}_{\gamma_2}=\pazocal{D}_{\gamma_1+\gamma_2}$ (Lemma~\ref{lemma_comp_deph});
 \end{itemize}  
 it follows that the combined effect of loss and dephasing can be modelled by the composition between pure-loss channel and bosonic dephasing channel,
\bb
    \pazocal{N}_{\lambda,\gamma}\coloneqq \pazocal{E}_\lambda\circ\pazocal{D}_\gamma\,,
\ee
dubbed the \emph{bosonic loss-dephasing channel}.
\begin{Def}
For all $\gamma\ge0$ and $\lambda\in[0,1]$, the {bosonic loss-dephasing channel} $\pazocal{N}_{\lambda,\gamma}$ is a quantum channel defined by the composition between pure-loss channel and bosonic dephasing channel: $\pazocal{N}_{\lambda,\gamma}\coloneqq\pazocal{D}_\gamma\circ \pazocal{E}_\lambda\,.$
\end{Def}
\begin{lemma}\label{App_remark_commutation}
For all $\lambda\in[0,1]$ and $\gamma\ge0$, it holds that $ \pazocal{N}_{\lambda,\gamma}\coloneqq\pazocal{D}_\gamma\circ \pazocal{E}_\lambda=\pazocal{E}_\lambda\circ\pazocal{D}_\gamma$. Moreover, for all $n,m\in\N$ it holds that
\begin{align*}
     \pazocal{N}_{\lambda,\gamma}(\ketbra{m}{n})&=e^{-\frac{1}{2}\gamma (n-m)^2}\pazocal{E}_{\lambda}(\ketbra{m}{n})\\&=e^{-\frac{1}{2}\gamma (n-m)^2}\sum_{\ell=0}^{\min(n,m)} \sqrt{\binom{n}{\ell}\binom{m}{\ell}}\, (1-\lambda)^\ell \lambda^{\frac{n+m}{2} - \ell} \ketbra{m-\ell}{n-\ell}\,.
\end{align*}
\end{lemma}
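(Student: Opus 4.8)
The plan is to reduce everything to the action on rank-one Fock operators $\ketbra{m}{n}$. Since these operators span a dense subset of the trace class, by linearity and continuity it suffices to verify both the commutation identity and the explicit formula on each $\ketbra{m}{n}$ separately. Once restricted to Fock operators, I would simply invoke the two computational lemmas already established: Lemma~\ref{fock_pureloss} for the action of the pure-loss channel $\pazocal{E}_\lambda$, and Lemma~\ref{lemma_app_deph} for the action of the bosonic dephasing channel $\pazocal{D}_\gamma$.

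First I would compute $\pazocal{D}_\gamma\circ\pazocal{E}_\lambda(\ketbra{m}{n})$. Applying $\pazocal{E}_\lambda$ first via Lemma~\ref{fock_pureloss} produces a sum over $\ell$ of terms proportional to $\ketbra{m-\ell}{n-\ell}$. The crucial observation --- and essentially the only point requiring any thought --- is that the dephasing damping factor attached to each such term is $e^{-\frac{1}{2}\gamma((m-\ell)-(n-\ell))^2}=e^{-\frac{1}{2}\gamma(m-n)^2}$, which is independent of $\ell$ because the index difference is preserved under the simultaneous lowering $m\mapsto m-\ell$, $n\mapsto n-\ell$. Hence this factor pulls out of the sum, yielding exactly $e^{-\frac{1}{2}\gamma(n-m)^2}\,\pazocal{E}_\lambda(\ketbra{m}{n})$.

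Next I would compute the opposite order $\pazocal{E}_\lambda\circ\pazocal{D}_\gamma(\ketbra{m}{n})$. Here Lemma~\ref{lemma_app_deph} applies immediately to give $\pazocal{D}_\gamma(\ketbra{m}{n})=e^{-\frac{1}{2}\gamma(n-m)^2}\ketbra{m}{n}$, and since $\pazocal{E}_\lambda$ is linear the scalar factor comes straight out, leaving $e^{-\frac{1}{2}\gamma(n-m)^2}\,\pazocal{E}_\lambda(\ketbra{m}{n})$. This coincides with the expression obtained in the other order, so the two compositions agree on every Fock operator and therefore, by linearity and continuity, as channels; this establishes the commutation identity $\pazocal{N}_{\lambda,\gamma}=\pazocal{D}_\gamma\circ\pazocal{E}_\lambda=\pazocal{E}_\lambda\circ\pazocal{D}_\gamma$. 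Substituting the closed form of $\pazocal{E}_\lambda(\ketbra{m}{n})$ from Lemma~\ref{fock_pureloss} into this common expression then yields the second displayed equality of the statement.

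I do not expect any genuine obstacle: the lemma is a direct corollary of the two preceding Fock-basis computations. The single nontrivial ingredient is recognising the shift-invariance of $(n-m)^2$ under the $\ell$-dependent lowering, which is precisely what decouples the dephasing factor into a global scalar rather than leaving it entangled with the loss summation --- and this is the mathematical reason $\pazocal{E}_\lambda$ and $\pazocal{D}_\gamma$ commute in the first place.
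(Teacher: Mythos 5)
Your proposal is correct and follows the same route as the paper, which proves the lemma simply by invoking Lemma~\ref{fock_pureloss} and Lemma~\ref{lemma_app_deph}; your explicit observation that the factor $e^{-\frac{1}{2}\gamma(m-n)^2}$ is invariant under the simultaneous lowering $m\mapsto m-\ell$, $n\mapsto n-\ell$ is exactly the mechanism implicit in the paper's one-line argument. Nothing is missing.
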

\begin{proof}
    It follows from Lemma~\ref{fock_pureloss} and Lemma~\ref{lemma_app_deph}.
\end{proof}

\begin{lemma}\label{lemma_compl_channel}
    Let $\HH_S,\HH_{E_1},\HH_{E_2}\coloneqq L^2(\mathbb{R})$. For all $\lambda\in[0,1]$ and all $\gamma\ge0$, the bosonic loss-dephasing channel $\pazocal{N}_{\lambda,\gamma}:\pazocal{T}(\HH_S)\to\pazocal{T}(\HH_S)$ admits the following Stinespring representation:
    \begin{align*}
                \pazocal{N}_{\lambda,\gamma}(\rho)=\Tr_{E_1E_2}\left[U_\lambda^{SE_1}V_\gamma^{SE_2} \big(\rho_S \otimes\ketbra{0}_{E_1}\otimes\ketbra{0}_{E_2}\big) \left(U_\lambda^{SE_1}V_\gamma^{SE_2}\right)^\dagger\right],
        \qquad\forall\,\rho\in\pazocal{T}(\HH_S),
    \end{align*}
    The associated complementary channel $\pazocal{N}_{\lambda,\gamma}^\text{c}:\pazocal{T}(\HH_S)\to \pazocal{T}(\HH_{E_1}\otimes \HH_{E_2})$ is defined as follows:
    \begin{align*}
                \pazocal{N}_{\lambda,\gamma}^\text{c}(\rho)\coloneqq\Tr_{S}\!\left[U_\lambda^{SE_1}V_\gamma^{SE_2} \big(\rho_S \otimes\ketbra{0}_{E_1}\otimes\ketbra{0}_{E_2}\big) \left(U_\lambda^{SE_1}V_\gamma^{SE_2}\right)^\dagger\right],
        \qquad\forall\,\rho\in\pazocal{T}(\HH_S)\,.
    \end{align*}
    In particular, 
    \begin{align}
        \label{action_compl_on_fock}
        \pazocal{N}^\text{c}_{\lambda,\gamma}(\ketbra{m}{n})=(-1)^{m-n}\pazocal{E}_{1-\lambda}\left(\ketbra{m}{n}_{E_1}\right)\otimes \ketbra{\sqrt{\gamma}m}{\sqrt{\gamma}n}_{E_2},
        \qquad\forall\,m,n\in\mathbb{N}\,,
    \end{align}
    where $\ket{\sqrt{\gamma}n}_{E_2}$ denotes the coherent state with parameter $\sqrt{\gamma}n$.  
\end{lemma}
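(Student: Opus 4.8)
The plan is to prove the three assertions in turn: the two-mode Stinespring dilation, the identification of the complementary channel, and the explicit Fock-basis formula~\eqref{action_compl_on_fock}.

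First, for the dilation, I would start from the factorisation $\pazocal{N}_{\lambda,\gamma}=\pazocal{E}_\lambda\circ\pazocal{D}_\gamma$ (Lemma~\ref{App_remark_commutation}) and substitute the individual Stinespring representations: the dephasing dilation with unitary $V_\gamma^{SE_2}$ and vacuum ancilla $\ket{0}_{E_2}$ (Lemma~\ref{dephasing-stinespring}), and the loss dilation with beam splitter $U_\lambda^{SE_1}$ and vacuum ancilla $\ket{0}_{E_1}$. Writing $\sigma\coloneqq\pazocal{D}_\gamma(\rho)=\Tr_{E_2}[\,\cdots]$ and then applying $\pazocal{E}_\lambda$ to $\sigma$, the key observation is that $U_\lambda^{SE_1}$ acts only on $S,E_1$, so it commutes with the partial trace $\Tr_{E_2}$ and with the insertion of $\ket{0}_{E_1}$. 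This lets the two nested partial traces merge into a single $\Tr_{E_1E_2}$, producing the claimed joint unitary $U_\lambda^{SE_1}V_\gamma^{SE_2}$ acting on $\rho\otimes\ketbra{0}_{E_1}\otimes\ketbra{0}_{E_2}$. This step is routine rearrangement of partial traces over disjoint ancillary modes.

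Second, the complementary channel comes for free: once a Stinespring dilation is fixed, the complementary channel is by definition obtained by tracing out the channel output rather than the environment. Here the output is carried by $S$ and the environment by $E_1E_2$, so $\pazocal{N}^\text{c}_{\lambda,\gamma}$ is exactly the stated expression with $\Tr_S$ in place of $\Tr_{E_1E_2}$.

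Third---and this is where the real work lies---I would establish~\eqref{action_compl_on_fock} by direct computation on rank-one Fock operators, extending to all trace-class inputs by linearity and continuity. Applying $V_\gamma^{SE_2}$ first sends $\ket{m}_S\ket{0}_{E_1}\ket{0}_{E_2}\mapsto\ket{m}_S\ket{0}_{E_1}\ket{\sqrt{\gamma}m}_{E_2}$ (Lemma~\ref{dephasing-stinespring}), after which $U_\lambda^{SE_1}$ expands $\ket{m}_S\ket{0}_{E_1}$ into the binomial superposition of Lemma~\ref{lemma_beam_n0}. Forming the outer product of the image of $\ket{m}$ with that of $\bra{n}$ and tracing out $S$ imposes $\braket{n-l'}{m-l}_S=\delta_{m-l,\,n-l'}$, collapsing the double sum over the two beam-splitter indices $l,l'$ into a single sum; introducing $k\coloneqq m-l=n-l'$ (so $l=m-k$, $l'=n-k$, and $0\le k\le\min(m,n)$) is the decisive reindexing. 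The main obstacle is precisely this bookkeeping: one must verify that the sign factors out as $(-1)^{l+l'}=(-1)^{m+n-2k}=(-1)^{m-n}$, that the binomials recombine into $\sqrt{\binom{m}{k}\binom{n}{k}}$, and that the remaining powers assemble into $\lambda^{k}(1-\lambda)^{\frac{m+n}{2}-k}$. Comparing the resulting operator on $E_1$ with Lemma~\ref{fock_pureloss} evaluated at transmissivity $1-\lambda$ identifies it as $\pazocal{E}_{1-\lambda}(\ketbra{m}{n})$, while the $E_2$ factor is manifestly $\ketbra{\sqrt{\gamma}m}{\sqrt{\gamma}n}$, completing the proof.
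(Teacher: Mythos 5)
Your proposal is correct and follows essentially the same route as the paper: compose the dephasing dilation of Lemma~\ref{dephasing-stinespring} with the beam-splitter dilation into the single Stinespring unitary $U_\lambda^{SE_1}V_\gamma^{SE_2}$, trace over $S$, and read off the action on $\ketbra{m}{n}$, with your reindexing $k=m-l=n-l'$ correctly producing the sign $(-1)^{m-n}$, the binomials $\sqrt{\binom{m}{k}\binom{n}{k}}$, and the powers $\lambda^k(1-\lambda)^{\frac{m+n}{2}-k}$ that match Lemma~\ref{fock_pureloss} at transmissivity $1-\lambda$. The only difference is cosmetic: the paper obtains the $E_1$ factor by directly citing the pure-loss complementary formula~\eqref{compl_pure_loss_channel}, whereas you re-derive that formula via the explicit Fock-basis bookkeeping of Lemma~\ref{lemma_beam_n0}.
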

\begin{proof}
    The Eq.~\eqref{action_compl_on_fock} 
    is derived by first applying~\eqref{compl_pure_loss_channel} and subsequently utilising the dilation of the bosonic dephasing channel. Finally, the non-environment mode of the bosonic dephasing channel is traced out.
\end{proof}

    \begin{lemma}\label{lemma_appendix_condition_antideg}
 Consider the Hilbert spaces $\HH_S$, $\HH_{E_1}$, and $\HH_{E_2}$, all isomorphic to $L^2(\mathbb{R})$. Let $\lambda\in[0,1]$ and $\gamma\ge0$. The bosonic loss-dephasing channel $\pazocal{N}_{\lambda,\gamma}:\pazocal{T}(\HH_S)\rightarrow\pazocal{T}(\HH_S)$ exhibits anti-degradability if and only if there exists a quantum channel $\pazocal{A}_{\lambda,\gamma}: \pazocal{T}(\HH_{E_{\text{out}}})\rightarrow \pazocal{T}(\HH_S)$ satisfying the following condition:
    \bb\label{antideg0_cond}
        \pazocal{A}_{\lambda,\gamma}\circ \pazocal{N}^{\text{c}}_{\lambda,\gamma}(\ketbra{m}{n})=\pazocal{N}_{\lambda,\gamma}(\ketbra{m}{n}),\quad\forall\,m,n\in\N\,,
    \ee
    where $\pazocal{N}^{\text{c}}_{\lambda,\gamma}:\pazocal{T}(\HH_S)\to \pazocal{T}(\HH_{E_{\text{out}}})$ denotes the complementary channel reported in~\eqref{action_compl_on_fock}, and $\HH_{E_{\text{out}}}\subset \HH_{E_1}\otimes\HH_{E_2}$ is defined in~\eqref{statement_def_h_out}. Moreover, the condition in~\eqref{antideg0_cond} is equivalent to
    \bb\label{cond_A_antideg_proof_st}
        \pazocal{A}_{\lambda,\gamma}\left(   \pazocal{E}_{1-\lambda}\left(\ketbra{m}{n}\right)\otimes \ketbra{\sqrt{\gamma}m}{\sqrt{\gamma}n}    \right) = (-1)^{m-n} e^{-\frac{\gamma}{2}(m-n)^2}\pazocal{E}_\lambda(\ketbra{m}{n}),
        \quad\forall\,m,n\in\mathbb{N}\,,
    \ee
    where $\ket{\sqrt{\gamma}n}$ denotes the coherent state with parameter $\sqrt{\gamma}n$.
\end{lemma}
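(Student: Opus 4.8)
The plan is to prove the statement in two separate pieces: the main equivalence between anti-degradability and the matching condition \eqref{antideg0_cond} on rank-one Fock operators, and then the purely algebraic equivalence between \eqref{antideg0_cond} and \eqref{cond_A_antideg_proof_st}. I would treat these independently, since the first requires a genuine analytic argument while the second is mere bookkeeping.

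For the main equivalence, I would start from the definition of anti-degradability: $\pazocal{N}_{\lambda,\gamma}$ is anti-degradable precisely when there is a quantum channel $\pazocal{A}$ on the full environment with $\pazocal{A}\circ\pazocal{N}^{\text{c}}_{\lambda,\gamma}=\pazocal{N}_{\lambda,\gamma}$ as superoperators on $\pazocal{T}(\HH_S)$. The forward direction is immediate: I would restrict this operator identity to the Fock operators $\ketbra{m}{n}$, and restrict the domain of $\pazocal{A}$ to the subspace $\HH_{E_{\text{out}}}$ that contains the range of $\pazocal{N}^{\text{c}}_{\lambda,\gamma}$. For the converse, suppose $\pazocal{A}_{\lambda,\gamma}:\pazocal{T}(\HH_{E_{\text{out}}})\to\pazocal{T}(\HH_S)$ is a channel satisfying \eqref{antideg0_cond}. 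I would first promote it to a channel on all of $\HH_{E_1}\otimes\HH_{E_2}$ (e.g.~by precomposing with the projection onto $\HH_{E_{\text{out}}}$ and sending the orthogonal complement to a fixed state), which leaves the composition with $\pazocal{N}^{\text{c}}_{\lambda,\gamma}$ unchanged because the latter already maps into $\pazocal{T}(\HH_{E_{\text{out}}})$. Then I would upgrade the equality from the Fock operators to all of $\pazocal{T}(\HH_S)$: both $\pazocal{A}_{\lambda,\gamma}\circ\pazocal{N}^{\text{c}}_{\lambda,\gamma}$ and $\pazocal{N}_{\lambda,\gamma}$ are trace-norm contractive linear maps, and $\Span\{\ketbra{m}{n}\}_{m,n\in\N}$ is dense in $\pazocal{T}(\HH_S)$ in trace norm, so agreement on these rank-one operators forces agreement everywhere.

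The hard part — and the only place where infinite dimensionality bites — is precisely this last continuity step: in finite dimensions the span of the Fock operators is the entire operator space and there is nothing to prove, whereas here one must combine the trace-norm density of $\Span\{\ketbra{m}{n}\}$ (truncations $P_d\,\rho\,P_d$ converge in trace norm) with the trace-norm continuity of quantum channels. I would also take care to verify that $\HH_{E_{\text{out}}}$, as defined in \eqref{statement_def_h_out}, indeed contains the closed span of the supports of $\{\pazocal{N}^{\text{c}}_{\lambda,\gamma}(\ketbra{m}{n})\}_{m,n}$, which is what legitimises both the domain restriction and the extension argument above.

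Finally, the equivalence between \eqref{antideg0_cond} and \eqref{cond_A_antideg_proof_st} is a direct substitution requiring no analytic input. Using the explicit action of the complementary channel from Lemma~\ref{lemma_compl_channel}, namely \eqref{action_compl_on_fock}, together with $\pazocal{N}_{\lambda,\gamma}(\ketbra{m}{n})=e^{-\frac{\gamma}{2}(m-n)^2}\pazocal{E}_\lambda(\ketbra{m}{n})$ from Lemma~\ref{App_remark_commutation}, I would insert both expressions into \eqref{antideg0_cond}, pull the scalar $(-1)^{m-n}$ out of $\pazocal{A}_{\lambda,\gamma}$ by linearity, and multiply both sides by $(-1)^{m-n}$ (using $(-1)^{2(m-n)}=1$) to land exactly on \eqref{cond_A_antideg_proof_st}. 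This reduces the claimed equivalence to sign and bookkeeping.
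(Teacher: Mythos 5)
Your proposal is correct and takes essentially the same route as the paper, whose proof simply invokes the definition of anti-degradability, dismisses the passage from all of $\pazocal{T}(\HH_S)$ to rank-one Fock operators with the words ``by linearity'', and obtains the second equivalence by substituting the explicit formulas from Lemmas~\ref{App_remark_commutation} and~\ref{lemma_compl_channel}. The extra details you supply --- the extension of $\pazocal{A}_{\lambda,\gamma}$ from $\pazocal{T}(\HH_{E_{\text{out}}})$ to the full two-mode environment and the trace-norm density/continuity argument upgrading agreement on $\ketbra{m}{n}$ to agreement on all of $\pazocal{T}(\HH_S)$ --- are precisely the points the paper's terse argument leaves implicit, so they strengthen rather than diverge from its proof.
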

\begin{proof}
$\pazocal{N}_{\lambda,\gamma}$ is anti-degradable if and only if there exists a quantum channel $\pazocal{A}_{\lambda,\gamma}: \pazocal{T}(\HH_{E_{\text{out}}})\to \pazocal{T}(\HH_S)$ such that 
    \bb\label{antideg_condition_general_rho}
        \pazocal{A}_{\lambda,\gamma}\circ \pazocal{N}^{\text{c}}_{\lambda,\gamma}(\rho)=\pazocal{N}_{\lambda,\gamma}(\rho),
        \qquad\forall\,\rho\in\pazocal{T}(\HH_S)\,.
    \ee    
        By linearity, it suffices to show the condition in~\eqref{antideg0_cond}, i.e.~which corresponds to the condition in~\eqref{antideg_condition_general_rho} restricted to rank-one Fock operators of the form $\rho=\ketbra{m}{n}$ with $m,n\in\N$. Moreover, by exploiting Lemma~\ref{App_remark_commutation} and~\ref{action_compl_on_fock}, the condition in~\eqref{antideg0_cond} is equivalent to~\eqref{cond_A_antideg_proof_st}.
\end{proof}

\begin{lemma}\label{comp_rule_bosonic_loss_deph}
    For all $\gamma_1,\gamma_2\ge0$ and all $\lambda_1,\lambda_2$ it holds that $\pazocal{N}_{\lambda_1,\gamma_1}\circ\pazocal{N}_{\lambda_2,\gamma_2}=\pazocal{N}_{\lambda_1\lambda_2\,,\,\gamma_1+\gamma_2}$.
\end{lemma}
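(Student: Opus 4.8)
The plan is to reduce everything to the two known composition rules for the individual noise processes by exploiting the fact that loss and dephasing commute. Since $\pazocal{N}_{\lambda,\gamma}=\pazocal{E}_\lambda\circ\pazocal{D}_\gamma=\pazocal{D}_\gamma\circ\pazocal{E}_\lambda$ by Lemma~\ref{App_remark_commutation}, I would begin by writing both factors in the form most convenient for cancellation, namely
\bb
\pazocal{N}_{\lambda_1,\gamma_1}\circ\pazocal{N}_{\lambda_2,\gamma_2}=\pazocal{E}_{\lambda_1}\circ\pazocal{D}_{\gamma_1}\circ\pazocal{E}_{\lambda_2}\circ\pazocal{D}_{\gamma_2}\,.
\ee

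The central step is to move the middle dephasing channel $\pazocal{D}_{\gamma_1}$ past the pure-loss channel $\pazocal{E}_{\lambda_2}$, which is permitted precisely because these two channels commute (again Lemma~\ref{App_remark_commutation}). This regroups the expression as
\bb
\pazocal{N}_{\lambda_1,\gamma_1}\circ\pazocal{N}_{\lambda_2,\gamma_2}=\big(\pazocal{E}_{\lambda_1}\circ\pazocal{E}_{\lambda_2}\big)\circ\big(\pazocal{D}_{\gamma_1}\circ\pazocal{D}_{\gamma_2}\big)\,,
\ee
so that each parenthesised group is a composition of channels of the same type. I would then invoke Lemma~\ref{lemma_comp_pure} to collapse the pure-loss composition to $\pazocal{E}_{\lambda_1\lambda_2}$, and Lemma~\ref{lemma_comp_deph} to collapse the dephasing composition to $\pazocal{D}_{\gamma_1+\gamma_2}$, yielding $\pazocal{E}_{\lambda_1\lambda_2}\circ\pazocal{D}_{\gamma_1+\gamma_2}=\pazocal{N}_{\lambda_1\lambda_2,\gamma_1+\gamma_2}$ directly from the definition of the loss-dephasing channel.

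There is essentially no genuine obstacle here: the entire argument is a bookkeeping exercise resting on three previously established facts, and no fresh computation in the Fock basis is needed. The only point deserving a word of care is that the commutation identity and the two composition rules have each been proved on the relevant parameter ranges ($\lambda\in[0,1]$ and $\gamma\ge0$), so one should simply note that the output parameters remain admissible, i.e.~$\lambda_1\lambda_2\in[0,1]$ and $\gamma_1+\gamma_2\ge0$, which is immediate. Hence the identity holds for all $\lambda_1,\lambda_2\in[0,1]$ and $\gamma_1,\gamma_2\ge0$.
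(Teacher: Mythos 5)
Your proposal is correct and follows essentially the same route as the paper: the paper's proof simply cites the composition rules for the pure-loss and dephasing channels (Lemmas~\ref{lemma_comp_pure} and~\ref{lemma_comp_deph}), with the commutation of $\pazocal{E}_\lambda$ and $\pazocal{D}_\gamma$ (Lemma~\ref{App_remark_commutation}) doing the implicit regrouping that you spell out explicitly. Your write-up just makes that intermediate step visible; there is no substantive difference.
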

    \begin{proof}
        This can be shown by exploiting Lemma~\ref{lemma_comp_pure} and Lemma~\ref{lemma_comp_deph}. 
    \end{proof}

\subsection{Preliminaries on capacities of quantum channels}\label{subsec_eb}
Quantum channels can be suitably exploited in order to transfer information from their input port to a possibly distant output port, a crucial task in quantum information theory~\cite{NC,WATROUS}. In particular, quantum Shannon theory~\cite{MARK,Sumeet_book} primarily helps us understand the fundamental limits of quantum communication using a quantum channel $\pazocal{N}$. These limits are called \emph{capacities} and tell us the ultimate amount of information we can send through the channel when we use it many times~\cite{MARK,Sumeet_book}. Different capacities are defined based on the type of information that is being sent down the channel. For example, classical and quantum capacities of a quantum channel correspond to its ultimate capability of transmission of classical and quantum information, respectively. A channel can also be used to generate secret bits and the relevant capacity in this context is the so-called secret-key capacity. Each of the above-mentioned capacities might be endowed with other resources such as initial shared entanglement between the sender and the receiver or (possibly interactive) classical communication over a noiseless but public channel. This latter scenario gives rise to the notion of two-way capacities.

Specifically, the \emph{quantum capacity} $Q(\pazocal{N})$ of a quantum channel $\pazocal{N}$ is the maximum rate at which qubits can be reliably transmitted through $\pazocal{N}$~\cite{Sumeet_book}. We can further assume that both the sender, Alice, and the receiver, Bob, have free access to a public, noiseless two-way classical channel. In this two-way communication setting, the relevant notion of capacities are the \emph{two-way quantum capacity} $Q_2(\pazocal{N})$ and the \emph{secret-key capacity} $K(\pazocal{N})$~\cite{MARK, Sumeet_book}, defined as the maximum achievable rate of qubits and secret-key bits, respectively, that can be reliably transmitted across $\pazocal{N}$ with the aid of two-way classical communication. Since an ebit (i.e.~a maximally entangled state of Schmidt rank $2$) can always be used to generate one bit of secret key~\cite{Ekert91}, a trivial bound relates these capacities: $Q_2(\pazocal{N})\leq K(\pazocal{N})$.

In practical scenarios, it is important to consider that the input state prepared by Alice can not have unlimited energy and it adheres to specific energy constraints. In bosonic systems, it is common to limit the average photon number of any input state $\rho$ as $\Tr[\hat{a}^\dagger\hat{a}\rho]\leq N_s$, where $N_s>0$ is a given energy constraint. For any $N_s>0$, the energy-constrained (EC) two-way capacities for transmitting qubits and secret-key bits, denoted as $Q_2(\pazocal{N},N_s)$ and $K(\pazocal{N},N_s)$ respectively, are defined similarly to the unconstrained capacities with the difference that now the optimisation is performed over those strategies that exploit input states that adhere to the specified energy constraint. As in the unconstrained scenario, the relation between EC two-way capacities continue to hold, i.e.~$Q_2(\pazocal{N},N_s)\leq K(\pazocal{N},N_s)$. Moreover, the unconstrained capacities are upper bounds for the corresponding energy constrained capacities and they become equal in the limit $N_s\to\infty$.

Two-way capacities of a quantum channel are closely related to another important information-processing task, namely, entanglement distillation over a quantum channel. Suppose Alice generates $n$ copies of a state $\rho_{A'A}$ and sends the $A'$ subsystems to Bob using the channel $\pazocal{N}$ for $n$ times. Now,  Alice and Bob share $n$ copies of the state $\rho'_{AB}\coloneqq \Id_A\otimes \pazocal{N}_{A'\to B}(\rho_{AA'})$. The task of an entanglement distillation protocol concerns identifying the largest number $m$ of ebits that can be extracted using $ n$ copies of $\rho'_{AB}$ via LOCC (Local Operations and Classical Communication) operations.  The rate of an entanglement distillation protocol is defined by the ratio $m/n$. The distillable entanglement $E_d(\rho_{AB}')\,$ of $\rho_{AB}'$ is defined as the maximum rate over all the possible entanglement distillation protocols~\cite{reviewEDP_dur}~\cite[Chapter 8]{Sumeet_book}. Note that without extra classical communication, entanglement distillation is not possible~\cite{salek2023new}. The following lemma establishes a link between the two-way quantum capacity, secret-key capacity, and distillable entanglement.
\begin{lemma}\label{lemma_Ed_Q2_link}
Let $\HH_{A},\HH_{A'},\HH_{B}\coloneqq L^2(\mathbb{R})$. Let $\pazocal{N}:\HH_{A'}\rightarrow\HH_{B}$ be a quantum channel. Let $N_s>0$ be the energy constaint, and let $\rho_{A'A}\in\pazocal{P}(\HH_A\otimes\HH_{A'})$ be a two-mode state satisfying the energy constraint $\Tr[(\hat{a}^\dagger \hat{a}\otimes \Id_A) \rho_{A'A}]\le N_s$, where $\hat{a}$ denotes the annihilation operator on $\HH_{A'}$. Then, it holds that
\bb\label{link_D2_D_main}
     K(\pazocal{N},N_s)\ge Q_2(\pazocal{N},N_s)\ge E_d\!\left(\Id_{A}\otimes\pazocal{N}(\rho_{AA'}) \right)\,,
\ee
where $K(\pazocal{N},N_s)$ denotes the energy-constrained secret-key capacity of $\pazocal{N}$, $Q_2(\pazocal{N},N_s)$ denotes the energy-constrained two-way quantum capacity of $\pazocal{N}$, and $E_d\!\left(\Id_{A}\otimes\pazocal{N}(\rho_{AA'}) \right)$ denotes the distillable entanglement of the state $\Id_{A}\otimes\pazocal{N}(\rho_{AA'})$.
\end{lemma}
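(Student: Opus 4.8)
The plan is to prove the two inequalities separately. The left inequality $K(\pazocal{N},N_s)\ge Q_2(\pazocal{N},N_s)$ is the energy-constrained instance of the elementary bound $Q_2\le K$ already recalled in this subsection: any protocol that reliably transmits qubits across $\pazocal{N}$ with the aid of two-way classical communication can be turned into a secret-key generation protocol of the same rate, because each faithfully transmitted qubit (equivalently, each distributed ebit) yields one bit of secret key via~\cite{Ekert91}. Since this conversion leaves the channel inputs untouched, the energy constraint is automatically preserved, and the bound carries over verbatim to the energy-constrained capacities. I would dispatch this in one sentence and move to the substantive part.

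The core of the statement is the right inequality $Q_2(\pazocal{N},N_s)\ge E_d\!\left(\Id_A\otimes\pazocal{N}(\rho_{AA'})\right)$, which I would prove by exhibiting an explicit achievability protocol. Write $\rho'_{AB}\coloneqq\Id_A\otimes\pazocal{N}_{A'\to B}(\rho_{AA'})$ and $\rho_{A'}\coloneqq\Tr_A[\rho_{AA'}]$, so that the hypothesis reads $\Tr[\hat{a}^\dagger\hat{a}\,\rho_{A'}]\le N_s$. For each block length $n$, Alice locally prepares $\rho_{AA'}^{\otimes n}$, retains the $A$ subsystems, and sends the $n$ subsystems $A'$ to Bob through $n$ independent uses of $\pazocal{N}$. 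Each channel use has input marginal $\rho_{A'}$, whose mean photon number is at most $N_s$, so the protocol respects the energy constraint by construction. After transmission, Alice and Bob share $\rho_{AB}'^{\otimes n}$.

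Next I would invoke the definition of distillable entanglement: there exists a sequence of two-way $\locc$ protocols acting on $\rho_{AB}'^{\otimes n}$ that output $m_n$ approximate ebits with $m_n/n\to E_d(\rho'_{AB})$ and fidelity to the target maximally entangled state tending to $1$. Composing this distillation step with quantum teleportation~\cite{teleportation} --- which consumes one ebit and two bits of classical communication per transmitted qubit, both freely available in the two-way setting --- lets Alice reliably transmit $m_n$ qubits to Bob. The overall error is controlled by the distillation infidelity together with the teleportation error it induces, and hence vanishes as $n\to\infty$, so the rate $E_d(\rho'_{AB})$ is achievable for the energy-constrained two-way quantum capacity. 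This gives $Q_2(\pazocal{N},N_s)\ge E_d(\rho'_{AB})$ and closes the chain in~\eqref{link_D2_D_main}.

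I do not expect a genuine obstacle here; the argument is a standard achievability reduction, and the work lies in the bookkeeping. The points that should be stated explicitly are that distillation (a two-way $\locc$ operation) followed by teleportation is a legitimate two-way-classical-communication-assisted quantum communication protocol; that the energy constraint is imposed only on the channel inputs $A'$ and is therefore met by the fixed-input strategy; and that the two sources of error --- finite-$n$ distillation infidelity and the induced teleportation error --- combine to vanish in the limit. Making these three observations airtight, rather than overcoming any deep difficulty, is the main thing to get right.
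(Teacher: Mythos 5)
Your proposal is correct and follows essentially the same route as the paper's own argument: Alice prepares $n$ copies of $\rho_{AA'}$ (respecting the energy constraint on $A'$), sends the $A'$ parts through the channel, distills ebits from $\left(\Id_A\otimes\pazocal{N}(\rho_{AA'})\right)^{\otimes n}$ via two-way LOCC, and converts them to transmitted qubits by teleportation, with $K\ge Q_2$ following because an ebit yields a secret-key bit. Your additional remarks on error bookkeeping and on the constraint applying only to the channel inputs are just a more explicit rendering of the same sketch.
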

The proof idea of the above lemma is the following. Suppose that Alice produces $n$ copies of a state $\rho_{AA'}$ such that the energy constraint is satisfied. Then, she can use the channel $n$ times to send all subsystems $A'$ to Bob. Then, Alice and Bob share $n$ copies of $\Id_{A}\otimes\pazocal{N}(\rho_{AA'})$, which can now be used to generate $\approx n\, E_d\!\left(\Id_{A}\otimes\pazocal{N}(\rho_{AA'}) \right) $ ebits by means of a suitable entanglement distillation protocol. The ebit rate of this protocol is thus $E_d\!\left(\Id_{A}\otimes\pazocal{N}(\rho_{AA'}) \right)$, which provides a lower bound on $Q_2(\pazocal{N},N_s)$ thanks to quantum teleportation~\cite{teleportation}. In addition, it holds that $K(\pazocal{N},N_s)\ge Q_2(\pazocal{N},N_s)$ because an ebit can generate a secret-key bit~\cite{Ekert91}. Consequently,~\eqref{link_D2_D_main} holds.

\section{Anti-degradability and Degradability of bosonic loss-dephasing channel}
This section is split into two parts based on the observation that if the input state to the bosonic loss-dephasing channel is chosen from a finite-dimensional subspace, the bosonic loss-dephasing channel effectively becomes a finite-dimensional channel, a fact we show in Lemma~\ref{qudit restriction-1}. This property allows us to apply established insights about the finite-dimensional channels to the bosonic loss-dephasing channel.
%
\begin{comment}
    \begin{remark}
 It's worth noting that even if the output doesn't remain restricted for restricted input, some of the conclusions mentioned earlier still hold. For instance, consider a channel where restricting its input doesn't necessarily result in a restricted output--like the thermal loss channel. If this channel is anti-degradable, and we define a restricted-input channel, the new channel will also be anti-degradable, with the same anti-degrading map. Moreover, if this restricted-input channel is not degradable, then the original channel is also not degradable. While these results don't hinge on the property of restricted input leading to restricted output, this property becomes relevant when applying results from qudit-to-qudit channels.
\end{remark}
\end{comment}
%
In subsection~\ref{subsec_infinite}, we present our study of the bosonic loss-dephasing channel when the input resides in the entire infinite-dimensional space, while subsection~\ref{subsec_finite} is dedicated to the findings resulting from analysis of finite-dimensional restrictions of the bosonic loss-dephasing channel.
\subsection{Sufficient condition on anti-degradability}~\label{subsec_infinite}
It is known that the bosonic dephasing channel $\pazocal{D}_\gamma$ is degradable across all dephasing parameter range $\gamma\ge 0$ and also it is never anti-degradable~\cite{PhysRevA.102.042413}. The pure-loss channel $\pazocal{E}_\lambda$ displays the peculiar characteristic of being anti-degradable for transmissivity values within the range $\lambda\in [0, \frac{1}{2}]$ and degradable for %those values in 
$\lambda\in [\frac{1}{2},1]$~\cite{Wolf2007,Caruso2006}. It turns out that when an anti-degradable channel is concatenated with another channel, the resulting channel inherits the property of being anti-degradable (see Lemma~\ref{lemma_comp_antideg}). This implies the following: if $\lambda\in[0,\frac{1}{2}]$ and $\gamma\ge0$, then the bosonic loss-dephasing channel $\pazocal{N}_{\lambda,\gamma}$ is anti-degradable~\cite{Loss_dephasing0}.
The authors of~\cite{Loss_dephasing0} left as an open question to understand whether or not the bosonic loss-dephasing channel $\pazocal{N}_{\lambda,\gamma}$ is anti-degradable in the region $\lambda\in(\frac{1}{2},1]$. In particular, they conjecture that $\pazocal{N}_{\lambda,\gamma}$ is not anti-degradable for all transmissivity values $\lambda\in(\frac{1}{2},1]$ and for all $\gamma\ge0$. In the following theorem, we refute this conjecture by explicitly finding values of $\lambda\in(\frac{1}{2},1]$ and $\gamma\ge0$ where the channel is anti-degradable. Our approach also yields an explicit expression for an anti-degrading map of the bosonic loss-dephasing channel.

\begin{theorem}\label{main-result}
Each of the following is a sufficient condition for the bosonic loss-dephasing channel $\pazocal{N}_{\lambda,\gamma}$ to exhibit anti-degradability: 
    \begin{enumerate}[label=(\roman*)]
        \item  $\lambda\in[0,\frac{1}{2}]$ and $\gamma\ge0$.
        \item  $\lambda\in(\frac{1}{2},1)$ and $\theta\!\left(e^{-\gamma/2},\sqrt{\frac{\lambda}{1-\lambda}}\right)\le\frac{3}{2}$, where $\theta$ is defined as
        $ \theta(x,y)\coloneqq \sum_{n=0}^\infty x^{n^2}y^n,\,\forall\,x,y\in[0,1).$
     \end{enumerate}
    In particular, $\pazocal{N}_{\lambda,\gamma}$ is anti-degradable if $\lambda\le\max\!\left(\frac{1}{2},\frac{1}{1+9e^{-\gamma}}\right)$. 
  \end{theorem}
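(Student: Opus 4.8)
The plan is to lean on the equivalence between anti-degradability and two-extendibility of the (generalised) Choi state, established in Lemma~\ref{lemma_infinite_extendibility}. Condition~(i) is immediate: since $\pazocal{E}_\lambda$ is anti-degradable for $\lambda\in[0,\frac12]$, and the composition of an anti-degradable channel with any channel stays anti-degradable (Lemma~\ref{lemma_comp_antideg}), the channel $\pazocal{N}_{\lambda,\gamma}=\pazocal{E}_\lambda\circ\pazocal{D}_\gamma$ inherits anti-degradability throughout that region. The substance lies in~(ii), where I must explicitly produce a two-extension of the generalised Choi state $C^{(r)}_{AB}\coloneqq\Id_A\otimes\pazocal{N}_{\lambda,\gamma}(\ketbra{\psi(r)}_{AA'})$ built from the two-mode squeezed vacuum $\ket{\psi(r)}_{AA'}$.

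My approach splits into two steps. First I would write $C^{(r)}_{AB}$ in the Fock basis via Lemma~\ref{App_remark_commutation} and identify which off-diagonal entries vanish; the dephasing factor $e^{-\gamma(m-n)^2/2}$ together with the loss factors imposes a rigid zero-pattern. I would then engineer a tripartite state $\tau_{AB_1B_2}$, obtained by applying a concrete sequence of operations (beam splitters, a squeezer, a controlled-add-add isometry, and a partial trace) to a four-mode vacuum, whose marginals $\Tr_{B_2}\tau_{AB_1B_2}$ and $\Tr_{B_1}\tau_{AB_1B_2}$ both share the diagonal of $C^{(r)}_{AB}$ and its exact zero-pattern. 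The second step corrects the surviving off-diagonal entries: I introduce a real symmetric infinite matrix $A_{\lambda,\gamma}=(a_{mn})$ with $a_{nn}=1$, and apply the Hadamard map $H^{(A_{\lambda,\gamma})}$ to each of $B_1$ and $B_2$. Since $a_{nn}=1$, tracing out either copy leaves the other marginal's $(m,n)$ off-diagonal entry multiplied by exactly $a_{mn}$; choosing $a_{mn}$ as the ratio between the target Choi entry and the corresponding entry of $\tau$ forces $\Id_A\otimes H^{(A_{\lambda,\gamma})}_{B_1}\otimes H^{(A_{\lambda,\gamma})}_{B_2}(\tau_{AB_1B_2})$ to reduce to $C^{(r)}_{AB}$ on both $AB_1$ and $AB_2$.

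This candidate is automatically Hermitian with the right marginals, but it is a genuine state only if positive semi-definite, and the cleanest sufficient condition for that is that $H^{(A_{\lambda,\gamma})}$ itself be a channel, i.e.~$A_{\lambda,\gamma}\ge0$. The main obstacle is precisely this: $A_{\lambda,\gamma}$ is infinite and resists exact diagonalisation. To bypass it I would invoke the diagonal-dominance criterion of Section~\ref{Hadamard-intro}, reducing the task to verifying $\sum_{m\neq n}|a_{mn}|\le1$ for every $n$. After computing the entries explicitly, the row sums collapse into a shifted theta-like series, and bounding them reduces exactly to the inequality $\theta\!\left(e^{-\gamma/2},\sqrt{\lambda/(1-\lambda)}\right)\le\frac32$, which is condition~(ii). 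I expect the delicate point to be obtaining this bound uniformly in $n$—controlling every row by a single theta series—so that one scalar inequality suffices.

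Finally, the simpler sufficient condition follows by crudely overestimating the theta function: using $x^{n^2}\le x^n$ for $n\ge1$ and $0\le x\le1$ gives $\theta(x,y)\le1+\sum_{n\ge1}(xy)^n=1+\frac{xy}{1-xy}$ whenever $xy<1$, so $\theta\le\frac32$ is guaranteed once $xy\le\frac13$. Substituting $x=e^{-\gamma/2}$, $y=\sqrt{\lambda/(1-\lambda)}$ and squaring turns $xy\le\frac13$ into $\lambda\le\frac{1}{1+9e^{-\gamma}}$; combining this with region~(i) for $\lambda\le\frac12$ yields anti-degradability whenever $\lambda\le\max\!\left(\frac12,\frac{1}{1+9e^{-\gamma}}\right)$.
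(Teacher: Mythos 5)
Your proposal follows essentially the same route as the paper's proof: condition~(i) via Lemma~\ref{lemma_comp_antideg}, and condition~(ii) via the two-extendibility of the generalised Choi state (Lemma~\ref{lemma_infinite_extendibility}), realised by the same construction (beam splitters, squeezing, controlled-add-add isometry and partial trace on a four-mode vacuum), corrected by Hadamard maps whose defining matrix is shown positive semi-definite through diagonal dominance, which reduces exactly to $\theta\!\left(e^{-\gamma/2},\sqrt{\lambda/(1-\lambda)}\right)\le\frac{3}{2}$, and your derivation of the weaker condition $\lambda\le\frac{1}{1+9e^{-\gamma}}$ coincides with the paper's. The only caveat is that the computational core — checking that the correction ratio depends only on the $B$-mode Fock indices (so a single Hadamard matrix suffices), that the two marginals of the constructed state coincide, and that every row sum is bounded by the same theta series — is outlined rather than carried out, but the outlined steps are precisely those executed in the paper.
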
  
\begin{proof}\label{proof_main_thm1}
The proof of the sufficient condition~(i) follows directly from the observation that the composition of a pure-loss channel with transmissivity $\lambda\in[0,\frac{1}{2}]$ with any other channel inherits the anti-degradability from the pure-loss channel (see Lemma~\ref{lemma_comp_antideg}). 
    
The proof of the sufficient condition~(ii) is more involved, and it is the main technical contribution of our work. We rely on the equivalence between anti-degradability of a quantum channel and the two-extendibility of its Choi state~\cite{Myhr2009,extendibility}. To provide a comprehensive and intuitive understanding of this idea and to aid in the construction of an anti-degrading map of the bosonic loss-dephasing channel, we present this equivalence in Lemma~\ref{lemma_infinite_extendibility} within the Appendix.

Assume $\lambda\in(\frac{1}{2},1)$ and $\theta\!\left(e^{-\gamma/2},\sqrt{\frac{\lambda}{1-\lambda}}\right)\le\frac{3}{2}$.
Let $\HH_{A},\HH_B,\HH_{B_1},\HH_{B_2} = L^2(\mathbb{R})$ and suppose that the bosonic loss-dephasing channel $\pazocal{N}_{\lambda,\gamma}$ is a quantum channel from the system $A'$ to $B$. We want to show that the generalised Choi state of $\pazocal{N}_{\lambda,\gamma}$ is two-extendible on $B$. In other words, we want to show that there exists a tripartite state $\rho_{AB_1B_2}$ such that
\bb\label{extend_cond}
    \Tr_{B_2}\left[\rho_{AB_1B_2} \right]&= \tau_{AB_1}\,,\\
    \Tr_{B_1}\left[\rho_{AB_1B_2} \right]&= \tau_{AB_2}\,,
\ee
where $ \tau_{AB}\coloneqq \Id_{A}\otimes \pazocal{N}_{\lambda,\gamma}(\ketbra{\psi(r)}_{AA'})\,$ is the generalised Choi state of $\pazocal{N}_{\lambda,\gamma}$, and $\ket{\psi(r)}_{AA'}$ is the two-mode squeezed vacuum state with squeezing parameter $r>0$ defined in~\eqref{def_squeezed0}. By Lemma~\eqref{App_remark_commutation}, the generalised Choi state of the bosonic loss-dephasing channel can be expressed as follows:
\begin{align}\label{elements_choi_gen}
    &\tau_{AB}= \frac{1}{\cosh^2(r)}\sum_{m,n=0}^{\infty}\sum_{\ell=0}^{\min(m,n)}(\tanh(r))^{m+n}
    e^{-\frac{\gamma}{2}(m-n)^2}\sqrt{\pazocal{B}_\ell(m,1-\lambda)\,\pazocal{B}_\ell(n,1-\lambda)}\ketbra{m}{n}_{A}\otimes\ketbra{m-\ell}{n-\ell}_{B}\,,
\end{align}
where $\pazocal{B}_{\ell}\!\left(n,\lambda\right)\coloneqq \binom{n}{\ell}\lambda^{\ell}(1-\lambda)^{n-\ell}\,.$
We observe that $\tau_{AB}$ is a linear combination of $\ketbra{m}{n}\otimes\ketbra{j_1}{j_2}$, where $m,n,j_1,j_2\in\mathbb{N}$, $j_1\le n_1$, $j_2\le n$, and $m-n=j_1-j_2$. This insight leads to the following educated guess about the structure of a potential two-extension:
\begin{align}
    \label{educated_guess}
   \tilde{\rho}_{AB_1B_2}=\sum_{m,n=0}^\infty \sum_{\ell_1=0}^{m}\sum_{\ell_2=0}^{n}\sum_{k=0}^{\min(m-\ell_1,n-\ell_2)}c(m,n,\ell_1,\ell_2,k)\ketbra{m}{n}_A\otimes\ketbra{m-\ell_1}{n-\ell_2}_{B_1}\otimes\ketbra{\ell_1+k}{\ell_2+k}_{B_2}\,,
\end{align}
where $\{c(m,n,\ell_1,\ell_2,k)\}_{m,n,\ell_1,\ell_2,k}$ are some suitable coefficients. The soundness of this guess is confirmed by the fact that both $\Tr_{B_2}\left[\tilde{\rho}_{AB_1B_2} \right]$ and $\Tr_{B_1}\left[\tilde{\rho}_{AB_1B_2} \right]$ are linear combinations of $\ketbra{m}{n}\otimes\ketbra{j_1}{j_2}$ with $m,n,j_1,j_2\in\mathbb{N}$, $j_1\le m$, $j_2\le n$, and $m-n=j_1-j_2$, similar to the generalised Choi state $\tau_{AB}$ in~\eqref{elements_choi_gen}.

At this point, one could try to define the coefficients $\{c(m,n,\ell_1,\ell_2,k)\}_{m,n,\ell_1,\ell_2,k}$ so as to satisfy the required conditions of the extendibility. However, the resulting  tripartite operator may not qualify as a quantum state. In order to ensure that we obtain a quantum state, our approach consists in producing the operator $\tilde{\rho}_{AB_1B_2}$ via a physical process that consists in applying a sequence of quantum channels to a quantum state.
%
%
\begin{comment}
    Recall the action of a beam splitter as described in Lemma~\eqref{lemma_beam_n0}:
\begin{align}
    U^{AB}_\lambda\ket{n}_B\otimes\ket{0}_A&=
\sum_{l=0}^n\sqrt{\pazocal{B}_l(n,1-\lambda)}\ket{n-l}_B\otimes\ket{l}_A,\quad\forall\,n\in\mathbb{N}\,,
\end{align}
\end{comment}
%
    
We begin by %focusing on 
constructing a quantum state that has the same operator structure as the operator $\tilde{\rho}_{AB_1B_2}$ in~\eqref{educated_guess}. This means that at this initial stage, we only aim to build a tripartite state consisting of a linear combination of operators $\ketbra{m}{n}_A\otimes\ketbra{m-\ell_1}{n-\ell_2}_{B_1}\otimes\ketbra{\ell_1+k}{\ell_2+k}_{B_2}$ with the summation limits identical to those in~\eqref{educated_guess}. This ensures that the coefficients equal to zero coincide in the two operators. 

%, otherwise there are no other conditions on the coefficients at this stage. 
We now illustrate on each step of this construction. For a fixed $n\in\N$, consider the state $\ket{n}_{A}\otimes\ket{n}_{B_1}$. We introduce two auxiliary single-mode systems $B_2$ and $C$ initially in vacuum states: $\ket{n}_A\otimes\ket{n}_{B_1}\otimes\ket{0}_{B_2}\otimes\ket{0}_{C}$. We next send the systems $B_2$ and $B_1$ through the ports of a beam splitter, resulting in a superposition of $\ket{n}_A\otimes\ket{n-\ell}_{B_1}\otimes\ket{\ell}_{B_2}\otimes\ket{0}_{C}$ for $\ell=0,1,\ldots, n$, as implied by Lemma~\eqref{lemma_beam_n0}. We repeat this for systems $B_1$ and $C$, thus obtaining a superposition of $\ket{n}_A\otimes\ket{n-\ell-k}_{B_1}\otimes\ket{\ell}_{B_2}\otimes\ket{k}_{C}$, with $k=0,1,\ldots,n-\ell$ and $\ell=0,1,\ldots, n$. 
Consider now the isometry $W^{CB_1B_2}$ defined by %its action as 
\bb    W^{CB_1B_2}\ket{n}_{B_1}\otimes\ket{m}_{B_2}\otimes\ket{k}_C=\ket{n+k}_{B_1}\otimes\ket{m+k}_{B_2}\otimes\ket{k}_C,\quad\forall\,n,m,k\in\N\,,
\ee
dubbed \emph{controlled-add-add isometry} (mode $C$ is the control mode). By applying this isometry to the superposition we created by using beam splitters, we obtain a superposition of
$\ket{n}_A\otimes\ket{n-\ell}_{B_1}\otimes\ket{\ell+k}_{B_2}\otimes\ket{k}_{C}$ with $k=0,1,\ldots,n-\ell$ and $\ell=0,1,\ldots, n$.
Finally, by tracing out system $C$, we obtain the same operator structure of the operator $\tilde{\rho}_{AB_1B_2}$ in~\eqref{educated_guess}. Having focused on the operator structure, we have not considered the transmissivities of the two beam splitters so far. We will see that these transmissivities can be chosen carefully such that the diagonal elements of the operator at hand becomes equal to those of the Choi state.
 
We now apply the outlined construction to the two-mode squeezed vacuum state with two vacuum states appended to it, i.e.~$\ket{\psi(r)}_{AB_1}\otimes\ket{0}_{B_2}\otimes \ket{0}_{C}$. For reasons that will become clear in a moment, we choose the two beam splitter transmissivities to be $\lambda$ (for the beam splitter acting on $B_2B_1$) and $\frac{1-\lambda}{\lambda}$ (for the one acting on $CB_1$). By doing so we obtain the state 
\bb
      \ket{\phi}_{AB_1B_2C}&\coloneqq W^{C,B_1B_2}U_{\frac{1-\lambda}{\lambda}}^{CB_1}U_{\lambda}^{B_2B_1}\ket{\psi(r)}_{AB_1}\otimes\ket{0}_{B_2}\otimes \ket{0}_{C}\\
   &=\frac{1}{\cosh(r)}\sum_{n=0}^\infty\sum_{\ell=0}^n\sum_{k=0}^{n-\ell} \tanh^n(r)\sqrt{\pazocal{B}_\ell(n,1-\lambda)\,\pazocal{B}_k\!\left(n-\ell,\frac{2\lambda-1}{\lambda}\right)}\ket{n}_{A}\otimes\ket{n-\ell}_{B_1}\otimes\ket{\ell+k}_{B_2}\otimes\ket{k}_{C}\,,
\ee
where we used~\eqref{def_squeezed0} and Lemma~\ref{lemma_beam_n0}. Note that the transmissivities of the beam splitters $U_{\frac{1-\lambda}{\lambda}}^{CB_1}$ and $U_{\lambda}^{B_2B_1}$ are chosen such that the diagonal elements of $\Tr_{B_2C}\!\big[\ketbra{\phi}_{AB_1B_2C}\big]$ and $\Tr_{B_1C}\!\big[\ketbra{\phi}_{AB_1B_2C}\big]$ both coincide with those of the Choi state $\tau_{AB}$ in~\eqref{elements_choi_gen}. To verify this, let us calculate the state $\Tr_C\big[\ketbra{\phi}_{AB_1B_2C}\big]$:
\begin{equation*}
    \begin{aligned}
\Tr_C\big[\ketbra{\phi}_{AB_1B_2C}\big]=\frac{1}{\cosh^2(r)}&\sum_{m,n=0}^\infty\sum_{\ell_1=0}^{m}\sum_{\ell_2=0}^{n}\sum_{k=0}^{\min(m-\ell_1,n-\ell_2)}(\tanh(r))^{m+n}\sqrt{\pazocal{B}_{\ell_1}(m,1-\lambda)\,\pazocal{B}_{\ell_2}(n,1-\lambda)\,}\\
    &\hspace{-.9cm}\left[ \sqrt{\pazocal{B}_k\!\left(m-\ell_1,\frac{2\lambda-1}{\lambda}\right)\,\pazocal{B}_k\!\left(n-
    \label{tr_c_phi}\ell_2,\frac{2\lambda-1}{\lambda}\right)}\right]
    \ketbra{m}{n}_{A}\otimes\ketbra{m-\ell_1}{n-\ell_2}_{B_1}\otimes\ketbra{\ell_1+k}{\ell_2+k}_{B_2}. 
    \end{aligned}
\end{equation*}
Notably, the structure of the state $\Tr_C\big[\ketbra{\phi}_{AB_1B_2C}\big]$ mirrors that of~\eqref{educated_guess} with specific coefficients $c(m,n,\ell_1,\ell_2,k)$. Moreover, it holds that 
\begin{equation}
\begin{aligned} \label{eq1_tau0}
\Tr_{B_2C}\!\big[\ketbra{\phi}_{AB_1B_2C}\big]=\frac{1}{\cosh^2(r)}&\sum_{m,n=0}^\infty\sum_{\ell=0}^{\min(m,n)}(\tanh(r))^{m+n}\sqrt{\pazocal{B}_\ell(m,1-\lambda)\,\pazocal{B}_\ell(n,1-\lambda)} \\
&\qquad\left[\sum_{k=0}^{\min(m-\ell,n-\ell)}\sqrt{\,\pazocal{B}_k\!\left(m-\ell,\frac{2\lambda-1}{\lambda}\right)\,\pazocal{B}_k\!\left(n-\ell,\frac{2\lambda-1}{\lambda}\right)}\right]\ketbra{m}{n}_{A}\otimes\ketbra{m-\ell}{n-\ell}_{B_1}
\end{aligned}
\end{equation}
and that
\bb\label{equality_reduced_states}
\Tr_{B_1C}\!\big[\ketbra{\phi}_{AB_1B_2C}\big]=\Tr_{B_2C}\!\big[\ketbra{\phi}_{AB_1B_2C}\big]\,.
\ee
In order to prove \eqref{equality_reduced_states}, observe that
\bb\label{tedious_calculation}
    \Tr_{B_1C}\!\big[\ketbra{\phi}_{AB_1B_2C}\big]&=\frac{1}{\cosh^2(r)}\sum_{m,n=0}^\infty\sum_{\ell_1=\max(m-n,0)}^{m}\sum_{k=0}^{m-\ell_1}(\tanh(r))^{m+n}\sqrt{\pazocal{B}_{\ell_1}(m,1-\lambda)\,\pazocal{B}_{n-m+\ell_1}(n,1-\lambda)} \\
    &\qquad\qquad\pazocal{B}_k\!\left(m-\ell_1,\frac{2\lambda-1}{\lambda}\right)\ketbra{m}{n}_{A}\otimes\ketbra{\ell_1+k}{n-m+\ell_1+k}_{B_2}\\
    &=\frac{1}{\cosh^2(r)}\sum_{m,n=0}^\infty\sum_{\ell_1=\max(m-n,0)}^{m}\sum_{\ell=0}^{m-\ell_1}(\tanh(r))^{m+n}\sqrt{\pazocal{B}_{\ell_1}(m,1-\lambda)\,\pazocal{B}_{n-m+\ell_1}(n,1-\lambda)} \\
    &\qquad\qquad\pazocal{B}_{m-\ell_1-\ell}\!\left(m-\ell_1,\frac{2\lambda-1}{\lambda}\right)\ketbra{m}{n}_{A}\otimes\ketbra{m-\ell}{n-\ell}_{B_2}\\
    &=\frac{1}{\cosh^2(r)}\sum_{m,n=0}^\infty\sum_{\ell=0}^{\min(m,n)}(\tanh(r))^{m+n}\sum_{\ell_1=\max(m-n,0)}^{m-\ell}\sqrt{\pazocal{B}_{\ell_1}(m,1-\lambda)\,\pazocal{B}_{n-m+\ell_1}(n,1-\lambda)} \\
    &\qquad\qquad\pazocal{B}_{m-\ell_1-\ell}\!\left(m-\ell_1,\frac{2\lambda-1}{\lambda}\right)\ketbra{m}{n}_{A}\otimes\ketbra{m-\ell}{n-\ell}\\
    &=\frac{1}{\cosh^2(r)}\sum_{m,n=0}^\infty\sum_{\ell=0}^{\min(m,n)}(\tanh(r))^{m+n}\sum_{k=0}^{\min(n-\ell,m-\ell)}\sqrt{\pazocal{B}_{m-\ell-k}(m,1-\lambda)\,\pazocal{B}_{n-\ell-k}(n,1-\lambda)} \\
    &\qquad\qquad\pazocal{B}_{k}\!\left(k+\ell,\frac{2\lambda-1}{\lambda}\right)\ketbra{m}{n}_{A}\otimes\ketbra{m-\ell}{n-\ell}\\
    &\texteq{(i)}\frac{1}{\cosh^2(r)}\sum_{m,n=0}^\infty\sum_{\ell=0}^{\min(m,n)}(\tanh(r))^{m+n}\sqrt{\pazocal{B}_\ell(m,1-\lambda)\,\pazocal{B}_\ell(n,1-\lambda)} \\
    &\qquad\left[\sum_{k=0}^{\min(m-\ell,n-\ell)}\sqrt{\,\pazocal{B}_k\!\left(m-\ell,\frac{2\lambda-1}{\lambda}\right)\,\pazocal{B}_k\!\left(n-\ell,\frac{2\lambda-1}{\lambda}\right)}\right]\ketbra{m}{n}_{A}\otimes\ketbra{m-\ell}{n-\ell}_{B_2}\\
    &\texteq{(ii)}\Tr_{B_2C}\!\big[\ketbra{\phi}_{AB_1B_2C}\big]
\ee
Here, in (i), we used the identity
\bb
    &\sqrt{\pazocal{B}_{m-\ell-k}(m,1-\lambda)\,\pazocal{B}_{n-\ell-k}(n,1-\lambda)}\pazocal{B}_{k}\!\left(k+\ell,\frac{2\lambda-1}{\lambda}\right)\\&\qquad=\sqrt{\pazocal{B}_\ell(m,1-\lambda)\,\pazocal{B}_\ell(n,1-\lambda)}\sqrt{\,\pazocal{B}_k\!\left(m-\ell,\frac{2\lambda-1}{\lambda}\right)\,\pazocal{B}_k\!\left(n-\ell,\frac{2\lambda-1}{\lambda}\right)}\,,
\ee
which can be easily proved by substituting the definition $\pazocal{B}_{\ell}\!\left(n,\lambda\right)\coloneqq \binom{n}{\ell}\lambda^{\ell}(1-\lambda)^{n-\ell}$ and by leveraging the binomial identity
\bb
    \binom{n}{l+k}\binom{l+k}{k}=\binom{n}{l}\binom{n-l}{k}\,.
\ee Moreover, in (ii), we exploited \eqref{eq1_tau0}.

Note that the off-diagonal terms of the state in~\eqref{eq1_tau0} are not equal to those of the Choi state $\tau_{AB}$ in~\eqref{elements_choi_gen}. Specifically, the points of difference with the Choi state $\tau_{AB}$ are the presence of the term inside the square brackets and the absence of the dephasing exponent. To address these additional terms, let us use the toolbox of \emph{Hadamard maps}. Let $H$ be the Hadamard map, introduced in Sec.~\ref{Hadamard-intro}, associated with the infinite matrix $A\coloneqq (a_{mn})_{m,n\in\N}$ defined as follows: 
\begin{align}
    \label{def_a_element_matrix}
    a_{mn}\coloneqq\frac{e^{-\frac{\gamma}{2}(n-m)^2}}{\sum_{j=0}^{\min(n,m)} \sqrt{\,\pazocal{B}_j\!\left(n,\frac{2\lambda-1}{\lambda}\right)\,\pazocal{B}_j\!\left(m,\frac{2\lambda-1}{\lambda}\right)}},\quad\forall\,n,m\in\N\,.
\end{align}
By construction, we have that
\bb
&\Id_A\otimes H_{B_1}\left(\Tr_{B_2C}\!\big[\ketbra{\phi}_{AB_1B_2C}\big]\right) \\
&\qquad = \frac{1}{\cosh^2(r)} \sum_{m,n=0}^\infty\sum_{\ell=0}^{\min(m,n)}(\tanh(r))^{m+n}\sqrt{\pazocal{B}_\ell(m,1-\lambda)\,\pazocal{B}_\ell(n,1-\lambda)} \\
&\qquad\qquad \left[\sum_{k=0}^{\min(m-\ell,n-\ell)}\sqrt{\,\pazocal{B}_k\!\left(m-\ell,\frac{2\lambda-1}{\lambda}\right)\,\pazocal{B}_k\!\left(n-\ell,\frac{2\lambda-1}{\lambda}\right)}\right] a_{m-\ell,\,n-\ell}\,\ketbra{m}{n}_{A}\otimes\ketbra{m-\ell}{n-\ell}_{B_1} \\
&\qquad = \frac{1}{\cosh^2(r)} \sum_{m,n=0}^\infty\sum_{\ell=0}^{\min(m,n)}(\tanh(r))^{m+n} e^{-\frac{\gamma}{2}(n-m)^2}\, \sqrt{\pazocal{B}_\ell(m,1-\lambda)\,\pazocal{B}_\ell(n,1-\lambda)} \ketbra{m}{n}_{A}\otimes\ketbra{m-\ell}{n-\ell}_{B_1} \\
&\qquad = \tau_{AB_1}\,.
\ee
This means that the operator
\begin{align}
   \label{state_sym_ext}
    \rho_{AB_1B_2}\coloneqq \Id_A\otimes H_{B_1}\otimes H_{B_2}\left(  \Tr_{C}\left[\ketbra{\phi}_{AB_1B_2C}\right] \right) 
\end{align}
satisfies the extendibility conditions in~\eqref{extend_cond}. All that remains to prove is that $\rho_{AB_1B_2}$ is in fact a quantum state. 
We will do this by showing that the superoperator $H$ is a quantum channel. In Sec.~\ref{Hadamard-intro}, we establish that a Hadamard map is a quantum channel if its defining infinite matrix is Hermitian, has diagonal elements equal to one, and is diagonally dominant.
The first two properties are trivially satisfied by the infinite matrix $A$ defined in~\eqref{def_a_element_matrix}. We only need to demonstrate that for the parameter region $\lambda > \frac{1}{2}$ and $\theta\!\left(e^{-\gamma/2},\sqrt{\frac{\lambda}{1-\lambda}}\right)\le\frac{3}{2}$, the infinite matrix $A$ is diagonally dominant. We recall that, by definition, $A$ is diagonally dominant if it holds that $\sum_{\substack{m=0\\  m\ne n}}^\infty|a_{mn}|\le 1,\,\forall\,n\in\N$. Note that for any $n,m\in\N$ we have that 
\bb
    |a_{nm}|&=\frac{e^{ -\frac{\gamma}{2}(n-m)^2 }}{\sum_{j=0}^{\min(n,m)}\sqrt{\binom{n}{j}\binom{m}{j}\left(\frac{1-\lambda}{\lambda}\right)^{n+m-2j}\left(\frac{2\lambda-1}{\lambda}\right)^{2j}}}\\
    &\le \frac{e^{ -\frac{\gamma}{2}(n-m)^2 }}{\sum_{j=0}^{\min(n,m)}\sqrt{\binom{\min(n,m)}{j}^2\left(\frac{1-\lambda}{\lambda}\right)^{n+m-2j}\left(\frac{2\lambda-1}{\lambda}\right)^{2j}}} \\
    &=  e^{ -\frac{\gamma}{2}(n-m)^2 } \left(\frac{\lambda}{1-\lambda}\right)^{\frac{|n-m|}{2}} \,.
\ee
Consequently, if $\lambda$ and $\gamma$ are such that $\lambda>\frac{1}{2}$ and $\theta\!\left(e^{-\gamma/2},\sqrt{\frac{\lambda}{1-\lambda}}\right)\le\frac{3}{2}$, %then 
for any $n\in\N$ we have that
\begin{equation}
    \begin{aligned}
         \sum_{\substack{m=0\\  m\ne n}}^\infty|a_{mn}|&\le \sum_{\substack{m=0\\  m\ne n}}^\infty e^{ -\frac{\gamma}{2}(m-n)^2 } \left(\frac{\lambda}{1-\lambda}\right)^{\frac{|m-n|}{2}}\\&= \sum_{k=1}^{\infty}e^{ -\frac{\gamma}{2}k^2 } \left(\frac{\lambda}{1-\lambda}\right)^{\frac{k}{2}}+\sum_{k=1}^{n}e^{ -\frac{\gamma}{2}k^2 } \left(\frac{\lambda}{1-\lambda}\right)^{\frac{k}{2}}\\&\le 2\sum_{k=1}^{\infty}e^{ -\frac{\gamma}{2}k^2 } \left(\frac{\lambda}{1-\lambda}\right)^{\frac{k}{2}} \\&= 2\,\theta\!\left(e^{-\gamma/2},\sqrt{\frac{\lambda}{1-\lambda}}\right) - 2\\&\le 1\,.
    \end{aligned}
\end{equation}
Therefore, the infinite matrix $A$ is diagonally dominant if $\lambda>\frac{1}{2}$ and $\theta\!\left(e^{-\gamma/2},\sqrt{\frac{\lambda}{1-\lambda}}\right) \le\frac{3}{2}$. This establishes that $H_{B_1}$ and $H_{B_2}$ are valid quantum channels in this parameter range, implying that $\rho_{AB_1B_2}$ is a valid two-extension of the Choi state of $\pazocal{N}_{\lambda,\gamma}$, and in turn entailing that $\pazocal{N}_{\lambda,\gamma}$ is anti-degradable. Finally, note that if $\lambda>\frac{1}{2}$ the condition
\bb
\theta\!\left(e^{-\gamma/2},\sqrt{\frac{\lambda}{1-\lambda}}\right) \le\frac{3}{2}
\ee
is implied by $\lambda\le  \frac{1}{1+9e^{-\gamma}}$.
Indeed,
\begin{align*}
       \theta\!\left(e^{-\gamma/2},\sqrt{\frac{\lambda}{1-\lambda}}\right)\coloneqq\sum_{k=0}^{n}e^{ -\frac{\gamma}{2}k^2 } \left(\sqrt{\frac{\lambda}{1-\lambda}}\right)^{k}\le \sum_{k=0}^\infty\left(\frac{e^{-\gamma}\lambda}{1-\lambda}\right)^{k/2}=\frac{1}{1-\sqrt{\frac{e^{-\gamma}\lambda}{1-\lambda}}}\le \frac{3}{2}\,,
\end{align*}
where the last inequality follows from $\frac{e^{-\gamma}\lambda}{1-\lambda}\le \frac{1}{9}$, which is implied by $\lambda\le\frac{1}{1+9e^{-\gamma}}$.
\end{proof}

\subsubsection{Expanding the anti-degradability region numerically}

Note that Theorem~\ref{main-result} does not identify the entire anti-degradability region of the bosonic loss-dephasing channel $\pazocal{N}_{\lambda,\gamma}$. 
%One way to extend this region is to consider the parameter ranges where the infinite matrix $A$ defined in~\eqref{def_a_element_matrix} is positive semi-definite. 
In fact, from the above argument it becomes clear that a way to obtain a better inner approximation of this region is to check for which values of the parameters the infinite matrix $A$ defined by~\eqref{def_a_element_matrix} is positive semi-definite. This is established in the following theorem.

\begin{theorem}\label{thm_cond_infinite_A-infinite}
Let $\ell^2(\N)$ be the space of square-summable complex-valued sequences (defined by~\eqref{def_l2N} below). For any $\lambda\in(\frac{1}{2},1)$ and $\gamma>0$, let $A = (a_{mn})_{m,n \in \mathbb{N}}$ be the infinite matrix whose components are defined by~\eqref{def_a_element_matrix}. If $A\geq 0$ is positive semi-definite as an operator on $\ell^2(\N)$, then the bosonic loss-dephasing channel $\pazocal{N}_{\lambda,\gamma}$ is anti-degradable. 
%if $\lambda$ and $\gamma$ are such that the infinite matrix $A$ is positive semi-definite as an operator on $\ell^2(\N)$.  
\end{theorem}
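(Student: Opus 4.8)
The plan is to retrace, almost verbatim, the construction in the proof of Theorem~\ref{main-result}, condition~(ii), and to replace only its concluding step. The point is that everything in that proof up to and including the definition of $\rho_{AB_1B_2}$ in~\eqref{state_sym_ext}---the purification $\ket{\phi}_{AB_1B_2C}$ built from the two-mode squeezed vacuum $\ket{\psi(r)}_{AB_1}$ via the beam splitters $U_{\lambda}^{B_2B_1}$ and $U_{\frac{1-\lambda}{\lambda}}^{CB_1}$ and the controlled-add-add isometry $W^{CB_1B_2}$, together with the algebraic identities~\eqref{eq1_tau0}--\eqref{def_a_element_matrix}---holds for every $\lambda\in(\tfrac12,1)$ and $\gamma>0$, with no smallness assumption on $\gamma$. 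In particular, those identities already show that $\rho_{AB_1B_2}=\Id_A\otimes H_{B_1}\otimes H_{B_2}\big(\Tr_C[\ketbra{\phi}_{AB_1B_2C}]\big)$ fulfils the two-extendibility conditions~\eqref{extend_cond} whenever the Hadamard map $H=H^{(A)}$ associated with the matrix $A$ of~\eqref{def_a_element_matrix} is a legitimate quantum channel. Hence the only remaining task is to certify that $H^{(A)}$ is a channel.

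Where the proof of Theorem~\ref{main-result}(ii) invoked diagonal dominance---a condition that forces the smallness constraint $\theta(e^{-\gamma/2},\sqrt{\lambda/(1-\lambda)})\le\tfrac32$ and is only sufficient for positivity---I would instead feed in the hypothesis $A\ge0$ directly. The matrix $A$ of~\eqref{def_a_element_matrix} is real, symmetric, and normalised so that $a_{nn}=1$ for all $n$; combined with $A\ge0$ on $\ell^2(\N)$, these are exactly the properties that make $H^{(A)}$ completely positive and trace preserving. Concretely, a bounded $A\ge0$ admits a Gram factorisation $a_{mn}=\langle f_m|f_n\rangle$ with $f_m:=\sqrt{A}\,e_m$; setting $D_k:=\diag\big((\sqrt{A})_{kn}\big)_{n\in\N}$ one computes $H^{(A)}(\rho)=\sum_k D_k^\dagger\rho\,D_k$, a Kraus form witnessing complete positivity, while $\sum_k D_k^\dagger D_k=\diag(a_{nn})_n=\mathbb{1}$ gives trace preservation. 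With $H^{(A)}$ a channel, $\rho_{AB_1B_2}$ is a genuine state, hence a symmetric two-extension of the generalised Choi state of $\pazocal{N}_{\lambda,\gamma}$, and anti-degradability follows from the equivalence between two-extendibility and anti-degradability~\cite{Myhr2009,extendibility}.

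The hard part is the infinite-dimensional bookkeeping concealed in this last step: unlike diagonal dominance, mere positivity of $A$ on $\ell^2(\N)$ does not come with automatic control over convergence, and one must check that the Kraus series converges in the strong operator topology and that $H^{(A)}$ is well defined and trace preserving on all of $\pazocal{T}(L^2(\R))$ rather than just on finitely supported operators. The cleanest route I foresee is a truncation-and-limit argument: each top-left corner $A^{(d)}$ is a principal submatrix of a positive operator, hence positive, Hermitian, and unit-diagonal, so the finite-dimensional Hadamard map $H^{(A^{(d)})}$ is a channel by the finite-dimensional characterisation recalled in Sec.~\ref{Hadamard-intro}; it then remains to show that $H^{(A^{(d)})}\to H^{(A)}$ in a sense strong enough to transfer complete positivity, trace preservation, and the extendibility identities~\eqref{extend_cond} to the limit. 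I expect this continuity step, not any fresh algebra, to be the principal technical obstacle---much as in other liftings of finite-dimensional channel results to the bosonic setting.
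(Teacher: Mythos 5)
Your route is the same as the paper's: keep the two-extension construction of Theorem~\ref{main-result}(ii) unchanged --- it is indeed valid for every $\lambda\in(\frac{1}{2},1)$ and $\gamma>0$, with no smallness assumption on $\gamma$ --- and observe that the only remaining task is to certify that the Hadamard map $H^{(A)}$ built from the matrix in~\eqref{def_a_element_matrix} is a quantum channel, now using $A\ge 0$ instead of diagonal dominance. The divergence is in how that last step is closed. The paper closes it in one line by invoking Lemma~\ref{lemma_hadamard_channel_sm} (quoted from~\cite[Lemma S4]{exact-solution}): for a bounded Hermitian infinite matrix, unit diagonal together with positive semi-definiteness on $\ell^2(\N)$ is necessary and sufficient for the associated Hadamard map to be a channel. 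You instead sketch a Gram/Kraus factorisation $H^{(A)}(\rho)=\sum_k D_k^\dagger\rho\,D_k$ with $D_k=\diag\big((\sqrt{A})_{kn}\big)_{n}$ --- which is correct in finite dimension --- and then explicitly leave the infinite-dimensional convergence and truncation-limit issues as ``the principal technical obstacle''. As written, your argument therefore stops one step short of a complete proof, and filling that step from scratch would amount to re-proving the cited lemma.

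Two remarks that close your gap within the paper's own toolbox. First, the characterisation you need is already available as Lemma~\ref{lemma_hadamard_channel_sm} in the Appendix, so no fresh operator-topology work is required. Second, the boundedness hypothesis of that lemma is not an additional obstacle for this particular $A$: the estimate $|a_{mn}|\le e^{-\frac{\gamma}{2}(m-n)^2}\left(\frac{\lambda}{1-\lambda}\right)^{\frac{|m-n|}{2}}$ derived in the proof of Theorem~\ref{main-result} gives $\sup_{n}\sum_{m}|a_{mn}|<\infty$ for every fixed $\gamma>0$ (the Gaussian factor dominates the geometric one), so $A$ is bounded by the Schur test (Corollary~\ref{shur_test_cons}). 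With these two observations inserted in place of your Kraus/truncation programme, your outline coincides with the paper's proof: the operator $\rho_{AB_1B_2}$ of~\eqref{state_sym_ext} is then a genuine state satisfying~\eqref{extend_cond}, and anti-degradability follows from the two-extendibility criterion.
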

\begin{proof}
In the proof of Theorem~\ref{main-result} we have seen that the bosonic loss-dephasing channel is anti-degradable if the Hadamard map associated with the infinite matrix $A$ (given in~\eqref{def_a_element_matrix}) is a quantum channel. Since the diagonal elements of $A$ are equal to one, from Lemma~\ref{lemma_hadamard_channel_sm} we deduce that the Hadamard map associated to $A$ is a quantum channel if and only if $A$ is positive semi-definite. This concludes the proof.
\end{proof}

\begin{comment}
    Furthermore, to delve deeper into the anti-degradability region of the bosonic loss-dephasing channel, we use the following theorem. This theorem is a consequence of Theorem~\ref{thm_cond_infinite_A-infinite} and improves our understanding of the anti-degradability of the bosonic loss-dephasing channel. 
\begin{theorem}\label{thm_cond_infinite_A}
%
Let $A$ be the infinite matrix defined in Theorem~\ref{thm_cond_infinite_A-infinite}.
    For any $d\in\N$, let $A^{(d)}$ denote the $d\times d$ matrix corresponding to the top-left corner of the infinite matrix $A$, i.e.~$A^{(d)} \coloneqq (a_{mn})_{m,n \in\{0,1,2,\ldots,d-1\}}$. For any $d$, $\pazocal{N}_{\lambda,\gamma}^{(d)}$ is anti-degradable if $A^{(d)}$ is positive semi-definite.  
\end{theorem}
%
\end{comment}
In Theorem~\ref{main-result}, we showed that the above-mentioned infinite matrix $A$ is positive semi-definite if $\theta\!\left(e^{-\gamma/2},\sqrt{\lambda/(1-\lambda)}\right)\le\frac{3}{2}$, where $\theta(x,y)\coloneqq \sum_{n=0}^\infty x^{n^2}y^n $. This identifies just a portion of the full region of parameters of $\lambda$ and $\gamma$ where the infinite matrix $A$ is positive semi-definite.

To analyse the positive semi-definiteness of the infinite matrix $A$ further, let $A^{(d)}$ denote its $d\times d$ top-left corner. Note that it is well-known that an infinite matrix is positive semi-definite if and only if its $d\times d$ top-left corner is positive semi-definite for all $d\in\N$. For modest values of $d$, we can numerically determine the parameter region where $A^{(d)}$ is positive semi-definite. To achieve this, we plot in Fig.~\ref{fig:eta_d_gamma} the quantity
 \begin{align}\label{eta_d}
      \eta_d(\gamma)\coloneqq \max\left(\lambda\in\left(\frac{1}{2},1\right]:\,\, A^{(d)}\text{ is positive semi-definite}\right)\,,
\end{align}
with respect to $e^{-\gamma}$ for various values of $d$. 
This quantity is relevant because $A^{(d)}$ is positive semi-definite if and only if $\lambda\le \eta_d(\gamma)$. Moreover, the quantity $\eta_d(\gamma)$ monotonically decreases in $d$ and converges to some $\bar{\eta}(\gamma)$ as $d\to\infty$. Notably, the condition $\lambda\le \bar{\eta}(\gamma)$ is necessary and sufficient for positive semi-definiteness of the infinite matrix $A$, and also a sufficient condition for the anti-degradability of the the bosonic loss-dephasing channel $\pazocal{N}_{\lambda,\gamma}$. Our numerical investigation seems to suggest that when $d$ is approximately $20$, the quantity $\eta_d(\gamma)$ has already reached its limiting value $\bar{\eta}(\gamma)$, which can be approximated, for instance, by considering, the curve $\eta_{30}(\gamma)$.

\begin{figure}[!h]
	\centering
	\includegraphics[width=0.8\linewidth]{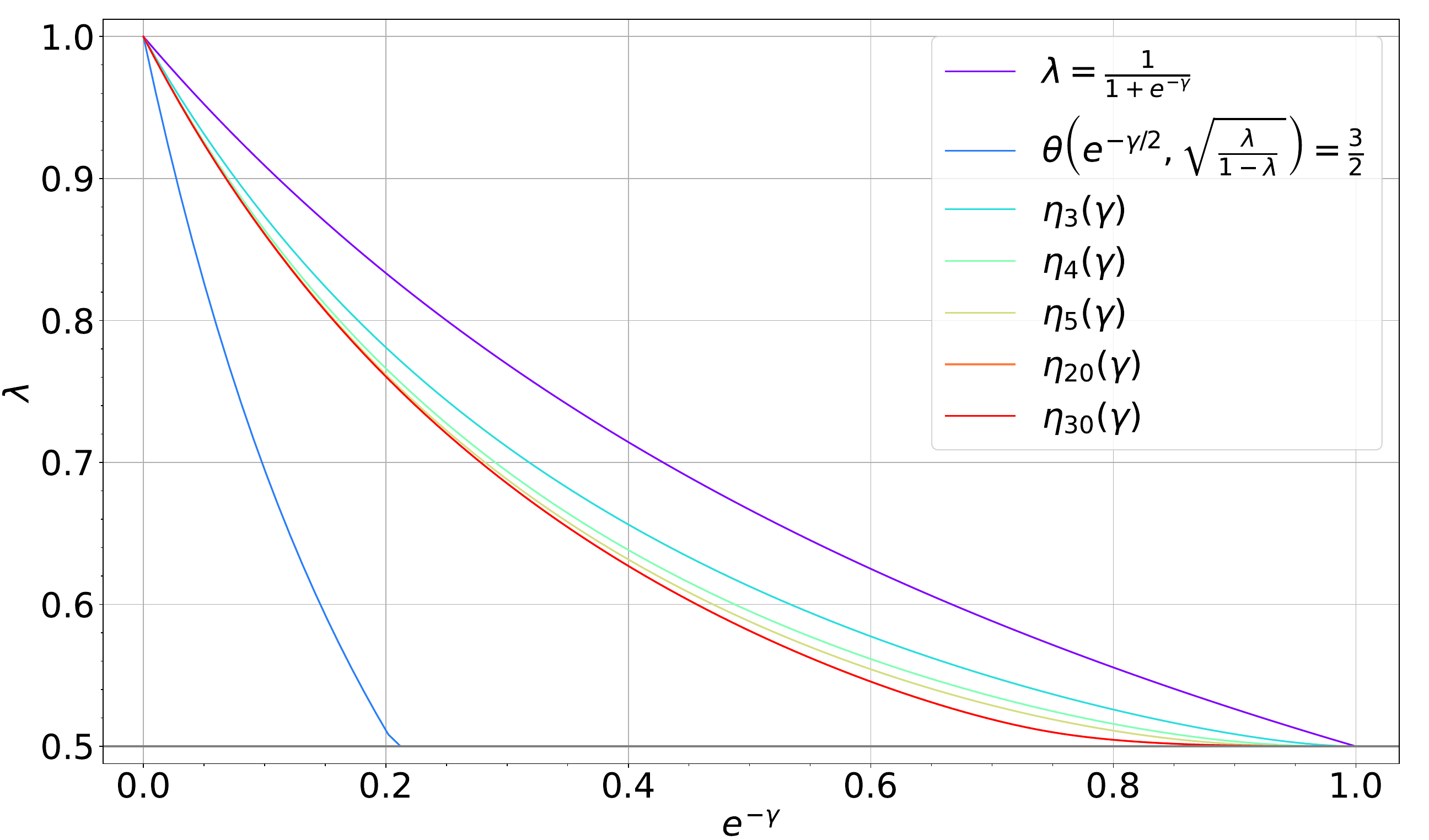}
	\caption{Numerical estimation of the anti-degradability of the loss-dephasing channel. The horizontal axis shows the quantity $e^{-\gamma}$, varying from $0$ to $1$ as dephasing parameter $\gamma$ decreases from $\infty$ to $0$, while the vertical axis corresponds to the transmissivity $\lambda$. Theorem~\ref{main-result} establishes that below the region defined by the blue curve, corresponding to the condition $\theta(e^{-\gamma/2},\sqrt{\lambda/(1-\lambda)})=3/2$, the bosonic loss-dephasing channel is anti-degradable. Moreover, Theorem~\ref{theorem_not_antideg} establishes that above the region defined by the purple curve, corresponding to $\lambda=\frac{1}{1+e^{-\gamma}}$, the bosonic loss-dephasing channel is not anti-degradable. The other curves depict the quantity $\eta_d(\gamma)$, which is defined in~\eqref{eta_d} as the maximum value of the transmissivity where $A^{(d)}$ is positive semi-definite, for various values of $d$. Our numerical analysis seems to indicate that in the region below the red curve, corresponding to $\lambda\le \eta_{30}(\gamma)$, the bosonic loss-dephasing channel is anti-degradable. Here, we employ $d=30$, as increasing $d$ beyond $d\ge20$ yields no discernible change in the plot.
 } 
	\label{fig:eta_d_gamma}
\end{figure}
%
\begin{comment}
    This quantity monotonically decreases in $d$ and %Incidentally, note that the quantity $\eta_d(\gamma)$ provides a bounds on the anti-degradability region of the qu$d$it restriction $\pazocal{N}_{\lambda,\gamma}^{(d)}$ of the bosonic loss-dephasing channel, indeed $\pazocal{N}_{\lambda,\gamma}^{(d)}$ is anti-degradable if $\lambda\ge\eta_d(\gamma)$, i.e.~it holds that $\lambda_d(\gamma)\ge\eta_d(\gamma)$, with $\lambda_d(\gamma)$ being defined in~\eqref{lambda_d_gamma}. 
converges to some $\bar{\eta}(\gamma)$ as $d\to\infty$. It turns out that the condition $\lambda\le \bar{\eta}(\gamma)$ is necessary and sufficient for positive semi-definiteness of the infinite matrix $A$, and also a sufficient condition for the anti-degradability of the the bosonic loss-dephasing channel $\pazocal{N}_{\lambda,\gamma}$. Moreover, Fig.~\ref{fig:eta_d_gamma} suggests that when $d$ is around $20$, the quantity $\eta_d(\gamma)$ has already reached its limit value $\bar{\eta}(\gamma)$, which can be approximated using, for example, the curve $\eta_{30}(\gamma)$.
\end{comment}
%

\subsubsection{Anti-degrading maps}

Theorem~\ref{main-result} discovers parameter regions of transmissivity $\lambda$ and dephasing $\gamma$ in which the bosonic loss-dephasing channel $\pazocal{N}_{\lambda,\gamma}$ is anti-degradable. Although the proof of this theorem ensures the existence of anti-degrading maps for $\pazocal{N}_{\lambda,\gamma}$ within these parameter regions, it does not offer explicit constructions of such anti-degrading maps. In the forthcoming Theorem~\ref{explicit-maps}, we present such explicit constructions. Note that, thanks to Lemma~\ref{lemma_compl_channel}, the output operators of the complementary channel $\pazocal{N}^{\text{c}}_{\lambda,\gamma}$ reside within the space $\pazocal{T}(\HH_{E_{\text{out}}})$, where $\HH_{E_{\text{out}}}$ is the following subspace of the two-mode Hilbert space $\HH_{E_1}\otimes\HH_{E_2}=L^2(\mathbb{R})\otimes L^2(\mathbb{R})$: 
\begin{equation}
    \begin{aligned}
        \label{statement_def_h_out}
   \HH_{E_{\text{out}}}\coloneqq  \text{Span}\left\{  \ket{\ell}_{E_1}\otimes\ket{\sqrt{\gamma}n}_{E_2}:\,\, \ell\le n  \text{ with }\ell,n\in\N\right\},
    \end{aligned}
\end{equation}
where $\ket{n}$ represents the $n$th Fock state, and $\ket{\sqrt{\gamma}n}$ denotes the coherent state with a parameter of $\sqrt{\gamma}n$. These states correspond to the environmental modes of the pure-loss and dephasing channels, respectively (we shall maintain this notation throughout).
 \begin{theorem}\label{explicit-maps}
   Anti-degrading maps corresponding to each parameter region in Theorem~\ref{main-result} can be defined as follows. In the region (i), i.e.~$\lambda\in\left[0,\frac{1}{2}\right]$ and $\gamma\ge0$, an anti-degrading map is given by 
\bb\label{expr_antideg_lambda_min_05}
     \pazocal{A}_{\lambda,\gamma}= \left(\pazocal{E}_{ \frac{\lambda}{1-\lambda} }\circ\pazocal{R}_{E_1}\right)\otimes\Tr_{E_2},
\ee
where $\pazocal{R}_{E_1}(\cdot)\coloneqq (-1)^{\hat{e}_1^\dagger \hat{e}_1}\cdot (-1)^{\hat{e}_1^\dagger \hat{e}_1}$, with $\hat{e}_1$ as the annihilation operator of the output mode of the pure-loss channel $E_1$.

In the region (ii), i.e.~$\lambda\in(\frac{1}{2},1)$ and $\gamma$ such that $\theta\!\left(e^{-\gamma/2},\sqrt{\frac{\lambda}{1-\lambda}}\right)\le\frac{3}{2}$, an anti-degrading map $\pazocal{A}_{\lambda,\gamma}:\pazocal{T}\left(\HH_{E_\text{out}}\right)\to\pazocal{T}\left(\HH_\text{B}\right)$, with $\HH_{E_\text{out}}$ given by~\eqref{statement_def_h_out}, is defined as follows. For all $\ell_1,\ell_2,n_1,n_2\in\mathbb{N}$ with $\ell_1\le n_1$ and $\ell_2\le n_2$, it holds that
    \bb\label{def_A_tilde_state}
        {\pazocal{A}}_{\lambda,\gamma}\!\left(\ketbra{\ell_1}{\ell_2}\otimes\ketbra{\sqrt{\gamma}n_1}{\sqrt{\gamma}n_2}\right)\coloneqq \sum_{k=0}^{\min(n_1-\ell_1,n_2-\ell_2)}c_{k}^{(\ell_1,\ell_2,n_1,n_2)}\ketbra{k+\ell_1}{k+\ell_2}\,,
    \ee
    where for all $k\in\{0,1,\ldots,\min(n_1-\ell_1,n_2-\ell_2)\}$ the coefficients $c_{k}^{(\ell_1,\ell_2,n_1,n_2)}$ are defined as
    \bb\label{def_coeff_c_state}
        c_{k}^{(\ell_1,\ell_2,n_1,n_2)}&\coloneqq (-1)^{\ell_1-\ell_2}\sqrt{\pazocal{B}_k\!\left(n_1-\ell_1,\frac{2\lambda-1}{\lambda}\right)\, \pazocal{B}_k\!\left(n_2-\ell_2,\frac{2\lambda-1}{\lambda}\right)}\,a_{n_1-\ell_1n_2-\ell_2}\,a_{k+\ell_1k+\ell_2}\,,
    \ee
    where $\pazocal{B}_{l}\!\left(n,\lambda\right)\coloneqq \binom{n}{l}\lambda^{l}(1-\lambda)^{n-l}$, and $a_{mn}$ is defined in~\eqref{def_a_element_matrix}. 
\end{theorem}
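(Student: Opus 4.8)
The plan is to reduce everything to Lemma~\ref{lemma_appendix_condition_antideg}: since $\pazocal{N}_{\lambda,\gamma}$ is anti-degradable with anti-degrading map $\pazocal{A}$ if and only if $\pazocal{A}$ is a channel satisfying the identity \eqref{cond_A_antideg_proof_st}, namely $\pazocal{A}_{\lambda,\gamma}\!\left(\pazocal{E}_{1-\lambda}(\ketbra{m}{n})\otimes\ketbra{\sqrt{\gamma}m}{\sqrt{\gamma}n}\right)=(-1)^{m-n}e^{-\gamma(m-n)^2/2}\pazocal{E}_\lambda(\ketbra{m}{n})$ for all $m,n\in\N$, it suffices to check, for each of the two proposed maps, that it (a) is a legitimate quantum channel and (b) obeys \eqref{cond_A_antideg_proof_st}. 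I would handle the two regions independently.

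For region (i) the channel property is immediate: when $\lambda\le\tfrac12$ one has $\tfrac{\lambda}{1-\lambda}\in[0,1]$, so $\pazocal{E}_{\lambda/(1-\lambda)}$ is a genuine pure-loss channel, while $\pazocal{R}_{E_1}$ is a unitary conjugation and $\Tr_{E_2}$ a partial trace, so the composite is CPTP. To verify \eqref{cond_A_antideg_proof_st} I would evaluate the map on $\pazocal{E}_{1-\lambda}(\ketbra{m}{n})\otimes\ketbra{\sqrt{\gamma}m}{\sqrt{\gamma}n}$: the factor $\Tr_{E_2}$ collapses the coherent-state part to the overlap $\braket{\sqrt{\gamma}n}{\sqrt{\gamma}m}=e^{-\gamma(m-n)^2/2}$ by \eqref{overlap_coh}; the phase conjugation gives $\pazocal{R}_{E_1}(\pazocal{E}_{1-\lambda}(\ketbra{m}{n}))=(-1)^{m-n}\pazocal{E}_{1-\lambda}(\ketbra{m}{n})$; and the composition rule $\pazocal{E}_{\lambda/(1-\lambda)}\circ\pazocal{E}_{1-\lambda}=\pazocal{E}_\lambda$ (Lemma~\ref{lemma_comp_pure}) turns the residual loss into $\pazocal{E}_\lambda(\ketbra{m}{n})$. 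Multiplying the three factors reproduces the right-hand side exactly.

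For region (ii) the verification of \eqref{cond_A_antideg_proof_st} is direct but bookkeeping-heavy. I would expand $\pazocal{E}_{1-\lambda}(\ketbra{m}{n})=\sum_\ell\sqrt{\pazocal{B}_\ell(m,\lambda)\pazocal{B}_\ell(n,\lambda)}\ketbra{m-\ell}{n-\ell}$ (Lemma~\ref{fock_pureloss}), insert each term into the defining formula \eqref{def_A_tilde_state} with $(\ell_1,\ell_2,n_1,n_2)=(m-\ell,n-\ell,m,n)$, and reorganise the output by the index $j=\ell-k$. Using $a_{\ell,\ell}=1$ and factoring out the $\ell$-independent quantity $a_{m-j,n-j}$, the coefficient of $\ketbra{m-j}{n-j}$ becomes $(-1)^{m-n}a_{m-j,n-j}\sum_{\ell\ge j}\sqrt{\pazocal{B}_\ell(m,\lambda)\pazocal{B}_\ell(n,\lambda)}\,\pazocal{B}_{\ell-j}(\ell,\tfrac{2\lambda-1}{\lambda})$. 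The key algebraic identity, which I would establish by cancelling the powers of $\lambda$ and applying $\binom{m}{\ell}\binom{\ell}{j}=\binom{m}{j}\binom{m-j}{\ell-j}$, is that this sum equals $\sqrt{\pazocal{B}_j(m,1-\lambda)\pazocal{B}_j(n,1-\lambda)}$ times precisely the denominator of $a_{m-j,n-j}$ in \eqref{def_a_element_matrix}; since the numerator of $a_{m-j,n-j}$ is $e^{-\gamma(m-n)^2/2}$, the denominator cancels and, summing over $j$ against $\pazocal{E}_\lambda(\ketbra{m}{n})=\sum_j\sqrt{\pazocal{B}_j(m,1-\lambda)\pazocal{B}_j(n,1-\lambda)}\ketbra{m-j}{n-j}$, one recovers the right-hand side of \eqref{cond_A_antideg_proof_st}.

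The hard part will be showing that the region-(ii) map is genuinely a quantum channel: because the coherent states $\{\ket{\sqrt{\gamma}n}\}$ are overcomplete and the underlying space is infinite-dimensional, a head-on check of complete positivity and trace preservation from \eqref{def_A_tilde_state}–\eqref{def_coeff_c_state} is delicate (even well-definedness of a bounded linear extension is not obvious). I would bypass this by identifying $\pazocal{A}_{\lambda,\gamma}$ with the anti-degrading map canonically induced by the two-extension $\rho_{AB_1B_2}$ built in the proof of Theorem~\ref{main-result}, following the explicit recipe underlying Lemma~\ref{lemma_infinite_extendibility}: in region (ii) the Hadamard matrix $A$ is positive semi-definite (proved there via diagonal dominance), so $\rho_{AB_1B_2}$ is a bona fide state and the induced map is automatically CPTP, while computing it reproduces the closed form \eqref{def_A_tilde_state}. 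As a consistency check I would confirm trace preservation directly on diagonal inputs, where $\sum_k\pazocal{B}_k(n-\ell,\tfrac{2\lambda-1}{\lambda})=1$ makes the computation transparent.
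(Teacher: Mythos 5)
Your region (i) argument and your direct verification of \eqref{cond_A_antideg_proof_st} for region (ii) are both sound. For (i), the paper instead factors the argument through Lemma~\ref{lemma_comp_antideg} (anti-degrading map of a composition), but your computation — overlap $e^{-\gamma(m-n)^2/2}$ from $\Tr_{E_2}$, sign $(-1)^{m-n}$ from $\pazocal{R}_{E_1}$, and $\pazocal{E}_{\lambda/(1-\lambda)}\circ\pazocal{E}_{1-\lambda}=\pazocal{E}_\lambda$ — is equivalent and fine. For (ii), your bookkeeping is correct: with $(\ell_1,\ell_2,n_1,n_2)=(m-\ell,n-\ell,m,n)$ one has $a_{n_1-\ell_1,n_2-\ell_2}=a_{\ell\ell}=1$, and the resummation in $j=\ell-k$ together with the binomial identity $\binom{m}{\ell}\binom{\ell}{j}=\binom{m}{j}\binom{m-j}{\ell-j}$ (the same identity the paper uses when proving $\Tr_{B_1}=\Tr_{B_2}$ of the extension) produces exactly the denominator of $a_{m-j,n-j}$, so \eqref{cond_A_antideg_proof_st} follows. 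This is a genuinely different route from the paper, which never verifies \eqref{cond_A_antideg_proof_st} term by term: it instead shows $\Id_A\otimes(\pazocal{A}_{\lambda,\gamma}\circ\pazocal{N}^{\text{c}}_{\lambda,\gamma})(\ketbra{\psi(r)})=\Tr_{B_1}[\rho_{AB_1B_2}]=\Id_A\otimes\pazocal{N}_{\lambda,\gamma}(\ketbra{\psi(r)})$ and lifts this to the channel identity via the generalised Choi isomorphism (Lemma~\ref{gen_choi_thm}). Your version is more elementary; the paper's buys the intertwining relation essentially for free once the Choi-level computation is done.

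The gap is in the complete-positivity step, which is the real content of the theorem in region (ii) and which you explicitly defer. Knowing (via Lemma~\ref{lemma_infinite_extendibility}) that the two-extension $\rho_{AB_1B_2}$ induces \emph{some} CPTP anti-degrading map does not make the specific formula \eqref{def_A_tilde_state} CPTP: the induced map is built from a purification of $\rho_{AB_1B_2}$ and an isometry between purifications, and it coincides with \eqref{def_A_tilde_state} only if you actually compute it — which is precisely the assertion "computing it reproduces the closed form" that you leave unproved. Nor can you fall back on your verified relation $\pazocal{A}_{\lambda,\gamma}\circ\pazocal{N}^{\text{c}}_{\lambda,\gamma}=\pazocal{N}_{\lambda,\gamma}$ to transfer CP from the abstract map to yours: the operators $\pazocal{N}^{\text{c}}_{\lambda,\gamma}(\ketbra{m}{n})$ do not span the generators $\ketbra{\ell_1}{\ell_2}\otimes\ketbra{\sqrt{\gamma}n_1}{\sqrt{\gamma}n_2}$ of $\pazocal{T}(\HH_{E_{\text{out}}})$ (for fixed $(n_1,n_2)$ only one fixed linear combination over $(\ell_1,\ell_2)$ occurs), so the intertwining relation does not determine the map and two maps satisfying it need not both be CP. The paper closes this concretely: it applies $\Id_{AB_1}\otimes\pazocal{A}_{\lambda,\gamma}$ to the explicit pure state $\ket{\Psi}_{AB_1E_1E_2}$ of \eqref{expr_psi_stinespring0} (the Stinespring image of the two-mode squeezed vacuum), checks by a computation of exactly the same flavour as your direct verification that the result is the operator $\rho_{AB_1B_2}$ of \eqref{state_sym_ext}, which is positive semi-definite under condition (ii), shows $\Tr_{AB_1}[\ketbra{\Psi}]$ is positive definite on $\HH_{E_{\text{out}}}$, and invokes the generalised-Choi criterion for complete positivity, with trace preservation checked separately on the generators (your diagonal-input consistency check is essentially that computation). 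Carrying out that one evaluation would complete your proposal.
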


\begin{proof}[Proof of Theorem~\ref{explicit-maps}] 
Let us suppose that $\lambda$ and $\gamma$ fall within the parameter region (i). A complementary channel of the pure-loss channel $\pazocal{E}_\lambda$ is given by $\pazocal{E}_\lambda^\text{c}=\pazocal{E}_{1-\lambda}\circ\pazocal{R}$.
Consequently, Lemma~\ref{lemma_comp_pure} implies that $\left(\pazocal{E}_{ \frac{\lambda}{1-\lambda} }\circ\pazocal{R}\right)\circ \pazocal{E}_\lambda^\text{c}= \pazocal{E}_\lambda$, i.e.~the channel $\pazocal{E}_{ \frac{\lambda}{1-\lambda} }\circ\pazocal{R}$ is an anti-degrading map of the pure-loss channel. The general construction detailed in the proof of Lemma~\ref{lemma_comp_antideg} for the anti-degrading map of the composition between an anti-degradable channel and another channel demonstrates that the map given in~\eqref{expr_antideg_lambda_min_05} is an anti-degrading map of $\pazocal{N}_{\lambda,\gamma}$.

Let us now suppose that $\lambda$ and $\gamma$ fall within the parameter region (ii). To come up with the anti-degrading map defined in~\eqref{def_A_tilde_state}, we drew intuition from the proof of Lemma~\ref{lemma_infinite_extendibility}, which demonstrates the equivalence between two-extendibility of the Choi state and the existence of an anti-degrading map, while also considering the two-extension of the Choi state of $\pazocal{N}_{\lambda,\gamma}$ explicitly found in~\eqref{state_sym_ext}.
In order to show that the map ${\pazocal{A}}_{\lambda,\gamma}$ in~\eqref{def_A_tilde_state} is an anti-degrading map of $\pazocal{N}_{\lambda,\gamma}$, we need to show that it is a quantum channel satisfying ${\pazocal{A}}_{\lambda,\gamma}\circ \pazocal{N}^{\text{c}}_{\lambda,\gamma}=\pazocal{N}_{\lambda,\gamma}$. We begin by proving that ${\pazocal{A}}_{\lambda,\gamma}$ is trace preserving.  By linearity, it suffices to show that for any $\ell_1,\ell_2,n_1,n_2\in\mathbb{N}$ with $\ell_1\le n_1$ and $\ell_2\le n_2$ it holds that
   \bb
        \Tr\!\left[{\pazocal{A}}_{\lambda,\gamma}\!\left(\ketbra{\ell_1}{\ell_2}_{E_1}\otimes\ketbra{\sqrt{\gamma}n_1}{\sqrt{\gamma}n_2}_{E_2}\right)\right]=\Tr\!\left[ \ketbra{\ell_1}{\ell_2}_{E_1}\otimes\ketbra{\sqrt{\gamma}n_1}{\sqrt{\gamma}n_2}_{E_2} \right] \,.
   \ee
    Indeed, we obtain
    \begin{align*}
        \Tr\!\left[{\pazocal{A}}_{\lambda,\gamma}\!\left(\ketbra{\ell_1}{\ell_2}_{E_1}\otimes\ketbra{\sqrt{\gamma}m}{\sqrt{\gamma}n}_{E_2}\right)\right]&=\,\delta_{\ell_1,\ell_2}\sum_{k=0}^{\min(m-\ell_1,n-\ell_1)}c_k^{(\ell_1,\ell_1,m,n)}\\
        &=\, \delta_{\ell_1,\ell_2}\sum_{k=0}^{\min(m-\ell_1,n-\ell_1)} \sqrt{ \pazocal{B}_k\!\left(m-\ell_1,\frac{2\lambda-1}{\lambda}\right) \pazocal{B}_k\!\left(n-\ell_1,\frac{2\lambda-1}{\lambda}\right)} a_{m-\ell_1,n-\ell_1}\,a_{k+\ell_1,k+\ell_1}\\
        &=\, \delta_{\ell_1,\ell_2} e^{-\frac{\gamma}{2}(m-n)^2}\\
        &=\Tr\!\left[ \ketbra{\ell_1}{\ell_2}_{E_1}\otimes\ketbra{\sqrt{\gamma}n_1}{\sqrt{\gamma}n_2}_{E_2} \right] \,,
    \end{align*}
    where $\delta_{\ell_1,\ell_2}$ denotes the kronecker delta and where we have exploited the formula for the overlap between coherent states provided in~\eqref{overlap_coh}. Now, let us show that ${\pazocal{A}}_{\lambda,\gamma}$ is completely positive. To achieve this, we need to find a pure state $\Psi$ on $\HH_{\text{anc}}\otimes \HH_{E_{\text{out}}}$, where $\HH_{\text{anc}}$ in an auxiliary reference, such that $\Tr_{\text{anc}}[\ketbra{\Psi}]>0$ and $(\Id_{\text{anc}}\otimes {\pazocal{A}}_{\lambda,\gamma})(\ketbra{\Psi})\geq0$~\cite{HOLEVO-CHANNELS-2,Holevo-CJ,Holevo-CJ-arXiv}. Let $\HH_{\text{anc}}\coloneqq \HH_{A}\otimes\HH_{B_1}=L^2(\mathbb{R})\otimes L^2(\mathbb{R})$ and let us construct the pure state $\ket{\Psi}_{AB_1E_1E_2}\in \HH_{\text{anc}}\otimes \HH_{E_{\text{out}}}$ as follows: 
    \begin{equation}
        \begin{aligned}
           \label{expr_psi_stinespring0}
        \ket{\Psi}_{AB_1E_1E_2}&\coloneqq   U_\lambda^{B_1E_1}V_\gamma^{B_1E_2} \ket{\psi(r)}_{AB_1}\ket{0}_{E_1}\ket{0}_{E_2}\\
        &=\frac{1}{\cosh (r)}\sum_{n=0}^{\infty}\sum_{\ell=0}^n (-1)^\ell\tanh^n (r) \sqrt{\pazocal{B}_l(n,1-\lambda)} \ket{n}_A\ket{n-\ell}_{B_1}\ket{\ell}_{E_1}\ket{\sqrt{\gamma}n}_{E_2}\,, 
        \end{aligned}
    \end{equation}
    where $U_\lambda^{B_1E_1}$ is the beam splitter unitary, $V_\gamma^{B_1E_2}$ is the conditional displacement unitary,
    and $\ket{\psi(r)}_{AB_1}$ is the two-mode squeezed vacuum state with squeezing $r>0$. 
 Let us now show that $\Tr_{AB_1}\left[\ketbra{\Psi}_{AB_1E_1E_2}\right]$ is positive definite on $\HH_{E_{\text{out}}}$. Let $\ket{\phi}_{E_1E_2}\in\HH_{E_{\text{out}}}$. Since there exists $\bar{\ell},\bar{n}\in\N$ with $\bar{l}\le\bar{n}$ such that $\bra{\phi}_{E_1E_2}\ket{\bar{\ell}}_{E_1}\otimes\ket{\sqrt{\gamma}\bar{n}}_{E_2}\ne 0$,~\eqref{expr_psi_stinespring0} implies that
\begin{align*}
\bra{\phi}_{E_1E_2}\Tr_{AB_1}\left[\ketbra{\Psi}_{AB_1E_1E_2}\right]\ket{\phi}_{E_1E_2}&=\frac{1}{\cosh^2(r)}\sum_{n=0}^\infty\sum_{l=0}^n \tanh^{2n}(r)\, \pazocal{B}_{\ell}\left(n,1-\lambda\right) \left|\bra{\phi}_{E_1E_2}\ket{\ell}_{E_1}\otimes\ket{\sqrt{\gamma}n}_{E_2}\right|^2\\&\ge\frac{1}{\cosh^2(r)}\tanh^{2\bar{n}}(r) \,\pazocal{B}_{\bar{l}}\left(\bar{n},1-\lambda\right)\left|\bra{\phi}_{E_1E_2}\ket{\bar{\ell}}_{E_1}\otimes\ket{\sqrt{\gamma}\bar{n}}_{E_2}\right|^2\\&>0\,.
\end{align*}
We next show that $\Id_{AB_{1}}\otimes {\pazocal{A}}_{\lambda,\gamma}(\ketbra{\Psi})$ is positive semi-definite. Let $B_2$ denote the output system of ${\pazocal{A}}_{\lambda,\gamma}$. Note that
\begin{equation}
    \begin{aligned}
        \label{eq_expression_step_proof0}
    \Id_{AB_{1}}\otimes {\pazocal{A}}_{\lambda,\gamma}(\ketbra{\Psi}_{AB_1E_1E_2})& \texteq{(i)}  \frac{1}{\cosh^2 (r)}\sum_{m,n=0}^{\infty}\sum_{\ell_1=0}^{m}\sum_{\ell_2=0}^{n} (-1)^{\ell_1+\ell_2}[\tanh (r)]^{m+n} \sqrt{ \pazocal{B}_{\ell_1}(m,1-\lambda)  \pazocal{B}_{\ell_2}(n,1-\lambda)  }\\&\hspace{3.6cm} \ketbra{m}{n}_A\otimes\ketbra{m-\ell_1}{n-\ell_2}_{B_1}\otimes {\pazocal{A}}_{\lambda,\gamma}\!\left(\ketbra{\ell_1}{\ell_2}_{E_1}\otimes\ketbra{\sqrt{\gamma}m}{\sqrt{\gamma}n}_{E_2}\right)
    \\&\texteq{(ii)} \frac{1}{\cosh^2 (r)}\sum_{m,n=0}^{\infty}\sum_{\ell_1=0}^{m}\sum_{\ell_2=0}^{n}\sum_{k=0}^{\min(m-\ell_1,n-\ell_2)} [\tanh (r)]^{m+n}  \\&\hspace{2cm}\sqrt{ \pazocal{B}_{\ell_1}(m,1-\lambda)\,  \pazocal{B}_{\ell_2}(n,1-\lambda)\,\pazocal{B}_{k}\left(m-\ell_1,\frac{2\lambda-1}{\lambda}\right)  \pazocal{B}_{k}\left(n-\ell_2,\frac{2\lambda-1}{\lambda}\right) }\\&\hspace{3.4cm} a_{m-\ell_1,n-\ell_2}\,a_{k+\ell_1,k+\ell_2}\ketbra{n_1}{n}_A\otimes\ketbra{m-\ell_1}{n-\ell_2}_{B_1}\otimes \ketbra{k+\ell_1}{k+\ell_2}_{B_2}
    \\&\texteq{(iii)} \rho_{AB_1B_2}\,.
    \end{aligned}
\end{equation}
Here, in (i) we used the definition of $\ket{\Psi}_{AB_1E_1E_2}$ given in~\eqref{expr_psi_stinespring0}; in (ii) we utilised the definition of the map ${\pazocal{A}}_{\lambda,\gamma}$ from~\eqref{def_A_tilde_state}; and in (iii) we recognised the tripartite operator $\rho_{AB_1B_2}$ defined in~\eqref{state_sym_ext}, which is a quantum state provided that $\lambda$ and $\gamma$ satisfy $\theta\!\left(e^{-\gamma/2},\sqrt{\frac{\lambda}{1-\lambda}}\right)\le\frac{3}{2}$. Therefore, in such a parameter region, $\Id_{AB_{1}}\otimes {\pazocal{A}}_{\lambda,\gamma}(\ketbra{\Psi}_{AB_1E_1E_2})$ is positive semi-definite, and thus ${\pazocal{A}}_{\lambda,\gamma}$ is a quantum channel.
Let us now verify that
${\pazocal{A}}_{\lambda,\gamma}\circ \pazocal{N}^{\text{c}}_{\lambda,\gamma}=\pazocal{N}_{\lambda,\gamma}$. To show this, note that
\begin{equation}
    \begin{aligned}
          \Id_{A}\otimes\left({\pazocal{A}}_{\lambda,\gamma}\circ \pazocal{N}^{\text{c}}_{\lambda,\gamma}\right)(\ketbra{\psi(r)}_{AB_1})&\texteq{(iv)}\Tr_{B_1}\left[ \Id_{AB_{1}}\otimes {\pazocal{A}}_{\lambda,\gamma}(\ketbra{\Psi}_{AB_1E_1E_2})  \right]
    \\&\texteq{(v)}\Tr_{B_1}\left[ \rho_{AB_1B_2}  \right]
    \\&\texteq{(vi)}\Id_{A}\otimes\pazocal{N}_{\lambda,\gamma}\left(\ketbra{\psi(r)}_{AB_2}\right)\,,
    \end{aligned}
\end{equation}
Here, in (iv) we employed~\eqref{expr_psi_stinespring0}; in (v) we exploited~\eqref{eq_expression_step_proof0}; and in (vi) we used that $\rho_{AB_1B_2}$ is a two-extension of $\Id_{A}\otimes\pazocal{N}_{\lambda,\gamma}\left(\ketbra{\psi(r)}\right)$, as established in the proof of Theorem~\ref{main-result}. Finally, since the two-mode squeezed vacuum state $\ket{\psi(r)}_{AB}$ satisfies $\Tr_B[\ketbra{\psi(r)}_{AB}]>0$, we conclude ${\pazocal{A}}_{\lambda,\gamma}\circ \pazocal{N}^{\text{c}}_{\lambda,\gamma}=\pazocal{N}_{\lambda,\gamma}$.
\end{proof}

\subsection{Analysis of the bosonic loss-dephasing channel via its finite-dimensional restrictions}\label{subsec_finite}
\begin{Def}
   Let $d\in\N$ and let $\HH_d\coloneqq \operatorname{Span}(\{\ket{n}\}_{n=0,\ldots,d-1})$ be the subspace spanned by the first $d$ Fock states. The qudit restriction of the bosonic loss-dephasing channel $\pazocal{N}_{\lambda,\gamma}^{(d)}$ is a quantum channel defined by
    \bb
        \pazocal{N}_{\lambda,\gamma}^{(d)}(\Theta)\coloneqq \pazocal{N}_{\lambda,\gamma}(\Theta)\qquad\forall\,\Theta\in\pazocal{T}(\HH_d)
    \ee
\end{Def}
\begin{lemma}
    Let $\HH_A,\HH_B\coloneqq L^2(\mathbb{R})$.  Let $\pazocal{N}:\pazocal{T}(\HH_A)\to \pazocal{T}(\HH_B)$ be a quantum channel and let $\pazocal{N}^{(d)}$ be its qudit restriction, defined by 
    \bb
        \pazocal{N}^{(d)}(\Theta)\coloneqq \pazocal{N}(\Theta)\qquad\forall\,\Theta\in\pazocal{T}(\HH_d)\,.
    \ee
    If $\pazocal{N}$ is (anti-)degradable, then its qudit restriction $\pazocal{N}^{(d)}$ is (anti-)degradable.
\end{lemma}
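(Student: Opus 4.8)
The plan is to express the restricted channel as a composition with an isometric embedding, and then to transport the (anti-)degrading map of $\pazocal{N}$ \emph{unchanged}. Concretely, let $\iota:\HH_d\hookrightarrow\HH_A$ denote the inclusion isometry, so that $\iota^\dagger\iota=\mathbb{1}_{\HH_d}$, and let $\pazocal{I}_d(\cdot)\coloneqq\iota(\cdot)\iota^\dagger$ be the associated embedding channel from $\pazocal{T}(\HH_d)$ to $\pazocal{T}(\HH_A)$; it is completely positive and trace-preserving, the latter because $\Tr[\iota\Theta\iota^\dagger]=\Tr[\Theta\,\iota^\dagger\iota]=\Tr[\Theta]$. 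By definition of the restriction one then has $\pazocal{N}^{(d)}=\pazocal{N}\circ\pazocal{I}_d$.

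First I would fix a single Stinespring dilation $V:\HH_A\to\HH_B\otimes\HH_E$ of $\pazocal{N}$, so that $\pazocal{N}(\cdot)=\Tr_E[V(\cdot)V^\dagger]$ and $\pazocal{N}^{\text{c}}(\cdot)=\Tr_B[V(\cdot)V^\dagger]$. The key observation is that $V_d\coloneqq V\iota:\HH_d\to\HH_B\otimes\HH_E$ is again an isometry, and it is a Stinespring dilation of $\pazocal{N}^{(d)}$. Consequently, tracing out $B$ instead of $E$ yields a complementary channel of the restriction, namely $(\pazocal{N}^{(d)})^{\text{c}}(\cdot)=\Tr_B[V_d(\cdot)V_d^\dagger]=\pazocal{N}^{\text{c}}\circ\pazocal{I}_d$. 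In words: a complementary channel of the restriction is simply the restriction of a complementary channel.

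With this identity in hand, both cases are immediate. If $\pazocal{N}$ is degradable, with degrading map $\pazocal{J}$ satisfying $\pazocal{J}\circ\pazocal{N}=\pazocal{N}^{\text{c}}$, then precomposing with $\pazocal{I}_d$ gives $\pazocal{J}\circ\pazocal{N}^{(d)}=\pazocal{J}\circ\pazocal{N}\circ\pazocal{I}_d=\pazocal{N}^{\text{c}}\circ\pazocal{I}_d=(\pazocal{N}^{(d)})^{\text{c}}$, so the \emph{same} $\pazocal{J}$ degrades $\pazocal{N}^{(d)}$. Symmetrically, if $\pazocal{N}$ is anti-degradable with anti-degrading map $\pazocal{A}$ satisfying $\pazocal{A}\circ\pazocal{N}^{\text{c}}=\pazocal{N}$, then $\pazocal{A}\circ(\pazocal{N}^{(d)})^{\text{c}}=\pazocal{A}\circ\pazocal{N}^{\text{c}}\circ\pazocal{I}_d=\pazocal{N}\circ\pazocal{I}_d=\pazocal{N}^{(d)}$, so the same $\pazocal{A}$ anti-degrades $\pazocal{N}^{(d)}$.

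The only point requiring care — and the step I would flag as the main, albeit minor, obstacle — is the claim that the complementary channel of $\pazocal{N}^{(d)}$ may be taken to be the restriction of $\pazocal{N}^{\text{c}}$. Since complementary channels are defined only up to an isometry on the environment output, one must either fix one common dilation $V$ for both $\pazocal{N}$ and $\pazocal{N}^{(d)}$, as above, or else invoke the fact that (anti-)degradability is insensitive to the choice of complementary channel, because replacing $\pazocal{N}^{\text{c}}$ by $W\pazocal{N}^{\text{c}}(\cdot)W^\dagger$ for an isometry $W$ can be absorbed into the (anti-)degrading map. Using the shared dilation $V_d=V\iota$ sidesteps this subtlety entirely and makes the identity $(\pazocal{N}^{(d)})^{\text{c}}=\pazocal{N}^{\text{c}}\circ\pazocal{I}_d$ hold exactly, after which the two displayed chains of equalities finish the argument.
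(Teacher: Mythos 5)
Your proof is correct and follows essentially the same route as the paper's: fix a single Stinespring isometry for $\pazocal{N}$, observe that it (composed with the inclusion) also dilates $\pazocal{N}^{(d)}$, so the complementary channel of the restriction is the restriction of the complementary channel, and then the very same (anti-)degrading map works for $\pazocal{N}^{(d)}$. Your explicit use of the embedding channel $\pazocal{I}_d$ and your remark about complementary channels being fixed only up to an environment isometry are just slightly more detailed renderings of the paper's argument.
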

\begin{proof}
    Let $\pazocal{N}(\cdot)=\Tr_E[V_{A\to BE}(\cdot)V_{A\to BE}^\dagger]$ be a Stinespring representation of $\pazocal{N}$, and let $\pazocal{N}^\text{c}(\cdot)=\Tr_B[V_{A\to BE}(\cdot)V_{A\to BE}^\dagger]$ be the associated complementary channel. Note that the isometry $V_{A\to BE}$ provides a Stinespring representation also for the qudit restriction $\pazocal{N}^{(d)}$. Hence, the qudit restriction of the complementary channel $\pazocal{N}^\text{c}$ is a complementary channel of the qudit restriction $\pazocal{N}^{(d)}$, i.e.
    \bb
        (\pazocal{N}^{(d)})^\text{c}(\Theta)= \pazocal{N}^\text{c}(\Theta)\qquad\forall\,\Theta\in\pazocal{T}(\HH_d)\,.
    \ee
    Moreover, note that any degrading or anti-degrading map of $\NN$ is effective for all input states, including those restricted to $\HH_d$. Consequently, an (anti-)degrading map of $\NN$ is also an (anti-)degrading map of its qudit restriction $\NN^{(d)}$.
\end{proof}

\begin{cor}\label{cor-antideg}
If the qudit restriction $\pazocal{N}_{\lambda,\gamma}^{(d)}$ is not (anti)-degradable, then the bosonic loss-dephasing channel $\pazocal{N}_{\lambda,\gamma}$ is also not (anti)-degradable.
\end{cor}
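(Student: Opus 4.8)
The plan is to derive this corollary as nothing more than the contrapositive of the lemma established immediately above, specialised to the channel $\pazocal{N} = \pazocal{N}_{\lambda,\gamma}$. That lemma shows that whenever a bosonic channel on $L^2(\mathbb{R})$ is degradable (respectively anti-degradable), every one of its qudit restrictions inherits the same property, by reusing the same degrading (respectively anti-degrading) map. Reading this implication backwards yields exactly the statement we want.

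Concretely, first I would check that $\pazocal{N}_{\lambda,\gamma}$ meets the hypotheses of the lemma. This is immediate: by definition $\pazocal{N}_{\lambda,\gamma}\coloneqq \pazocal{E}_\lambda\circ\pazocal{D}_\gamma$ is a quantum channel from $\pazocal{T}(L^2(\mathbb{R}))$ to $\pazocal{T}(L^2(\mathbb{R}))$, so taking $\HH_A = \HH_B = L^2(\mathbb{R})$ the setup of the lemma applies verbatim, and the qudit restriction appearing there coincides with $\pazocal{N}^{(d)}_{\lambda,\gamma}$ as defined above. The logical step then runs, for the two cases in parallel, as follows: suppose for contradiction that $\pazocal{N}_{\lambda,\gamma}$ were (anti-)degradable; then by the lemma its restriction $\pazocal{N}^{(d)}_{\lambda,\gamma}$ would be (anti-)degradable as well, contradicting the hypothesis of the corollary. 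Hence $\pazocal{N}_{\lambda,\gamma}$ cannot be (anti-)degradable.

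I do not expect any genuine obstacle here, since the whole content of the result lives in the preceding lemma, whose proof already constructs the requisite degrading and anti-degrading maps explicitly by restricting the Stinespring dilation to $\HH_d$. The only thing left to verify is the applicability of that lemma to $\pazocal{N}_{\lambda,\gamma}$, which is trivial, after which the corollary is a one-line contraposition. The reason this elementary observation is worth recording is its intended use: it lets us transfer \emph{negative} (anti-)degradability results proven for the finite-dimensional restrictions $\pazocal{N}^{(d)}_{\lambda,\gamma}$ — for instance via semi-definite programming on the corresponding Choi states — up to the infinite-dimensional channel $\pazocal{N}_{\lambda,\gamma}$ itself.
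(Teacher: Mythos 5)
Your argument is correct and is exactly how the paper obtains this corollary: it is the contrapositive of the preceding lemma (that any (anti-)degrading map of $\pazocal{N}$, together with the restricted Stinespring dilation, serves for the qudit restriction), applied to $\pazocal{N}=\pazocal{N}_{\lambda,\gamma}$ with $\HH_A=\HH_B=L^2(\mathbb{R})$. Nothing further is needed.
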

%Note that it is not immediately evident whether the qudit restriction $\pazocal{N}_{\lambda,\gamma}^{(d)}$ qualifies as a finite-dimensional channel. In other words, it is not a priori clear whether its output space remains finite for finite-dimensional input. This is necessary because it determines our ability to apply the established facts from finite-dimensional channels to the qudit-restricted $\pazocal{N}_{\lambda,\gamma}^{(d)}$. The following lemma answers this question.
The following lemma shows that qudit restriction $\pazocal{N}_{\lambda,\gamma}^{(d)}$ is a qu$d$it-to-qu$d$it channel, mapping the space spanned by $\{\ket{n}\}_{n=0,\ldots,d-1}$ into itself. 

\begin{lemma}\label{qudit restriction-1}
   If the input state to the bosonic loss-dephasing channel is confined into the finite-dimensional subspace spanned by $\{\ket{n}\}_{n=0,\ldots,d-1}$, the resulting output state will similarly be confined to this subspace.
\end{lemma}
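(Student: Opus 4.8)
The plan is to read the claim off directly from the explicit Fock-basis action of the channel, which was already recorded in Lemma~\ref{App_remark_commutation}. There, for any $m,n\in\N$ one has
\begin{align*}
\pazocal{N}_{\lambda,\gamma}(\ketbra{m}{n}) = e^{-\frac{1}{2}\gamma(n-m)^2}\sum_{\ell=0}^{\min(m,n)}\sqrt{\binom{n}{\ell}\binom{m}{\ell}}\,(1-\lambda)^\ell\lambda^{\frac{n+m}{2}-\ell}\ketbra{m-\ell}{n-\ell}\,.
\end{align*}
The crucial structural feature is that every term on the right-hand side is a Fock dyadic $\ketbra{m-\ell}{n-\ell}$ whose indices satisfy $m-\ell\le m$ and $n-\ell\le n$; in words, neither the loss part nor the dephasing part of the channel can ever raise the photon number. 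This mirrors the physical content of the two noise processes: dephasing leaves the photon number untouched while loss can only remove photons.

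With this in hand the argument is immediate. First I would fix an input state $\rho$ supported on $\HH_d=\operatorname{Span}(\{\ket{n}\}_{n=0,\ldots,d-1})$ and expand it in the Fock basis as $\rho=\sum_{m,n=0}^{d-1}\rho_{mn}\ketbra{m}{n}$. By linearity of the channel,
\begin{align*}
\pazocal{N}_{\lambda,\gamma}(\rho)=\sum_{m,n=0}^{d-1}\rho_{mn}\,\pazocal{N}_{\lambda,\gamma}(\ketbra{m}{n})\,,
\end{align*}
and, by the displayed formula, each summand is a linear combination of Fock dyadics $\ketbra{m-\ell}{n-\ell}$ with $m-\ell\le m\le d-1$ and $n-\ell\le n\le d-1$, so every such dyadic lies in $\pazocal{T}(\HH_d)$. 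Hence the output $\pazocal{N}_{\lambda,\gamma}(\rho)$ is again supported on $\HH_d$, which is exactly the assertion.

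This is not a statement with a genuine obstacle: it is a transparent consequence of the lower-triangularity of $\pazocal{N}_{\lambda,\gamma}$ in the Fock basis. The only point deserving a word of care is that the Fock expansion of a trace-class operator supported on $\HH_d$ is a \emph{finite} sum, so interchanging the channel with the sum raises no convergence issue. This lemma is precisely what legitimises treating the restriction $\pazocal{N}_{\lambda,\gamma}^{(d)}$ as a bona fide qu$d$it-to-qu$d$it channel, a fact used repeatedly in the finite-dimensional analysis that follows.
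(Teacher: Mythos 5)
Your proposal is correct and is essentially the paper's own argument: both invoke the Fock-basis action from Lemma~\ref{App_remark_commutation}, observe that $\ketbra{m}{n}$ is mapped to a combination of $\ketbra{m-\ell}{n-\ell}$ with indices no larger than $m$ and $n$, and conclude by linearity that the output stays in $\operatorname{Span}(\{\ket{n}\}_{n=0,\ldots,d-1})$. Your added remark about the finiteness of the Fock expansion is a harmless extra precaution, not a departure from the paper's route.
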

\begin{proof}
    Examining Lemma~\ref{App_remark_commutation} reveals that the operator $\ketbra{m}{n}$, when acted on by the bosonic loss-dephasing channel, is transformed into linear combinations of operators $\{\ketbra{\ell}{k}\}_{\ell\leq m,k\leq n}$. This means that
if the input state to the bosonic loss-dephasing channel is restricted to the $d$-dimensional subspace $\{\ket{n}\}_{n=0,\ldots,d-1}$, the output of the channel will reside within the same subspace.
\end{proof}
The qubit restriction $\pazocal{N}_{\lambda,\gamma}^{(2)}$ of the bosonic loss-dephasing channel coincides with the composition between the amplitude damping channel and the qubit dephasing channel~\cite{Sumeet_book,MARK}, which we dub \emph{amplitude-phase damping channel}. Theorems~\ref{thm_degradability} and~\ref{theorem_not_antideg} utilise the amplitude-phase damping channel $\pazocal{N}_{\lambda,\gamma}^{(2)}$ to find that the bosonic loss-dephasing channel $\pazocal{N}_{\lambda,\gamma}$ is never degradable and, additionally, it is not anti-degradabile for $\lambda>\frac{1}{1+e^{-\gamma}}$, respectively.

\subsubsection{The bosonic loss-dephasing channel is never degradable}
The bosonic loss-dephasing channel $\pazocal{N}_{\lambda,\gamma}$ is never degradable, except when it coincides with either the bosonic dephasing channel ($\gamma>0$ and $\lambda=1$) or the degradable pure-loss channel ($\gamma=0$ and $\lambda\ge\frac{1}{2}$), thereby complicating the derivation of its quantum capacity~\cite{Loss_dephasing0}.  This result has been previously demonstrated in~\cite{Loss_dephasing0} through pages-long proof; however, here we provide a significantly simpler proof of this result.
\begin{theorem}\label{thm_degradability}
Let $\lambda\in[0,1]$ and $\gamma\ge0$. The bosonic loss-dephasing channel $\pazocal{N}_{\lambda,\gamma}$ is degradable if and only if one of the following conditions is satisfied:
\begin{itemize}
    \item $\gamma=0$ and $\lambda\in[\frac{1}{2},1]$
    \item $\gamma\ge0$ and $\lambda=1$
\end{itemize}
\end{theorem}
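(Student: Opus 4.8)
The plan is to prove the two implications separately, reducing the substantive direction to a single finite-dimensional rank computation on the qubit restriction. For the "if" direction I would simply identify the two special families with already-understood channels: when $\gamma=0$ we have $\pazocal{N}_{\lambda,0}=\pazocal{E}_\lambda\circ\pazocal{D}_0=\pazocal{E}_\lambda$, which is degradable exactly for $\lambda\in[\frac{1}{2},1]$ by the pure-loss proposition quoted earlier; and when $\lambda=1$ we have $\pazocal{N}_{1,\gamma}=\pazocal{E}_1\circ\pazocal{D}_\gamma=\pazocal{D}_\gamma$ (since $\pazocal{E}_1=\Id$), which is degradable for every $\gamma\ge0$ by the dephasing proposition. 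This settles sufficiency immediately.

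For necessity the key idea is to work with the qubit restriction $\pazocal{N}_{\lambda,\gamma}^{(2)}$ and exploit Corollary~\ref{cor-antideg}, which guarantees that non-degradability of the restriction forces non-degradability of $\pazocal{N}_{\lambda,\gamma}$ itself. Using Lemma~\ref{App_remark_commutation} on the span of $\{\ket{0},\ket{1}\}$ gives $\pazocal{N}_{\lambda,\gamma}^{(2)}(\ketbra{0})=\ketbra{0}$, $\pazocal{N}_{\lambda,\gamma}^{(2)}(\ketbra{1})=\lambda\ketbra{1}+(1-\lambda)\ketbra{0}$, and $\pazocal{N}_{\lambda,\gamma}^{(2)}(\ketbra{0}{1})=e^{-\gamma/2}\sqrt{\lambda}\,\ketbra{0}{1}$, so that the Choi matrix in the basis $\{\ket{00},\ket{01},\ket{10},\ket{11}\}$ is
\begin{equation*}
C\big(\pazocal{N}_{\lambda,\gamma}^{(2)}\big)=\frac{1}{2}\begin{pmatrix} 1 & 0 & 0 & e^{-\gamma/2}\sqrt{\lambda} \\ 0 & 0 & 0 & 0 \\ 0 & 0 & 1-\lambda & 0 \\ e^{-\gamma/2}\sqrt{\lambda} & 0 & 0 & \lambda \end{pmatrix}.
\end{equation*}
Its rank decomposes as the rank of the $\{00,11\}$ block $\left(\begin{smallmatrix} 1 & e^{-\gamma/2}\sqrt{\lambda} \\ e^{-\gamma/2}\sqrt{\lambda} & \lambda \end{smallmatrix}\right)$, whose determinant is $\lambda\,(1-e^{-\gamma})$, plus one from the decoupled $\{10\}$ entry $1-\lambda$. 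Hence the rank equals exactly $3$ whenever $\lambda\in(0,1)$ and $\gamma>0$.

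I would then close the argument region by region. On the interior $\lambda\in(0,1)$, $\gamma>0$ the Choi rank is $3$, so by \cite[Theorem~4]{Cubitt2008} the qubit restriction is not degradable, and Corollary~\ref{cor-antideg} upgrades this to non-degradability of $\pazocal{N}_{\lambda,\gamma}$. The remaining non-degradable parameters lie on the boundaries and fold back onto the pure-loss channel: if $\gamma=0$ then $\pazocal{N}_{\lambda,0}=\pazocal{E}_\lambda$, which is not degradable for $\lambda<\frac{1}{2}$; and if $\lambda=0$ then $\pazocal{N}_{0,\gamma}=\pazocal{E}_0$ (the constant channel to vacuum, since $\pazocal{E}_0$ annihilates its input), again not degradable as $0\notin[\frac{1}{2},1]$. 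A short set-theoretic check confirms that the interior case together with these two boundary cases exactly exhausts the complement of the two degradable families, completing the "only if" direction.

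The main obstacle is precisely that the rank criterion only bites in the interior: on the lines $\gamma=0$ and $\lambda=0$ the determinant $\lambda(1-e^{-\gamma})$ vanishes, the Choi rank drops to $2$, and \cite[Theorem~4]{Cubitt2008} says nothing, so those edges cannot be handled uniformly and must be dispatched by the separate pure-loss reductions above. The one genuine computation is verifying the Choi matrix and its determinant $\lambda(1-e^{-\gamma})$, which is where the dephasing strictly positive ($\gamma>0$) must enter in order to lift the rank from $2$ to $3$.
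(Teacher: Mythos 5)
Your proposal is correct and follows essentially the same route as the paper's proof: reduce to the qubit restriction via Corollary~\ref{cor-antideg}, compute the Choi matrix~\eqref{choistate_bidimensional}, show its rank is $3$ for $\lambda\in(0,1)$ and $\gamma>0$, invoke \cite[Theorem 4]{Cubitt2008}, and settle the $\gamma=0$ and $\lambda=1$ families by identifying them with the pure-loss and dephasing channels. The only difference is that you explicitly dispatch the boundary case $\lambda=0$, $\gamma>0$ by noting $\pazocal{N}_{0,\gamma}=\pazocal{E}_0$, a point the paper leaves implicit.
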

\begin{proof}
Thanks to Corollary~\ref{cor-antideg}, a necessary condition for $\pazocal{N}_{\lambda,\gamma}$ to be degradable is the degradability of the amplitude-phase damping channel $\pazocal{N}_{\lambda,\gamma}^{(2)}$. We now apply~\cite[Theorem 4]{Cubitt2008}, which establishes a necessary condition on the degradability of any qubit channel. Specifically, the rank of the Choi state of a degradable qubit channel is necessarily less or equal to $2$.
By using the notation used in~\eqref{def_Choi}, the matrix associated with the Choi state of the amplitude-phase damping channel $C\!\left(\pazocal{N}_{\lambda,\gamma}^{(2)}\right)$ in the computational basis $\{\ket{00}, \ket{01}, \ket{10}, \ket{11}\}$ is given by:
\begin{align}
    \label{choistate_bidimensional}
&\frac{ 1}{ 2}\left(\begin{matrix} 1\quad & 0& 0& \sqrt{e^{-\gamma}\lambda}\\ 0 & 0\quad & 0\quad& 0\\
0 & 0\quad & 1-\lambda\quad & 0\\ \sqrt{e^{-\gamma}\lambda} & 0 & 0 &\lambda\quad \end{matrix}\right) \,.
\end{align}
One can easily see that its rank is equal to $3$ for $\gamma>0$ and $\lambda\in(0,1)$. In addition, for $\lambda=1$ the bosonic loss-dephasing channel coincides with the bosonic dephasing channel, $\pazocal{N}_{1,\gamma}=\pazocal{D}_{\gamma}$, which is degradable for any value of $\gamma\ge0$~\cite{PhysRevA.102.042413}. Finally, for $\gamma=0$ the bosonic loss-dephasing channel coincides with the pure-loss channel, $\pazocal{N}_{\lambda,0}=\pazocal{E}_\lambda$, which is degradable if and only if $\lambda\in[\frac{1}{2},1]$~\cite{Wolf2007,Caruso2006}.
\end{proof}

\subsubsection{Necessary condition on anti-degradability via qubit restriction}
The next theorem establishes the parameter range where the bosonic loss-dephasing channel is not anti-degradable. We provide three different proofs for this theorem.
\begin{theorem}\label{theorem_not_antideg}
    Let $\gamma\ge0$. If $\lambda> \frac{1}{1+e^{-\gamma}}$, then the bosonic loss-dephasing channel $\pazocal{N}_{\lambda,\gamma}$ is not anti-degradable. 
\end{theorem}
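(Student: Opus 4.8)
The plan is to collapse the infinite-dimensional question to the qubit restriction and then invoke the exact anti-degradability criterion for qubit channels. By Corollary~\ref{cor-antideg}, it suffices to show that the qubit restriction $\pazocal{N}_{\lambda,\gamma}^{(2)}$ (the amplitude-phase damping channel) fails to be anti-degradable whenever $\lambda>\frac{1}{1+e^{-\gamma}}$, since non-anti-degradability of a restriction forces non-anti-degradability of the full channel. This reduction is exactly what makes the statement tractable: for qubit channels, anti-degradability is equivalent to the closed-form inequality of Lemma~\ref{lemma_qubit_charact_antideg}, so the whole theorem reduces to an explicit algebraic computation rather than the construction (or ruling out) of an anti-degrading map in infinite dimension.

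First I would assemble the two ingredients of the criterion. Using Lemma~\ref{App_remark_commutation} one finds $\pazocal{N}_{\lambda,\gamma}^{(2)}(\mathbb{1}_2/2)=\frac12\,\diag(2-\lambda,\lambda)$, so that $\Tr[(\pazocal{N}_{\lambda,\gamma}^{(2)}(\mathbb{1}_2/2))^2]=\tfrac14\big[(2-\lambda)^2+\lambda^2\big]$. The Choi matrix $C(\pazocal{N}_{\lambda,\gamma}^{(2)})$ is already displayed in~\eqref{choistate_bidimensional}; squaring it and reading off the diagonal gives $\Tr[C(\pazocal{N}_{\lambda,\gamma}^{(2)})^2]=\tfrac14\big[1+(1-\lambda)^2+\lambda^2+2e^{-\gamma}\lambda\big]$. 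The decisive observation — and the step that keeps the algebra clean — is that the $\ket{01}$ row and column of $C(\pazocal{N}_{\lambda,\gamma}^{(2)})$ vanish identically, because the amplitude-damping part never creates a photon; hence the Choi matrix is rank-deficient and $\det C(\pazocal{N}_{\lambda,\gamma}^{(2)})=0$. The awkward $\sqrt{\det C}$ term in Lemma~\ref{lemma_qubit_charact_antideg} therefore disappears, and anti-degradability becomes simply $\Tr[(\pazocal{N}_{\lambda,\gamma}^{(2)}(\mathbb{1}_2/2))^2]\ge \Tr[C(\pazocal{N}_{\lambda,\gamma}^{(2)})^2]$. Substituting, cancelling the common factor $\tfrac14$ and the $\lambda^2$ terms, this reduces through $(2-\lambda)^2\ge 1+(1-\lambda)^2+2e^{-\gamma}\lambda$ to $1-\lambda\ge e^{-\gamma}\lambda$, i.e. $\lambda\le\frac{1}{1+e^{-\gamma}}$. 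Thus $\pazocal{N}_{\lambda,\gamma}^{(2)}$ is anti-degradable precisely when $\lambda\le\frac{1}{1+e^{-\gamma}}$; negating this and applying Corollary~\ref{cor-antideg} gives the theorem.

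I do not expect a genuine obstacle once the reduction to the qubit case is fixed: everything afterwards is a short deterministic calculation. The only point needing care is recognising that $\det C(\pazocal{N}_{\lambda,\gamma}^{(2)})=0$ from the vanishing $\ket{01}$ block; carrying the $\sqrt{\det C}$ term through without this observation would make the inequality needlessly messy and obscure the clean threshold $\lambda=\frac{1}{1+e^{-\gamma}}$. The two further proofs alluded to in the text presumably avoid the qubit criterion — for instance by exhibiting a direct two-extendibility obstruction to the Choi state, or by an information-theoretic argument — but the route above is the most economical.
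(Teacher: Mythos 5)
Your proposal is correct and follows essentially the same route as the paper's Proof 3 of Theorem~\ref{theorem_not_antideg} (and the main-text proof sketch): reduce to the qubit restriction via Corollary~\ref{cor-antideg}, apply the criterion of Lemma~\ref{lemma_qubit_charact_antideg} to the Choi matrix~\eqref{choistate_bidimensional}, and note that $\det C=0$ so the condition collapses to $\lambda\le\frac{1}{1+e^{-\gamma}}$; your explicit algebra matches what the paper leaves as "one can easily obtain." The paper additionally offers two alternative arguments (one via the explicit anti-degrading condition~\eqref{cond_A_antideg_proof_st} plus the operator-norm bound of Lemma~\ref{lemmino}, one via fidelity data processing), but these are not needed for correctness.
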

\begin{proof}[Proof 1]
    Assume that $\pazocal{N}_{\lambda,\gamma}$ is anti-degradable; then substituting $m=0$ and $n=1$ in~\eqref{cond_A_antideg_proof_st} yields
    \bb
    \pazocal{A}_{\lambda,\gamma}\left( \pazocal{E}_{1-\lambda}(\ketbra{0}{1})\otimes\ketbra{0}{\sqrt{\gamma}}   \right)=-e^{-\frac{1}{2}\gamma }\pazocal{E}_{\lambda}(\ketbra{0}{1})\,.
    \ee
    By exploiting $\pazocal{E}_\lambda(\ketbra{0}{1})=\sqrt{\lambda}\ketbra{0}{1}$, we have
    \begin{align*}
         \pazocal{A}_{\lambda,\gamma}\left( \ketbra{0}{1}\otimes\ketbra{0}{\sqrt{\gamma}}\right)=-\sqrt{\frac{e^{-\gamma}\lambda}{1-\lambda}}\ketbra{0}{1}\,.
    \end{align*}
    Using Lemma~\ref{lemmino} in the Appendix, we find $\sqrt{\frac{e^{-\gamma}\lambda}{1-\lambda}}\le 1$, or $\lambda\le \frac{1}{1+e^{-\gamma}}$.
\end{proof}
\begin{proof}[Proof 2]
    Assume that $\pazocal{N}_{\lambda,\gamma}$ is anti-degradable. As a consequence of~\eqref{cond_A_antideg_proof_st} and of the data-processing inequality for the fidelity~\cite{NC}, we find
    \begin{equation}
        \begin{aligned}
            \label{ineq_proof2}
        F\big(\,\pazocal{E}_{\lambda}(\ketbra{0})\,,\, \pazocal{E}_{\lambda}(\ketbra{1})\,\big)\ge F\big( \,   \pazocal{E}_{1-\lambda}(\ketbra{0})\otimes \ketbra{0}   \,, \,   \pazocal{E}_{1-\lambda}(\ketbra{1})\otimes\ketbra{\sqrt{\gamma}}    \, \big)\,.
        \end{aligned}
    \end{equation}
    Furthermore, we obtain
    \begin{equation}
        \begin{aligned}
             \sqrt{1-\lambda}&=F\big(\ketbra{0},\lambda\ketbra{1}+(1-\lambda)\ketbra{0}\big)\\&\texteq{(i)}F\big(\pazocal{E}_\lambda(\ketbra{0}),\pazocal{E}_\lambda(\ketbra{1})\big)\\&\textgeq{(ii)} F\big(\ketbra{0},\ketbra{\sqrt{\gamma}}\big)\,F\big(\pazocal{E}_{1-\lambda}(\ketbra{0}),\pazocal{E}_{1-\lambda}(\ketbra{1})\big)\\&\textgeq{(iii)} F\big(\ketbra{0},\ketbra{\sqrt{\gamma}}\big)\,F\big(\ketbra{0},\lambda\ketbra{0}+(1-\lambda)\ketbra{1}\big)\\&\texteq{(iv)}\sqrt{e^{-\gamma}\lambda}\,.
        \end{aligned}
    \end{equation}
     Here, (i) follows from Lemma~\ref{fock_pureloss}, (ii) follows from~\eqref{ineq_proof2} and from the fact that the fidelity is multiplicative under tensor product~\cite{NC}, (iii) uses Lemma~\ref{fock_pureloss} again, and in (iv) we exploited that $|\braket{0}{\sqrt{\gamma}}|=\sqrt{e^{-\gamma}}$. This yields $\sqrt{1-\lambda}\geq\sqrt{e^{-\gamma}\lambda}$, or $\lambda\le \frac{1}{1+e^{-\gamma}}$.
\end{proof}
\begin{proof}[Proof 3]
By exploiting Lemma~\ref{lemma_qubit_charact_antideg} and the Choi matrix of the qubit channel $\pazocal{N}^{(2)}_{\lambda,\gamma}$ reported in~\eqref{choistate_bidimensional}, one can easily obtain that $\pazocal{N}^{(2)}_{\lambda,\gamma}$ is anti-degradable if and only if $\lambda\le \frac{1}{1+e^{-\gamma}}$. Consequently, thanks to Corollary~\ref{cor-antideg}, the bosonic loss-dephasing channel $\pazocal{N}_{\lambda,\gamma}$ is not anti-degradable for $\lambda> \frac{1}{1+e^{-\gamma}}$.
\end{proof}

\subsubsection{Necessary condition on anti-degradability via qudit restrictions}
Let us introduce the following quantity for any $d\in\mathbb{N}$ and $\gamma\ge0$:
\begin{align}
    \label{lambda_d_gamma}
\lambda_d(\gamma)\coloneqq \max\left(\lambda\in[0,1]:\,\, \pazocal{N}^{(d)}_{\lambda,\gamma}\text{ is anti-degradable}\right)\,.
\end{align} 
This quantity is relevant since it allows us to find parameter region where the bosonic loss-dephasing channel is not anti-degradable. Specifically, for $\lambda>\lambda_d(\gamma)$ the bosonic loss-dephasing channel $\pazocal{N}_{\lambda,\gamma}$ is not anti degradable, as established by Corollary~\ref{cor-antideg}. Thanks to the Proof 3 of Theorem~\ref{theorem_not_antideg}, it follows that for $d=2$ we have that $\lambda_2(\gamma)=\frac{1}{1+e^{-\gamma}}$, thereby establishing that $\pazocal{N}_{\lambda,\gamma}$ is not anti degradable for $\lambda>\frac{1}{1+e^{-\gamma}}$. Through an examination of larger values of $d$, we aim to identify an extended parameter region where the channel is not anti-degradable (see Fig.~\ref{fig:sdp_solution}). We begin by proving some useful properties of the quantity $\lambda_d(\gamma)$.

\begin{lemma}\label{lemma_six_fact}
 	For any $\gamma\ge0$ and $d\in\mathbb{N},d\ge 2$, the following facts hold:
 	\begin{itemize}
 		\item Fact 1: The qu$d$it restriction of the bosonic loss-dephasing channel $\pazocal{N}^{(d)}_{\lambda,\gamma}$ is anti-degradable if and only if $\lambda\le\lambda_d(\gamma)$
 		 \item Fact 2: The quantity $\lambda_d(\gamma)$ is monotonically increasing in $\gamma$
 		 \item Fact 3: For $d=2$, it precisely holds that $\lambda_2(\gamma)= \frac{1}{1+e^{-\gamma}}$
	 	\item Fact 4: The quantity $\lambda_d(\gamma)$ is monotonically non-increasing in $d$ 
 		\item Fact 5:   
    It holds that $\frac{1}{2}\le \lambda_d(\gamma)\le \frac{1}{1+e^{-\gamma}}$
 		\item Fact 6: When $d=3$ and $e^{-\gamma}\le \sqrt{2}-1$ (or $\gamma\ge0.881$), it exactly holds that $\lambda_3(\gamma)= \frac{1}{1+e^{-\gamma}}$
 	\end{itemize}
\end{lemma}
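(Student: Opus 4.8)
The plan is to derive all six facts from three structural principles established earlier: anti-degradability is equivalent to two-extendibility of the Choi state (\cite{Myhr2009}); composing an anti-degradable channel with an arbitrary channel---on either side---preserves anti-degradability (Lemma~\ref{lemma_comp_antideg}); and restricting an anti-degradable channel to an invariant subspace preserves anti-degradability (the lemma preceding Corollary~\ref{cor-antideg}). Throughout I use that $\pazocal{E}_\lambda$ and $\pazocal{D}_\gamma$ commute and obey $\pazocal{E}_{\lambda_1}\circ\pazocal{E}_{\lambda_2}=\pazocal{E}_{\lambda_1\lambda_2}$ (Lemma~\ref{lemma_comp_pure}) and $\pazocal{D}_{\gamma_1}\circ\pazocal{D}_{\gamma_2}=\pazocal{D}_{\gamma_1+\gamma_2}$ (Lemma~\ref{lemma_comp_deph}), and that each of $\pazocal{E}_\lambda$, $\pazocal{D}_\gamma$, $\pazocal{N}_{\lambda,\gamma}$ maps $\HH_d$ into itself (Lemma~\ref{qudit restriction-1}), so that all these identities descend to the qu$d$it restrictions.

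For Fact~1, note that whenever $\lambda'\le\lambda$ one has $\pazocal{N}_{\lambda',\gamma}=\pazocal{N}_{\lambda,\gamma}\circ\pazocal{E}_{\lambda'/\lambda}$, since $\pazocal{E}_\lambda\circ\pazocal{E}_{\lambda'/\lambda}=\pazocal{E}_{\lambda'}$ and loss commutes with dephasing; this identity descends to $\HH_d$. Hence if $\pazocal{N}^{(d)}_{\lambda,\gamma}$ is anti-degradable then so is $\pazocal{N}^{(d)}_{\lambda',\gamma}$ by Lemma~\ref{lemma_comp_antideg}, so the set of transmissivities for which $\pazocal{N}^{(d)}_{\lambda,\gamma}$ is anti-degradable is a down-set. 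It is nonempty (it contains $\lambda=0$, the constant channel to the vacuum, which is entanglement breaking) and closed, because $\lambda\mapsto C(\pazocal{N}^{(d)}_{\lambda,\gamma})$ is continuous and the set of two-extendible states is closed in finite dimension; therefore it equals $[0,\lambda_d(\gamma)]$ with the maximum attained, which is Fact~1. Fact~2 follows the same pattern from $\pazocal{N}_{\lambda,\gamma_2}=\pazocal{N}_{\lambda,\gamma_1}\circ\pazocal{D}_{\gamma_2-\gamma_1}$ for $\gamma_2\ge\gamma_1$: anti-degradability at $\gamma_1$ propagates to $\gamma_2$, so $\lambda_d$ is monotonically increasing in $\gamma$. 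Fact~4 is immediate from the restriction principle, since $\pazocal{N}^{(d)}_{\lambda,\gamma}$ is the restriction of $\pazocal{N}^{(d+1)}_{\lambda,\gamma}$ to $\HH_d\subset\HH_{d+1}$, so anti-degradability of the latter forces that of the former, giving $\lambda_{d+1}(\gamma)\le\lambda_d(\gamma)$. Fact~3 is read off from Proof~3 of Theorem~\ref{theorem_not_antideg}, which shows $\pazocal{N}^{(2)}_{\lambda,\gamma}$ is anti-degradable exactly when $\lambda\le\frac{1}{1+e^{-\gamma}}$, whence $\lambda_2(\gamma)=\frac{1}{1+e^{-\gamma}}$ by Fact~1. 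Finally Fact~5 combines these: for $\lambda\le\frac12$ the channel $\pazocal{N}_{\lambda,\gamma}$---and hence its restriction---is anti-degradable by condition~(i) of Theorem~\ref{main-result}, giving $\lambda_d(\gamma)\ge\frac12$, while Facts~4 and~3 give $\lambda_d(\gamma)\le\lambda_2(\gamma)=\frac{1}{1+e^{-\gamma}}$ for $d\ge2$.

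The crux is Fact~6. By Facts~3 and~4 we already have $\lambda_3(\gamma)\le\frac{1}{1+e^{-\gamma}}$, so by Fact~1 it suffices to prove that $\pazocal{N}^{(3)}_{\lambda,\gamma}$ is anti-degradable at the endpoint $\lambda=\frac{1}{1+e^{-\gamma}}$ whenever $e^{-\gamma}\le\sqrt2-1$. Equivalently, I must exhibit a two-extension on $B$ of the (now genuinely finite-dimensional) Choi state $C(\pazocal{N}^{(3)}_{\lambda,\gamma})$. The plan is to reuse the ansatz of~\eqref{educated_guess}, restricted to $m,n\in\{0,1,2\}$, namely a tripartite qutrit operator $\tau_{AB_1B_2}$ supported on $\ketbra{m}{n}_A\otimes\ketbra{m-\ell_1}{n-\ell_2}_{B_1}\otimes\ketbra{\ell_1+k}{\ell_2+k}_{B_2}$: tracing out $B_2$ forces $\ell_1=\ell_2$ while tracing out $B_1$ forces $m-\ell_1=n-\ell_2$, and matching both reduced states to $C(\pazocal{N}^{(3)}_{\lambda,\gamma})$ (whose entries are given explicitly by Lemma~\ref{App_remark_commutation}) pins down the coefficients up to the residual freedom carried by the genuinely tripartite ``$k\neq0$'' terms. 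The remaining task is to choose these so that $\tau_{AB_1B_2}\ge0$.

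I expect this positivity check to be the main obstacle, but in the qutrit case it collapses to a small problem: after substituting $\lambda=\frac{1}{1+e^{-\gamma}}$ (so that $\frac{2\lambda-1}{\lambda}=1-e^{-\gamma}$ and $\frac{1-\lambda}{\lambda}=e^{-\gamma}$), the constraints leave only a low-dimensional block whose positive semi-definiteness must be verified, and a direct computation shows this reduces to a single scalar semidefinite condition that holds precisely when $e^{-\gamma}\le\sqrt2-1$ (the endpoint $\lambda=\frac{1}{1+e^{-\gamma}}=\frac{1}{\sqrt2}$ occurring exactly at $e^{-\gamma}=\sqrt2-1$). Together with the reverse inequality from Facts~3 and~4, this yields the claimed equality $\lambda_3(\gamma)=\frac{1}{1+e^{-\gamma}}$. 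The delicate point is confirming that the free tripartite coefficients can be tuned to make the block positive right up to the stated threshold rather than a worse one; this is exactly where the explicit finite-dimensional computation does the real work, and an alternative is to certify two-extendibility of $C(\pazocal{N}^{(3)}_{\lambda,\gamma})$ directly via the associated semidefinite feasibility problem and check that its boundary of feasibility in $(\lambda,\gamma)$ is $\lambda=\frac{1}{1+e^{-\gamma}}$ over the range $e^{-\gamma}\le\sqrt2-1$.
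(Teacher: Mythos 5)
Your treatment of Facts~1--5 is essentially the paper's own argument: Fact~1 and Fact~2 via the composition rules $\pazocal{E}_{\lambda_1}\circ\pazocal{N}_{\lambda_2,\gamma}=\pazocal{N}_{\lambda_1\lambda_2,\gamma}$ and $\pazocal{D}_{\gamma_1}\circ\pazocal{N}_{\lambda,\gamma_2}=\pazocal{N}_{\lambda,\gamma_1+\gamma_2}$ together with Lemma~\ref{lemma_comp_antideg}, Fact~3 from Proof~3 of Theorem~\ref{theorem_not_antideg}, Fact~4 from the restriction principle, and Fact~5 by combining these (your lower bound via Theorem~\ref{main-result}(i) is a harmless variant of the paper's route through $\lambda_d(0)\ge\frac12$ and Fact~2); your added remark on closedness and attainment of the maximum in the definition of $\lambda_d(\gamma)$ is correct and if anything slightly more careful than the paper.

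The gap is Fact~6, which is the only nontrivial content of the lemma, and your proposal does not prove it. You correctly reduce the claim to exhibiting a two-extension of $C(\pazocal{N}^{(3)}_{\lambda,\gamma})$ at the endpoint $\lambda=\frac{1}{1+e^{-\gamma}}$, but then you only assert that ``a direct computation shows this reduces to a single scalar semidefinite condition that holds precisely when $e^{-\gamma}\le\sqrt2-1$,'' while simultaneously conceding that the delicate point is whether the residual coefficients can be tuned so that positivity survives exactly up to that threshold --- i.e.\ you defer precisely the step that carries the proof. The paper does this work explicitly: it imposes $B_1\leftrightarrow B_2$ swap symmetry and a support pattern on $\tilde\rho_{AB_1B_2}$, writes out the resulting $7\times7$ matrix of nonzero matrix elements (in the basis $\ket{0\,0\,0},\ket{1\,1\,0},\ket{1\,1\,1},\ket{2\,1\,1},\ket{2\,2\,0},\ket{2\,2\,1},\ket{2\,2\,2}$), verifies by the duplicate-column/zero-column reductions that positivity of $\tilde\rho_{AB_1B_2}$ is equivalent to positivity of this matrix, checks that both partial traces reproduce the Choi state, and shows by direct calculation that the matrix is positive semi-definite if and only if $\lambda\ge\frac{1}{\sqrt2}$ (the binding condition coming from the $2\times2$ block on $\{\ket{1\,1\,1},\ket{2\,2\,1}\}$). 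Without producing such an explicit extension and the exact positivity threshold, your argument establishes only $\lambda_3(\gamma)\le\frac{1}{1+e^{-\gamma}}$; and the fallback you suggest --- certifying two-extendibility numerically via the SDP --- cannot yield the exact equality claimed in Fact~6, only numerical evidence for it.
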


\begin{proof}

\emph{Fact 1.}
It suffices to show that for any $\gamma\ge0$ and $\lambda,\lambda'\in[0,1]$ with $\lambda'<\lambda$, if $\pazocal{N}^{(d)}_{\lambda,\gamma}$ is anti-degradable, then so is $\pazocal{N}^{(d)}_{\lambda',\gamma}$. To show this, we exploit the composition rule 
\bb
\pazocal{E}_{\lambda_1}\circ\pazocal{N}_{\lambda_2,\gamma}=\pazocal{N}_{\lambda_1\lambda_2,\gamma}\qquad\forall\,\lambda_1,\lambda_2\in[0,1]\,,
\ee
as established by Lemma~\ref{comp_rule_bosonic_loss_deph}, implying that the channel $\pazocal{N}^{(d)}_{\lambda',\gamma}$ can be written as the composition between $\pazocal{N}^{(d)}_{\lambda,\gamma}$ and another channel. Consequently, Lemma~\ref{lemma_comp_antideg} concludes the proof.

 \medskip
\emph{Fact 2.}
	Analogously to Fact 1, it suffices to show that for any $\lambda\in[0,1]$ and for any $\gamma'\geq\gamma\ge 0$, if $\pazocal{N}^{(d)}_{\lambda,\gamma}$ is anti-degradable, then so is $\pazocal{N}^{(d)}_{\lambda,\gamma'}$. This follows from the composition rule 
\bb
\D_{\gamma_1}\circ\pazocal{N}_{\lambda,\gamma_2}=\pazocal{N}_{\lambda,\gamma_1+\gamma_2}\qquad\forall\,\gamma_1,\gamma_2\ge0\,,
\ee
as proved in Lemma~\ref{comp_rule_bosonic_loss_deph}. Furthermore, Lemma~\ref{lemma_comp_antideg} concludes the proof.

\medskip
\emph{Fact 3.} This has already been proved in the Proof 3 of Theorem~\ref{theorem_not_antideg}.

\medskip
\emph{Fact 4.}
	This follows from the observation that for all $d'\ge d$, if $\pazocal{N}_{\lambda,\gamma}^{(d')}$ is anti-degradable, then so is $\pazocal{N}_{\lambda,\gamma}^{(d)}$. 

\medskip
\emph{Fact 5.}
	The upper bound $\lambda_d(\gamma)\le\frac{1}{1+e^{-\gamma}}$ follows from Fact 3 and Fact 4. Moreover, since the pure-loss channel $\pazocal{E}_{\lambda}$ is anti-degradable if and only if $\lambda\le\frac{1}{2}$, Fact 4 implies that $\lambda_d(0)\ge \frac{1}{2}$ (more specifically, one can also show that $\lambda_d(0)=\frac{1}{2}$). Consequently, Fact 2 concludes the proof. 

\medskip
\emph{Fact 6.}
	This proof relies on the equivalence between anti-degradability of a channel and two-extendibility of its Choi state, as established in Lemma~\ref{lemma_infinite_extendibility}. Let $\lambda$ and $\gamma$ be such that $\lambda=\frac{1}{1+e^{-\gamma}}$ and $e^{-\gamma}\le \sqrt{2}-1$, implying that ~$\lambda\ge \frac{1}{\sqrt{2}}$. By using Lemma~\ref{App_remark_commutation}, we obtain the Choi state of $\pazocal{N}^{(3)}_{\lambda,\gamma}$ as follows:
 \begin{align*}
		\Id_A\otimes \pazocal{N}_{\lambda,\gamma}(\ketbra{\Phi_3})=\frac{1}{3}\sum_{m=0}^{2}\sum_{n=0}^{2}\sum_{\ell=0}^{\min(m,n)}e^{-\frac{\gamma}{2}(m-n)^2}\sqrt{\binom{m}{\ell}\binom{n}{\ell}}\lambda^{\frac{m+n}{2}-\ell}(1-\lambda)^l\ketbra{m}{n}_{A}\otimes\ketbra{m-\ell}{n-\ell}_{B},
 \end{align*}
 where $\Phi_3$ is the maximally entangled state of Schmidt rank $3$. We define a two-extension $\tilde{\rho}_{AB_1B_2}$ of the Choi state by the following conditions. First, $\tilde{\rho}_{AB_1B_2}$ satisfies the following $B_1\leftrightarrow B_2$ symmetry for all $i_1,i_2,i_3,j_1,j_2,j_3\in\{0,1,2\}$:
 \begin{equation}
     \begin{aligned}
         \bra{j_1}_{A}\bra{j_2}_{B_1}\bra{j_3}_{B_2}  \tilde{\rho}_{AB_1B_2}\ket{i_1}_{A}\ket{i_2}_{B_1}\ket{i_3}_{B_2}&=\bra{j_1}_{A}\bra{j_3}_{B_1}\bra{j_2}_{B_2}  \tilde{\rho}_{AB_1B_2}\ket{i_1}_{A}\ket{i_2}_{B_1}\ket{i_3}_{B_2}\\
	\bra{j_1}_{A}\bra{j_2}_{B_1}\bra{j_3}_{B_2}  \tilde{\rho}_{AB_1B_2}\ket{i_1}_{A}\ket{i_2}_{B_1}\ket{i_3}_{B_2}&=\bra{j_1}_{A}\bra{j_2}_{B_1}\bra{j_3}_{B_2}  \tilde{\rho}_{AB_1B_2}\ket{i_1}_{A}\ket{i_3}_{B_1}\ket{i_2}_{B_2}\,.
     \end{aligned}
 \end{equation}
	Furthermore, if $i_1<\max(i_2,i_3)$ or $j_1<\max(j_2,j_3)$, or if $i_1>i_2+i_3$ or $j_1>j_2+j_3$, then $\bra{j_1}_{A}\bra{j_2}_{B_1}\bra{j_3}_{B_2}  \tilde{\rho}_{AB_1B_2}\ket{i_1}_{A}\ket{i_2}_{B_1}\ket{i_3}_{B_2}=0\,.$
	We can thus define $\tilde{\rho}_{AB_1B_2}$ by writing only the matrix elements with respect the set $\{\ket{i_1}_{A}\ket{i_2}_{B_1}\ket{i_3}_{B_2}\}$ with $2\ge i_1\ge i_2\ge i_3\ge0$ such that $i_2+i_3\ge i_1$.
	Hence, in order to fully define $\tilde{\rho}_{AB_1B_2}$, it suffices to write the matrix elements of $\tilde{\rho}_{AB_1B_2}$ with respect to the set 
	$\{$ $\ket{0}_{A}\ket{0}_{B_1}\ket{0}_{B_2}$, $\ket{1}_{A}\ket{0}_{B_1}\ket{0}_{B_2}$, $\ket{1}_{A}\ket{1}_{B_1}\ket{1}_{B_2}$, $\ket{2}_{A}\ket{1}_{B_1}\ket{1}_{B_2}$ , $\ket{2}_{A}\ket{2}_{B_1}\ket{0}_{B_2}$, $\ket{2}_{A}\ket{2}_{B_1}\ket{1}_{B_2}$ , $\ket{2}_{A}\ket{2}_{B_1}\ket{2}_{B_2}$ $\}$. This gives rise to the following $7\times 7$ matrix:

	\[
	\begin{array}{c|ccccccc}
		& 0_A0_{B_1}0_{B_2} & 1_A1_{B_1}0_{B_2} & 1_A1_{B_1}1_{B_2} & 2_A1_{B_1}1_{B_2} & 2_A2_{B_1}0_{B_2} & 2_A2_{B_1}1_{B_2} & 2_A2_{B_1}2_{B_2} \\
		\hline\\
		0_A0_{B_1}0_{B_2} & 1 & \sqrt{1-\lambda} & 0 & \sqrt{2}(1-\lambda) & \frac{(1-\lambda)^2}{\lambda} & 0 & 0  \\\\
		1_A1_{B_1}0_{B_2} & \sqrt{1-\lambda} & 1-\lambda & 0 & \sqrt{2(1-\lambda)^3} & \sqrt{\frac{(1-\lambda)^5}{\lambda^2}} & 0 & 0  \\\\
		1_A1_{B_1}1_{B_2} & 0 & 0 & 2\lambda-1 & 0 & 0 & \frac{2\lambda-1}{\lambda}\sqrt{1-\lambda} & 0  \\\\
		2_A1_{B_1}1_{B_2} & \sqrt{2}(1-\lambda) & \sqrt{2(1-\lambda)^3} & 0 & 2(1-\lambda)^2 & \sqrt{2}\frac{(1-\lambda)^3}{\lambda}  & 0 & 0  \\\\
		2_A2_{B_1}0_{B_2} &  \frac{(1-\lambda)^2}{\lambda} &\sqrt{\frac{(1-\lambda)^5}{\lambda^2}}  & 0 & \sqrt{2}\frac{(1-\lambda)^3}{\lambda} & (1-\lambda)^2 & 0 & 0  \\\\
		2_A2_{B_1}1_{B_2} & 0 & 0 & \frac{2\lambda-1}{\lambda}\sqrt{1-\lambda} & 0 & 0 & 2(1-\lambda)(2\lambda-1) & 0  \\\\
		2_A2_{B_1}2_{B_2} & 0 & 0 & 0 & 0 & 0 & 0 & (2\lambda-1)^2 
	\end{array}
	\]
	One can show by direct calculation that this matrix is positive semi-definite if and only $\lambda\ge \frac{1}{\sqrt{2}}$. Note that the matrix is positive semi-definite if and only if $\tilde{\rho}_{AB_1B_2}$ is positive semi-definite. The latter follows from the following two simple facts: (i) A $n\times n$ symmetric matrix with a duplicate column is positive semi-definite if and only if the $(n-1)\times (n-1)$ matrix obtained by deleting one of the two equal column and the corresponding row is positive semi-definite, and (ii) A $n\times n$ symmetric matrix with a zero column is positive semi-definite if and only if the $(n-1)\times (n-1)$ matrix obtained by deleting such a column and the corresponding row is positive semi-definite.
	One can also verify $\Tr_{B_1}\tilde{\rho}_{AB_1B_2}=\Tr_{B_2}\tilde{\rho}_{AB_1B_2}$, and they are equal to the Choi state of the qutrit restriction with $\lambda=\frac{1}{1+e^{-\gamma}}$. We therefore conclude that the curve $\lambda=\frac{1}{1+e^{-\gamma}}$ provides a necessary and sufficient condition for the anti-degradability of the qutrit channel $\pazocal{N}_{\lambda,\gamma}^{(3)}$ when $\lambda\ge \frac{1}{\sqrt{2}}$, or equivalently when $e^{-\gamma}\le\sqrt{2}-1$.
\end{proof}

\medskip
To numerically compute the quantity $\lambda_d(\gamma)$ given in~\eqref{lambda_d_gamma}, we utilise the equivalence between anti-degradability of a channel and two-extendibility of its Choi state~\cite{Myhr2009}. Specifically,
for small values of $d$, we can determine necessary and sufficient conditions for the anti-degradability of $\pazocal{N}^{(d)}_{\lambda,\gamma}$ by numerically solving the following \emph{semi-definite program}:
\begin{equation}
    \begin{aligned}
        \label{SDP}
&\hspace{0.4cm}\min_{\rho_{AB_1B_2}}1\\&\text{s.t. } \rho_{AB_1B_2}\ge0\,,\\&\quad \Tr[\rho_{AB_1B_2}]=1,\\
&\quad\Tr_{B_2}\left[\rho_{AB_1B_2} \right]= \Id_{A}\otimes\pazocal{N}_{\lambda,\gamma}^{(d)}(\ketbra{\Phi_d}_{AA'})\,,\\
&\quad\Tr_{B_1}\left[\rho_{AB_1B_2} \right]= \Id_{A}\otimes\pazocal{N}_{\lambda,\gamma}^{(d)}(\ketbra{\Phi_d}_{AA'})\,.
    \end{aligned}
\end{equation}
where $\ket{\Phi_d}$ is the maximally entangled state of schmidt rank $d$. 
The channel $\pazocal{N}_{\lambda,\gamma}^{(d)}$ is anti-degradable if and only if the semi-definite program admits a feasible solution. We compute the quantity $\lambda_d(\gamma)$ defined in~\eqref{lambda_d_gamma} by numerically solving the semi-definite program.
%[!h]
\begin{figure}[t!]
	\centering
	\includegraphics[width=0.8\linewidth]{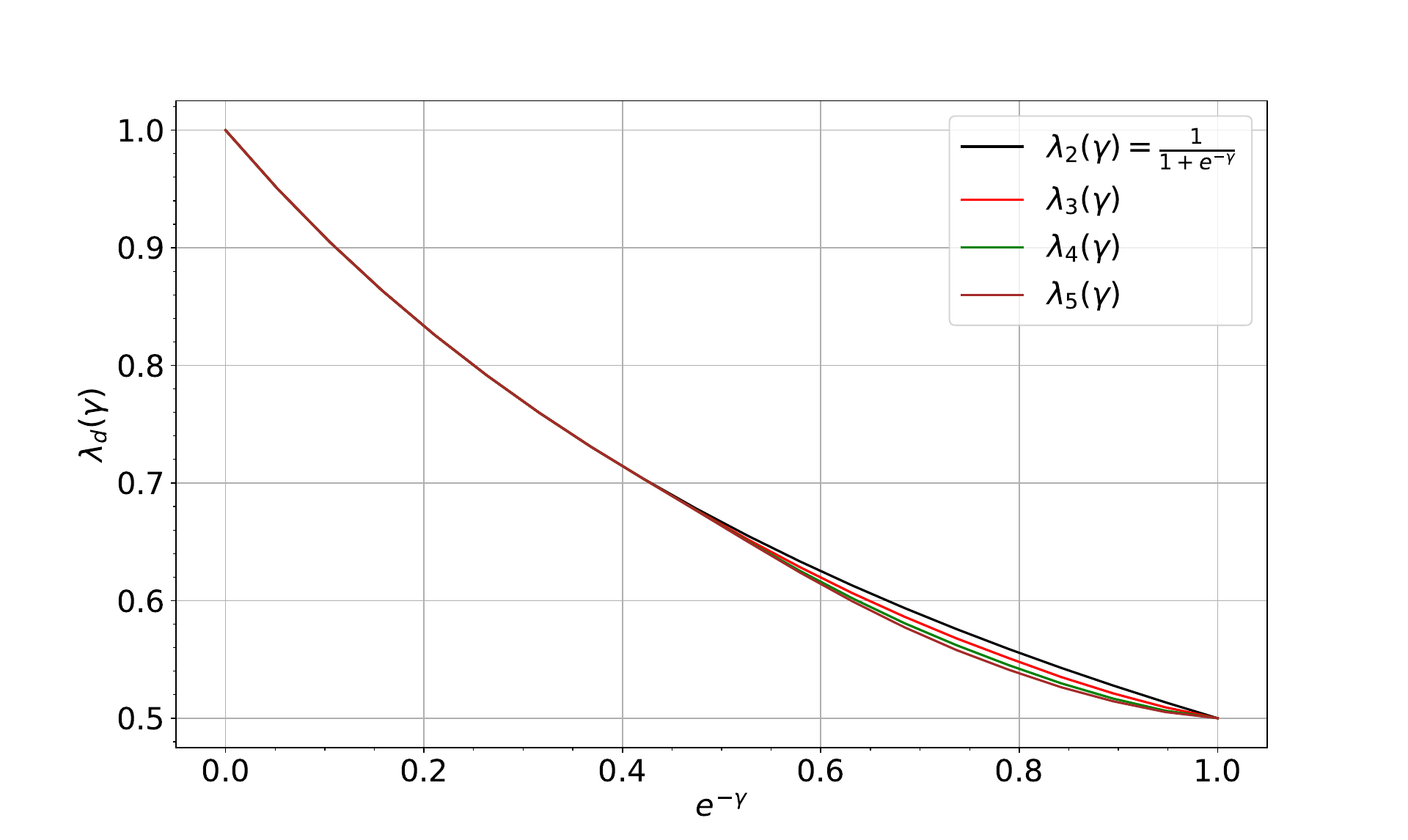}
	\caption{Each curve indicates necessary and sufficient conditions where the qu$d$it restriction $\pazocal{N}_{\lambda,\gamma}^{(d)}$ of the bosonic loss-dephasing channel $\pazocal{N}_{\lambda,\gamma}$ is anti-degradable. In the region above the curve $\lambda_5(\gamma)$, the bosonic loss-dephasing channel is never anti-degradable.}
 \label{fig:sdp_solution}
\end{figure}
The results are plotted with respect to $e^{-\gamma}$ for various values of $d$ in Fig.~\ref{fig:sdp_solution}, showcasing the dependence of $\lambda_d(\gamma)$ on $\gamma$ for small values of $d$. Our numerical analysis reveals that when $\gamma$ is sufficiently large, i.e.~$e^{-\gamma}\lesssim 0.41$ or $\gamma\gtrsim 0.89$, the value of $\lambda_d(\gamma)$ consistently equals $\frac{1}{1+e^{-\gamma}}$ for all examined values of $d$. In particular, this seems to suggest that within this range of the dephasing parameter, $\pazocal{N}_{\lambda,\gamma}$ is anti-degradable if and only if $\lambda\le \frac{1}{1+e^{-\gamma}}$. Based on this numerical exploration, we propose the following conjecture:
\begin{cj}
If $\gamma$ is sufficiently large ($\gamma\gtrsim 0.89$), then the bosonic loss-dephasing channel $\pazocal{N}_{\lambda,\gamma}$ is anti-degradable if and only if $\lambda\le \frac{1}{1+e^{-\gamma}}$.
\end{cj}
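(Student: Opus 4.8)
The plan is to handle the two directions separately. The necessary direction --- that $\lambda>\frac{1}{1+e^{-\gamma}}$ forces $\pazocal{N}_{\lambda,\gamma}$ to fail anti-degradability --- is already secured by Theorem~\ref{theorem_not_antideg}, and it holds for \emph{every} $\gamma\ge 0$, so the restriction to large $\gamma$ plays no role there. The whole content of the conjecture is thus the \emph{sufficiency} of $\lambda\le\frac{1}{1+e^{-\gamma}}$ in the regime $e^{-\gamma}\le\sqrt2-1$. By Fact~1 of Lemma~\ref{lemma_six_fact} (monotonicity of anti-degradability in $\lambda$, which rests on the composition rule of Lemma~\ref{comp_rule_bosonic_loss_deph}), it suffices to prove anti-degradability at the single boundary value $\lambda=\frac{1}{1+e^{-\gamma}}$; monotonicity then fills in the whole region $\lambda\le\frac{1}{1+e^{-\gamma}}$. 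So the task reduces to exhibiting a two-extension of the generalised Choi state $\tau_{AB}$ of $\pazocal{N}_{\lambda,\gamma}$ exactly on the curve $e^{-\gamma}=\frac{1-\lambda}{\lambda}$, for $\lambda\ge\frac{1}{\sqrt2}$.

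The first thing to record is that the Hadamard-map route of Theorem~\ref{main-result} and Theorem~\ref{thm_cond_infinite_A-infinite} provably \emph{cannot} reach this boundary, so a genuinely new construction is unavoidable. On the curve $e^{-\gamma}=\frac{1-\lambda}{\lambda}$ one has $\frac{2\lambda-1}{\lambda}=1-e^{-\gamma}$, and the matrix $A$ of~\eqref{def_a_element_matrix} has $a_{01}=1$, $a_{02}=e^{-\gamma}$, and $a_{12}=\big(e^{-\gamma}+\sqrt2(1-e^{-\gamma})\big)^{-1}$; its top-left $3\times3$ corner has determinant
\begin{equation*}
\det\!\begin{pmatrix} 1 & 1 & e^{-\gamma}\\ 1 & 1 & a_{12}\\ e^{-\gamma} & a_{12} & 1\end{pmatrix} = -\big(a_{12}-e^{-\gamma}\big)^2 < 0 ,
\end{equation*}
since $a_{12}\to\frac{1}{\sqrt2}$ while $e^{-\gamma}\to 0$ as $\gamma$ grows. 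Hence $A\not\ge 0$ on the boundary and the sufficient condition of Theorem~\ref{thm_cond_infinite_A-infinite} fails there; the correct two-extension must differ from the symmetric $H\otimes H$ one.

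The construction I would pursue generalises the explicit qutrit two-extension of Fact~6. That extension is supported on the Fock triples $(i_1,i_2,i_3)$ obeying the triangle-type constraint $\max(i_2,i_3)\le i_1\le i_2+i_3$ together with the $B_1\leftrightarrow B_2$ symmetry, and at the boundary all of its entries become functions of $\lambda$ alone (because $e^{-\gamma}=\frac{1-\lambda}{\lambda}$ is determined by $\lambda$). I would posit a two-extension $\tilde\rho_{AB_1B_2}$ of the \emph{infinite-dimensional} Choi state carrying the same support pattern and symmetry, fix its free matrix elements by imposing $\Tr_{B_2}\tilde\rho_{AB_1B_2}=\Tr_{B_1}\tilde\rho_{AB_1B_2}=\tau_{AB}$, and then certify positivity. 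To guarantee positivity I would try to realise $\tilde\rho_{AB_1B_2}$ physically, as a partial trace of a pure state produced by applying beam-splitter and conditional-displacement isometries to vacua --- the same device that produced $\ket{\phi}_{AB_1B_2C}$ in the proof of Theorem~\ref{main-result} --- but with transmissivities tuned to the boundary so that no Hadamard correction is needed. A useful anchor is the endpoint $\gamma\to\infty$ (boundary $\lambda\to 1$): there $\pazocal{N}_{\lambda,\gamma}=\pazocal{E}_\lambda\circ\pazocal{D}_\gamma$ degenerates to a channel factoring through a Fock-basis measurement, hence is entanglement-breaking, so its Choi state is separable and therefore two-extendible --- fully consistent with the conjectured threshold.

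The hard part will be proving positive semidefiniteness of $\tilde\rho_{AB_1B_2}$ for all $\lambda\ge\frac{1}{\sqrt2}$ in one stroke rather than dimension by dimension. The support constraint does not obviously split the extension into uniformly bounded pieces the way it did for $d=3$, where deleting duplicate and zero columns collapsed everything onto a single $7\times7$ matrix. The central technical step is therefore to find a conserved label that block-diagonalises $\tilde\rho_{AB_1B_2}$ into finite blocks admitting a closed-form positivity certificate, and to show that the positive-semidefiniteness threshold of each block is exactly $\lambda=\frac{1}{\sqrt2}$ (matching Fact~6) rather than drifting upward with block size. Equivalently, one must prove that the qu$d$it thresholds $\lambda_d(\gamma)$, which are non-increasing in $d$ by Fact~4 and bounded above by $\frac{1}{1+e^{-\gamma}}$ by Fact~5, do not strictly decrease below $\frac{1}{1+e^{-\gamma}}$ as $d\to\infty$. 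A concluding limiting argument --- trace-norm convergence of the restricted Choi states and of their extensions --- would then transfer the result from the restrictions $\pazocal{N}_{\lambda,\gamma}^{(d)}$ to the infinite-dimensional channel $\pazocal{N}_{\lambda,\gamma}$.
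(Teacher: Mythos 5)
The statement you are trying to prove is a \emph{conjecture} in the paper, not a theorem: the authors support it only by (a) Theorem~\ref{theorem_not_antideg}, which gives the ``only if'' direction for every $\gamma$, (b) Fact~6 of Lemma~\ref{lemma_six_fact}, an explicit two-extension showing that the \emph{qutrit} restriction $\pazocal{N}^{(3)}_{\lambda,\gamma}$ is anti-degradable exactly up to $\lambda=\frac{1}{1+e^{-\gamma}}$ when $e^{-\gamma}\le\sqrt{2}-1$, and (c) numerical solutions of the semi-definite program~\eqref{SDP} indicating $\lambda_d(\gamma)=\frac{1}{1+e^{-\gamma}}$ for the small values of $d$ they tested. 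There is no proof in the paper to compare against. Your proposal is organised correctly: the reduction of the problem to sufficiency on the boundary curve via Theorem~\ref{theorem_not_antideg} and the $\lambda$-monotonicity coming from Lemma~\ref{comp_rule_bosonic_loss_deph} and Lemma~\ref{lemma_comp_antideg} is sound, and your computation showing that the $3\times3$ corner of the Hadamard matrix $A$ of~\eqref{def_a_element_matrix} has determinant $-\left(a_{12}-e^{-\gamma}\right)^2<0$ on the curve $e^{-\gamma}=\frac{1-\lambda}{\lambda}$ is a correct and genuinely useful observation: it proves that the route of Theorem~\ref{main-result}/Theorem~\ref{thm_cond_infinite_A-infinite} cannot close the conjecture, consistent with the crossed-red region of Fig.~\ref{Fig_main} lying strictly below the purple curve.

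However, the proposal does not prove the statement; the core of the conjecture is exactly the step you defer. You posit an extension $\tilde\rho_{AB_1B_2}$ with the support pattern and $B_1\leftrightarrow B_2$ symmetry of the Fact-6 construction, but you neither fix its entries nor supply a positivity certificate; the suggestion to realise it physically ``with transmissivities tuned to the boundary so that no Hadamard correction is needed'' is not substantiated, and your own determinant computation shows that the natural symmetric correction fails there, so it is not clear such a tuning exists. Equivalently, the claim that the thresholds $\lambda_d(\gamma)$ do not drift below $\frac{1}{1+e^{-\gamma}}$ as $d\to\infty$ is precisely what the paper can only check numerically (Fig.~\ref{fig:sdp_solution}); asserting it is restating the conjecture, not proving it. Two further points would need care even granting that step: the passage from anti-degradability of all restrictions $\pazocal{N}^{(d)}_{\lambda,\gamma}$ to anti-degradability of $\pazocal{N}_{\lambda,\gamma}$ requires a compactness/tightness argument for the sequence of two-extensions (note Corollary~\ref{cor-antideg} only gives the opposite implication), and your $\gamma\to\infty$ ``entanglement-breaking anchor'' carries no weight at finite $\gamma$, since Theorem~\ref{thm_eb_property} shows $\pazocal{N}_{\lambda,\gamma}$ is never entanglement breaking for $\lambda>0$. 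In short: a reasonable research plan, consistent with the paper's evidence, but with the decisive positivity argument missing.
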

Notably, from Fig.~\ref{fig:sdp_solution} we observe that if $\lambda$ and $\gamma$ satisfy $\lambda=\frac{1}{1+e^{-\gamma}}$ with $e^{-\gamma}\gtrsim 0.41$ (or $\gamma\lesssim 0.89$), then $\pazocal{N}_{\lambda,\gamma}$ is not anti-degradable.

\section{Generalisation of our methods to general bosonic dephasing channels}\label{envision_gen}
In Theorem~\ref{main-result} we introduced a method to analyse anti-degradability of the bosonic loss-dephasing channel. In this section, we show that this method can be applied also to analyse the anti-degradability of the composition between a \emph{general} bosonic dephasing channel and the pure-loss channel channel.

Given a probability distribution $p(\cdot)$ over $\mathbb{R}$, the associated \emph{general bosonic dephasing channel} is given by
\bb
    \pazocal{D}^{(p)}(X)\coloneqq  \int_{-\infty}^{\infty}\mathrm{d}\phi\, p(\phi)\,e^{i\phi \hat{a}^\dagger \hat{a}}\,X\, e^{-i\phi \hat{a}^\dagger \hat{a}}\,.
\ee
If $p(\phi)$ is the Gaussian distribution $p(\phi)\coloneqq  \frac{1}{\sqrt{2\pi\gamma}}  e^{-\frac{\phi^2}{2\gamma}}$, the general bosonic dephasing channel $\pazocal{D}^{(p)}$ exactly coincides with the bosonic dephasing channel $\pazocal{D}_\gamma$ analysed in this work. The action of $\pazocal{D}^{(p)}$ on operators of the form $\ketbra{n}{m}$ is given by
\bb
    \pazocal{D}^{(p)}(\ketbra{n}{m})=\tilde{p}(n-m)\ketbra{n}{m}\,,
\ee
where $\tilde{p}$ is the Fourier transform of the probability distribution $p$, i.e.
\bb
    \tilde{p}(k)\coloneqq \int_{-\infty}^{\infty}\mathrm{d}\phi\, p(\phi)\,e^{ik\phi}\,.
\ee
Let $\NN_\lambda^{(p)}$ be the composition between such a general bosonic dephasing channel $\pazocal{D}^{(p)}$ and the pure-loss channel, i.e.
\bb\label{def_arb_deph_loss}
    \NN_\lambda^{(p)} \coloneqq \pazocal{D}^{(p)}\circ \pazocal{E}_\lambda=\pazocal{E}_\lambda\circ \pazocal{D}^{(p)}\,.
\ee
We can apply the exact same method that we have introduced in the proof of Thereom~\ref{main-result} in order to analyse the anti-degradability of $\NN_\lambda^{(p)}$. The key observation is that in the proof of Theorem~\ref{main-result} we did not use the explicit expression of the channel $\pazocal{D}^{(p)}$ before stating \eqref{state_sym_ext}. This simple observation allows us to generalise our results to arbitrary bosonic dephasing channels, as stated in the following theorem.
\begin{theorem}[(Sufficient condition on the anti-degradability of the composition between a general bosonic dephasing channel and pure-loss channel)]Let $\lambda\in[0,1)$ and let $p(\cdot)$ be a probability distribution over $\mathbb{R}$. Let $A = (a_{mn})_{m,n \in \mathbb{N}}$ be the infinite matrix whose components are defined by
\bb
    a_{mn}\coloneqq\frac{\tilde{\phi}(n-m)}{\sum_{j=0}^{\min(n,m)} \sqrt{\,\pazocal{B}_j\!\left(n,\frac{2\lambda-1}{\lambda}\right)\,\pazocal{B}_j\!\left(m,\frac{2\lambda-1}{\lambda}\right)}},\quad\forall\,n,m\in\N\,.
\ee
The channel $\pazocal{N}_{\lambda}^{(p)}$ is anti-degradable as long as either $\lambda\in[0,\frac{1}{2}]$ or the infinite matrix $A$ is positive semi-definite.
\end{theorem}

\section{Coherence preservation of the bosonic loss-dephasing channel}\label{section_coherence_preserv}
In this section we use the notation introduced in Section~\ref{subsec_eb}.
\begin{theorem}\label{thm_two_capacities}
    Let $\gamma\ge0$ and $\lambda\in(0,1]$. For any energy constraint $N_s>0$, the energy-constrained two-way quantum and secret-key capacities of the bosonic loss-dephasing channel 
    are strictly positive, $ K(\pazocal{N}_{\lambda,\gamma},N_s)\ge Q_2(\pazocal{N}_{\lambda,\gamma},N_s)>0$. 
\end{theorem}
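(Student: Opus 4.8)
The plan is to combine the capacity chain of Lemma~\ref{lemma_Ed_Q2_link}, namely $K(\pazocal{N}_{\lambda,\gamma},N_s)\ge Q_2(\pazocal{N}_{\lambda,\gamma},N_s)\ge E_d\!\left(\Id_A\otimes\pazocal{N}_{\lambda,\gamma}(\rho_{AA'})\right)$ valid for any admissible input $\rho_{AA'}$, with a single well-chosen input whose image is a bipartite state of strictly positive distillable entanglement. Since the two-way capacities are monotonically non-decreasing in the energy budget, it loses no generality to assume $N_s<1$. I would then take as input one half of the rank-two entangled state $\ket{\Psi_{N_s}}_{AA'}\coloneqq\sqrt{1-N_s}\,\ket{0}_A\ket{0}_{A'}+\sqrt{N_s}\,\ket{1}_A\ket{1}_{A'}$, whose reduced state on $A'$ is $(1-N_s)\ketbra{0}+N_s\ketbra{1}$ and hence has mean photon number exactly $N_s$, satisfying the constraint. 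Because $\ket{\Psi_{N_s}}$ is supported on $\operatorname{Span}\{\ket{0},\ket{1}\}$, Lemma~\ref{qudit restriction-1} ensures that $\rho_{AB}\coloneqq\Id_A\otimes\pazocal{N}_{\lambda,\gamma}(\ketbra{\Psi_{N_s}})$ is effectively a two-qubit state, so that all the two-qubit machinery applies.

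The next step is to write $\rho_{AB}$ explicitly. Using Lemma~\ref{App_remark_commutation} one has $\pazocal{N}_{\lambda,\gamma}(\ketbra{0})=\ketbra{0}$, $\pazocal{N}_{\lambda,\gamma}(\ketbra{1})=\lambda\ketbra{1}+(1-\lambda)\ketbra{0}$, and $\pazocal{N}_{\lambda,\gamma}(\ketbra{0}{1})=e^{-\gamma/2}\sqrt{\lambda}\,\ketbra{0}{1}$. Setting $b\coloneqq e^{-\gamma/2}\sqrt{N_s(1-N_s)\lambda}$, in the ordered basis $\{\ket{00},\ket{01},\ket{10},\ket{11}\}$ (system $A$ first) the only nonzero entries of $\rho_{AB}$ are the diagonal $\big(1-N_s,\,0,\,N_s(1-\lambda),\,N_s\lambda\big)$ together with $\bra{00}\rho_{AB}\ket{11}=\bra{11}\rho_{AB}\ket{00}=b$.

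Finally I would take the partial transpose on $B$. The coherence $b$ is moved from the $\{\ket{00},\ket{11}\}$ sector into the $\{\ket{01},\ket{10}\}$ sector, where it produces the decoupled $2\times2$ block $\lsmatrix 0 & b\\ b & N_s(1-\lambda)\rsmatrix$, whose determinant equals $-b^2=-e^{-\gamma}N_s(1-N_s)\lambda<0$ for every $\lambda>0$ and $N_s\in(0,1)$; the complementary $\{\ket{00},\ket{11}\}$ block stays diagonal and positive. Hence $\rho_{AB}$ has non-positive partial transpose~\cite{PeresPPT}, and since a two-qubit state is distillable exactly when it is NPT~\cite{2-qubit-distillation}, we get $E_d(\rho_{AB})>0$. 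The Lemma~\ref{lemma_Ed_Q2_link} chain then yields $K(\pazocal{N}_{\lambda,\gamma},N_s)\ge Q_2(\pazocal{N}_{\lambda,\gamma},N_s)\ge E_d(\rho_{AB})>0$, and as a by-product $\rho_{AB}$ is entangled, so $\pazocal{N}_{\lambda,\gamma}$ is not entanglement breaking. There is no deep obstacle in this argument; the only points requiring care are the bookkeeping of the partial transpose on the correct subsystem and the justification that restricting to $N_s<1$ is harmless by monotonicity of the capacities in the energy constraint.
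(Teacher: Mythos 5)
Your proposal is correct and follows essentially the same route as the paper's proof: the same input state $\ket{\Psi_{N_s}}$, the same explicit two-qubit output matrix, the NPT/distillability criterion for two-qubit states, the capacity chain of Lemma~\ref{lemma_Ed_Q2_link}, and the reduction to $N_s<1$ by monotonicity. The matrix entries and the negative-determinant block in the partial transpose match the paper's computation, so no gaps remain.
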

\begin{proof}
    We begin by assuming $N_s\in(0,1)$ and defining the two-mode state 
    \bb
    \ket{\Psi_{N_s}}_{AA'}\coloneqq \sqrt{1-N_s}\ket{00}_{AA'}+\sqrt{N_s}\ket{11}_{AA'}\,,
    \ee
    where the mean photon number of $A'$ system is equal to $N_s$. By exploiting Lemma~\ref{App_remark_commutation}, one can observe that the state
    \bb\label{state_2_qubit}
    \rho_{AB}\coloneqq \Id_{A}\otimes\pazocal{N}_{\lambda,\gamma}(\ketbra{\Psi_{N_s}}_{AA'})
    \ee
    is effectively a two-qubit state and its matrix with respect to the computational basis $\{\ket{00}, \ket{01}, \ket{10}, \ket{11}\}$ is given by: 
    \begin{align*}
         \rho_{AB}=&\left(\begin{matrix} 1-N_s\quad & 0& 0& \sqrt{(1-N_s)N_se^{-\gamma}\lambda}\\ 0 & 0\quad & 0\quad& 0\\
         0 & 0\quad & (1-\lambda)N_s\quad & 0\\ \sqrt{(1-N_s)N_se^{-\gamma}\lambda} & 0 & 0 &\lambda N_s\quad \end{matrix}\right) .
    \end{align*}
   If we perform partial transpose with respect to the system $B$, we find the matrix
        \begin{align*}
             (\rho_{AB})^{\intercal_B}=&\left(\begin{matrix} 1-N_s\quad & 0& 0& 0\\ 0 & 0\quad & \sqrt{(1-N_s)N_se^{-\gamma}\lambda}\quad& 0\\
         0 & \sqrt{(1-N_s)N_se^{-\gamma}\lambda}\quad & (1-\lambda)N_s\quad & 0\\ 0 & 0 & 0 &\lambda N_s\quad \end{matrix}\right) \,,
                 \end{align*}
                     whose eigenvalues are not all positive, i.e.~the state $\rho_{AB}$ is not PPT~\cite{2-qubit-distillation}. By exploiting the fact that any two-qubit state is distillable if and only if it is not PPT~\cite{2-qubit-distillation}, it follows that $ E_d\left(\Id_{A}\otimes\pazocal{N}_{\lambda,\gamma}(\ketbra{\Psi_{N_s}}_{AA'}) \right)>0\,$, where $E_d$ is the distillable entanglement. On the other hand, from Lemma~\ref{lemma_Ed_Q2_link} we have that 
                     \bb
                        K(\pazocal{N}_{\lambda,\gamma},N_s)\ge Q_2(\pazocal{N}_{\lambda,\gamma},N_s)\ge E_d\!\left(\Id_{A}\otimes\pazocal{N}_{\lambda,\gamma}(\ketbra{\Psi_{N_s}}_{AA'}) \right)\,.
                    \ee
                    This concludes the proof for $N_s\in(0,1)$. Since the energy-constrained capacities are monotonically non-decreasing in the energy constraint $N_s$, the proof follows for any $N_s>0$. 
\end{proof}
Since the state $\Id_{A} \otimes \pazocal{N}_{\lambda,\gamma}(\ketbra{\Psi_{N_s}}_{AA'})$ in~\eqref{state_2_qubit} is always entangled, it follows that the bosonic loss-dephasing channel $\pazocal{N}_{\lambda,\gamma}$ is never entanglement breaking~\cite{MARK}. We state this formally in the following theorem.

\begin{theorem}\label{thm_eb_property}
For all $\gamma\ge0$ and all $\lambda\in(0,1]$, the bosonic loss-dephasing channel $\pazocal{N}_{\lambda,\gamma}$ is not entanglement breaking.
\end{theorem}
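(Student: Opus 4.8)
The plan is to reduce the statement to a separability question about a single two-qubit output state, which was already analysed in the proof of Theorem~\ref{thm_two_capacities}. Recall that a quantum channel $\pazocal{N}$ is entanglement breaking if and only if $\Id_A\otimes\pazocal{N}(\rho_{AA'})$ is separable for \emph{every} bipartite input $\rho_{AA'}$~\cite{MARK}. Contrapositively, to show that $\pazocal{N}_{\lambda,\gamma}$ is \emph{not} entanglement breaking it suffices to exhibit a single input state whose image under $\Id_A\otimes\pazocal{N}_{\lambda,\gamma}$ is entangled.

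First I would take the input $\ket{\Psi_{N_s}}_{AA'}$ with $N_s\in(0,1)$ defined in the proof of Theorem~\ref{thm_two_capacities}, whose reduced input on $A'$ is supported on $\mathrm{Span}\{\ket{0},\ket{1}\}$. By Lemma~\ref{qudit restriction-1} the output of $\pazocal{N}_{\lambda,\gamma}$ then remains in the same two-dimensional subspace, so the output state $\rho_{AB}\coloneqq\Id_A\otimes\pazocal{N}_{\lambda,\gamma}(\ketbra{\Psi_{N_s}}_{AA'})$ is genuinely a two-qubit state; its explicit $4\times4$ matrix was computed in Theorem~\ref{thm_two_capacities}. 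Second, I would invoke the Peres--Horodecki criterion, which states that a two-qubit state is separable if and only if its partial transpose is positive semi-definite~\cite{PeresPPT,2-qubit-distillation}. In the proof of Theorem~\ref{thm_two_capacities} the partial transpose $(\rho_{AB})^{\intercal_B}$ was already shown to fail positivity for all $\lambda\in(0,1]$ and $\gamma\ge0$; indeed its relevant off-diagonal $2\times2$ block has determinant $-(1-N_s)N_s e^{-\gamma}\lambda<0$, hence a negative eigenvalue. Therefore $\rho_{AB}$ is entangled.

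Combining these two steps, the single input $\ket{\Psi_{N_s}}_{AA'}$ produces an entangled output, so $\pazocal{N}_{\lambda,\gamma}$ cannot be entanglement breaking, which establishes the claim for all $\lambda\in(0,1]$ and $\gamma\ge0$. I do not expect a substantive obstacle here: the required computation is inherited verbatim from Theorem~\ref{thm_two_capacities}, and the only conceptual ingredient is the equivalence between separability and positivity of the partial transpose in the $2\times2$ case. The one point genuinely worth checking is that the output really lives in a two-qubit system --- guaranteed by Lemma~\ref{qudit restriction-1} --- so that this low-dimensional separability criterion is applicable; without that reduction one could not conclude entanglement from non-positivity of the partial transpose alone.
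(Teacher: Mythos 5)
Your proposal is correct and matches the paper's own argument: the paper likewise deduces non-entanglement-breaking directly from the fact that $\rho_{AB}=\Id_A\otimes\pazocal{N}_{\lambda,\gamma}(\ketbra{\Psi_{N_s}}_{AA'})$, already computed in the proof of Theorem~\ref{thm_two_capacities}, is an effectively two-qubit state with non-positive partial transpose and is therefore entangled. The only remark is that the two-qubit reduction, while correctly justified via Lemma~\ref{qudit restriction-1}, is not strictly needed for this particular conclusion, since a non-positive partial transpose certifies entanglement in any dimension.
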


In the following subsection we will find an explicit strictly positive lower bound on the two-way capacities of the bosonic loss-dephasing channel.

\subsection{Multi-rail multi-photon encoding}
Let $\Pi(N,k)$ be the set of partitions of $N$ objects into $k$ (possibly empty) parts. It is well known that $|\Pi(N,k)| = \binom{N+k-1}{k-1} = \binom{N+k-1}{N}$. Clearly, we can think of each $p\in \Pi(N,k)$ as a vector in $\N_+^k$, also denoted by $p$, with the constraint that $\sum_{\ell=1}^k p_\ell = N$. For each $p\in \Pi(N,k)$, define the associate $k$-mode Fock state as
\bb
\ket{\psi_p} \coloneqq \ket{p_1}\ldots \ket{p_k}\, .
\ee
Note that for $p,q\in \Pi(N,k)$, we have that $\braket{\psi_p}{\psi_q}=\delta_{p,q}$. Let us call
\bb
P_{N,k} \coloneqq \sum_{p\in \Pi(N,k)} \psi_p
\label{projector_N_k}
\ee
the projector onto the $k$-mode subspace of total photon number $N$. Note that the bosonic dephasing channel satisfies that
\bb
\pazocal{D}_\gamma^{\otimes k}\left(\ketbra{\psi_p}{\psi_q}\right) = e^{-\frac{\gamma}{2}\sum_\ell (p_\ell - q_\ell)^2} \ketbra{\psi_p}{\psi_q} = e^{-\frac{\gamma}{2}\|p-q\|^2} \ketbra{\psi_p}{\psi_q} = (K_{N,k,\gamma})_{pq} \ketbra{\psi_p}{\psi_q}\, ,
\ee
where $K_{N,k,\gamma}$ is the $\binom{N+k-1}{N}\times \binom{N+k-1}{N}$ matrix with entries
\bb
(K_{N,k,\gamma})_{pq} \coloneqq e^{-\frac{\gamma}{2}\|p-q\|^2} .
\ee
For any $\binom{N+k-1}{N}$-dimensional state $\sigma$, let us denote as $\overline{\sigma}$ the following isometrically equivalent state
\bb
    \overline{\sigma} \coloneqq \sum_{p,q\in \Pi(N,k)} \sigma_{pq} \ketbra{\psi_p}{\psi_q}\, .
    \label{rail_encoding}
\ee
The state $\overline{\sigma}$, which is termed as the \emph{rail encoding} of $\sigma$, is supported on the subspace of $k$ modes with total photon number equal to $N$. Now, let $\rho$ be a $\binom{N+k-1}{N}$-dimensional state and let us calculate the output of $\pazocal{D}_\gamma^{\otimes k}$ when the input is $\overline{\rho}$:
\bb
\pazocal{D}_\gamma^{\otimes k}\left( \overline{\rho} \right) = \sum_{p,q\in \Pi(N,k)} \rho_{pq} (K_{N,k,\gamma})_{pq} \ketbra{\psi_p}{\psi_q} = \overline{K_{N,k,\gamma} \circ \rho} = \overline{\Theta_{N,k,\gamma}(\rho)}\, ,
\ee
where the operation $\circ$ denotes the element-wise product between matrices and where we have introduced the following Hadamard channel:
\bb
\Theta_{N,k,\gamma}(X) \coloneqq K_{N,k,\gamma}\circ X\, ,
\ee
Since $\pazocal{D}_\gamma^{\otimes k}$ is a (completely) positive map, this in particular shows that $K_{N,k,\gamma} \geq 0$. (This latter statement can also be proved directly with techniques similar to that in the proof of~\cite[Lemma~15]{G-dilatable}.) In practice, the $k$-fold application of the bosonic dephasing channel on the $k$-mode $N$-photon code space behaves as a new Hadamard channel $\Theta_{N,k,\gamma}$ with associated matrix $K_{N,k,\gamma}$.

\subsubsection{Lower bound on the two-way capacity of the bosonic loss-dephasing channel}
 Since under the action of $\NN_{\lambda,\gamma}$ photons can only be lost and never added, and each photon has a probability $\lambda$ of being transmitted, the probability that an $N$-photon state will retain $N$ photons at the output of the channel is exactly $\lambda^N$. \emph{If that happens}, then the state in the code space is effectively left untouched by the loss and only dephased under the action of the Hadamard channel $\Theta_{N,k,\gamma}$.

More formally, from the Kraus representation
\bb
\pazocal{E}_\lambda (X) = \sum_{n=0}^\infty \frac{1}{n!}\, (1-\lambda)^n \lambda^{\frac{a^\dag a}{2}} a^n X (a^\dag)^n \lambda^{\frac{a^\dag a}{2}}
\ee
it is easy to deduce the handy identity
\bb
\pazocal{E}_\lambda^{\otimes k}\big(\overline{\rho}\big) = \lambda^N \overline{\rho} + \left(1-\lambda^N\right) \delta_{N,k,\lambda}\, ,
\label{action_loss_code_space}
\ee
valid for all $\binom{N+k-1}{N}$-dimensional states $\rho$, with the notation of~\eqref{rail_encoding}. 
Here, $\delta_{N,k,\lambda}$ is a suitable $k$-mode state supported on the subspace of total photon number at most $N-1$, and thus $\overline{\rho} \delta_{N,k,\lambda} = \delta_{N,k,\lambda} \overline{\rho} = 0$. In turn, the above identity implies that
\bb
\NN_{\lambda,\gamma}^{\otimes k}\big(\overline{\rho}\big) = \lambda^N\, \overline{K_{N,k,\gamma} \circ \rho} + \left(1-\lambda^N\right)\delta'_{N,k,\lambda,\gamma} = \lambda^N\, \overline{\Theta_{N,k,\gamma} (\rho)} + \left(1-\lambda^N\right)\delta'_{N,k,\lambda,\gamma} \, ,
\ee
where once again $\delta'_{N,k,\lambda,\gamma}$ is a suitable $k$-mode state supported on the subspace of total photon number at most $N-1$.

Therefore, we can use the channel $\NN_{\lambda,\gamma}^{\otimes k}$ to simulate $\Theta_{N,k,\gamma}$ \emph{probabilistically}, with probability $\lambda^N$. The simulation works as follows:
\begin{enumerate}[(i)]
\item The input state $\rho$ is encoded in the $k$-mode $N$-photon subspace according to the mapping $\rho \mapsto \overline{\rho}$.
\item The $k$-mode state $\overline{\rho}$ is sent across $\NN_{\lambda,\gamma}^{\otimes k}$, via $k$ uses of the bosonic loss-dephasing channel.
\item The total photon number is measured at the output. If $N$ photons are found then the simulation is successful, otherwise the protocol is aborted.
\end{enumerate}

A wealth of operational resource inequalities can be deduced from the above considerations. Here we limit ourselves to the observation that the two-way quantum capacity must satisfy
\bb
Q_2(\NN_{\lambda,\gamma}) &\textgeq{(i)} \frac{\lambda^N}{k}\, Q_2(\Theta_{N,k,\gamma})\,.
\ee
Consequently, it holds that
\bb
Q_2(\NN_{\lambda,\gamma})  &\ge \frac{\lambda^N}{k}\, Q_2(\Theta_{N,k,\gamma}) \\&\textgeq{(i)} \frac{\lambda^N}{k}\, I_{\text{coh}}\left(\Id\otimes\Theta_{N,k,\gamma}(\ketbra{\Psi}{\Psi})\right) \\
&\texteq{(iii)} \frac{\lambda^N}{k}\left[\log_2 \binom{N+k-1}{N} - S\left(\binom{N+k-1}{N}^{-1} K_{N,k,\gamma} \right) \right]\,. \\
\ee
Here, in (ii), we used the fact that the two-way quantum capacity of a channel can be lower bounded in terms of the coherent information~\cite{MARK,Sumeet_book} and we introduced the two-qu$d$it maximally entangled state $\ket{\Psi}$ of dimension $d=\binom{N+k-1}{N}$. In (iii), we used the definition of coherent information $I_{\text{coh}}(\rho_{AB})\coloneqq S(\rho_{B})-S(\rho_{AB})$, with $S(\cdot)$ being the von Neumann entropy, and the fact that
\bb
    \Id\otimes\Theta_{N,k,\gamma}(\ketbra{\Psi}{\Psi})&=\frac{1}{{\binom{N+k-1}{N}}}\sum_{p,q\in\Pi(N,k)}\ketbra{p}{q}\otimes\Theta_{N,k,\gamma}(\ketbra{p}{q})\\
    &=\frac{1}{{\binom{N+k-1}{N}}} \sum_{p,q\in\Pi(N,k)} (K_{N,k,\gamma})_{pq}\ketbra{p}{q}\otimes \ketbra{p}{q} \,,
\ee
which implies that the spectrum of $\Id\otimes\Theta_{N,k,\gamma}(\ketbra{\Psi}{\Psi})$ coincides with the spectrum of the matrix $\binom{N+k-1}{N}^{-1}K_{N,k,\gamma}$.
Consequently, we have that
\bb
Q_2(\NN_{\lambda,\gamma})  \ge \max_{N,k\in\N_+}\frac{\lambda^N}{k}\left[\log_2 \binom{N+k-1}{N} - S\left(\binom{N+k-1}{N}^{-1} K_{N,k,\gamma} \right) \right]\,. 
\ee
One can obtain a lower bound on the energy-constrained two-way quantum capacity $Q_2(\NN_{\lambda,\gamma},N_s)$ by restricting the optimisation to the values of $N$ and $k$ such that $\frac{N}{k}\le N_s$. Indeed, note that the rail-encoded state $\bar{\rho}$ satisfies the energy constraint as its mean photon number per mode is $\frac{N}{k}$. In formula, we have that
\bb
    Q_2(\NN_{\lambda,\gamma}, N_s)  \ge \max_{N,k\in\N_+:\, \frac{N}{k}\le N_s}\frac{\lambda^N}{k}\left[\log_2 \binom{N+k-1}{N} - S\left(\binom{N+k-1}{N}^{-1} K_{N,k,\gamma} \right) \right]\,. 
\ee
Note that $\log_2 \binom{N+k-1}{N} - S\left(\binom{N+k-1}{N}^{-1} K_{N,k,\gamma} \right)$ is always positive because $\binom{N+k-1}{N}^{-1} K_{N,k,\gamma}$ is a $\binom{N+k-1}{N}$-dimensional, non-maximally mixed, state and thus its von Neumann entropy is strictly smaller than by $\log_2 \binom{N+k-1}{N}$.
Consequently, we have the following theorem.
\begin{theorem}\label{thm_two_capacities_explicit}
    Let $\gamma\ge0$ and $\lambda\in(0,1]$. For any energy constraint $N_s>0$, the energy-constrained two-way quantum and secret-key capacities of the bosonic loss-dephasing channel 
    are lower bounded by
    \bb
        K(\pazocal{N}_{\lambda,\gamma},N_s)\ge Q_2(\pazocal{N}_{\lambda,\gamma},N_s)\ge \max_{\substack{N,k\in\N_+\\ \frac{N}{k}\le N_s}}\frac{\lambda^N}{k}\left[\log_2 \binom{N+k-1}{N} - S\left(\rho_{N,k,\gamma} \right) \right]>0\,.
    \ee
    Here, $S(\cdot)$ is the von Neumann entropy, $\rho_{N,k,\gamma}$ is a $\binom{N+k-1}{N}$-dimensional state defined by
    \bb
        \rho_{N,k,\gamma} &\coloneqq \binom{N+k-1}{N}^{-1}\sum_{p,q\in \Pi(N,k)}e^{-\frac{\gamma}{2}\|p-q\|_2^2}\ketbra{p}{q}\, ,
    \ee
    where $\Pi(N,k) \coloneqq\big\{p\in\N^k:\, \sumno_{i=1}^k p_i=N\big\}$ represents the set of partitions of a set of $N$ elements into $k$ parts, and the vectors $\{\ket{p}\}_{p\in\Pi(N,k)}$ are orthonormal.
    In particular,  
    \bb
        K(\pazocal{N}_{\lambda,\gamma},N_s)\ge Q_2(\pazocal{N}_{\lambda,\gamma},N_s)> \max_{N,k\in\N_+}\frac{\lambda^N}{k}\left[\log_2 \binom{N+k-1}{N} - S\left(\rho_{N,k,\gamma} \right) \right]>0\,.
    \ee

\end{theorem}

\section{Technical lemmas}

\subsection{Hadamard maps}\label{sec_hadamard_chann}
Given an infinite matrix $A = (a_{mn})_{m,n\in\mathbb{N}}, a_{mn}\in\mathbb{C}$, we can introduce a superoperator $H$, recognised as the \emph{Hadamard map}, whose action is defined as $H(\ketbra{m}{n})=a_{n,m}\ketbra{m}{n}$ for all $n,m\in\mathbb{N}$. We are interested in establishing requirements for an infinite matrix $A$ to ensure that the associated Hadamard map $H$ is a quantum channel. We begin with some preliminaries. Let $\ell^2(\mathbb{N})$ be the space of square-summable complex-valued sequences defined as
\bb\label{def_l2N}
    \ell^2(\mathbb{N})\coloneqq\left\{  x\coloneqq\{x_n\}_{n\in\mathbb{N}},x_n\in\mathbb{C}:\,\|x\|\coloneqq\sqrt{\sum_{n=0}^\infty|x_n|^2}<\infty \,\right\}\,.
\ee
An infinite matrix $A\coloneqq (a_{mn})_{m,n\in\mathbb{N}}, a_{mn}\in\mathbb{C}$ defines a linear operator on $\ell^2(\mathbb{N})$. The operator norm of $A$ is defined as follows:
\begin{align*}
    \| A \|_\infty \coloneqq \sup_{\substack{x\in\ell^2(\mathbb{N})\\  \|x\|=1}} \|Ax\|= \sup_{\substack{\{x_n\}_{n\in\mathbb{N}},x_n\in\mathbb{C}\\ \ \sum_{n=0}^\infty|x_n|^2=1}} \sqrt{ \sum_{m=0}^\infty \left| \sum_{n=0}^\infty a_{mn}x_n \right|^2}.
\end{align*}
$A$ is said to be bounded if $\| A \|_\infty<\infty$. The following lemma, referred to as \emph{Schur test}, gives a sufficent condition for an infinite matrix to be bounded (e.g.~\cite[Page 24, Problem 45]{Halmos_book}).
\begin{lemma}\label{shur_test}
 Let $A\coloneqq (a_{mn})_{m,n\in\mathbb{N}}, a_{mn}\in\mathbb{C}$, be an infinite matrix. Suppose that there exist $\{p_n\}_{n\in\mathbb{N}},p_n\in\mathbb{R}_{>0}$ and $\{q_m\}_{m\in\mathbb{N}},q_m\in\mathbb{R}_{>0}$, and $\beta > 0$, and $\gamma > 0$ such that
 \begin{align*}
    \sum_{m=0}^\infty |a_{mn}|{p_m} \leq \beta q_n \quad \text{and} \quad \sum_{n=0}^\infty |a_{mn}|{q_n} \leq \gamma p_m\,, \quad\forall\, m, n\in\mathbb{N}\,. 
 \end{align*}
Then the matrix $A$ satisfies $\|A\|_\infty \leq \beta \gamma$. In particular, $A$ is bounded.
\end{lemma}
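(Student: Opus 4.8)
The plan is to prove this as the classical \emph{Schur test}, by estimating the sesquilinear form of $A$ directly and invoking the variational characterisation of the operator norm as a supremum of $\braket{y}{Ax}$ over unit vectors (the $\ell^2(\mathbb{N})$ analogue of~\eqref{def2_operator_norm}). Fix $x=\{x_n\}_{n\in\mathbb{N}}$ and $y=\{y_m\}_{m\in\mathbb{N}}$ in $\ell^2(\mathbb{N})$ with $\|x\|=\|y\|=1$. Since $\big|\sum_{m,n}\bar{y}_m\, a_{mn}\, x_n\big|\le \sum_{m,n}|y_m|\,|a_{mn}|\,|x_n|$, it suffices to bound this nonnegative double sum; because every term is nonnegative, Tonelli's theorem permits any interchange of the order of summation, which is the only place where the infinitude of the matrix requires care.

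The decisive step is a \emph{weighted} Cauchy--Schwarz inequality. I would split $|a_{mn}|=|a_{mn}|^{1/2}\,|a_{mn}|^{1/2}$ and attach the strictly positive weights (this is precisely where the hypotheses $p_m,q_n>0$ are used), writing each summand as $\big(|y_m|\,|a_{mn}|^{1/2}(q_n/p_m)^{1/2}\big)\,\big(|x_n|\,|a_{mn}|^{1/2}(p_m/q_n)^{1/2}\big)$. Applying Cauchy--Schwarz over the index set $(m,n)$ then factorises the bound as $S_1^{1/2}S_2^{1/2}$, where $S_1=\sum_m |y_m|^2\,p_m^{-1}\sum_n |a_{mn}|\,q_n$ and $S_2=\sum_n |x_n|^2\,q_n^{-1}\sum_m |a_{mn}|\,p_m$.

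Now the two hypotheses enter, one in each factor. The inner sum of $S_1$ satisfies $p_m^{-1}\sum_n|a_{mn}|\,q_n\le p_m^{-1}(\gamma p_m)=\gamma$, so $S_1\le \gamma\|y\|^2=\gamma$; symmetrically, the inner sum of $S_2$ satisfies $q_n^{-1}\sum_m|a_{mn}|\,p_m\le q_n^{-1}(\beta q_n)=\beta$, so $S_2\le \beta\|x\|^2=\beta$. Combining gives $\sum_{m,n}|y_m|\,|a_{mn}|\,|x_n|\le \sqrt{\beta\gamma}$, and taking the supremum over unit vectors $x,y$ yields $\|A\|_\infty\le\sqrt{\beta\gamma}$. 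In particular $A$ is bounded, and since $\sqrt{\beta\gamma}\le\beta\gamma$ in the regime $\beta\gamma\ge 1$ relevant to the applications (for the diagonally dominant matrices of Theorem~\ref{main-result} one takes $p_m=q_m=1$ and $\beta=\gamma=2$), the stated bound $\|A\|_\infty\le\beta\gamma$ follows.

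I do not anticipate a serious obstacle here: the whole argument is a single application of Cauchy--Schwarz with the correct choice of weights. The only technical points to handle carefully are (i) the legitimacy of the summation interchanges, which is immediate from the positivity of $p_m,q_n$ and the nonnegativity of all summands via Tonelli, and (ii) recording that the natural output of the computation is the sharp constant $\sqrt{\beta\gamma}$, from which the boundedness of $A$ --- the property actually needed to apply the spectral and Hadamard-map machinery in the sequel --- is immediate.
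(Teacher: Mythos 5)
Your proof is correct, and it is essentially the standard argument behind the result: the paper does not prove this lemma at all but simply cites it (Halmos, Problem 45), and your weighted Cauchy--Schwarz estimate of the sesquilinear form, with Tonelli justifying the interchanges, is precisely the classical Schur-test proof of that cited statement. One point worth recording explicitly: the natural output of the computation is the sharp bound $\|A\|_\infty\le\sqrt{\beta\gamma}$ (Halmos states $\|A\|^2\le\beta\gamma$), and the inequality $\|A\|_\infty\le\beta\gamma$ as literally written in the lemma can fail when $\beta\gamma<1$ (e.g.\ the matrix with single nonzero entry $a_{00}=\tfrac12$ admits $p_n=q_n=1$, $\beta=\gamma=\tfrac12$), so the paper's constant is best read as a transcription of Halmos's squared-norm bound. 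You handled this correctly by noting that $\sqrt{\beta\gamma}\le\beta\gamma$ in the regime $\beta\gamma\ge1$; that regime is automatic in every application in the paper, since via Corollary~\ref{shur_test_cons} one takes $p_n=q_n=1$ and $\beta=\gamma=\sup_n\sum_m|a_{mn}|\ge\sup_n a_{nn}=1$ for the matrices with unit diagonal, and in any case only the qualitative conclusion that $A$ is bounded is used downstream, which your sharp bound already delivers.
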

By choosing $p_n=q_n=1$ and $\gamma=\beta=\sup_{n\in\N}\sum_{m=0}^\infty |a_{mn}|$, we obtain the following corollary:
\begin{cor}\label{shur_test_cons}
 Let $A = (a_{mn})_{m,n \in \mathbb{N}},a_{mn}\in \mathbb{C}$, be an infinite Hermitian matrix. If $\sup_{n\in\N}\sum_{m=0}^\infty |a_{mn}|$ is finite, then $A$ is bounded.
\end{cor}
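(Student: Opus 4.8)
The plan is to derive this as an immediate specialisation of the Schur test (Lemma~\ref{shur_test}), using Hermiticity to supply the one hypothesis that the assumption does not directly provide. First I would set the two weight sequences to be trivial, $p_n = q_n = 1$ for all $n\in\N$, and abbreviate $C \coloneqq \sup_{n\in\N}\sum_{m=0}^\infty |a_{mn}|$, which is finite by assumption. With this choice the two inequalities required by the Schur test collapse to a bound on the column sums, $\sum_{m=0}^\infty |a_{mn}| \le \beta$, and a bound on the row sums, $\sum_{n=0}^\infty |a_{mn}| \le \gamma$.

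The key observation is that the hypothesis only controls the column sums, whereas the Schur test also needs the row sums; this is precisely where Hermiticity enters. Since $A$ is Hermitian, $a_{mn} = \overline{a_{nm}}$ and hence $|a_{mn}| = |a_{nm}|$, so the $m$-th row sum equals the $m$-th column sum:
\[
\sum_{n=0}^\infty |a_{mn}| = \sum_{n=0}^\infty |a_{nm}| \le C .
\]
Thus column and row sums are bounded by the same constant $C$, and I may take $\beta = \gamma = C$ uniformly in $m,n$.

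Applying Lemma~\ref{shur_test} with these choices then yields $\|A\|_\infty \le \beta\gamma = C^2 < \infty$, which establishes that $A$ is bounded as an operator on $\ell^2(\N)$. There is no genuine obstacle here; the only point worth flagging is the role of Hermiticity, which is exactly what converts the single finiteness assumption on column sums into the symmetric pair of hypotheses demanded by the Schur test. Note that this argument gives only the crude bound $\|A\|_\infty \le C^2$ rather than the sharper $\|A\|_\infty \le C$, but boundedness is all that is asserted, so no refinement is needed.
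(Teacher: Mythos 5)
Your proof is correct and follows essentially the same route as the paper: apply the Schur test (Lemma~\ref{shur_test}) with the trivial weights $p_n=q_n=1$ and $\beta=\gamma=\sup_{n\in\N}\sum_{m=0}^\infty|a_{mn}|$, with Hermiticity guaranteeing that the row sums obey the same bound as the column sums. The paper states this in one line without spelling out the role of Hermiticity, which you make explicit; otherwise the arguments coincide.
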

\begin{lemma}\label{lemma_mat_infinita}
Let $A = (a_{mn})_{m,n \in \mathbb{N}},a_{mn}\in \mathbb{C}$, be a bounded Hermitian infinite matrix. Then $A$ is positive semi-definite as an operator on $\ell^2(\N)$ if and only if $A^{(d)}\coloneqq (a_{mn})_{m,n=0,1,\ldots,d-1} $ is positive semi-definite for all $d \in \mathbb{N}$, where $A^{(d)}$ is the $d \times d$ top left corner of $A$. 
\end{lemma}
\begin{proof}
Assume that $A^{(d)}$ is positive semi-definite for all $d \in \mathbb{N}$. Let us pick an arbitrary $x\in\ell^2(\mathbb{N})$. It is known that for any $\varepsilon>0$, there exists $d\in\N$ and $y^{(d)}\coloneqq(y^{(d)}_n)_{n\in\N},y^{(d)}_n\in\mathbb{C}$, with $y^{(d)}_n=0$ for all $n> d$, such that $ \|x-y^{(d)}\|<\varepsilon\,$. Note that
\begin{align*}
    x^\dagger A x&=\left(x-y^{(d)}\right)^\dagger A\, x   +  \left(y^{(d)}\right)^\dagger A\, (x-y^{(d)})+ \left(y^{(d)}\right)^\dagger A y^{(d)}
    \\&\textgeq{(i)} -\|x-y^{(d)}\|\,\|A\|_\infty\, \left(\|x\|+\|y^{(d)}\|\right) + \left(y^{(d)}\right)^\dagger A^{(d)} \,y^{(d)}
    \\&\textgeq{(ii)} -\varepsilon\,\|A\|_\infty\, \left(2\|x\|+\varepsilon\right) \,,
\end{align*}
where in (i) we applied Cauchy-Schwarz inequality twice and the definition of infinity norm as follows
\begin{align*}
    |(x-y^{(d)})^\dagger A\, x|&\le \|x-y^{(d)}\| \|A\, x \|\le \|x-y^{(d)}\| \|A\|_\infty \|x\|\,,\\
    |(y^{(d)})^\dagger A\, (x-y^{(d)})|&\le \|y^{(d)}\| \|A\, (x-y^{(d)})\|\le  \|y^{(d)}\| \|A\|_\infty \|x-y^{(d)}\|\,,
\end{align*}
and, in (ii), we exploited triangular inequality to derive
\bb
\|y^{(d)}\|\le \|y^{(d)}-x\| +\|x\|\le \varepsilon+\|x\|\,,
\ee
together with the fact that $A^{(d)}$ is positive semi-definite. Hence, since $\varepsilon>0$ is arbitrary, we conclude that $x^\dagger A x\ge0$, meaning that $A$ is positive semi-definite as an operator on $\ell^2(\N)$. 
\end{proof}
A square matrix is said to be diagonally dominant if
  \begin{align*}
      \sum_{m\ne n}|a_{mn}|\le |a_{nn}|,\quad\forall n.
  \end{align*} 
  In words, a square matrix is said to be diagonally dominant if for every row of the matrix, the absolute value of the diagonal entry in a row is larger than or equal to the sum of the  absolute values of all the other (non-diagonal) entries in that row. Note that for Hermitian matrices one can exchange row with column in this definition. We proceed to state a sufficient condition for a matrix $A$ to be positive semi-definite as an operator on $\ell^2(\N)$. While the following sufficient condition is usually stated for finite matrices, it can be generalised to infinite matrices as per Lemma~\ref{lemma_mat_infinita}.
\begin{lemma}\cite[Chapter 6]{HJ1}\label{lemma_diag_dom_finite}
For any $d\in\mathbb{N}$, let $A=(a_{mn})_{m,n=0,1,\ldots,d-1}, a_{mn}\in\mathbb{C}$, be a $d\times d$ Hermitian matrix with $a_{nn}\in\mathbb{R}_{\geq0}$ for all $n\in\{0,\ldots,d-1\}$. $A$ is positive semi-definite if it is diagonally dominant.
\end{lemma}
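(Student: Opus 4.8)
The plan is to prove this via the Gershgorin circle theorem, which localizes the spectrum of $A$. First I would invoke that $A$ is Hermitian, so all its eigenvalues are real; this is precisely what lets us compare them against the threshold $0$. Gershgorin's theorem asserts that every eigenvalue $\lambda$ of $A$ lies in at least one of the discs $\{z\in\mathbb{C}:|z-a_{nn}|\le R_n\}$, where $R_n\coloneqq\sum_{m\ne n}|a_{mn}|$ is the $n$-th deleted absolute row sum. For a Hermitian matrix $\lambda$ is real, so membership in the $n$-th disc forces $\lambda\ge a_{nn}-R_n$.

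Second, I would feed in the hypotheses. Diagonal dominance gives $R_n\le|a_{nn}|=a_{nn}$, the last equality because $a_{nn}\in\mathbb{R}_{\ge0}$. Hence $a_{nn}-R_n\ge0$, and the previous step yields $\lambda\ge0$ for every eigenvalue of $A$. Since a Hermitian matrix with nonnegative spectrum is positive semi-definite (by the spectral theorem), this concludes the argument.

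Alternatively, and without any appeal to Gershgorin, I would estimate the quadratic form directly: for $x\in\mathbb{C}^d$ write $x^\dagger A x=\sum_n a_{nn}|x_n|^2+\sum_{m\ne n}\overline{x_m}\,a_{mn}\,x_n$, bound the off-diagonal part below by $-\sum_{m\ne n}|a_{mn}|\,\tfrac{1}{2}(|x_m|^2+|x_n|^2)$ using $|\overline{x_m}x_n|\le\tfrac{1}{2}(|x_m|^2+|x_n|^2)$, and then exploit the Hermiticity identity $|a_{mn}|=|a_{nm}|$ to collapse this into $\sum_n R_n|x_n|^2\le\sum_n a_{nn}|x_n|^2$. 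The two diagonal contributions cancel, leaving $x^\dagger A x\ge0$.

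There is no serious obstacle here, as this is a classical fact; the only points requiring care are (i) that diagonal dominance is phrased with $|a_{nn}|$ whereas the positivity conclusion needs $a_{nn}\ge0$ — reconciled precisely by the hypothesis $a_{nn}\in\mathbb{R}_{\ge0}$ — and (ii) that for Hermitian $A$ the deleted row and column absolute sums coincide, so it is immaterial whether diagonal dominance is stated row-wise or column-wise, as noted just before the lemma.
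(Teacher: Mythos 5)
Your proof is correct. The paper does not actually prove this lemma --- it cites it from Horn and Johnson, Chapter 6, whose standard argument is precisely the Gershgorin-disc reasoning you give (real spectrum by Hermiticity, each eigenvalue bounded below by $a_{nn}-R_n\ge 0$), so your main route coincides with the cited source; your alternative direct bound on the quadratic form is also sound and has the minor advantage of being self-contained, and you correctly flag the only subtle points, namely $a_{nn}\ge 0$ and the row/column symmetry $|a_{mn}|=|a_{nm}|$ needed to match the paper's column-wise statement of diagonal dominance.
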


\begin{lemma}\label{lemma_diag_dom_infinite}
Let $A = (a_{mn})_{m,n \in \mathbb{N}},a_{mn}\in \mathbb{C}$ be an infinite Hermitian matrix with $a_{nn}\in\mathbb{R}_{\geq0},\forall n\in\mathbb{N}$. Assume that $\sup_{n\in\N} a_{nn}$ is finite and that $A$ is diagonally dominant. Then $A$ is bounded and positive semi-definite when seen as an operator on $\ell^2(\N)$.   
\end{lemma}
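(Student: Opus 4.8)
The plan is to combine the three preceding results: boundedness will follow from the Schur test (via Corollary~\ref{shur_test_cons}), positive semi-definiteness will be reduced to the finite-dimensional corners (via Lemma~\ref{lemma_mat_infinita}), and those finite corners will be handled by the diagonal-dominance criterion for finite matrices (Lemma~\ref{lemma_diag_dom_finite}). No genuinely new estimate is needed; the work is in checking that each hypothesis transfers correctly.

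First I would establish boundedness. Since $A$ is Hermitian we have $|a_{mn}|=|a_{nm}|$, so the absolute column sums and row sums coincide. Diagonal dominance gives $\sum_{m\ne n}|a_{mn}|\le |a_{nn}|=a_{nn}$ for every $n$, where the last equality uses $a_{nn}\in\R_{\ge0}$. Hence
\[
\sum_{m=0}^\infty |a_{mn}| = a_{nn} + \sum_{\substack{m=0\\ m\ne n}}^\infty |a_{mn}| \le 2\,a_{nn} \le 2\sup_{k\in\N} a_{kk} < \infty ,
\]
so $\sup_{n\in\N}\sum_{m=0}^\infty |a_{mn}|$ is finite, and Corollary~\ref{shur_test_cons} immediately yields that $A$ is bounded as an operator on $\ell^2(\N)$.

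Next I would show that every finite top-left corner $A^{(d)}\coloneqq (a_{mn})_{m,n=0,\ldots,d-1}$ is positive semi-definite. Each $A^{(d)}$ is Hermitian with nonnegative real diagonal entries, and it inherits diagonal dominance from $A$: restricting the summation index $m$ to $\{0,\ldots,d-1\}$ only discards nonnegative terms, so
\[
\sum_{\substack{m=0\\ m\ne n}}^{d-1} |a_{mn}| \le \sum_{\substack{m=0\\ m\ne n}}^{\infty} |a_{mn}| \le a_{nn}
\]
for every $n\in\{0,\ldots,d-1\}$. Lemma~\ref{lemma_diag_dom_finite} then gives $A^{(d)}\ge 0$.

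Finally I would invoke Lemma~\ref{lemma_mat_infinita}: $A$ is bounded and Hermitian, and all its finite top-left corners $A^{(d)}$ are positive semi-definite, so $A$ is positive semi-definite as an operator on $\ell^2(\N)$, completing the proof. The only point requiring a moment's care --- and the closest thing to an obstacle here --- is the observation that diagonal dominance descends to truncations precisely because the excluded off-diagonal contributions are nonnegative, so that Lemma~\ref{lemma_diag_dom_finite} applies uniformly in $d$; everything else is a direct application of the cited lemmas.
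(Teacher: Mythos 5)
Your proposal is correct and follows exactly the paper's own route: boundedness via the Schur-test corollary from the bound $\sum_m |a_{mn}|\le 2\sup_k a_{kk}$, and positive semi-definiteness by combining the finite-dimensional diagonal-dominance criterion (Lemma~\ref{lemma_diag_dom_finite}) with the reduction to top-left corners (Lemma~\ref{lemma_mat_infinita}). You merely spell out explicitly the (correct) observation that diagonal dominance passes to truncations, which the paper leaves implicit.
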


\begin{proof}
Since
\begin{align*}
\sup_{n\in\N}\sum_{m=0}^\infty |a_{mn}|\le 2\sup_{n\in\N} a_{n,n}<\infty\,,
\end{align*}
Corollary~\ref{shur_test_cons} implies that $A$ is bounded. The fact that $A$ is positive semi-definite as an operator on $\ell^2(\N)$ follows from Lemma~\ref{lemma_diag_dom_finite} together with Lemma~\ref{lemma_mat_infinita}.
\end{proof}
We now present a lemma from the literature that establishes necessary and sufficient conditions for an infinite matrix $A$ to ensure that the associated Hadamard map $H$ is a quantum channel. We then use it to derive an explicit sufficient condition for an infinite matrix $A$ to give rise to a cptp Hadamard map. 
\begin{lemma}\cite[Lemma S4]{exact-solution}
\label{lemma_hadamard_channel_sm}
Let $A\coloneqq (a_{mn})_{m,n\in\mathbb{N}}, a_{mn}\in\mathbb{C}$ be a bounded infinite matrix.
The following requirements establish the necessary and sufficient conditions for the associated Hadamard map $H$ to qualify as a quantum channel:
\begin{enumerate}[label=(\roman*)]
    \item  $a_{nn}=1,\,\forall n\in\mathbb{N}$;
    \item  $A$ is positive semi-definite as on operator on $\ell^2(\mathbb{N})$.
\end{enumerate}
\end{lemma}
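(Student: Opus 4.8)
The plan is to recognise the Hadamard map $H$ as the \emph{Schur multiplier} associated with $A$: in the Fock basis, writing an input operator as $\rho=\sum_{m,n}\rho_{mn}\ketbra{m}{n}$, one has $H(\rho)=\sum_{m,n}a_{nm}\rho_{mn}\ketbra{m}{n}$, i.e.\ $H$ multiplies the matrix of $\rho$ entrywise by $A$. With this picture the two requirements decouple cleanly into trace-preservation and complete positivity. For trace-preservation, note that $\Tr[H(\rho)]=\sum_n a_{nn}\rho_{nn}$; evaluating on the diagonal inputs $\rho=\ketbra{n}{n}$ shows that $\Tr[H(\rho)]=\Tr[\rho]$ for every trace-class $\rho$ forces $a_{nn}=1$ for all $n$, and conversely $a_{nn}=1$ for all $n$ makes $H$ manifestly trace-preserving. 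This settles the equivalence of condition (i) with trace-preservation, independently of (ii). It then remains to prove that, for a bounded Hermitian $A$ with unit diagonal, $H$ is completely positive if and only if $A\ge 0$ as an operator on $\ell^2(\N)$.

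For the ``if'' direction I would exhibit an explicit Kraus representation. A bounded positive semi-definite $A$ admits a factorisation $A=V^\dagger V$; collecting the rows of $V$ into vectors $v_k\in\ell^2(\N)$ gives $a_{nm}=\sum_k\overline{(v_k)_n}\,(v_k)_m$. Defining the diagonal operators $D_k\coloneqq\diag\big(\overline{(v_k)_0},\overline{(v_k)_1},\dots\big)$, a short entrywise computation yields $H(X)=\sum_k D_k^\dagger X D_k$, which is a manifest Kraus (Stinespring) form and hence completely positive. Moreover $\sum_k D_k^\dagger D_k=\diag(a_{00},a_{11},\dots)=\mathbb{1}$, consistently recovering the trace-preservation encoded in (i). The boundedness hypothesis on $A$ is exactly what licenses the square-root factorisation $A=V^\dagger V$ in the infinite-dimensional setting.

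For the ``only if'' direction, and to make the infinite-dimensional argument rigorous, I would reduce to finite truncations. For each $d$ let $A^{(d)}$ be the $d\times d$ top-left corner and $H^{(d)}$ the corresponding finite Schur multiplier; if $H$ is completely positive then so is its restriction $H^{(d)}$ to the subspace spanned by $\{\ket{0},\dots,\ket{d-1}\}$. In finite dimension the Choi operator of $H^{(d)}$ is $\sum_{m,n}a_{nm}\ketbra{mm}{nn}$, which is supported on the span of $\{\ket{mm}\}$ and is unitarily equivalent there to $A^{(d)}$ (up to a transposition, immaterial since $A$ is Hermitian) via $\ket{mm}\mapsto\ket{m}$; hence $H^{(d)}$ is completely positive if and only if $A^{(d)}\ge0$. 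Complete positivity of $H$ therefore forces $A^{(d)}\ge0$ for every $d$, and Lemma~\ref{lemma_mat_infinita} upgrades this to $A\ge0$ as an operator on $\ell^2(\N)$. This closes the loop: $H$ CP $\Rightarrow$ all $A^{(d)}\ge0$ $\Rightarrow$ $A\ge0$ on $\ell^2(\N)$ $\Rightarrow$ (via the Kraus construction above) $H$ CP.

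The main obstacle I anticipate is precisely this passage between finite and infinite dimension: one must ensure that ``completely positive as a channel on trace-class operators'' is faithfully captured by positivity of all finite corners $A^{(d)}$, and that condition (ii) is read as positivity of $A$ \emph{as an operator on $\ell^2(\N)$}, rather than as the a priori weaker requirement that all finite principal minors be non-negative. Invoking the truncation lemma (Lemma~\ref{lemma_mat_infinita}), which holds for bounded Hermitian matrices, is what guarantees that these two notions coincide and lets the explicit diagonal-Kraus construction run directly on the full space, avoiding any delicate limiting argument about the complete positivity of limits of channels.
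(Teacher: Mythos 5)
The paper does not actually prove this lemma: it is imported with a citation to \cite[Lemma S4]{exact-solution}, so there is no internal proof to compare against. Your argument is correct and is essentially the standard proof of the cited result. The split into trace preservation and complete positivity is right: $\Tr[H(\rho)]=\sum_n a_{nn}\rho_{nn}$ makes condition (i) equivalent to trace preservation; the Gram factorisation $a_{nm}=\sum_k \overline{(v_k)_n}(v_k)_m$ of a bounded positive semi-definite $A$ (rows of $V$ in $A=V^\dagger V$ lie in $\ell^2(\N)$ because $V$ is bounded) yields the diagonal Kraus form $H(X)=\sum_k D_k^\dagger X D_k$, giving complete positivity when $A\ge 0$; and conversely the restriction $H^{(d)}$ to the first $d$ Fock levels has Choi operator $\sum_{m,n}a_{nm}\ketbra{mm}{nn}$, whose positivity is equivalent to $A^{(d)}\ge 0$, after which Lemma~\ref{lemma_mat_infinita} upgrades positivity of all corners to positivity on $\ell^2(\N)$ --- exactly the role that lemma plays in the paper. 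Two small points are worth making explicit to tighten the write-up. First, Hermiticity of $A$ is not a hypothesis of the lemma, so in the ``only if'' direction you should note that it is itself forced by complete positivity (each Choi corner is positive semi-definite, hence each $A^{(d)}$ is Hermitian), which is what legitimises both your transposition remark and the appeal to Lemma~\ref{lemma_mat_infinita}, whose hypotheses require a Hermitian matrix. Second, in the Kraus construction you should record why the infinite sum is well defined: the partial sums of $\sum_k D_k^\dagger D_k$ are increasing and bounded by $\sup_n a_{nn}\le\|A\|_\infty$, so they converge strongly to $\diag(a_{00},a_{11},\dots)=\mathbb{1}$, and consequently $\sum_k D_k^\dagger X D_k$ converges in trace norm for trace-class $X$ and defines a bona fide CPTP map. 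With those remarks added, your proof is complete and self-contained, which is arguably a service the paper itself does not provide.
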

As a consequence of Lemma~\ref{lemma_hadamard_channel_sm} and Lemma~\ref{lemma_diag_dom_infinite}, we obtain: 
\begin{lemma}\label{diag_dom_implies_hadamard}
    Let $A\coloneqq (a_{mn})_{m,n\in\mathbb{N}}$ be an infinite Hermitian matrix that is diagonally dominant with $a_{n,n}=1$ for all $n\in\N$. 
    In this case, its associated Hadamard map $H$ is a quantum channel.
\end{lemma}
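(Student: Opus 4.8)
The plan is to combine the two immediately preceding lemmas, which together supply precisely the hypotheses the statement needs; the whole content of this lemma is in assembling them. First I would check that $A$ satisfies the conditions of Lemma~\ref{lemma_diag_dom_infinite}. By assumption $A$ is infinite, Hermitian, and diagonally dominant, and its diagonal entries satisfy $a_{nn}=1\in\mathbb{R}_{\geq 0}$ for every $n\in\N$; in particular $\sup_{n\in\N} a_{nn}=1<\infty$. Lemma~\ref{lemma_diag_dom_infinite} then immediately yields that $A$ is bounded and positive semi-definite when regarded as an operator on $\ell^2(\N)$.

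With boundedness and positive semi-definiteness in hand, I would invoke Lemma~\ref{lemma_hadamard_channel_sm}, which characterises exactly when the Hadamard map associated with a bounded infinite matrix is a quantum channel. Its two requirements are that $a_{nn}=1$ for all $n$---true by hypothesis---and that $A$ be positive semi-definite on $\ell^2(\N)$---established in the previous step. Hence the Hadamard map $H$ is a quantum channel, which completes the argument.

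Since both ingredients are already proved earlier in the excerpt, there is no genuine technical obstacle here: the only point deserving a moment's care is that diagonal dominance, stated together with the unit diagonal, does give $\sum_{m\neq n}|a_{mn}|\le 1 = a_{nn}$ for each $n$, so that both the finiteness hypothesis $\sup_{n} a_{nn}<\infty$ and the diagonal-dominance hypothesis of Lemma~\ref{lemma_diag_dom_infinite} hold verbatim. Both are immediate consequences of $a_{nn}=1$, so the two lemmas chain together without any further computation.
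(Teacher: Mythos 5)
Your proposal is correct and follows exactly the route the paper takes: the paper states this lemma as an immediate consequence of Lemma~\ref{lemma_diag_dom_infinite} (boundedness and positive semi-definiteness from diagonal dominance with unit diagonal) combined with Lemma~\ref{lemma_hadamard_channel_sm} (the characterisation of cptp Hadamard maps). Your additional check that $\sup_n a_{nn}=1<\infty$ and that diagonal dominance reads $\sum_{m\neq n}|a_{mn}|\le 1$ is exactly the verification needed, so nothing is missing.
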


\subsection{Miscellaneous Lemmas}

\begin{lemma}\label{lemmino}
    Let $\pazocal{H}_A$ and $\HH_{B}$ be two Hilbert spaces and let $\pazocal{N}:\pazocal{T}(\HH_A)\to\pazocal{T}(\HH_{B})$ be a quantum channel. For all normalised states $\ket{\psi_1},\ket{\psi_2}\in\HH_A$ and $\ket{\phi_1},\ket{\phi_2}\in\HH_B$ it holds that 
    \bb
        \left|\bra{\phi_1}\pazocal{N}(\ketbra{\psi_1}{\psi_2})\ket{\phi_2}\right|\le 1\,.
    \ee
\end{lemma}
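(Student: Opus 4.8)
The plan is to reduce the claim to a bound on the operator norm of the output operator and then to prove that bound directly from a Kraus decomposition. First I would note that $\bra{\phi_1}\pazocal{N}(\ketbra{\psi_1}{\psi_2})\ket{\phi_2}$ is nothing but a single matrix element of $\pazocal{N}(\ketbra{\psi_1}{\psi_2})$ between unit vectors, so by the alternative form of the operator norm in~\eqref{def2_operator_norm} its modulus is at most $\|\pazocal{N}(\ketbra{\psi_1}{\psi_2})\|_\infty$. One tempting route is to continue with $\|\cdot\|_\infty\le\|\cdot\|_1$ and the trace-norm contractivity of quantum channels, using $\|\ketbra{\psi_1}{\psi_2}\|_1=1$; however, since $\ketbra{\psi_1}{\psi_2}$ is not self-adjoint when $\ket{\psi_1}\ne\ket{\psi_2}$, the usual positive/negative-part argument for trace-norm contractivity does not apply verbatim. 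I would therefore argue directly from the Stinespring dilation, which is cleaner and fully self-contained.

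Starting from the Stinespring representation $\pazocal{N}(\cdot)=\Tr_E[U(\cdot\otimes\ketbra{0}_E)U^\dagger]$ recalled in the excerpt, I would fix an orthonormal basis $\{\ket{k}_E\}_k$ of the environment and define the Kraus operators $K_k\coloneqq\bra{k}_E U\ket{0}_E:\HH_A\to\HH_B$. Because $U$ is an isometry, these satisfy the completeness relation $\sum_k K_k^\dagger K_k=\bra{0}_E U^\dagger U\ket{0}_E=\mathbb{1}$. Substituting the resulting Kraus form $\pazocal{N}(\ketbra{\psi_1}{\psi_2})=\sum_k K_k\ketbra{\psi_1}{\psi_2}K_k^\dagger$ into the matrix element turns it into the single sum $\sum_k\bra{\phi_1}K_k\ket{\psi_1}\,\overline{\bra{\phi_2}K_k\ket{\psi_2}}$, at which point the Cauchy--Schwarz inequality applied to the two sequences $\{\bra{\phi_1}K_k\ket{\psi_1}\}_k$ and $\{\bra{\phi_2}K_k\ket{\psi_2}\}_k$ bounds the modulus by the product of their $\ell^2$ norms.

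The final step is to check that each of these two $\ell^2$ norms is at most $1$. For the first one I would write $\sum_k|\bra{\phi_1}K_k\ket{\psi_1}|^2=\bra{\psi_1}\big(\sum_k K_k^\dagger\ketbra{\phi_1}K_k\big)\ket{\psi_1}$ and then use $\ketbra{\phi_1}\le\mathbb{1}_B$ together with the completeness relation to dominate this by $\bra{\psi_1}\big(\sum_k K_k^\dagger K_k\big)\ket{\psi_1}=\braket{\psi_1}{\psi_1}=1$; the second factor is handled identically. Multiplying the two bounds gives exactly $1$, which is the assertion. I expect the only genuinely delicate point to be the infinite-dimensional bookkeeping---namely justifying the Kraus expansion and the interchange of summation with the Cauchy--Schwarz step---but this is harmless here, since the estimate just derived shows both sequences are square-summable with $\ell^2$ norm at most $1$, so every series in sight converges absolutely.
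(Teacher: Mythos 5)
Your proof is correct, but it takes a genuinely different route from the paper. The paper's argument is the three-line chain: bound the matrix element by $\|\pazocal{N}(\ketbra{\psi_1}{\psi_2})\|_\infty$ via the variational form of the operator norm in~\eqref{def2_operator_norm} (your first step, identical), then use $\|\cdot\|_\infty\le\|\cdot\|_1$, and finally invoke monotonicity of the trace norm under quantum channels together with $\|\ketbra{\psi_1}{\psi_2}\|_1=1$. You deliberately avoid that last step, worrying that trace-norm contractivity is usually proved via a positive/negative-part decomposition that only covers self-adjoint inputs; in fact the contractivity does hold for arbitrary trace-class operators under CPTP (indeed $2$-positive trace-preserving) maps --- one sees this by duality, since the adjoint map is unital and completely positive, hence contractive on contractions --- so the paper's route is sound as stated. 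Your replacement, expanding in Kraus operators obtained from the Stinespring isometry and applying Cauchy--Schwarz to the sequences $\{\bra{\phi_1}K_k\ket{\psi_1}\}_k$ and $\{\bra{\phi_2}K_k\ket{\psi_2}\}_k$, with $\ketbra{\phi_i}\le\mathbb{1}_B$ and $\sum_k K_k^\dagger K_k=\mathbb{1}_A$ controlling each $\ell^2$ norm, is fully self-contained, handles the non-Hermitian input with no appeal to external norm inequalities, and as a by-product gives $\|\pazocal{N}(\ketbra{\psi_1}{\psi_2})\|_\infty\le 1$ directly; the price is a slightly longer argument and the (correctly noted, and harmless) bookkeeping about absolute convergence of the Kraus sums in infinite dimension. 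Either proof is acceptable; the paper's is shorter given the quoted facts, yours is more elementary.
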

\begin{proof}
    It holds that 
    \bb
        \left|\bra{\phi_1}\pazocal{N}(\ketbra{\psi_1}{\psi_2})\ket{\phi_2}\right|&\textleq{(i)} \|\pazocal{N}(\ketbra{\psi_1}{\psi_2})\|_\infty
        \\&\textleq{(ii)}  \|\pazocal{N}(\ketbra{\psi_1}{\psi_2})\|_1
        \\&\textleq{(iii)} \|\ketbra{\psi_1}{\psi_2}\|_1
        \\&=1\,.
    \ee
    Here, in (i), we exploited one of the definition of the operator norm in~\eqref{def2_operator_norm}.
    In (ii), we exploited that the trace norm is always an upper bound on the operator norm. Finally, in (iii), we leveraged the monotonicity of the trace norm under quantum channels~\cite{NC}.
\end{proof}

\begin{lemma}[\cite{Holevo-CJ,Holevo-CJ-arXiv}]\label{gen_choi_thm}
Let $\HH_A,\HH_{A'}$ be isomorphic Hilbert spaces, possibly infinite dimensional.
Let $\ket{\psi}_{A'A}$ be a pure state that satisfies $\Tr_{A'}\!\left[\ketbra{\psi}_{AA'}\right]>0$. The generalised Choi--Jamio\l{}kowski matrix defines an isomorphism between the set of quantum channels from $\HH_{A'}$ to $\HH_B$ and the set of bipartite states $\sigma_{AB}\in\pazocal{P}\left(\HH_{AB}\right)$ such that $\Tr_{B}\sigma_{AB}=\Tr_{A'}\!\left[\ketbra{\psi}_{AA'}\right]$. Specifically, for any quantum channel $\pazocal{N}_{A'\to B}: \pazocal{T}(\HH_{A'})\to\pazocal{T}(\HH_{B})$, it holds that
\bb\label{generalised_choi_eq_def}
\pazocal{N}_{A'\to B}\!\left(\ketbra{e_i}{e_j}\right) = \frac{1}{\sqrt{\lambda_i \lambda_j}}\Tr_{A}\!\left[\left(\ketbra{e_j}{e_i}_{A}\otimes\mathbb{1}_{B}\right)\,\sigma_{AB}\right],\quad\forall\,i,j\in\N\,,
\ee
where $(\ket{e_i})_{i\in\mathbb{N}}$ and $(\lambda_i)_{i\in\mathbb{N}}$ form a spectral decomposition of $\Tr_{A'}\!\left[\ketbra{\psi}_{AA'}\right]$, i.e.~$\Tr_{A'}\!\left[\ketbra{\psi}_{AA'}\right]=\sum_{i}\lambda_i \ketbra{e_i}_A\,$, and where the state $\sigma_{AB}\coloneqq\Id_{A}\otimes\pazocal{N}_{A'\to B}(\ketbra{\psi}_{AA'})$ is called the generalised Choi state of $\pazocal{N}$. Eq.~\ref{generalised_choi_eq_def} is enough to specify the channel $\pazocal{N}_{A'\to B}$ completely, as the linear span of the operators $\left(\ketbra{e_i}{e_j}\right)_{i,j\in\N}$ (i.e.~the set of finite-rank operators) is dense in $\pazocal{T}(\HH_{A'})$.
\end{lemma}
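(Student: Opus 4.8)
The plan is to establish the claimed bijection in both directions and to verify the reconstruction formula \eqref{generalised_choi_eq_def}, handling the infinite-dimensional subtleties through a finite-support quadratic-form argument. First I would fix notation: set $\rho_A\coloneqq\Tr_{A'}[\ketbra{\psi}_{AA'}]$, which is a density operator that is positive definite by hypothesis, so its spectral decomposition $\rho_A=\sum_i\lambda_i\ketbra{e_i}_A$ has all $\lambda_i>0$ and $\{\ket{e_i}\}$ an orthonormal basis of $\HH_A$. Using the isomorphism $\HH_A\cong\HH_{A'}$ together with the Schmidt decomposition of $\ket{\psi}$, I would write $\ket{\psi}_{AA'}=\sum_i\sqrt{\lambda_i}\,\ket{e_i}_A\otimes\ket{e_i}_{A'}$, identifying the Schmidt basis on $A'$ with $\{\ket{e_i}\}$; this is the convention under which \eqref{generalised_choi_eq_def} is to be read.

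For the forward direction, given a channel $\pazocal{N}_{A'\to B}$ I would set $\sigma_{AB}\coloneqq\Id_A\otimes\pazocal{N}(\ketbra{\psi})$, which is positive semi-definite (as $\Id\otimes\pazocal{N}$ is completely positive) and of unit trace (as $\pazocal{N}$ is trace-preserving), and I would check $\Tr_B\sigma_{AB}=\Tr_{A'}[\ketbra{\psi}]=\rho_A$, again by trace-preservation, so that $\sigma_{AB}$ lies in the target set. Expanding $\sigma_{AB}=\sum_{i,j}\sqrt{\lambda_i\lambda_j}\,\ketbra{e_i}{e_j}_A\otimes\pazocal{N}(\ketbra{e_i}{e_j})$ and applying $\Tr_A[(\ketbra{e_j}{e_i}_A\otimes\mathbb{1}_B)\,\cdot\,]$, orthonormality isolates the $(i,j)$ term and yields exactly $\sqrt{\lambda_i\lambda_j}\,\pazocal{N}(\ketbra{e_i}{e_j})$; dividing by $\sqrt{\lambda_i\lambda_j}>0$ gives \eqref{generalised_choi_eq_def}. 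Since finite-rank operators are trace-norm dense in $\pazocal{T}(\HH_{A'})$ and channels are trace-norm continuous, this shows $\pazocal{N}$ is completely determined by $\sigma_{AB}$, hence the assignment $\pazocal{N}\mapsto\sigma_{AB}$ is injective.

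For surjectivity, given any state $\sigma_{AB}$ with $\Tr_B\sigma_{AB}=\rho_A$, I would define $\pazocal{N}$ on finite-rank operators by \eqref{generalised_choi_eq_def} and extend by linearity and continuity. Trace preservation is a short computation: $\Tr[\pazocal{N}(\ketbra{e_i}{e_j})]=\frac{1}{\sqrt{\lambda_i\lambda_j}}\,\bra{e_i}\rho_A\ket{e_j}=\delta_{ij}=\Tr[\ketbra{e_i}{e_j}]$, using $\Tr_B\sigma_{AB}=\rho_A$ and $\rho_A\ket{e_j}=\lambda_j\ket{e_j}$. Recovering $\sigma_{AB}$ as the generalised Choi state of this $\pazocal{N}$ then follows by re-inserting \eqref{generalised_choi_eq_def} and using the resolution $\sum_i\ketbra{e_i}_A=\mathbb{1}_A$ on both sides of $\sigma_{AB}$.

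The main obstacle is complete positivity of the reconstructed map, because in infinite dimensions $\rho_A^{-1/2}$ is unbounded (the $\lambda_i$ accumulate at $0$), so I cannot simply conjugate the standard Choi operator by $\rho_A^{1/2}\otimes\mathbb{1}_B$ and invert to transfer positivity. Instead I would verify complete positivity directly as a quadratic-form inequality: for finite families $\{\ket{\eta_k}\}\subset\HH_{A'}$, each of finite support $\ket{\eta_k}=\sum_i\eta_{k,i}\ket{e_i}$ (which suffices by density), and $\{\ket{\xi_k}\}\subset\HH_B$, substituting \eqref{generalised_choi_eq_def} into $\sum_{k,l}\bra{\xi_k}\pazocal{N}(\ketbra{\eta_k}{\eta_l})\ket{\xi_l}$ and collecting terms, I would exhibit the vectors $\ket{w_k}_{AB}\coloneqq\sum_i\frac{\overline{\eta_{k,i}}}{\sqrt{\lambda_i}}\ket{e_i}_A\otimes\ket{\xi_k}_B$, which are well defined precisely because the sum over $i$ is finite, so that the whole expression equals $\bra{w}\sigma_{AB}\ket{w}$ with $\ket{w}\coloneqq\sum_k\ket{w_k}$. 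Positivity of $\sigma_{AB}$ then forces $\geq 0$, establishing complete positivity. The key point is that the factor $\lambda_i^{-1/2}$ is only ever evaluated on finitely many indices, which is exactly what circumvents the unboundedness of $\rho_A^{-1/2}$ and completes the proof.
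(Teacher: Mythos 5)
Your proof is correct, but note that the paper offers no proof of this lemma to compare against: it is imported from Holevo's work on the Choi--Jamio\l{}kowski representation for infinite-dimensional channels (the references cited in the lemma's header), so what you have written is a self-contained derivation of a result the paper only cites. The forward direction (expanding $\sigma_{AB}=\Id_A\otimes\pazocal{N}(\ketbra{\psi}_{AA'})$ in the Schmidt basis and isolating the $(i,j)$ block to get \eqref{generalised_choi_eq_def}), the trace-preservation check $\Tr[\pazocal{N}(\ketbra{e_i}{e_j})]=\delta_{ij}$, and the recovery of $\sigma_{AB}$ via the resolution of the identity are the routine parts; the substantive step is your verification of complete positivity of the reconstructed map, where the quadratic form $\sum_{k,l}\bra{\xi_k}\pazocal{N}(\ketbra{\eta_k}{\eta_l})\ket{\xi_l}$ is rewritten as $\bra{w}\sigma_{AB}\ket{w}$ with $\ket{w}=\sum_k\ket{w_k}$ and the weights $\overline{\eta_{k,i}}/\sqrt{\lambda_i}$ appearing only over the finitely many indices in the support of each $\ket{\eta_k}$. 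That is exactly the right device: it extracts positivity directly from $\sigma_{AB}\geq0$ without ever invoking the unbounded operator $\rho_A^{-1/2}$, and together with trace preservation it also supplies the trace-norm bound ($\|\pazocal{N}(X)\|_1\leq 2\|X\|_1$ on finitely supported $X$, by splitting into positive parts) that justifies the extension by linearity and continuity --- a step you leave implicit but which is standard. Two minor observations: your reading of \eqref{generalised_choi_eq_def}, with $\ketbra{e_i}{e_j}$ interpreted through the Schmidt-partner basis on $A'$, is indeed the convention under which the stated formula and the final density claim are correct (as literally written, an arbitrary identification $\HH_{A'}\cong\HH_A$ unrelated to $\ket{\psi}$ would break the formula, and completeness of the Schmidt system on $A'$ is tacitly assumed); this convention is automatically satisfied in the paper's only use of the lemma, where $\ket{\psi}$ is the two-mode squeezed vacuum and both marginals are faithful.
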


%\ludo{[Strictly speaking, we should say something about the convergence of the above series. At any rate the important point, that also Holevo makes, is that even taking $\rho = \ketbraasub{e_i}{e_j}$, for arbitrary $i,j$, is enough to specify the channel completely, as the linear span of these operators (i.e.\ the set of finite-rank operators) is dense in $\pazocal{T}(\HH_{A'})$.]}

\begin{lemma}[\cite{Myhr2009}]\label{lemma_infinite_extendibility}
Let $\HH_A,\HH_{A'}, \HH_{B}$ be isomorphic Hilbert spaces, possibly infinite dimensional. Let $\pazocal{N}_{A'\to B}:\pazocal{T}(\HH_{A'})\to \pazocal{T}(\HH_B)$ be a quantum channel. Let $\ket{\psi}_{A'A}\in\HH_{A}\otimes\HH_{A'}$ be a pure state such that the reduced state $\Tr_{A'}[\ketbra{\psi}_{AA'}]$ is positive definite. Then, $\pazocal{N}_{A'\to B}$ is anti-degradable if and only if the state $\Id_{A}\otimes\pazocal{N}_{A'\to B}(\ketbra{\psi}_{AA'})$ is two-extendible on $B$, meaning that there exists a state $\rho_{AB_1B_2}\in\pazocal{P}\left(\HH_A\otimes\HH_{B_1}\otimes\HH_{B_2}\right)$ such that
\bb\label{eq_extend}
    \Tr_{B_2}\left[\rho_{AB_1B_2} \right]&= \Id_{A}\otimes\pazocal{N}_{A'\to B_1}(\ketbra{\psi}_{AA'})\,,\\
    \Tr_{B_1}\left[\rho_{AB_1B_2} \right]&= \Id_{A}\otimes\pazocal{N}_{A'\to B_2}(\ketbra{\psi}_{AA'}),
\ee
where $\HH_{B_1}$ and $\HH_{B_2}$ are Hilbert spaces that are isomorphic to $\HH_{B}$.
\end{lemma}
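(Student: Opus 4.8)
The plan is to prove both implications by combining a Stinespring dilation of $\pazocal{N}$ with the uniqueness of purifications up to an isometry on the purifying system, and then to close the argument using the generalised Choi isomorphism of Lemma~\ref{gen_choi_thm}. First I would fix a Stinespring isometry $V_{A'\to B_1E}$ with $\pazocal{N}(\cdot)=\Tr_E[V\cdot V^\dagger]$ and complementary channel $\pazocal{N}^{\mathrm{c}}(\cdot)=\Tr_{B_1}[V\cdot V^\dagger]$, and define the pure state $\ket{\Phi}_{AB_1E}\coloneqq(\Id_A\otimes V)\ket{\psi}_{AA'}$. By construction its two marginals are exactly the relevant generalised Choi states: $\Tr_E[\ketbra{\Phi}_{AB_1E}]=\Id_A\otimes\pazocal{N}(\ketbra{\psi}_{AA'})$ and $\Tr_{B_1}[\ketbra{\Phi}_{AB_1E}]=\Id_A\otimes\pazocal{N}^{\mathrm{c}}(\ketbra{\psi}_{AA'})$.

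For the forward implication, assume there is an anti-degrading channel $\pazocal{A}_{E\to B_2}$ with $\pazocal{A}\circ\pazocal{N}^{\mathrm{c}}=\pazocal{N}$. I would set $\rho_{AB_1B_2}\coloneqq\Id_{AB_1}\otimes\pazocal{A}_{E\to B_2}(\ketbra{\Phi}_{AB_1E})$, which is a state because $\pazocal{A}$ is completely positive and trace preserving. Tracing out $B_2$ and using trace preservation of $\pazocal{A}$ returns $\Tr_E[\ketbra{\Phi}_{AB_1E}]=\Id_A\otimes\pazocal{N}(\ketbra{\psi}_{AA'})$; tracing out $B_1$ first gives $\Id_A\otimes\pazocal{N}^{\mathrm{c}}(\ketbra{\psi}_{AA'})$ on $AE$, which $\pazocal{A}_{E\to B_2}$ then maps to $\Id_A\otimes(\pazocal{A}\circ\pazocal{N}^{\mathrm{c}})(\ketbra{\psi}_{AA'})=\Id_A\otimes\pazocal{N}(\ketbra{\psi}_{AA'})$. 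Thus both marginals equal the generalised Choi state, establishing~\eqref{eq_extend}.

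For the reverse implication I would start from a two-extension $\rho_{AB_1B_2}$ satisfying~\eqref{eq_extend} and fix a purification $\ket{\Xi}_{AB_1B_2R}$ on a separable space $R$. Its $AB_1$-marginal is $\Tr_{B_2}[\rho_{AB_1B_2}]=\Id_A\otimes\pazocal{N}(\ketbra{\psi}_{AA'})$, which is precisely the $AB_1$-marginal of $\ket{\Phi}_{AB_1E}$; hence $\ket{\Phi}_{AB_1E}$ and $\ket{\Xi}_{AB_1B_2R}$ purify the same operator on $AB_1$. Choosing the dilation space $E$ separable and recalling that $B_2R$ is infinite dimensional, the uniqueness of purifications up to an isometry on the purifying system yields an isometry $W_{E\to B_2R}$ with $(\Id_{AB_1}\otimes W_{E\to B_2R})\ket{\Phi}_{AB_1E}=\ket{\Xi}_{AB_1B_2R}$. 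I would then define $\pazocal{A}_{E\to B_2}(\cdot)\coloneqq\Tr_R[W_{E\to B_2R}(\cdot)W_{E\to B_2R}^\dagger]$, a genuine channel, for which $\Id_{AB_1}\otimes\pazocal{A}_{E\to B_2}(\ketbra{\Phi}_{AB_1E})=\Tr_R[\ketbra{\Xi}_{AB_1B_2R}]=\rho_{AB_1B_2}$. Taking the partial trace over $B_1$ and using the second line of~\eqref{eq_extend} gives $\Id_A\otimes(\pazocal{A}\circ\pazocal{N}^{\mathrm{c}})(\ketbra{\psi}_{AA'})=\Id_A\otimes\pazocal{N}(\ketbra{\psi}_{AA'})$, i.e.~$\pazocal{A}\circ\pazocal{N}^{\mathrm{c}}$ and $\pazocal{N}$ share the same generalised Choi state. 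Since $\Tr_{A'}[\ketbra{\psi}_{AA'}]>0$, Lemma~\ref{gen_choi_thm} guarantees that the generalised Choi state determines the channel uniquely, so $\pazocal{A}\circ\pazocal{N}^{\mathrm{c}}=\pazocal{N}$ and $\pazocal{N}$ is anti-degradable.

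The main obstacle is the infinite-dimensional bookkeeping in the reverse direction: one must justify that two purifications of a density operator on separable Hilbert spaces are related by an \emph{isometry} on the purifying systems rather than merely a partial isometry, which forces the Stinespring environment $E$ to be chosen no larger than $B_2R$ — automatic here because $B_2\cong B$ is already infinite dimensional — and one must then invoke Lemma~\ref{gen_choi_thm}, whose positive-definiteness hypothesis $\Tr_{A'}[\ketbra{\psi}_{AA'}]>0$ is exactly what upgrades the equality of generalised Choi states into an equality of channels.
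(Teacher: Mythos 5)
Your proposal is correct and follows essentially the same route as the paper's proof: the forward direction applies the anti-degrading map to the Stinespring-purified Choi state and checks both marginals, and the reverse direction purifies the two-extension, invokes uniqueness of purifications to obtain an isometry $W_{E\to B_2R}$, traces out the purifying system to define the anti-degrading channel, and uses Lemma~\ref{gen_choi_thm} with the hypothesis $\Tr_{A'}[\ketbra{\psi}_{AA'}]>0$ to upgrade equality of generalised Choi states to equality of channels. Your remark on the isometry-versus-partial-isometry subtlety in infinite dimensions is a fair (and slightly more explicit) treatment of a point the paper passes over quickly.
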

\begin{proof} 
Let $U_{A'E\to BE} $ be a Stinespring dilation of the channel $\pazocal{N}_{A'\to B}$. Further assume that $\pazocal{N}_{A'\to B}$ is anti-degradable. By definition, there exists a quantum channel $\pazocal{A}_{E\to B}$ such that $\pazocal{A}_{E\to B}\circ \pazocal{N}^{\text{c}}_{A'\to B}=\pazocal{N}_{A'\to B}$. Let us consider the tripartite state $\rho_{AB_1B_2}$, with $B_1,B_2$ being copies of $B$, defined as
\bb\label{link_choi_antideg}
    \rho_{AB_1B_2}= \Id_{A}\otimes\Id_{B_1} \otimes \pazocal{A}_{E\to B_2}\left(U_{A'E\to B_1E}\,\big(\ketbra{\psi}_{AA'}\otimes\ketbra{0}_E\big)\,U_{A'E\to B_1E}^\dagger\right)\,.
\ee
It holds that 
\begin{align*}
    \Tr_{B_2}\left[\rho_{AB_1B_2} \right] &= \Tr_{B_2}\left[ \Id_{A}\otimes\Id_{B_1} \otimes\pazocal{A}_{E\to B_2}\left(U_{A'E\to B_1E} \,\big(\ketbra{\psi}_{AA'}\otimes\ketbra{0}_E\big)\, U_{A'E\to B_1E}^\dagger\right)  \right]\\& = \Id_A\otimes\Tr_{E}\!\left[U_{A'E\to B_1E} \big(\ketbra{\psi}_{AA'}\otimes\ketbra{0}_E\big) U_{A'E\to B_1E}^\dagger \right]\\&=\Id_A\otimes \pazocal{N}_{A'\to B_1}\left( \ketbra{\psi}_{AA'} \right)\,,
\end{align*}
and 
\begin{align*}
        \Tr_{B_1}\left[\rho_{AB_1B_2} \right] &= \Tr_{B_1}\left[ \Id_{A}\otimes\Id_{B_1} \otimes\pazocal{A}_{E\to B_2}\left(U_{A'E\to B_1E} \,\big(\ketbra{\psi}_{AA'}\otimes\ketbra{0}_E\big)\, U_{A'E\to B_1E}^\dagger\right)  \right]\\& = \Id_{A} \otimes\pazocal{A}_{E\to B_2}\left(\Tr_{B_1}\left[U_{A'E\to B_1E} \big(\ketbra{\psi}_{AA'}\otimes\ketbra{0}_E\big) U_{A'E\to B_1E}^\dagger \right]\right)\\&=\Id_A\otimes \pazocal{A}_{E\to B_2}\circ \pazocal{N}^\text{c}_{A'\to B_2}\left( \ketbra{\psi}_{AA'} \right)\\&=\Id_A\otimes \pazocal{N}_{A'\to B_2}\left( \ketbra{\psi}_{AA'} \right)\,.
\end{align*}
Now, let us establish the converse. Assume that there exists $\rho_{AB_1B_2}$ which satisfies~\eqref{eq_extend}. Let $\ket{\Psi}_{AB_1B_2P}\in\HH_A\otimes\HH_{B_1}\otimes\HH_{B_2}\otimes \HH_P$ be a purification of $\rho_{AB_1B_2}$, with $\HH_{P}$ being the purifying Hilbert space. Note that both $\ket{\Psi}_{AB_1B_2P}$ and $U_{A'E\to B_1E} \,\big(\ket{\psi}_{AA'}\otimes\ket{0}_E\big)$ are purifications of $\Id_A\otimes \pazocal{N}_{A'\to B_1}\left( \ketbra{\psi}_{AA'} \right)$, with $\HH_{B_2}\otimes\HH_{P}$ and $\HH_{E}$ being their purifying Hilbert spaces, respectively. It follows that~\cite{NC} there exists an isometry $V_{E\to B_2P}:\HH_{E}\to \HH_{B_2}\otimes \HH_P$ such that $ V_{E\to B_2P}\,U_{A'E\to B_1E} \,\big(\ket{\psi}_{AA'}\otimes\ket{0}_E\big) = \ket{\Psi}_{AB_1B_2P}\,.$
Hence, the quantum channel $\pazocal{A}_{E\to B_2}:\pazocal{T}(\HH_{E})\to \pazocal{T}(\HH_{B_2})$, defined by $\pazocal{A}_{E\to B_2}(\cdot)=\Tr_P\left[ V_{E\to B_2P} (\cdot)V_{E\to B_2P}^\dagger \right]$,
satisfies that
\begin{align*}
      \Id_{A}\otimes\pazocal{A}_{E\to B_2}\circ \pazocal{N}^\text{c}_{A'\to B_2}(\ketbra{\psi}_{AA'})&=\Id_{A}\otimes\pazocal{A}_{E\to B_2}\left( \Tr_{B_1}\left[ U_{A'E\to B_1E}   (\ketbra{\psi}_{AA'}\otimes\ketbra{0}_E) U_{A'E\to B_1E} ^\dagger \right] \right)\\&=\Tr_{B_1P}\left[\ketbra{\Psi}_{AB_1B_2P}\right]\\&=\Tr_{B_1}\left[ \rho_{AB_1B_2} \right]\\&=\Id_A\otimes \pazocal{N}_{A'\to B_2}\left( \ketbra{\psi}_{AA'} \right)\,.
\end{align*}
Consequently, since the pure state $\ket{\psi}_{A'A}$ satisfies $\Tr_{A'}\!\left[\ketbra{\psi}_{AA'}\right]>0$,  Lemma~\ref{gen_choi_thm} implies that $\pazocal{A}_{E\to B_2}\circ \pazocal{N}^\text{c}_{A'\to B_2}= \pazocal{N}_{A'\to B_2}$, meaning that $\pazocal{N}_{A'\to B_2}$ is anti-degradable.
\end{proof}

\begin{lemma}\label{lemma_comp_antideg}
    Let $\pazocal{N},\pazocal{M}:\pazocal{T}(\HH_S)\to\pazocal{T}(\HH_S)$ be quantum channels. If either $\pazocal{M}$ or $\pazocal{N}$ is anti-degradable, then the composition $\pazocal{M}\circ\pazocal{N}$ is anti-degradable. Specifically, let $E_1$ and $E_2$ be the Stinespring environments of $\pazocal{N}$ and $\pazocal{M}$, respectively. If $\pazocal{N}$ is anti-degradable with anti-degrading map $\pazocal{A}_{E_1\to S}$, then $(\pazocal{M}\circ\pazocal{A}_{E_1\to S})\otimes \Tr_{E_2}$ is an anti-degrading map of $\pazocal{M}\circ\pazocal{N}$. Analogously, if $\pazocal{M}$ is anti-degradable with anti-degrading map $\pazocal{A}_{E_2\to S}$, then $\Tr_{E_1}\otimes\pazocal{A}_{E_2\to S}$ is an anti-degrading map of $\pazocal{M}\circ\pazocal{N}$. 
 \end{lemma}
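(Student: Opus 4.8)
The plan is to exhibit a single Stinespring dilation of the composite channel and then check, by direct manipulation of partial traces, that each of the two proposed maps inverts the corresponding complementary channel. First I would fix Stinespring isometries $V_{S\to SE_1}$ for $\pazocal{N}$ and $W_{S\to SE_2}$ for $\pazocal{M}$, so that $\pazocal{N}(\cdot)=\Tr_{E_1}[V(\cdot)V^\dagger]$ and $\pazocal{M}(\cdot)=\Tr_{E_2}[W(\cdot)W^\dagger]$, with complementary channels $\pazocal{N}^{\text{c}}(\cdot)=\Tr_S[V(\cdot)V^\dagger]$ on $\HH_{E_1}$ and $\pazocal{M}^{\text{c}}(\cdot)=\Tr_S[W(\cdot)W^\dagger]$ on $\HH_{E_2}$. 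Setting $U\coloneqq (W\otimes\Id_{E_1})V$ produces an isometry $\HH_S\to\HH_S\otimes\HH_{E_2}\otimes\HH_{E_1}$ that dilates $\pazocal{M}\circ\pazocal{N}$, since $\Tr_{E_1E_2}[U(\cdot)U^\dagger]=\pazocal{M}\circ\pazocal{N}$. Consequently the associated complementary channel is $(\pazocal{M}\circ\pazocal{N})^{\text{c}}(\cdot)=\Tr_S[U(\cdot)U^\dagger]$, with output on $\HH_{E_1}\otimes\HH_{E_2}$.

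For the first case, assume $\pazocal{N}$ is anti-degradable, i.e.~$\pazocal{A}_{E_1\to S}\circ\pazocal{N}^{\text{c}}=\pazocal{N}$, and apply $(\pazocal{M}\circ\pazocal{A}_{E_1\to S})\otimes\Tr_{E_2}$ to $(\pazocal{M}\circ\pazocal{N})^{\text{c}}(\rho)$. Tracing out $E_2$ first collapses the $W$ part: writing $U\rho U^\dagger=(W\otimes\Id_{E_1})(V\rho V^\dagger)(W^\dagger\otimes\Id_{E_1})$ gives $\Tr_{E_2}[U\rho U^\dagger]=(\pazocal{M}\otimes\Id_{E_1})(V\rho V^\dagger)$, and then, because $\pazocal{M}$ is trace preserving, $\Tr_{SE_2}[U\rho U^\dagger]=\Tr_S[(\pazocal{M}\otimes\Id_{E_1})(V\rho V^\dagger)]=\Tr_S[V\rho V^\dagger]=\pazocal{N}^{\text{c}}(\rho)$. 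Applying $\pazocal{M}\circ\pazocal{A}_{E_1\to S}$ and invoking $\pazocal{A}_{E_1\to S}\circ\pazocal{N}^{\text{c}}=\pazocal{N}$ then yields $\pazocal{M}(\pazocal{N}(\rho))=(\pazocal{M}\circ\pazocal{N})(\rho)$, which is precisely the anti-degradability condition for the composite.

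For the second case, assume $\pazocal{M}$ is anti-degradable, i.e.~$\pazocal{A}_{E_2\to S}\circ\pazocal{M}^{\text{c}}=\pazocal{M}$, and apply $\Tr_{E_1}\otimes\pazocal{A}_{E_2\to S}$ to $(\pazocal{M}\circ\pazocal{N})^{\text{c}}(\rho)$. Since $W$ does not touch $E_1$, tracing out $E_1$ gives $\Tr_{E_1}[U\rho U^\dagger]=W\pazocal{N}(\rho)W^\dagger$, and the remaining partial trace over $S$ produces $\Tr_{SE_1}[U\rho U^\dagger]=\Tr_S[W\pazocal{N}(\rho)W^\dagger]=\pazocal{M}^{\text{c}}(\pazocal{N}(\rho))$. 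Applying $\pazocal{A}_{E_2\to S}$ and using $\pazocal{A}_{E_2\to S}\circ\pazocal{M}^{\text{c}}=\pazocal{M}$ again returns $(\pazocal{M}\circ\pazocal{N})(\rho)$, establishing the claim.

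Both proposed maps are automatically completely positive and trace preserving, being compositions and tensor products of quantum channels (with each $\Tr_{E_i}$ a channel), so no separate verification of the CPTP property is needed. There is no deep obstacle here; the only step requiring genuine care is the partial-trace bookkeeping, namely correctly identifying the complementary channel of the composite and exploiting the identity $\Tr_S[(\pazocal{M}\otimes\Id_{E_1})(X)]=\Tr_S[X]$, valid for any trace-preserving $\pazocal{M}$ on $S$, which is exactly what lets the "wrong" environment be traced out cleanly and reduces the computation to a single application of the hypothesised anti-degrading map. The argument is representation independent because the anti-degradability hypotheses are phrased directly in terms of $\pazocal{N}^{\text{c}}$ and $\pazocal{M}^{\text{c}}$.
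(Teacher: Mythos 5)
Your proof is correct and follows essentially the same route as the paper: fix Stinespring isometries $V_{S\to SE_1}$ and $W_{S\to SE_2}$, take $\Tr_S\!\left[WV(\cdot)V^\dagger W^\dagger\right]$ as the complementary channel of the composition, and verify by partial-trace bookkeeping that each proposed map recovers $\pazocal{M}\circ\pazocal{N}$. The only difference is that you spell out the computations the paper leaves as "one can easily check," which is fine.
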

\begin{proof}
Let $V^{S\to SE_1}$ and $W^{S\to SE_2}$ be Stinespring isometries associated with $\pazocal{N}$ and $\pazocal{M}$, respectively. By considering the following complementary channel of $\pazocal{M}\circ\pazocal{N}$,
\begin{align*}
    (\pazocal{M}\circ\pazocal{N})^\text{c}(\rho)= \Tr_{S}\!\left[W^{S\to SE_2}V^{S\to SE_1}\,\rho\, \left(V^{S\to SE_1}\right)^\dagger \left(W^{S\to SE_2}\right)^\dagger\right],\qquad\forall\,\rho\in\pazocal{T}(\HH_S)  \,,
\end{align*}
one can easily check that if $\pazocal{N}$ is anti-degradable with anti-degrading map $\pazocal{A}_{E_1\to S}$, then 
\bb
\left[(\pazocal{M}\circ\pazocal{A}_{E_1\to S})\otimes \Tr_{E_2}\right]\circ (\pazocal{M}\circ\pazocal{N})^\text{c}=\pazocal{M}\circ\pazocal{N}\,.
\ee
Analogously, one can easily verify that if $\pazocal{M}$ is anti-degradable with anti-degrading map $\pazocal{A}_{E_2\to S}$, then 
\bb
\left[ \Tr_{E_1}\otimes\pazocal{A}_{E_2\to S} \right]\circ (\pazocal{M}\circ\pazocal{N})^\text{c}=\pazocal{M}\circ\pazocal{N}\,.
\ee
\end{proof}

\end{document}